\newlist{inenum}{enumerate*}{1}
\setlist[inenum]{label=(\roman*)}
\newlist{compitem}{itemize}{3}
\setlist[compitem]{label=\textbullet,nosep}
\setlist{nosep}
\declaretheoremstyle[
  bodyfont=\normalfont\itshape,
  ]{normal_style}
\declaretheoremstyle[
  headfont=\normalfont\itshape,
  notefont=\normalfont\itshape,
  qed=\qedsymbol
  ]{proof_style}
\declaretheorem[
  name=Theorem,
  style=normal_style]{theorem}
\declaretheorem[
  name=Lemma,
  numberlike=theorem,
  style=normal_style]{lemma}
\declaretheorem[
  name=Observation,
  numberlike=theorem,
  style=normal_style]{observation}
\declaretheorem[
  name=Definition,
  numberlike=theorem,
  style=normal_style]{definition}
\let\proof\@undefined
\let\endproof\@undefined
\declaretheorem[
  name=Proof,
  numbered=no,
  style=proof_style]{proof}
\def\xth/{%
	\textsuperscript{th}%
}
    \let\Cref\crtCref
    \let\cref\crtcref
\newcommand{\mhr}[1]{}
\newcommand{\hfr}[1]{}
\newcommand\aw[1]{{#1}}
\newcommand\awtwo[1]{{#1}}
\newcommand\hf[1]{{#1}}
\def\eps{\epsilon}
\global\long\def\N{\mathbb{N}}%
\global\long\def\R{\mathbb{R}}%
\global\long\def\OPT{\mathrm{OPT}}%
\global\long\def\ALG{\mathsf{apx}}%
\global\long\def\apx{\mathsf{apx}}
\author{Hendrik Fichtenberger\thanks{fichtenberger@google.com}}
\affil{Google Research, Zurich}
\author{Monika Henzinger\thanks{monika.henzinger@univie.ac.at}}
\affil{University of Vienna}
\author{Andreas Wiese}
\affil{Vrije Universiteit Amsterdam\thanks{a.wiese@vu.nl}}
\begin{document}
\title{On fully dynamic constant-factor approximation algorithms for clustering problems}

\maketitle
\thispagestyle{empty}

\begin{abstract}
Clustering is an important task with applications in many fields of
computer science. We study the fully dynamic setting in which we want
to maintain good clusters efficiently when input points (from a metric space) can be inserted
and deleted. Many clustering problems are $\mathsf{APX}$-hard but admit
polynomial time $O(1)$-approximation algorithms.
Thus, it is a natural question whether we can maintain $O(1)$-approximate
solutions 
for them %
in subpolynomial update time, against
adaptive and oblivious adversaries.
Only a few results are known that give partial answers to this question.
There are dynamic algorithms for $k$-center,
$k$-means, and $k$-median that maintain constant factor approximations
in expected $\tilde{O}(k^{2})$ update time against an oblivious adversary. However, 
for these problems 
there are no algorithms known with an update time that is subpolynomial in $k$, and
against an adaptive adversary there are even no (non-trivial) dynamic algorithms known at all.
Also, for the $k$-sum-of-radii and the $k$-sum-of-diameters problems 
it is open whether there is any dynamic algorithm, against either type of adversary.

In this paper, we complete the picture of the question above for all
these clustering problems. 
\begin{itemize}
\item We show that there is no fully dynamic $O(1)$-approximation algorithm
for \emph{any }of the classic clustering problems above with an update time
in $n^{o(1)}h(k)$ against an adaptive adversary, for an arbitrary 
function $h$. So in particular, there are no such deterministic algorithms. 
\item We give a lower bound of $\Omega(k)$ on the update time for
{each} of the above problems, even against an \emph{oblivious} adversary.
This rules out update times with subpolynomial dependence on $k$.
\item We give the first $O(1)$-approximate fully dynamic algorithms for
$k$-sum-of-radii and for $k$-sum-of-diameters with expected update time of $\tilde{O}(k^{O(1)})$  against an oblivious
adversary.
\item Finally, for $k$-center we present a fully dynamic $(6+\eps)$-approximation
algorithm with an expected update time of $\tilde{O}(k)$ against
an oblivious adversary. This is the first dynamic $O(1)$-approximation
algorithm for \emph{any} of the clustering problems above
whose update time is $\tilde{O}(k)$
and in particular the first whose update time is asymptotically optimal
in $k$. 
\end{itemize}

\end{abstract}
\newpage{}

\setcounter{page}{1}

\section{Introduction}

Clustering is a central task in data science and in many fields of
computer science such as machine learning, information retrieval,
computer vision, data compression, and resource allocation. The goal
is to partition a set of points into non-overlapping subsets so that
``close'' points belong to the same subset. To compare the quality
of different clusterings, each partition is assigned a \emph{cost}.
\emph{Static} and also \emph{online} clustering (where data points
arrive in an online manner and earlier decisions cannot be revoked)
are well-studied, e.g., \cite{hsu1979easy,DBLP:journals/tcs/Gonzalez85,gibson2010metric,DoddiMRTW00,AwasthiCKS15,DBLP:conf/stoc/JainMS02,ahmadiannsw_17,byrkaprst_17}.

However, in various settings \emph{dynamic} clustering is needed,
e.g.,~for dynamically evolving data set or interactive data analysis.
Research on dynamic clustering was already initiated in 1987~\cite{can1987dynamic},
but there are still many fundamental open questions. In \emph{fully
dynamic clustering} data points can be inserted as well as deleted
and the goal is to maintain a set of clusters whose cost is within
a (small) multiplicative factor of the optimal clustering for the
current set of points. Dynamic clustering algorithms %
process a sequence of update operations and 
they can answer two types of queries: in a \emph{value-query} an approximation
$\apx$ of the optimal clustering cost is returned, in a \emph{solution-query}
a solution is returned.
Such an algorithm is an $\alpha$-approximation
algorithm if it always holds that $\OPT\le\apx\le\alpha\OPT$ and each returned
solution has cost at most $\alpha\OPT$. It is evaluated based on
\begin{inenum}

\item its approximation ratio $\alpha$ and

\item its running time per operation\end{inenum}.

We study the following dynamic setting: at any point in time, there
is a set $P$ of active points. We use $n$ to denote the current
size of~$P$. In each update, the adversary adds or deletes a point.
The algorithm can query distances between pairs of points from the
adversary. The reported distances stem from a metric on the set of points consisting of $P$ and the
already deleted points, and an upper bound~$\Delta$ on the aspect
ratio of this metric is given to the algorithm at the beginning.
For a given $k$, the algorithm needs to maintain a set $C$ of $k$
centers.  For each point $p$ denote by $d(p,C)$ the distance of
$p$ to its closest center in $C$. All points that are assigned to
the same center $c\in C$ (i.e., for which $c$ is the closest center)
form a \emph{cluster}. 

There are different clustering problems with different cost functions
that one seeks to minimize. 
In the\emph{ $k$-center} \emph{problem},
we want to minimize the maximum distance of a point $p$ to its assigned
cluster center, i.e., we minimize the $L_{\infty}$-norm of the vector
$\left(d(p,C)\right)_{p\in P}$. In the \emph{$k$-median} and\emph{
$k$-means} \emph{problems} we instead minimize $\sum_{p\in S}d(p,C)$
and $\sum_{p\in S}d(p,C)^{2}$, respectively, which is equivalent
to minimizing the $L_{1}$- and the $L_{2}$-norms of $\left(d(p,C)\right)_{p\in P}$.
More generally, in the \emph{$(k,p)$-clustering problem} we minimize
$\sum_{p\in S}d(p,C)^{p}$ (corresponding to the $L_{p}$-norm). Another
way to interpret the $k$-center problem is that each cluster center
$c_{i}$ has a radius $r_{i}$ assigned to it, $c_{i}$ ``covers'' all
points within radius $r_{i}$, every point needs to be covered by a center, and we seek to minimize the maximum
of these radii, i.e., $\max_{i}r_{i}$. A natural variation of this
is the \emph{$k$-sum-of-radii problem} (also known as the \emph{$k$-cover
problem}) in which we assign again each center $c_{i}$ a radius $r_{i}$
but seek instead to minimize their sum $\sum_{i}r_{i}$. Related to
this is the \emph{$k$-sum-of-diameters} problem in which instead we
seek to minimize the sum of the diameters of the clusters. For $k$-center, $k$-median, and $k$-means we assume that the solution returned by an algorithm is the set of centers, for $k$-sum-of-radii the centers together with the radii, and for $k$-sum-of-diameter an actual partitioning of the points into clusters. %

All above problems are NP-hard,
and $k$-center, $k$-median, $k$-means, and $k$-sum-of-diameters are also known to be APX-hard~\cite{DoddiMRTW00,DBLP:journals/tcs/Gonzalez85,gibson2010metric,lee2017improved,jain2002new,feder1988optimal}. Hence,
we aim for $O(1)$-approximation algorithms for them. In the offline
setting, such algorithms with polynomial running time are known for
all problems above~\cite{DBLP:journals/tcs/Gonzalez85,hochbaum1985best,hochbaum1986unified,ahmadian2016approximation}. In this paper, we study clustering problems
in the dynamic setting. Of course, after each update we could recompute
the solution from scratch using a known offline algorithm, which
leads to an update time that is polynomial in $n$ and $k$. But can
we do better, i.e., achieve subpolynomial update time? For solution-queries the size of the output can be linear in $k$ (since all centers are returned), but for 
value-queries
no such obvious lower bound exists. In this paper,
we investigate the following basic question. 
\begin{center}
\emph{Can we maintain $O(1)$-approximate solutions for clustering
problems dynamically,\\
with update times that are subpolynomial in 
$n$ and/or $k$?}\\
 
\par\end{center}

For this we need to distinguish between an \emph{adaptive adversary
}that can choose each operation based on all the algorithm's query answers
so far (e.g., which points are centers) and an \emph{oblivious adversary
}that knows the algorithm, but does not see the actual answers
of the algorithm (which might depend on random bits). 
So far there
are only dynamic $O(1)$-approximation algorithms known for $k$-center,
$k$-median, and $k$-means with expected update times of $\tilde{O}(k^{2} / \epsilon^{O(1)})$
against an oblivious adversary~\cite{ChanGS18,DBLP:conf/esa/HenzingerK20}
(where $\tilde{O}$ suppresses factors that are polylogarithmic in $n$, $k$, and $\Delta$).
In particular, for none
of the problems listed above there is a dynamic algorithm known that
works against an adaptive adversary. Considering the recent progress on designing fast dynamic graph algorithms
against an \emph{adaptive adversary}~\cite{DBLP:conf/stoc/NanongkaiS17,DBLP:journals/corr/abs-2004-08432,DBLP:journals/corr/abs-2011-00977,DBLP:conf/stoc/Wajc20,DBLP:conf/soda/GutenbergW20a,DBLP:conf/icalp/EvaldFGW21}, it is an obvious questions whether such fast algorithms are also possible for dynamic clustering.
Furthermore, it is not known
whether their quadratic dependence on $k$ in the running time  against an oblivious adversary is
necessary, or whether we could improve it to, e.g., $k$ or $\log k$.
Moreover, for $k$-sum-of-radii and $k$-sum-of-diameters there is
even no non-trivial dynamic algorithm known at all (for either type of adversary). This leaves many open questions. 

\vspace{-0.5ex}
\subsection{Our results}
\vspace{-0.5ex}
In this paper we answer the above questions
for each of the classic clustering problems defined earlier. To this
end, we present novel upper and lower bounds for the remaining open
settings, see Table~\ref{tab:results} for an overview of all our
results.

\begin{table}
\begin{centering}
\begin{tabular}{ccc}
\toprule 
\textbf{Problem} & \textbf{Oblivious adversary} & \textbf{Adaptive adversary}\tabularnewline
\midrule 
$k$-center & \textcolor{red}{$\tilde{O}(k/
\eps)$ for $(6+\eps)$-approx.} & \textcolor{red}{$\tilde{O}(k)$ for $O(\min\{k,\log(n/k)\})$-approx.}\tabularnewline
 & \textcolor{red}{$\Omega(k)$ for $(\Delta-\eps)$-approx.}  & \textcolor{red}{$n^{\Omega(1)}h(k)$ for $O(1)$-approx.}\tabularnewline
 & $\tilde{O}(k^{2}/\eps)$ for $(2+\eps)$-approx.~\cite{ChanGS18} & \tabularnewline
\midrule 
$k$-median & \textcolor{red}{$\Omega(k)$ for $(\Delta-\eps)$-approx.} & \textcolor{red}{$n^{\Omega(1)}h(k)$ for $O(1)$-approx.}\tabularnewline 
 & $\tilde{O}(k^{2}/\eps^{O(1)})$ for $(5.3+\eps)$-approx.~\cite{DBLP:conf/esa/HenzingerK20} & \tabularnewline
\midrule 
$k$-means & \textcolor{red}{$\Omega(k)$ for $(\Delta^{2}-\eps)$-approx.} & \textcolor{red}{$n^{\Omega(1)}h(k)$ for $O(1)$-approx.}\tabularnewline 
 & $\tilde{O}(k^{2}/\eps^{O(1)})$ for $(36+\eps)$-approx.~\cite{DBLP:conf/esa/HenzingerK20} & \tabularnewline
\midrule 
$k$-sum-of-radii & \textcolor{red}{$k^{O(1)}\log\Delta$ for $O(1)$-approx.} & \textcolor{red}{$n^{\Omega(1)}h(k)$ for $O(1)$-approx.}\tabularnewline 
 & \textcolor{red}{$\Omega(k)$ for $(\Delta-\eps)$-approx.} & \tabularnewline
\midrule 
$k$-sum-of-diameters & \textcolor{red}{$k^{O(1)}\log\Delta$ for $O(1)$-approx.} & \textcolor{red}{$n^{\Omega(1)}h(k)$ for $O(1)$-approx.}\tabularnewline 
 & \textcolor{red}{$\Omega(k)$ for $(\Delta-\eps)$-approx.} & \tabularnewline
\midrule 
$(k,p)$-clustering & \textcolor{red}{$\Omega(k)$ for $(\Delta^{p}-\eps)$-approx.} & \textcolor{red}{$n^{\Omega(1)}h(k)$ for $O(1)$-approx.}\tabularnewline
\bottomrule
\end{tabular}
\par\end{centering}
\caption{\label{tab:results}An overview of our running time bounds (in red) in comparison
to previous work. Our lower bounds against an adaptive adversary hold
for an arbitrary function $h$, also if $k=1$, and if the algorithm can open $O(k)$ centers instead
of $k$. Apart from $k$-center, they hold even if the algorithm needs
to report only its current solution but not its (estimated) cost $\apx$.
All our lower bounds hold additionally in the case where the algorithm and the optimal solution are allowed
to select already deleted points as centers.}
\end{table}

\vspace{-2ex}
\paragraph{Lower bounds.}
\emph{Adaptive adversary.}
We present the first lower bounds for dynamic clustering problems
against an \emph{adaptive} adversary. Thus, they imply the same bounds
for deterministic algorithms against an arbitrary adversary. Our bounds
are very strong: they imply that for \emph{none }of the above problems
we can maintain $O(1)$-approximations with an update time that is
subpolynomial in $n$, even for the case that\emph{ $k=1$. }So in\emph{
}particular this holds independently of how the running time depends
on $k$\,! Also, our bounds hold even if the algorithm can open $O(k)$
centers instead of only $k$. Formally, we prove the following tradeoff:
\emph{any algorithm with an update time of $f(k,n)$ that opens up to $O(k)$
centers must have an approximation ratio (compared to the optimal
solution with only $k$ centers) of
$\Omega(\log n/(\log f(1,2n))$ for $k$-center, $k$-sum-of-radii, $k$-diameter,
of $\Omega(\log n/(10+2\log f(1,2n))$ 
and $k$-median,
of $\Omega((\log n/(12+2\log f(1,2n)))^{2})$ for 
$k$-means, and of 
$\Omega((\log n/(2p+8+2\log f(1,2n)))^{p})$ for 
$(k,p)$-clustering,}
see Theorem~\ref{thm:det-1lb}. Thus, to achieve an $O(1)$-approximation, the update time needs to be
$n^{\Omega(1)}h(k)$ for any function $h$. 

For $k$-center, if $k=1$, then one obtains a 2-approximation if one
outputs simply an arbitrary point; our lower bound above hinges on
the fact that the algorithm also needs to output an approximation
$\apx$ for its cost. However, even if we require the algorithm to
output only its current solution, we prove a lower bound of $\Omega(\min\{k,\log n/(k\log f(k,2n))\})$
on the approximation ratio
for any $k$. Therefore, for \emph{sufficiently large} $k$ we again need an update
time of $n^{\Omega(1)}h(k)$ in order to maintain a $O(1)$-approximation.
We show that our lower bound is (asymptotically) tight by giving a deterministic algorithm
with $O(k\log n\log\Delta/\eps)$ update time that achieves a $O(\min\{k,\log(n/k)\})$  %
approximation for arbitrary $k$. Our lower bounds for all other problems hold directly
also if the algorithm needs to output only its current solution.
\aw{Even more, all above bounds hold already under the weaker assumption that only the amortized number of distance queries of the algorithm
per operation is bounded by $f(k,n)$, rather than its running time.
}

\emph{Oblivious adversary.}
Also, we provide a lower bound of $\Omega(k)$ for the update time
of a $(\Delta-\eps)$-approximation algorithm for \emph{any }of the
above problems against an \emph{oblivious} adversary (and hence also against
an adaptive adversary). For $k$-means and $(k,p)$-clustering this
lower bound holds even for maintaining a $(\Delta^{2}-\eps)$-approximation
or a $(\Delta^{p}-\eps)$-approximation, respectively.

All our lower bounds hold in the setting where the algorithm can use
as centers only points in $P$, but also when additionally
it can also use any point that it has seen so far, including points that are already deleted. %

\vspace{-1ex}
\paragraph{Upper bounds.}
\emph{$k$-sum-of-radii and $k$-sum-of-diameters.}
As discussed above, the $k$-sum-of-radii and $k$-sum-of-diameters
problems are classical clustering problems for which no non-trivial
dynamic approximation algorithm is known. We fill this gap by giving
a fully dynamic $(13.008+\epsilon)$-approximation algorithm for $k$-sum-of-radii
with an expected update time of $k^{O(1/\eps)}\log\Delta$, which
implies a dynamic $(26.016+\epsilon)$-approximation algorithm for
$k$-sum-of-diameters with the same update time. This completes the picture
that all above clustering problems admit fully dynamic $O(1)$-approximation
algorithms with update time $\tilde{O}(k^{O(1)})$.

Our algorithms for $k$-sum-of-radii and $k$-sum-of-diameters use
a new idea in the setting of dynamic clustering algorithms: we maintain
a \emph{bi-criteria approximation}, i.e., an $O(1)$-approximate solution
that might open more than $k$ centers, up to $O(k/\epsilon)$ many.
After each update, we transform this larger set of centers to a solution with only $k$ centers,
losing only a constant factor in this step. For the last step, our
machinery allows to use any static offline $\alpha$-approximation
algorithm for $k$-sum-of-radii in a black-box manner, yielding a
dynamic $(6+2\alpha+\epsilon)$-approximation algorithm for the problem
(we use the algorithm due to Charikar and Panigrahy~\cite{CharikarP04}
for which $\alpha=3.504+\eps$). Inserting all points of a fixed set $P$ into our
data structure even leads to the fastest known \emph{static} $O(1)$-approximation algorithm  
for $k$-sum-of-radii.

\emph{$k$-center.}
We then demonstrate that our idea of using bi-criteria approximations has
more applications: we apply it to $k$-center for which we give a
fully dynamic $(6+\eps)$-approximation algorithm with an expected
amortized update time of $O(k\log^{2}n\log\Delta/\eps)=\tilde{O}(k/\epsilon)$.
This is the first dynamic $O(1)$-approximation algorithm for
any of the classic clustering problems above with an update time whose
dependence on $k$ is asymptotically optimal (while being polylogarithmic
in $n$ and $\Delta$). 
In fact, before not even a dynamic $O(1)$-approximation
algorithm with an update time in $\tilde{o}(k^{2})$ was known for
the simpler deletion-only case of $k$-center or any other of the
clustering problems above. %

Note that improving the dependency on $k$ to linear (even with an additional dependency on $\log n$) is beneficial in all settings where $k$ has a  poly-logarithmic or larger dependency on $n$. This is the case
in a variety of applications.
For
example, setting the number $k$ of cluster centers to be super-logarithmic
or even near-linear in the number of input points can improve the
quality of spam and abuse detection \cite{qian2010case,sheikhalishahi2015fast}
and near-duplicate detection \cite{hassanzadeh2009framework}. %

\vspace{-0.2cm}

\subsection{Technical overview\label{sec:technical-overview}}

\paragraph{Lower bounds against an adaptive adversary.}

To sketch the idea behind the lower bounds we discuss here a simplified setting in which the algorithm \begin{inenum}

\item can query only the distances between points that are currently
in $P$ (i.e., that have been introduced already but not yet removed)
and

\item has a worst-case update time of $f(k,n)$ (rather than amortized
update time) for some function $f$\end{inenum}. (These assumptions are not needed for the lower bound proof presented in the later part of the paper.) Our goal is to show
that the dynamic algorithm cannot maintain $O(1)$-approximate solutions
in subpolynomial update time. 

The adversary starts adding points into $P$ and maintains an auxiliary
graph $G=(V,E)$ with one vertex $v_{p}$ for each point $p$. Whenever
the algorithm queries the distance between two points $p,p'\in P$,
the adversary reports that they have a distance of 1 and adds an edge
$\{v_{p},v_{p'}\}$ of length 1 to $G$. Intuitively, the adversary
uses $G$ to keep track of the previously reported distances.
Whenever there is a vertex $v_{p}$ with a degree of at least $100f(k,n)$,
in the next operation the adversary deletes the corresponding point
$p$. There could be several such vertices, and then the adversary
deletes them one after the other. Thanks to assumption (i), once a point $p$ is deleted, the
degree of $v_{p}$ cannot increase further. Hence, the degrees of
the vertices in $G$ cannot grow arbitrarily. More precisely, one
can show that the degree of each vertex can grow to at most $O(\log n\cdot f(k,n))$.
Thus, for each vertex $v_{p}$ and each $\ell\in\mathbb{N}$,
there are at most $O(\log^{\ell}n\cdot f(k,n)^{\ell})$ vertices at
distance $\ell$ to $v_{p}$. In particular, at least half of the
vertices in $G$ are at distance at least $\Omega(\log_{O(\log n\cdot f(k,n))}n)=\Omega\left(\log n\;/\;[\,\log\log n\cdot\log(f(k,n))\,]\right)$
to $v_{p}$.

Now observe that the algorithm knows only the point distances that
it queried, i.e., those that correspond to edges in $G$. For all other distances
the triangle inequality imposes only an upper bound for any pairs of points whose corresponding
vertices are connected in $G$. Thus,
the algorithm cannot distinguish the setting where the underlying metric is
the shortest path metric in $G$, from the setting where all points
are at distance 1 to each other. If $k=1$ for any of the problems
under consideration, then for the selected center $c$ the algorithm
cannot distinguish whether all points are at distance 1 to $c$ or
if half of the points in $P$ are at distance $\Omega\left(\log n\;/\;[\,\log\log n\cdot\log(f(1,n))\,]\right)$
to $c$. Therefore, the reported cost $\apx$ can be larger than $\OPT$ by a factor of
up to $\Omega\left(\log n\;/\;[\,\log\log n\cdot\log(f(1,n))\,]\right)$.

We improve the above construction so that it works also for algorithms
with amortized update times, which are allowed to query
distances to points that are already deleted and which may output $O(k)$ instead of $k$ centers,
and also already if only the amortized number of distance queries per operation is bounded by $f(k,n)$. 
To this end, we adjust
the construction so that for points with degree of at least $100f(k,n)$
we report distances that might be larger than~1 and that still allow
us to use the same argumentation as above. At the same time, we remove
the factor of $\log\log n$ in the denominator. %

\vspace{-.2cm}

\paragraph{Lower bounds against an oblivious adversary.}

We sketch our lower bound of $\Omega(k)$ for the needed update time
in order to maintain solutions with an approximation ratio of $\Delta-\epsilon$
(and of $\Delta^{2}-\eps$ and $\Delta^{p}-\eps$ for $k$-means and
$(k,p)$-clustering, respectively). First, the adversary introduces
$k$ points $P_{0}$ that are at pairwise distance $\Delta$ to each
other. Then, the adversary introduces a point $p_{1}$ and queries
the cost of the current solution. The intuition is that the algorithm
needs $\Omega(k)$ queries in order to be able to distinguish whether
$p_{1}$ is close to some point in $P_{0}$, e.g., at distance~1 (in
which case the cost of $\OPT$ is 1) or whether $p_{1}$ is at distance
$\Delta$ from each point in $P_{0}$ (in which case the cost of $\OPT$
is $\Delta$, $\Delta^{2}$, or $\Delta^{p}$, depending on the problem).
Note that the algorithm needs to be able to distinguish these two
cases in order to achieve its approximation guarantees.

Then the adversary removes $p_{1}$, introduces another point $p_{2}$,
and repeats this process. Using this idea, we can generate a distribution over
inputs and then apply Yao's minmax principle to it. In this way we show that $\Omega(k)$
queries per update are needed, even for a randomized algorithm against
an oblivious adversary, in order to be able to distinguish the two
cases above for every point $p_{i}$ (and hence in order to obtain
non-trivial approximation ratios of $\Delta-\epsilon$, $\Delta^{2}-\epsilon$,
and $\Delta^{p}-\epsilon$, respectively).

\vspace{-.2cm}

\paragraph{Upper bounds via bi-criteria approximations.}

Our algorithms against an oblivious adversary for $k$-sum-of-radii,
$k$-sum-of-diameters, and $k$-center use the following paradigm:
we maintain bi-criteria approximations, i.e., solutions with a small
approximation ratio that might use more than $k$ centers (at most
$O(k/\epsilon)$ or $O(k\log^{2}n)$ centers, depending on the problem).
In a second step, we use the centers of this solution as the input
to an auxiliary dynamic instance for which we maintain a solution
with only $k$ centers. These centers then form our solution to the
actual problem, by increasing their radii appropriately. Since the
input to our auxiliary instance is much smaller than the input to
the original instance, we can afford to use algorithms for it whose
update times have a much higher dependence on $n$. Depending on the
desired overall update time, we can afford update times of $O(n)$
here, or even recompute the whole solution from scratch. 

In all our algorithms we first ``guess'' a $(1+\eps)$-approximate
estimate $\OPT'$ on the value $\OPT$ of the optimal solution. For the simplicity of presentation, we assume that $d(p,q) \geq 1$ for any $p,q$ in the metric space. If this is not the case, we simply scale all distances linearly. More formally,
for %
$O((\log\Delta)/\eps)$ values $\OPT'$ that are powers of $1+\eps$, we build a
dynamic clustering data structure that uses $\OPT'$ as an estimate of $\OPT$.
It has to either maintain a solution whose
cost is within a constant factor of $\OPT'$ or certify that $\OPT > \OPT'$.
We describe our approaches for $k$-sum-of-radii and $k$-center below in more detail.

\vspace{-.2cm}

\paragraph{$k$-sum-of-radii.}

Recall that in the static offline setting there is a $(3.504+\eps)$-approximation
algorithm with running time $n^{O(1/\eps)}$ for $k$-sum-of-radii~\cite{CharikarP04}.
We provide a black-box reduction that transforms any such offline
algorithm with running time of $g(n)$ and an approximation ratio of
$\alpha$ into a fully dynamic $(6+2\alpha+\eps)$-approximation algorithm
with an update time of $O(((k/\epsilon)^{5}+g(\mathrm{poly}(k/\epsilon)))\log\Delta)$.
To this end, we introduce a dynamic primal-dual algorithm that maintains
a bi-criteria approximation with up to $O(k/\epsilon)$ centers. After
each update, we use these $O(k/\epsilon)$ centers as input for the
offline $\alpha$-approximation algorithm, run it from scratch, and
increase the radii of the computed centers appropriately such that
they cover all points of the original input instance.

For computing the needed bi-criteria approximation, there is a polynomial-time
offline primal-dual $O(1)$-approximation algorithm with a running
time of $\Omega(n^{2}k)$~\cite{CharikarP04} from which we borrow
ideas. Note that a dynamic algorithm with an update time of $(k\log n)^{O(1)}$
yields an offline algorithm with a running time of $n(k\log n)^{O(1)}$
(by inserting all points one after another) and no offline algorithm
is known for the problem that is that fast. Hence, we need new ideas.
First, we formulate the problem as an LP $(P)$ which has a variable
$x_{p}^{(r)}$ for each combination of a point $p$ and a radius~$r$
from a set of (suitably discretized) radii $R$, and a constraint
for each point $p$. Let $(D)$ denote its dual LP, see below, where
$z:=\epsilon\OPT'/k$.
\begin{alignat*}{9}
\min & \,\,\, & \sum_{p\in P}\sum_{r\in R}x_{p}^{(r)}(r+z) &  &  &  &  &  &  &  & \max & \,\,\, & \sum_{p\in P}y_{p}\,\,\,\,\,\,\,\,\,\,\,\\
\mathrm{s.t.} &  & \sum_{p'\in P}\sum_{r:d(p,p')\le r}x_{p'}^{(r)} & \ge1 & \,\,\, & \forall p\in P & \,\,\,\,\,\,\,\: & (P) & \,\,\,\,\,\,\,\,\,\,\,\,\,\,\, &  & \mathrm{s.t.} &  & \sum_{p'\in P:d(p,p')\le r}y_{p'} & \le r+z & \,\,\, & \forall p\in P,r\in R & \,\,\,\,\,\,\,\:(D)\\
 &  & x_{p}^{(r)} & \ge0 &  & \forall p\in P\,\,\forall r\in R &  &  &  &  &  &  & y_{p} & \ge0 &  & \forall p\in P
\end{alignat*}

We select a point $c$ randomly and raise its dual variable $y_{c}$.
Unlike \cite{CharikarP04}, we raise $y_{c}$ only until the constraint
for~$c$ and some radius $r$ becomes \emph{half-tight, }i.e., $\sum_{p':d(c,p')\le r}y_{p'}=r/2+z$.\emph{
}We show that then we can guarantee that no other dual constraint
is violated, which saves running time since we need to check only
the (few, i.e. $|R|$ many) dual constraints for~$c$, and not the constraints for
all other input points when we raise $y_{c}$. We add $c$ to our
solution and assign it a radius of~$2r$. Since the constraint for
$(c,r)$ is half-tight, our dual can still ``pay'' for including $c$
with radius $2r$ in our solution. %

In primal-dual algorithms, one often raises all primal variables whose
constraints have become tight, in particular in the corresponding
routine in~\cite{CharikarP04}. However, since we assign to $c$
a radius of $2r$, we argue that we do not need to raise the up to
$\Omega(n)$ many primal variables corresponding to dual constraints
that have become (half-)tight. Again, this saves a considerable amount
of running time. Then, we consider the points that are not covered
by $c$ (with radius $2r$), select one of these points uniformly
at random, and iterate. After a suitable pruning routine (which \aw{is
faster than the} corresponding routine in~\cite{CharikarP04}) this
process opens $k'= O(k/\eps)$ centers at cost $(6+\eps)\OPT'$,
or asserts that $\OPT>\OPT'$.

We maintain the above primal-dual solution dynamically. The most interesting
update operation occurs when a point $p$ is deleted whose dual variable
$y_{p}$ was raised at some point. Then, we ``warm-start'' the algorithm,
starting at the point where $y_{p}$ was raised, and select instead
another point $p'$ randomly to increase its dual variable $y_{p'}$.
Suppose that $U_{p}$ are the set of points from which $p$ was selected
randomly. The work of the warm-start is at most $O(|U_{p}|g(k'))$.
In expectation the adversary needs to delete $\Omega(|U_{p}|)$ points
of $U_{p}$ before deleting $p$. Thus, we charge the cost of the
warm-start to these points, resulting in a $g(k')$ time per operation.

As mentioned above, after each update we feed the centers of the bi-criteria
approximation as input points to our (arbitrary) static offline $\alpha$-approximation
algorithm for $k$-sum-of-radii and run it. Finally, we translate
its solution to a $(6+2\alpha+\eps)$-approximate solution to the original
instance. For the best known offline approximation algorithm \cite{CharikarP04}
it holds that $\alpha=3.504$, and thus we obtain a ratio of $13.008+\eps$
overall. For $k$-sum-of-diameters the same solution yields a $(26.016+2\epsilon)$-approximation.
\vspace{-.2cm}

\paragraph*{$k$-center.}

In our $(6+\eps)$-approximation algorithm for $k$-center, we maintain
a bi-criteria solution with cost at most $4\OPT'$ that uses up to
$O(k\log^{2}n)$ centers. We feed its centers into an auxiliary data
structure with an update time of $O(n+k)$ that maintains a solution
with cost at most $2\OPT'$; so unlike for $k$-sum-of-radii we do not simply recompute the
solution from scratch. Note that our update time of $O(n+k)$ is still
faster than recomputing a 2-approximate solution from scratch with
the fastest known algorithm, which needs $O(nk)$ time~\cite{hsu1979easy}.
We argue that each input point is at distance at most $6\OPT'$ from
a point computed by the auxiliary data structure, which yields a $(6+\eps)$-approximation. 

We describe first the auxiliary data structure with update time $O(n+k)$.
We say that a center $c$ \emph{covers} a point $p$ if $d(c,p)\le2\OPT'$.
We maintain a counter $a_{p}$ for each point $p$ of the number of
current centers that cover $p$. If the counter $a_{p}$ becomes zero,
then $p$ is defined as a new center and, hence, we increment the
counters of each point $p'$ that $p$ covers. This yields $O(n)$
increments. If a center $c$ is deleted, we decrement the counters
of all points that are covered by~$c$, which yields $O(n)$ decrements
and check for each of these points whether it has to become a center
now. As $n$ is the current size of $P$, if a point that is inserted
and deleted when $n$ is small, it cannot ``pay'' for the cost of
$O(n)$ if it becomes a center when $n$ is large. Instead
we use an amortized analysis that charges each update operation $\Theta(n+k)$
for this cost. Our centers are at distance at least $2\OPT'$ to each
other. Hence, if at least $k+1$ centers exist, they assert that $\OPT>\OPT'$.
If there are at most $k$ centers, they cover all points and, thus,
represent a valid 2-approximation.

Next, we describe a deletion-only data structure maintaining a solution
with cost at most $4\OPT'$ which is %
allowed to use $O(k\log n)$ centers, instead of only $k$. We will
use it as a subroutine to obtain a fully dynamic data structure that
uses $O(k\log^{2}n)$ centers. Already for the deletion-only case,
no data structure was known that maintains a $O(1)$-approximation
with update time $k(\log n)^{O(1)}$. Similarly as Chan et al.~\cite{ChanGS18},
we initialize our data structure by selecting the first center $c_{1}$
uniformly at random among all points in $P=:U_{1}$ and assign all
points to $c_{1}$ that are covered by~$c_{1}$. We say that they
form a \emph{cluster} with center $c_{1}$. We define $U_{2}$ to
be the remaining points and continue similarly by selecting a point
$c_{2}$ uniformly at random from $U_{2}$, etc. In contrast to \cite{ChanGS18},
we assign the computed centers into \emph{buckets} such that for all
points $c_{i}$ in the same bucket, the size of the corresponding
set $U_{i}$ differs by at most a factor of 2. We stop if we have
covered all points or if a bucket $B^{*}$ has $2k$ centers; hence,
$B^{*}$ must contain the last $2k$ chosen centers. This defines
$O(k\log n)$ centers since there can be only $\log n$ buckets.

If a point $p$ is deleted and $p$ is not a center, we do not change
our centers. If a center $c$ is deleted, we replace $c$ by an arbitrary
point $p$ in the cluster of $c$. This update can be done in time
$O(1)$, and then $p$ becomes the center corresponding to this cluster.
Since the diameter of the cluster is at most $4\OPT'$, $p$ covers
all points in the cluster if we assign it a radius of $4\OPT'$. In
particular, this is still the case if later we replace $p$ by some
other point in the same cluster in the same fashion.

As long as the bucket $B^{*}$ contains at least $k+1$ of its initially
chosen centers, this asserts that $\OPT>\OPT'$. Once $k$ of the
original centers from $B^{*}$ are deleted, we ``warm-start'' the
above computation at the point where the first center $c_{i^{*}}\in B^{*}$
was chosen, discarding all centers from $c_{i^{*}}$ onwards and recomputing
them with the current set $P$. Since each center $c_{i}$ was chosen
uniformly at random from its respective set $U_{i}$, in expectation
the adversary needs to delete $\Omega(|U_{i}|)$ points from $U_{i}$
before deleting $c_{i}$. Since we deleted $k$ centers from $B^{*}$
and the sizes of their respective sets $U_{i}$ differ by at most
a factor of $2$, in expectation the adversary must delete $\Omega(k|U_{i^{*}}|)$
points before the warm-start. The set $U_{i^{*}}$ contains all points
that are assigned to some center in~$B^{*}$. Hence, we can charge
the cost of the warm-start of $O(k|U_{i^{*}}|)$ to these deletions.

Using a technique from Overmars~\cite{DBLP:books/sp/Overmars83}
we turn our deletion-only data structure into a fully dynamic one,
at the expense of increasing the number of centers by a factor of
$O(\log n)$. This yields a set $C$ with $O(k\log^{2}n)$ centers
and a cost of at most $4\OPT'$. As mentioned above, we feed $C$
into the auxiliary data structure with $O(n+k)$ update time. This
yields a solution with cost at most $6\OPT'\le(6+\epsilon)\OPT$-approximation
overall. As one update in $P$ might lead to many changes in $C$
this requires a careful amortized analysis to achieve an update time
$O(k\log^{2}n\log\Delta)$. 

Related work and all missing proofs can be found in the appendix.

\section{Bounds against adaptive adversaries}

In this section we present our upper and lower bounds on the approximation
guarantees of dynamic algorithms for $k$-center, $k$-median, $k$-means,
$k$-sum-of-radii, $k$-sum-of-diameters, and $(k,p)$-clustering against
an \emph{adaptive }adversary. We start with the lower bounds. First,
we define a generic strategy for an adversary that in each operation
creates or deletes a point and answers the distance queries of the
algorithm. Then we show how to derive lower bounds for all of our
problems from this single strategy. Finally, we show that our lower
bounds for $k$-center are asymptotically tight by giving a matching
upper bound.

\vspace{-0.2cm}

\paragraph*{Strategy of the adaptive adversary.}

Let $f(k,n)$ be a positive function that is, for every fixed $k$, non-decreasing in $n$. Suppose that there is an algorithm (for any of the problems under
consideration) that in an amortized sense queries the distances
between at most $f(k,n)$ pairs of points per operation of the adversary,
where $n$ denotes the number of points at the beginning of the respective
operation. Note that any algorithm with an amortized update time
of at most $f(k,n)$ fulfills this condition. To determine the distance
between two points the algorithm asks a distance query to the adversary. 
We present now an adversary ${\cal A}$ whose goal
is to maximize the approximation ratio of the algorithm. To
record past answers and to give consistent answers, ${\cal A}$ maintains
a graph $G=(V,E)$ which contains a vertex $v_{p}\in V$ for each
point $p$ that has been inserted previously (including points that
have been deleted already). Intuitively, with the edges in $E$
the adversary keeps track of previous answers to distance queries.
Each vertex $v_{p}$ is labeled as \emph{active, passive, }or \emph{off}.
If a point $p$ has not been deleted yet, then its vertex $v_{p}$
is labeled as active or passive. Once point $p$ is deleted, then
$v_{p}$ is labeled as off. Intuitively, if $p$ has not been deleted
yet, then $v_{p}$ is active if it has small degree and passive if
it has large degree. In the latter case, ${\cal A}$ will delete $p$
soon. All edges in $G$ have length 1, and for two vertices $v,v'\in V$
we denote by $d_{G}(v,v')$ their distance in $G$.

When choosing the next update operation, ${\cal A}$ checks whether there
is a passive vertex $v_{p}$. If yes, ${\cal A}$ picks an arbitrary
passive vertex $v_{p}$, deletes the corresponding point $p$, and
labels the vertex $v_{p}$ as off. Otherwise, it adds a new point
$p$, adds a corresponding vertex $v_{p}$ to $G$, and labels $v_{p}$
as active.

Suppose now that the algorithm queries the distance $d(p,p')$ for
two points $p,p'$ while processing an operation. Note that $p$ and/or
$p'$ might have been deleted already. If both $v_{p}$ and $v_{p'}$
are active then ${\cal A}$ reports to the algorithm that $d(p,p')=1$
and adds an edge $\{v_{p},v_{p'}\}$ to $E$. Intuitively, due to
the edge $\{v_{p},v_{p'}\}$ the adversary remembers that it reported
the distance $d(p,p')=1$ before and ensures that in the future it
will report distances consistently. Otherwise, ${\cal A}$ considers
an augmented graph $G'$ which consists of $G$ and has
in addition an edge $\{v_{\bar{p}},v_{\bar{p}'}\}$ of length 1
between any pair of active vertices $\bar{p},\bar{p}'$. The adversary
computes the shortest path $P$ between $p$ and $p'$ in $G'$ and
reports that $d(p,p')$ equals the length of $P$. If $P$ uses an
edge between two active vertices $\bar{p},\bar{p}'$, then ${\cal A}$
adds the edge $\{v_{\bar{p}},v_{\bar{p}'}\}$ to $G$. Note that $P$
can contain at most one edge between two active vertices since it
is a shortest path in a graph in which all pairs of active vertices
have distance~1. %
Observe that if both $v_{p}$ and $v_{p'}$ are active then this procedure
reports that $d(p,p')=1$ and adds an edge $\{v_{p},v_{p'}\}$ to
$E$ which is consistent with our definition above for this
case. If a vertex $v_{p}$ has degree at least $100f(k,t)$ for the current operation $t$, then $v_{p}$ is labeled as passive. A passive vertex never becomes active again.

In the next lemma we prove some properties about this strategy of
${\cal A}$. For each operation $t$, denote by $G_{t}=(V_{t},E_{t})$
the graph $G$ at the beginning of the operation $t$. 
Recall that the value of $n$ right before the operation $t$ (which is  the number of
current points) equals the number of active and passive
vertices in $G_{t}$.
We show that
the the number of active vertices is $\Theta(t)$, each vertex has
bounded degree, and there exist arbitrarily large values $t$
such that in $G_{t}$ there are no passive vertices (i.e., only active
and off vertices). \begin{lemma} \label{lem:mpt-det-low-adv-prop}
For every operation $t>0$ the strategy of the adversary ensures the
following properties for $G_{t}$
\begin{enumerate}
\item the number of active vertices in $G_{t}$ is at least $96t/100$,
\label{it::mpt-det-low-adv-prop-numac} 
\item each vertex in $G_{t}$ has a degree of at most $100f(k,n)$, \label{it::mpt-det-low-adv-prop-deg} 
\item there exists an operation $t'$ with $t < t' \leq 2t$ such that $G_{t'}$
contains only active and off vertices, but no passive vertices. \label{it::mpt-det-low-adv-prop-clean} 
\end{enumerate}
\end{lemma}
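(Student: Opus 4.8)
The plan is to establish the three properties in the order~(2), (1), (3): property~(2) will follow directly from the passivation rule, property~(1) is the technical core (which I would prove by strong induction on~$t$), and property~(3) will then be a short corollary of~(1).

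\emph{Property~(2).} I would first observe that every new edge the adversary commits to $G$ --- whether in the direct ``$d(p,p')=1$'' case or when the shortest path in $G'$ traverses an edge between two active vertices --- joins two vertices that are active at that instant. Hence a vertex's degree can change only while it is active, and it is frozen once the vertex becomes passive. By the passivation rule a vertex turns passive the first time its degree reaches the threshold ($\approx 100\,f(k,\cdot)$), and since edges are committed one at a time its degree at that moment overshoots the threshold by at most one; a vertex still active in $G_t$ has, by definition, never reached the threshold. In both cases the degree is $O(f(k,n))$. The only nuisance is that a frozen vertex may have crossed its threshold at an operation whose point count differed from the current $n$; I would absorb this using that $f$ is non-decreasing in its second argument, together with the fact --- part of property~(1) --- that the number of points stays within a constant factor of the operation index throughout the run, and that we may assume $f(k,\cdot)\ge 1$ (which also swallows the additive overshoot).

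\emph{Property~(1).} Write $a_t,p_t,o_t$ for the numbers of active, passive, and off vertices in $G_t$, and let $Q_t:=p_t+o_t$ be the number of vertices ever labeled passive. Since each of the $t-1$ operations before operation~$t$ is an insertion (creating one active vertex) or a deletion (turning one passive vertex off), the number of vertices equals both $a_t+p_t+o_t$ and $(t-1)-o_t$; hence $a_t=(t-1)-2o_t-p_t\ge (t-1)-2Q_t$, so it is enough to bound $Q_t$ by a small constant fraction of~$t$. For this I would charge each ever-passive vertex against the edges it ``owns'' at the moment it is frozen: there are $\ge 100\,f(k,\cdot)$ of them, and summing over all ever-passive vertices counts each edge-endpoint at most twice, so $100$ times the sum of the $f$-thresholds of all passivations is at most $2|E_t|$. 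In turn $|E_t|$ is at most the number of distance queries issued so far, which --- because the algorithm issues at most $f(k,n)$ queries per operation in the amortized sense and $n_s\le s$ at every operation $s$ --- is tightly coupled to the operation count. The step I expect to be the main obstacle is making this coupling quantitative, i.e.\ showing that each passivation effectively consumes a full operation's worth of query budget spent on a single doomed point, so that passivations cannot outpace the number of operations. Here it is crucial that the adversary deletes every passive point almost immediately --- so that forcing the point count to stay small only starves the query budget --- that the passivation threshold exceeds the per-operation budget by a constant factor we are free to pick, and that $f$ is non-decreasing; I anticipate needing to split operations and passivations according to the dyadic size of $f(k,\cdot)$ and account for each scale separately, with the inductive form of property~(1) supplying the estimate $n_s=\Theta(s)$ for $s<t$ inside the charging.

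\emph{Property~(3).} Suppose, for contradiction, that $G_{t'}$ contains a passive vertex for every $t'\in\{t+1,\dots,2t\}$. Then at the start of each of these $t$ operations the adversary finds a passive vertex and performs a deletion, so operations $t+1,\dots,2t$ are all deletions and $o_{2t+1}\ge t$. But property~(1) at operation $2t+1$ gives $a_{2t+1}\ge\frac{96}{100}\cdot 2t$, and combining this with $a_{2t+1}+p_{2t+1}+2o_{2t+1}=2t$ yields $o_{2t+1}\le\frac12\bigl(2t-a_{2t+1}\bigr)\le\frac{4}{100}t<t$, a contradiction. Hence some $t'\in(t,2t]$ has $G_{t'}$ free of passive vertices, and since every non-deleted point is active or passive, such a $G_{t'}$ consists only of active and off vertices. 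In summary, the entire difficulty sits in the charging step of property~(1); properties~(2) and~(3) are bookkeeping built on top of it.
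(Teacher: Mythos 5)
Your treatment of properties (2) and (3) is fine. Property (2) follows the same route as the paper (edges are only ever created between currently active vertices, so degrees freeze at passivation, and monotonicity of $f$ handles the shifting threshold; like the paper you are a little loose about whether the bound is stated in terms of the operation index or the current $n$, which is harmless here). Your property (3) is actually \emph{simpler} than the paper's: you observe that if every operation in $\{t+1,\dots,2t\}$ saw a passive vertex then all of them would be deletions, forcing at least $t$ off vertices, which contradicts property (1) at time $2t$ via the identity $a+p+2o=2t$; the paper instead re-runs a budget argument inside the window, introducing ``semi-active'' vertices (degree above $50f$) to control query budget saved up before the window. Your shortcut is valid, but it leans entirely on property (1) — which is exactly where your proposal has a hole.

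The gap is the step you flag yourself: bounding the number of ever-passive vertices $Q_t$. The inequality $\sum_{v}100f(k,j_v)\le 2\lvert E_t\rvert\le 2\sum_{s\le t}f(k,n_s)$ (sum over ever-passive $v$, with passivation times $j_v$) does not give $Q_t\le 2t/100$ on its own, since cheap early passivations could in principle be financed by expensive late budget, and the dyadic rescue you sketch does not obviously work: grouping passivations by the scale of $f(k,j_v)$ and charging each scale to the budget of all operations whose $f$-value lies below the next scale yields $\Theta(t)$ passivations \emph{per scale}, i.e., an extra logarithmic factor, which destroys the explicit constant $96/100$ that the lemma (and the downstream lower-bound lemmas, which use $96t/100$ and the threshold $100f$ numerically) requires. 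The missing idea, and the paper's one-line resolution, is a per-operation pairing of budget with threshold: weight every edge endpoint created by a query issued during operation $i$ by $1/(100f(k,i))$. The total weight ever distributed is at most $\sum_{i\le t}2f(k,n_i)/(100f(k,i))\le\sum_{i\le t}2f(k,i)/(100f(k,i))=2t/100$ (the weights are non-increasing, so the prefix-sum form of the amortized query bound suffices via summation by parts), while monotonicity of $f$ ensures that a vertex passivated at operation $j$ carries at least $100f(k,j)$ endpoints, each of weight at least $1/(100f(k,j))$, hence accumulates weight at least $1$. Thus at most $2t/100$ vertices ever become passive, at most $2t/100$ operations are deletions, and the active count is at least $t-4t/100=96t/100$. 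With this observation your outline closes; without it, the core of property (1) — and hence your proof of (3) — remains unproven.
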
 We say that an operation $t\in\N$ is a \emph{clean operation}
if in $G_{t}$ there are no passive vertices. For any clean operation
$t$, denote by $\bar{G}_{t}=(\bar{V}_{t},\bar{E}_{t})$ the subgraph
of $G_{t}$ induced by the active vertices in $V_{t}$.

\vspace{-0.2cm}

\paragraph*{Consistent metrics. }

The algorithm does not necessarily know the complete metric of the
given points, it knows only the distances reported by the adversary.
In particular, there might be many possible metrics that are consistent
with the reported distances. For each $t\in\N$ denote by $Q_{t}$
the points that were inserted before operation $t$, including
all points that were deleted before operation $t$, and let
$P_{t}\subseteq Q_{t}$ denote the points in $Q_{t}$ that are not
deleted. Given a metric $M$ on any point set $P'$, for all pairs
of points $p,p'\in P'$ we denote by $d_{M}(p,p')\ge0$ the distance
between $p$ and $p'$ according to $M$. For any $t\in\N$ we say
that a metric $M$ for the point set $Q_{t}$ is \emph{consistent}
if for any pair of points $p,p'\in Q_{t}$ for which the adversary
reported the distance $d(p,p')$ before operation~$t$, it
holds that $d(p,p')=d_{M}(p,p')$. In particular, any consistent metric
might be the true underlying metric for the point set $Q_{t}$.

The key insight is that for each clean operation $t$, we can build
a consistent metric with the following procedure. Take the graph $G_{t}$
and insert an arbitrary set of edges of length $1$ between pairs
of active vertices (but no edges that are incident to off vertices),
and let $G'_{t}$ denote the resulting graph. Let $M$ be the shortest
path metric according to $G'_{t}$. If a metric $M$ for $Q_{t}$
is constructed in this way, we say that $M$ is an \emph{augmented
graph metric for }$t$.

\begin{lemma} \label{lem:metric-consistent}Let $t\in\N$ be a clean
operation and let $M$ be an augmented graph metric for $t$. Then
$M$ is consistent. 
\end{lemma}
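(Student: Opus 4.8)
The goal is to show that if $t$ is a clean operation and $M$ is any augmented graph metric for $t$ — i.e., the shortest path metric of a graph $G'_t$ obtained from $G_t$ by adding an arbitrary set of unit-length edges between active vertices (and none touching off vertices) — then $M$ agrees with every distance the adversary has actually reported before operation $t$. First I would recall what "$M$ is consistent" requires: for every pair $p,p' \in Q_t$ for which the adversary reported $d(p,p')$ before operation $t$, we need $d(p,p') = d_M(p,p')$. So I would fix such a pair $p, p'$ and the operation $s < t$ during which the report was made, and argue in both directions that the reported value equals $d_M(p,p') = d_{G'_t}(v_p, v_{p'})$.

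The plan is to break the analysis into the two cases of the adversary's answering rule. Case 1: at the time of the query both $v_p$ and $v_{p'}$ were active, so the adversary reported $1$ and added the edge $\{v_p, v_{p'}\}$ to $G$. Then this edge is present in $G_s$ and (since edges are never deleted from $G$) in $G_t$, hence in $G'_t$, so $d_{G'_t}(v_p,v_{p'}) \le 1$; and since $p \ne p'$ and all edges have length $1$, $d_{G'_t}(v_p, v_{p'}) = 1$, matching the report. Case 2: the adversary used the augmented graph $G'$ at time $s$, computed a shortest path $P$ between $v_p$ and $v_{p'}$ there, and reported its length, additionally committing the (at most one) active-active edge used by $P$ into $G$. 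I would argue the reported length equals $d_{G'_t}(v_p,v_{p'})$ by showing two inequalities. For "$\le$": the path $P$ has all its non-active-active edges already in $G_s \subseteq G_t \subseteq G'_t$; its single active-active edge $\{v_{\bar p}, v_{\bar p'}\}$ either got committed to $G$ (hence is in $G_t \subseteq G'_t$) or, even if $P$ used no such edge, $P$ lies entirely in $G_t$. Either way $P$ is a path in $G'_t$, so $d_{G'_t}(v_p, v_{p'}) \le \operatorname{len}(P)$, which is exactly the reported distance. For "$\ge$": I need that $G'_t$ does not contain "shortcut" edges that were absent from $G'$ at time $s$. The only edges in $G'_t$ are the edges of $G_t$ plus the arbitrarily added active-active edges. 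The added active-active edges were already present in $G'$ at time $s$ (since $G'$ by definition has every active-active edge), so they create no new shortcut relative to $G'$. The subtle point is the edges of $G_t \setminus G_s$ — edges the adversary committed between times $s$ and $t$. I would argue these cannot shorten the $v_p$–$v_{p'}$ distance below $d_{G'}(v_p,v_{p'})$ at time $s$ because every edge ever added to $G$ is a unit edge whose existence is already "covered" in $G'$: an active-active commit was already an edge of $G'$, and — here is the point that needs care — any edge incident to a vertex that is off at time $t$ was committed while that vertex was still active, hence again was already an active-active (or active-active-turned) edge present in $G'_s$. So every edge of $G'_t$ between two vertices that are both in $Q_s$ is already an edge of $G'_s$ (possibly as an active-active edge), giving $d_{G'_t}(v_p, v_{p'}) \ge d_{G'_s}(v_p, v_{p'})$, the reported value.

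I expect the main obstacle to be exactly that "$\ge$" direction: carefully verifying that no edge appearing in $G'_t$ but not in $G'_s$ can create a genuinely shorter $v_p$–$v_{p'}$ path, which amounts to showing that every such newly-committed edge either has both endpoints active at time $s$ (so it was already a phantom edge of $G'_s$) or is irrelevant because at least one endpoint was not yet inserted at time $s$ and hence cannot lie on a path between two vertices of $Q_s$. The key structural facts I would lean on are: edges are only ever added to $G$, never removed; an edge is committed between $v_a$ and $v_b$ only when both are active (active-active reports) or as part of a shortest path whose committed edge is active-active; a vertex that is off at time $t$ was active when its incident edges were committed; and $G'$ always contains the complete set of active-active edges, so committing one of them to $G$ never changes the metric of $G'$. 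Once these are laid out, both inequalities in both cases are short, and the clean-operation hypothesis is used only to guarantee that $G_t$ has no passive vertices so that the active-vertex-induced subgraph picture used in the construction of $M$ is the relevant one; it is not otherwise needed for consistency itself, but I would note where it keeps the bookkeeping clean.
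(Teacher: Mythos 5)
Your overall strategy is sound and is, at its core, the same as the paper's: the paper proves that every reported answer agrees with the shortest-path distance both in the committed graph $G_t$ and in the fully augmented graph (all active--active edges added), and then interpolates to any intermediate edge set via a sandwiching observation; you instead prove the two inequalities directly for the arbitrary augmented graph $G'_t$, using the same structural facts (edges are only ever committed between vertices that are active at that moment, activeness persists backward in time, and $G$ only grows). That repackaging is fine. However, there is a genuine gap in your ``$\ge$'' direction. Your key claim ``every edge of $G'_t$ between two vertices that are both in $Q_s$ is already an edge of $G'_s$'' is correct, but it does not by itself yield $d_{G'_t}(v_p,v_{p'})\ge d_{G'_s}(v_p,v_{p'})$: a shortest $v_p$--$v_{p'}$ path in $G'_t$ may route through vertices inserted after operation $s$, and your stated reason for dismissing this case --- that an edge with an endpoint outside $Q_s$ ``cannot lie on a path between two vertices of $Q_s$'' --- is simply false. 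Such paths exist (e.g.\ $v_p$--$w$--$v_{p'}$ with $w$ inserted after $s$), so the case you flagged as the main obstacle is resolved with an incorrect argument rather than a proof.

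The gap is fixable with a short path-surgery argument that you do not carry out: take any $v_p$--$v_{p'}$ path in $G'_t$ and consider each maximal segment whose internal vertices lie outside $Q_s$. The first and last edges of such a segment are not edges of $G_s$ (edges of $G_s$ join vertices of $Q_s$), so they are edges committed after $s$ or edges added in the construction of $M$; in either case their endpoints were active when the edge arose, hence the two $Q_s$-endpoints $u,w$ of the segment were active already at time $s$ (a vertex active at a later time and present at time $s$ must have been active at time $s$, since passive/off vertices never become active again). Therefore the phantom edge $\{u,w\}$ of length $1$ is available in the augmented graph used by the adversary at time $s$, and replacing each such segment (of length at least $1$, in fact at least $2$) by this edge, and each post-$s$ edge between two $Q_s$-vertices by the corresponding phantom edge, yields a path in the time-$s$ augmented graph that is no longer than the original. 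This gives $d_{G'_t}(v_p,v_{p'})\ge d_{G'_s}(v_p,v_{p'})=$ the reported value, which is exactly the step your write-up asserts but does not establish. With this projection step added, your proof is complete and matches the paper's argument in substance.
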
 
In particular, there are no shortcuts
via off vertices in $G_{t}$ that could make the metric $M$
inconsistent.

We fix a clean operation $t\in\N$. We define some metrics that are consistent
with $Q_{t}$ that we will use later for the lower bounds for our
specific problems. The first one is the ``uniform'' metric $M_{\mathrm{uni}}$
that we obtain by adding to $G_{t}$ an edge between \emph{each} pair
of active vertices in $G_{t}$. As a result, $d_{M_{\mathrm{uni}}}(p,p')=1$
for any $p,p'\in P_{t}$. 
\begin{lemma} \label{lem:muni-consistent}
For each clean operation $t$ the corresponding metric $M_{\mathrm{uni}}$
is consistent. 
\end{lemma}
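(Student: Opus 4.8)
The plan is to derive Lemma~\ref{lem:muni-consistent} as an immediate consequence of Lemma~\ref{lem:metric-consistent}. The point is that $M_{\mathrm{uni}}$ is, by definition, a special case of an augmented graph metric for $t$: it is obtained from $G_t$ by inserting a particular set of length-$1$ edges between pairs of active vertices, namely \emph{all} such pairs. First I would observe that this set of inserted edges is indeed ``a set of edges of length $1$ between pairs of active vertices (but no edges incident to off vertices),'' which is exactly the family of augmentations allowed in the construction of augmented graph metrics preceding Lemma~\ref{lem:metric-consistent}. Hence $M_{\mathrm{uni}}$ is an augmented graph metric for the clean operation $t$.

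Once that identification is made, I would simply invoke Lemma~\ref{lem:metric-consistent}, which states that every augmented graph metric for a clean operation $t$ is consistent, to conclude that $M_{\mathrm{uni}}$ is consistent. The only other thing worth spelling out is the side remark already made in the text: $d_{M_{\mathrm{uni}}}(p,p') = 1$ for all $p,p' \in P_t$, which follows because in the augmented graph every pair of active vertices $v_p, v_{p'}$ (corresponding to undeleted points $p,p' \in P_t$) is joined by an edge of length $1$, so their shortest-path distance is exactly $1$; it is at most $1$ because of the direct edge and at least $1$ because all edge lengths are $1$ and the two vertices are distinct. This last observation is not strictly needed for consistency but justifies the naming and is convenient for later use.

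I do not expect any real obstacle here: the lemma is essentially a definitional corollary of Lemma~\ref{lem:metric-consistent}. The only point requiring a moment's care is making sure the edges added to form $M_{\mathrm{uni}}$ are genuinely within the scope of the augmented-graph-metric construction — in particular that we never add an edge incident to an off vertex. This holds because we add edges only between pairs of \emph{active} vertices of $G_t$, and in a clean operation $t$ the off vertices are exactly the deleted points, which are disjoint from the active vertices by definition. So the verification reduces to this one-line check, and the rest is a direct appeal to the previous lemma.
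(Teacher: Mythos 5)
Your proposal is correct and matches the paper's approach: the paper also treats $M_{\mathrm{uni}}$ as a special case of an augmented graph metric and derives its consistency from the general sandwiching/consistency argument behind Lemma~\ref{lem:metric-consistent} (exactly the pattern the paper itself uses for the analogous metrics $M(p^*)$ and $M_\ell(P^*)$). The one-line check that the added edges join only active vertices, hence fall within the allowed augmentations, is indeed all that is needed.
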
 
In contrast to $M_{\mathrm{uni}}$, our
next metric ensures that there are distances of up to $\Omega(\log n)$
between some pairs of points. Let $p^{*}\in P_{t}$ be a point such
that $v_{p^{*}}$ is active. For each $i\in\N$ let $V^{(i)}\subseteq V_{t}$
denote the active vertices $v\in V_{t}$ with $d_{G_{t}}(v_{p^{*}},v)=i$,
and let $V^{(n)}\subseteq V_{t}$ denote the vertices in $G_{t}$
that are in a different connected component than $p^{*}$. Since
the vertices in $G_{t}$ have degree at most $100 f(k,n)$, more than half of all
vertices are in sets $V^{(i)}$ with $i\ge\Omega(\log n/ \log f(k,n))$.

We define now a metric $M(p^{*})$ as the shortest path metric in
the graph defined as follows. We start with~$G_{t}$; for each $i,i'\in\N$
we add to $G_{t}$ an edge $\{v_{p},v_{p'}\}$ between any pair of
vertices $v_{p}\in V^{(i)}$, $v_{p'}\in V^{(i')}$ such that $|i-i'|\le1$.
As a result, for any $i,i'\in\N$ and any $v_{p}\in V^{(i)}$, $v_{p'}\in V^{(i')}$
we have that $d_{M(p^{*})}(p,p')=\max\{|i-i'|,1\}$, i.e., $d_{M(p^{*})}(p,p')=1$
if $i=i'$ and $d_{M(p^{*})}(p,p')=|i-i'|$ otherwise. 
\begin{lemma}
\label{lem:mstar-consistent} For each clean operation $t$ and each
point $p^{*}\in V_{t}$ the metric $M(p^{*})$ is consistent. 
\end{lemma}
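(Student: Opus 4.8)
The plan is to reduce this statement to Lemma~\ref{lem:metric-consistent}, which already tells us that any augmented graph metric for a clean operation $t$ is consistent. So the main task is to verify that $M(p^{*})$, as defined, is indeed an augmented graph metric for $t$ — that is, that it arises as the shortest-path metric of $G_{t}$ together with some set of length-$1$ edges added only between pairs of \emph{active} vertices (and never touching \emph{off} vertices).

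First I would unpack the definition: $M(p^{*})$ is the shortest-path metric of the graph $H$ obtained from $G_{t}$ by adding, for every pair $i,i'\in\N$ with $|i-i'|\le 1$, all edges $\{v_{p},v_{p'}\}$ with $v_{p}\in V^{(i)}$ and $v_{p'}\in V^{(i')}$. The key observation is that each $V^{(i)}$ consists, by definition, of \emph{active} vertices only (it is the set of active vertices at graph-distance exactly $i$ from $v_{p^{*}}$ in $G_{t}$). Hence every newly added edge connects two active vertices, and in particular no added edge is incident to an off vertex. This is precisely the condition in the definition of an augmented graph metric, so Lemma~\ref{lem:metric-consistent} applies and gives consistency of $M(p^{*})$ directly.

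The one point that needs a little care — and which I expect to be the main (small) obstacle — is to confirm that the stated closed form $d_{M(p^{*})}(p,p')=\max\{|i-i'|,1\}$ for $v_{p}\in V^{(i)}$, $v_{p'}\in V^{(i')}$ really holds, since this closed form is what the later lower-bound arguments rely on; but for the consistency claim itself we do not even need it. For completeness I would argue it as follows. The original edges of $G_{t}$ have length $1$, and since all degrees in $G_{t}$ are at most $100f(k,n)$ and the $V^{(i)}$ are the distance-$i$ level sets from $v_{p^{*}}$, any edge of $G_{t}$ joins vertices on levels differing by at most $1$; thus every edge of $H$ (old or new) connects consecutive or equal levels, giving the lower bound $d_{M(p^{*})}(v_{p},v_{p'})\ge\max\{|i-i'|,1\}$ by following level indices along any path. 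For the matching upper bound, walking along the added edges from level $i$ up or down one level at a time realizes a path of length $\max\{|i-i'|,1\}$ (using one such edge when $i=i'$). This establishes the formula; the vertices in $V^{(n)}$, lying in other connected components of $G_{t}$, are joined to the levels by added level-edges in the same way, so the same reasoning covers them. Combining the reduction to Lemma~\ref{lem:metric-consistent} with this verification completes the proof.
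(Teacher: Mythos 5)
Your proof is correct and follows essentially the same route as the paper: the paper also obtains \cref{lem:mstar-consistent} by observing that $M(p^{*})$ adds length-$1$ edges only between active vertices (each $V^{(i)}$ consists of active vertices), so it is an augmented graph metric for the clean operation $t$ and consistency follows from \cref{lem:metric-consistent} (which the paper in turn derives from the sandwich argument of \cref{lem:sp-consistent,lem:convex-compliance}). Your additional verification of the closed form $d_{M(p^{*})}(p,p')=\max\{|i-i'|,1\}$ is, as you note, not needed for the consistency claim itself.
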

For any two thresholds $\ell_{1},\ell_{2}\in\N_{0}$ with $\ell_1 < \ell_2$  we define a metric
$M_{\ell_{1},\ell_{2}}(p^{*})$ (which is a variation of $M(p^{*})$)
as the shortest path metric in the following graph. Intuitively,
we group the vertices in $\bigcup_{i=0}^{\ell_{1}}V^{(i)}$ to one
large group and similarly the vertices in $\bigcup_{i=\ell_{2}}^{\infty}V^{(i)}$.
Formally, in addition to the edges defined for $M(p^{*})$, for each
pair of vertices $v_{p}\in V^{(i)}$, $v_{p'}\in V^{(i')}$ we add
an edge $\{v_{p},v_{p'}\}$ if $i\le i'\le\ell_{1}$ or $\ell_{2}\le i\le i'$.%

\begin{lemma} \label{lem:mstarrange-consistent} For each clean operation
$t$, each point $p^{*}\in V_{t}$, and each $\ell_{1},\ell_{2}\in\N_{0}$
the metric $M_{\ell_{1},\ell_{2}}(p^{*})$ is consistent. \end{lemma}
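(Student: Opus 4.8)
The plan is to prove \Cref{lem:mstarrange-consistent} by invoking \Cref{lem:metric-consistent}: it suffices to show that $M_{\ell_{1},\ell_{2}}(p^{*})$ is an augmented graph metric for $t$, i.e., that it is the shortest path metric of a graph obtained from $G_{t}$ by adding only edges of length $1$ between pairs of \emph{active} vertices, with no new edge incident to an off vertex. All the edges added in the construction of $M_{\ell_{1},\ell_{2}}(p^{*})$ — both the ones inherited from $M(p^{*})$ (joining $v_{p}\in V^{(i)}$ to $v_{p'}\in V^{(i')}$ whenever $|i-i'|\le 1$) and the new ones (joining $v_{p}\in V^{(i)}$ to $v_{p'}\in V^{(i')}$ whenever $i\le i'\le \ell_1$ or $\ell_2\le i\le i'$) — run between vertices of the sets $V^{(i)}$, $i\in\N$. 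By definition each $V^{(i)}$ consists only of active vertices (the off vertices of $G_{t}$ lie in none of the $V^{(i)}$, and $V^{(n)}$, the different-component vertices, is also not touched by these edges). Hence every added edge joins two active vertices, so $M_{\ell_1,\ell_2}(p^{*})$ is an augmented graph metric for $t$, and \Cref{lem:metric-consistent} gives consistency.

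The one point that needs care — and the place I expect the main (minor) obstacle to be — is to check that the object defined really \emph{is} a metric, or at least that it coincides with the shortest path metric of the graph just described; the statement already asserts it is "the shortest path metric in the following graph," so this reduces to observing that this graph is connected on the active vertices that matter, or more precisely that \Cref{lem:metric-consistent} as stated only requires $M$ to be a shortest path metric of such an augmented graph, which is automatically a (pseudo)metric, and is positive on distinct points because $G_{t}$ has no multi-edges and all edges have length $1$. So there is nothing to verify beyond the edge-classification above. In particular I would remark, as the paper does after \Cref{lem:metric-consistent}, that the key is that no shortcut is created through an off vertex: since we add no edge incident to an off vertex, the distances among active points can only be those realized by paths through active points, exactly as in an augmented graph metric.

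Concretely, the proof I would write is: "Consider the graph $H$ obtained from $G_{t}$ by adding, for every pair $v_{p}\in V^{(i)}$, $v_{p'}\in V^{(i')}$, an edge $\{v_{p},v_{p'}\}$ of length $1$ whenever $|i-i'|\le 1$, or $i\le i'\le\ell_1$, or $\ell_2\le i\le i'$. By definition $M_{\ell_1,\ell_2}(p^{*})$ is the shortest path metric of $H$. Every edge of $H$ not already in $G_{t}$ joins two vertices lying in sets of the form $V^{(i)}$, and each such set contains only active vertices of $G_{t}$; in particular no added edge is incident to an off vertex. Therefore $H$ is an augmented graph for $t$ in the sense preceding \Cref{lem:metric-consistent}, and hence $M_{\ell_1,\ell_2}(p^{*})$ is consistent by that lemma." This is short and relies only on results already in the excerpt.
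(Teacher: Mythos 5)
Your proposal is correct and follows essentially the same route as the paper: the paper also reduces all of \cref{lem:muni-consistent,lem:mstar-consistent,lem:mstarrange-consistent} to the fact that every added edge joins two active vertices, so the edge set lies between $E_t$ and $E_t$ plus all active--active pairs, which is exactly the content of \cref{lem:metric-consistent} (formally packaged in the appendix via the sandwich observation \cref{lem:convex-compliance} together with \cref{lem:sp-consistent}). Your final written proof, which notes that each set $V^{(i)}$ (including the different-component set) consists of active vertices so no new edge touches an off vertex, is precisely the intended argument.
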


\paragraph*{Lower bounds. }

Consider a clean operation $t$. The algorithm cannot distinguish
between $M_{\mathrm{uni}}$ and $M(p^{*})$ for any $p^{*}\in P_{t}$.
In particular, for the case that $k=1$ (for any of our clustering
problems) the algorithm selects a point $p^{*}$ as the center, and
then for each $i$ it cannot determine whether the distance
of the points in $V^{(i)}$ to $p^{*}$ equals 1 or $i$.
However, there are at least $n/2$ points in sets $V^{(i)}$
with $i\ge\Omega(\log n/\log f(k,n)))$ and hence they contribute a large amount
to the objective function value. This yields the following lower bounds,
already for the case that $k=1$. With more effort, we can show them even for bi-criteria approximations,
i.e., for algorithms that may output $O(k)$ centers, but where the approximation
ratio is still calculated with respect to the optimal cost on $k$
centers. \begin{theorem} \label{thm:det-1lb} \aw{Against an adaptive
adversary}, any dynamic algorithm that queries amortized at most $f(k,n)$
distances per operation and outputs at most $O(k)$ centers 
\begin{itemize}
\item for $k$-center, $k$-sum-of-radii or $k$-sum-of-diameters has an
approximation ratio of $\Omega(\frac{\log(n)}{\log f(1,2n)})$, 
\item for $k$-median has an approximation ratio of $\Omega(\frac{\log(n)}{10+2\log f(1,2n)})$, 
\item for $k$-means has an approximation ratio of $\Omega((\frac{\log n}{12+2\log f(1,2n)})^{2})$, 
\item for $(k,p)$-clustering has an approximation ratio of $\Omega((\frac{\log n}{2p+8+2\log f(1,2n)})^{p})$. 
\end{itemize}
\end{theorem}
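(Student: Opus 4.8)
The plan is to instantiate the generic adversary of the preceding section and then, for each problem, pick two or more \emph{consistent} metrics (from \Cref{lem:muni-consistent,lem:mstar-consistent,lem:mstarrange-consistent,lem:metric-consistent}) between which the algorithm cannot distinguish but whose optimum costs differ by the claimed factor. Concretely, for an arbitrarily large parameter $n_0$ I would fix, via \Cref{lem:mpt-det-low-adv-prop}, a clean operation $t$ with $n_0<t\le 2n_0$; then $G_t$ has $n:=\Theta(t)$ active vertices, no passive vertices, and every vertex has degree at most $100f(k,n)$. The whole construction only ever uses $k=1$, so the algorithm is then allowed $O(1)$ centers and, since $f(1,\cdot)$ is non-decreasing and the instance never holds more than $2n$ points, its amortized query bound is at most $f(1,2n)$; I abbreviate $d:=100f(1,2n)$ and $L:=\Omega(\log n/\log d)$ for the ``radius budget'' coming from the small-ball count in the paragraph around \Cref{lem:mstar-consistent}. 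Because every metric produced below is consistent, an $\alpha$-approximation algorithm must be correct on all of them at once, while the adversary may pick the worst one \emph{after} seeing the algorithm's output at operation $t$; that asymmetry is the whole source of the lower bound.

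For $k$-center I would use the reported cost estimate $\apx$: the algorithm cannot tell $M_{\mathrm{uni}}$ from $M(p^{*})$, so it returns the same $\apx$ in both. Under $M_{\mathrm{uni}}$ all distances are $1$, hence $\OPT=1$ and $\apx\le\alpha$. Under $M(p^{*})$ the metric splits the points into levels $V^{(0)},\dots,V^{(D)}$ with $D\ge L$ (since more than half the points sit at level $\ge L$), so the Chebyshev radius of $P_t$ is $\lceil D/2\rceil=\Omega(L)$, forcing $\apx\ge\OPT=\Omega(L)$ and therefore $\alpha=\Omega(\log n/\log f(1,2n))$; sharpening $M(p^*)$ to $M_{\ell_1,\ell_2}(p^{*})$ (\Cref{lem:mstarrange-consistent}) lets one pin down the constant. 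For $k$-sum-of-radii and $k$-sum-of-diameters with $k=1$ the problem coincides with $k$-center, and here a solution-query already suffices: the reported solution must be \emph{valid} for the true (unknown) metric, so if $c$ is a reported center its radius must be at least $\mathrm{ecc}_{M(c)}(c)\ge L$ (by \Cref{lem:mstar-consistent} applied with $p^{*}=c$), which then overshoots $\OPT=1$ under $M_{\mathrm{uni}}$ by the factor $L$.

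For $(k,p)$-clustering -- in particular $k$-median ($p=1$) and $k$-means ($p=2$) -- I take $k=1$, so the algorithm's output is a set $S$ of $O(1)$ centers (possibly including already-deleted points, which the adversary's bookkeeping still covers). I would choose $r:=\Theta(\log n/\log f(1,2n))$ large enough that $|S|\cdot d^{\,r+1}\le n/2$ (possible by the degree bound of \Cref{lem:mpt-det-low-adv-prop} since $|S|=O(1)$), so the set $F$ of active vertices at $G_t$-distance more than $r$ from \emph{every} center of $S$ has $|F|\ge n/2$, and let $M$ be the consistent augmented-graph metric (\Cref{lem:metric-consistent}) obtained by turning $F$ into a unit-diameter clique. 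Then (i) every center of $S$ is at $M$-distance at least $r+1$ from every one of the $\ge n/2$ points of $F$, because a path reaching $S$ must leave $F$ through an original edge and then still travel $\ge r$ more steps, so $\mathrm{cost}_M(S)\ge (n/2)(r+1)^{p}$; and (ii) for any $c^{*}\in F$ a point at $G_t$-distance $\rho$ from $S$ is at $G_t$-distance at most $r-\rho+O(1)$ from $F$, hence at $M$-distance $O(r-\rho)$ from $c^{*}$, and the geometric tail $\sum_{\rho}|S|\,d^{\,\rho+1}(r-\rho)^{p}=O(n)$ shows $c^{*}$ has cost $O(n)$, so $\OPT$ with one center is $O(n)$. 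The ratio is thus $\Omega((r+1)^{p})=\Omega((\log n/\log f(1,2n))^{p})$, and carrying the constants (the choice of $r$ costing roughly a factor $2$ in the base $100f$, and keeping the tail $O(n)$ after the $p$-th power costing an additive $\Theta(p)$) gives the stated denominators $2p+8+2\log f(1,2n)$, and $10,12$ for $p=1,2$.

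The step I expect to be most delicate is precisely the $(k,p)$ analysis in this bi-criteria regime: one must collapse \emph{enough} of the metric that a single center becomes cheap, but collapse \emph{nothing} that pulls one of the $O(1)$ reported centers close to the collapsed mass, and it is the quantitative balance between these -- together with the interaction of the $p$-th power with the sphere-size tail -- that fixes the additive and multiplicative constants in the bound; checking that the same metric is bad simultaneously for \emph{every} possible $O(1)$-center output is the part that needs care. By contrast, once the consistent metrics of \Cref{lem:muni-consistent} and \Cref{lem:mstar-consistent} are in hand, the $k$-center, $k$-sum-of-radii, and $k$-sum-of-diameters cases are comparatively mechanical.
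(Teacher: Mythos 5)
Your proposal follows the paper's own route essentially step for step: the same adaptive adversary and a clean operation $t$ with $n=\Theta(t)$ current points (so $t\le 2n$), the consistent metrics $M_{\mathrm{uni}}$, $M(p^{*})$, $M_{\ell_1,\ell_2}(p^{*})$ for the first bullet (the paper merely packages this comparison as a diameter lower bound, \cref{lem:diam}), and for $(k,p)$-clustering exactly the paper's metric $M_{\ell}(C)$ that cliques the points far from the algorithm's $O(1)$ output centers, with the same $(t/2)\ell^{p}$ lower bound on the algorithm's cost and the same geometric-tail estimate showing that a single center inside the clique costs $O(t)$ (\cref{lem:det-low-1pclus}), from which the stated denominators follow by the same algebra. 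The step you flag as delicate --- charging points at graph-distance $i$ from the center set a contribution of $(\ell-i)^{p}$ toward the single-center optimum --- is handled in the paper by the very same charging you sketch, so your plan does not diverge from the published argument.
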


For $k$-center the situation changes
if the algorithm does
not need to be able to report an upper bound of the value of its computed
solution (but only the solution itself), since if $k=1$, then any
point is a 2-approximation. However, for arbitrary $k$ we can argue
that there must be $3k$ consecutive sets $V^{(i)},V^{(i+1)},...,V^{(i+3k-1)}$
such that the algorithm does not place any center on any point corresponding
to the vertices in these sets and hence incurs a cost of at least
$3k/2$. On the other hand, for the metric $M_{i,i+3k-1}(p^{*})$ the
optimal solution selects one center from each set $V^{(i+1)},V^{(i+4)},V^{(i+7)},...$
which yields a cost of only 1. %
\begin{theorem} \label{lem:mpt-det-low-kcen} \aw{Against an adaptive
adversary}, any dynamic algorithm for $k$-center that queries
amortized at most $f(k,n)$ distances per operation
and outputs at most $O(k)$ centers
has an approximation factor
of at least $\Omega\left(\min\left\{ k,\frac{\log n}{k\log f(k,2n)}\right\} \right)$,
even if at query time it needs to output only a center set, but no
estimate $\apx$ on $\OPT$.\end{theorem}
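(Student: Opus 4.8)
The plan is to reuse the adaptive-adversary strategy $\mathcal{A}$ and the consistent metrics $M_{\ell_1,\ell_2}(p^*)$ of this section, together with a pigeonhole argument on the BFS layers around a single active point. Fix the constant $c$ with the property that the algorithm always outputs at most $ck$ centers, and run $\mathcal{A}$ against it. By \Cref{lem:mpt-det-low-adv-prop} there are arbitrarily large clean operations, the number of active vertices in $G_t$ is $\Theta(t)$, and every vertex of $G_t$ has degree at most $100 f(k,n)$; fix a clean operation $t$ with $n := |P_t|$ as large as needed, so that $n = \Theta(t)$, $t \le 2n$, and hence every vertex has degree at most $100 f(k,2n)$.

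First I would choose an active point $p^*$ whose connected component in $\bar{G}_t$ is as large (equivalently, as deep) as possible; if every component has depth $o(\log n/\log f(k,2n))$ then there are polynomially many components and one can instead chain them with length-$1$ edges into a metric with polynomially many layers, which only strengthens the bound, so assume the component of $p^*$ has depth at least $L^* := \Omega(\log n/\log f(k,2n))$. Since degrees are at most $100 f(k,2n)$, at most $(100 f(k,2n))^{D+1}$ vertices lie within graph-distance $D$ of $p^*$, so the layers $V^{(0)}, V^{(1)}, \dots, V^{(L^*)}$ are all non-empty. Each of the at most $ck$ centers produced by the algorithm is a vertex of $V_t$; those that lie in the component of $p^*$ occupy at most $ck$ distinct layers among $V^{(0)}, \dots, V^{(L^*)}$ (this bound also absorbs centers placed on already-deleted points and the slack from allowing $O(k)$ instead of $k$ centers), and the remaining centers occupy none. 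By pigeonhole there is a run $V^{(a)}, V^{(a+1)}, \dots, V^{(b)}$ of $w \ge (L^* - ck)/(ck+1)$ consecutive layers containing no center.

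Now I would charge the algorithm on the metric $M_{a-1,b+1}(p^*)$, which is consistent by \Cref{lem:mstarrange-consistent} and in which all layers $\le a-1$ collapse to a ``left blob'', all layers $\ge b+1$ to a ``right blob'', and $d(V^{(i)},V^{(j)}) = \max\{|i-j|,1\}$ for $i,j$ in the window. Because every center of the algorithm lies in one of the two blobs (or, in the disconnected case, in a component that we attach to the right blob), every point of the non-empty middle layer $V^{(\lfloor (a+b)/2\rfloor)}$ is at distance at least $\lfloor w/2 \rfloor$ from every center, so the algorithm's solution costs at least $\lfloor w/2 \rfloor$. On the other hand, $k$ centers of radius $r$ cover at most $k(2r+1)$ of the $w$ window layers, so $\OPT = \Theta(\max\{1, w/k\})$: placing $\lceil w/3 \rceil$ radius-$2$ balls every third layer covers the whole window together with both blobs, giving $\OPT = \Theta(1)$ when $w = O(k)$ and $\OPT = \Theta(w/k)$ otherwise. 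Hence the approximation ratio is $\Omega(\min\{w, k\})$. If $L^* \le 2ck$ then $\log n/(k\log f(k,2n)) = O(1)$ and the claimed bound is vacuous; otherwise $w \ge (L^*/2)/(ck+1) = \Omega(\log n/(k\log f(k,2n)))$, so the ratio is $\Omega(\min\{k, \log n/(k\log f(k,2n))\})$. Since clean operations occur for arbitrarily large $n$ this holds for infinitely many input sizes, the argument never uses an estimate of $\OPT$ but only the reported center set, and it applies to randomized algorithms against an adaptive adversary because every answer of $\mathcal{A}$ is consistent with $M_{a-1,b+1}(p^*)$, so for each fixing of the random bits the interaction is indistinguishable from a run on that fixed metric, on which the algorithm must then fail.

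I expect the main obstacle to be the bookkeeping for vertices that do not sit cleanly in the BFS structure around $p^*$: centers placed on already-deleted (off) vertices, active vertices in other components, and especially off vertices carrying a previously queried edge that reaches from a blob layer into the window. One has to verify that none of these lets the algorithm cover the deep middle layer at distance $o(w)$ in $M_{a-1,b+1}(p^*)$; this is exactly where it is used that the window is $\omega(1)$ layers wide, so that the $\pm O(1)$ distortion introduced by such edges is negligible, and that the thresholds of $M_{\ell_1,\ell_2}(p^*)$ are chosen as $\ell_1 = a-1$, $\ell_2 = b+1$ just outside the window rather than inside it.
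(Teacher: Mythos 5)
Your proposal is correct and follows essentially the same route as the paper's proof: run the adaptive adversary to a clean operation, use the degree bound to obtain $\Omega(\log n/\log f(k,2n))$ non-empty BFS layers around a single point, pigeonhole a run of consecutive center-free layers, and charge the algorithm on the consistent collapsed metric $M_{\ell_1,\ell_2}(p^*)$. The only (cosmetic) difference is that the paper truncates the center-free window at $k'=\min\{3k-1,m/2\}$ so that the optimum cost is exactly $1$ with at most $k$ centers, whereas you keep the full window of width $w$ and bound $\OPT=\Theta(\max\{1,w/k\})$ against $\ALG=\Omega(w)$; both yield the same $\Omega(\min\{k,\log n/(k\log f(k,2n))\})$ bound.
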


We show that our lower bound for $k$-center is asymptotically tight
\aw{by giving an algorithm that matches these bounds.
Our algorithm is deterministic, and hence its guarantees hold against an adaptive adversary.}
\begin{theorem}[see \cref{sec:app-det-upper}]
\label{thm:mpt-det-upper} There is a dynamic deterministic algorithm
for $k$-center with update time of $O(k\log n\log\Delta/\eps)$ and
an approximation factor of $(1+\eps)\min\{4k,4\log(n/k)\}$. \end{theorem}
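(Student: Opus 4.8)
The plan is to build a deterministic fully dynamic data structure that, for a fixed guess $\OPT'$ which is a power of $1+\eps$, either certifies $\OPT > \OPT'$ or maintains a $k$-center solution of cost at most $\min\{4k, 4\log(n/k)\}\cdot\OPT'$; running $O((\log\Delta)/\eps)$ such structures in parallel and reporting the solution of the smallest successful guess then yields the claimed $(1+\eps)\min\{4k,4\log(n/k)\}$ approximation. As in the technical overview and in Chan et al.~\cite{ChanGS18}, for a fixed $\OPT'$ I would greedily build a nested sequence of balls: let $U_1 := P$, pick \emph{any} center $c_1\in U_1$ (determinism is fine here since we do not need randomness when we do not charge to random deletions), let its cluster be all points within distance $2\OPT'$ of $c_1$, set $U_2$ to be the uncovered points, and iterate. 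Two stopping conditions: either all points get covered, or we have opened some budgeted number of centers. The centers produced are pairwise at distance $>2\OPT'$, so if we ever open more than $k$ of them we may safely certify $\OPT>\OPT'$. Each cluster has diameter at most $4\OPT'$, so replacing a deleted center by an arbitrary surviving point of its cluster keeps that cluster covered at radius $4\OPT'$ and costs $O(1)$ time.

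The key new ingredient, and the reason we get a $\min\{4k,4\log(n/k)\}$ guarantee rather than $4k$, is to \emph{cap the recursion depth}: instead of running the greedy peeling until $U_i$ is empty, stop after $\log(n/k)$ rounds as well. If after $\lceil\log(n/k)\rceil$ rounds of peeling (each round removing all points within $2\OPT'$ of the chosen center) there are still uncovered points, I would branch: the point is that $|U_1|=n$ and each covered cluster that still leaves $U_{i+1}$ nonempty must itself be ``large'' relative to what remains, or else — and here is the actual combinatorial heart — if the optimum uses only $k$ balls of radius $\OPT'$, then any greedy ball of radius $2\OPT'$ centered at an uncovered point contains \emph{at least one entire optimal cluster}, so after $k$ greedy balls everything is covered; simultaneously, if instead we group greedily chosen centers into buckets by the size of their $U_i$ (sizes within a factor $2$) as in the overview, a bucket with more than... — no: the cleanest route for a \emph{deterministic} bound is just the first observation. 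Concretely: (a) a greedy ball of radius $2\OPT'$ around an uncovered point $p$ fully contains the optimal cluster that $p$ belongs to (triangle inequality, since that optimal cluster has radius $\OPT'$), hence at most $k$ greedy centers suffice when $\OPT\le\OPT'$, giving the $4k$ bound; (b) separately, the sizes $|U_1|\ge|U_2|\ge\cdots$ are strictly decreasing, and a more careful pigeonhole — if no bucket of $U_i$-sizes accumulates $> 1$ center then there are only $\log n$ centers, but to get $\log(n/k)$ one argues that once $|U_i|\le k$ the optimum restricted to those $\le k$ points is trivially solvable with $\le k$ of them as their own centers — lets us truncate: after $\log(n/k)$ halving-steps $|U_i|$ has dropped below... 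This is the step I expect to be the main obstacle: making the $\log(n/k)$ bound genuinely deterministic and dynamic (Chan et al.\ get their $\log n$ factor via randomization and a charging argument against oblivious deletions, which we cannot use here), so I would lean on the structural fact (a) together with a deterministic bucketing that opens $k$ \emph{extra} centers per bucket and has only $O(\log(n/k))$ nontrivial buckets.

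For the dynamic maintenance and running time: a non-center deletion changes nothing. A center deletion is an $O(1)$ swap within its cluster as described. When a point is \emph{inserted}, if it falls within $2\OPT'$ of an existing center we attach it (updating cluster membership in $O(\log n)$ time via a balanced structure); otherwise it becomes a new greedy center, and if this pushes the count past the current budget we rebuild from the affected bucket onward. Using the Overmars-style deletion-to-fully-dynamic transformation~\cite{DBLP:books/sp/Overmars83} over $O(\log n)$ static pieces, rebuilds are amortized so that each update touches $O(\log n)$ pieces, each rebuild of a piece of size $m$ costs $O(k m)$ distance queries, and the standard amortization gives $O(k\log n)$ amortized per operation for one value of $\OPT'$; multiplying by the $O((\log\Delta)/\eps)$ parallel guesses gives the stated update time $O(k\log n\log\Delta/\eps)$. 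Finally, for a value- or solution-query we output the centers of the smallest successful guess with radius $4\OPT'$; correctness of the approximation factor is immediate from $\OPT'\le(1+\eps)\OPT$ and the coverage-radius $4\OPT'$, and the center-count bound $\min\{4k, 4\log(n/k)\}$... — rather, the $\min\{k,\log(n/k)\}$ center-count bound feeding into the radius-$4\OPT'$ argument — is exactly what the truncated greedy construction guarantees, completing the proof.
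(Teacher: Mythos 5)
There is a genuine gap here, and it sits exactly where you flagged it yourself. Your concrete ingredients are the greedy peeling with radius $2\OPT'$ (at most $k$ centers when $\OPT\le\OPT'$), the $O(1)$ swap of a deleted center for a surviving cluster member at radius $4\OPT'$, and an Overmars-style amortization of rebuilds. If these could be assembled as sketched, they would yield a \emph{deterministic} fully dynamic $(4+\eps)$-approximation with $k$ centers and update time $\tilde O(k)$ --- which is impossible by the paper's own lower bound against adaptive adversaries (\cref{lem:mpt-det-low-kcen}, and \cref{thm:det-1lb}): any such algorithm must have approximation ratio $\Omega(\min\{k,\log n/(k\log f(k,2n))\})$. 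The step that cannot be filled in is precisely the one you leave open: after adaptive deletions of centers, the swapped replacements are no longer pairwise at distance $>2\OPT'$, so the witness that $\OPT>\OPT'$ is destroyed and the structure must be reclustered; the charging that makes partial rebuilds cheap in Chan et al.\ and in the paper's own deletion-only routine is an expectation over the random choice of centers against an \emph{oblivious} adversary, and you correctly note it is unavailable here --- but the replacement you offer (``deterministic bucketing that opens $k$ extra centers per bucket with $O(\log(n/k))$ nontrivial buckets'') is never defined, and no bound on its recluster cost or on the validity of its witness is argued. In addition, your accounting of the approximation factor is off: you derive $\min\{4k,4\log(n/k)\}$ as a bound on the \emph{number} of centers feeding into a radius-$4\OPT'$ argument, which (if it held) would be a constant-factor bi-criteria guarantee, not the statement of the theorem; in the theorem the output has at most $k$ centers and it is the \emph{radius} that degrades by a factor $\min\{4k,4\log(n/k)\}$.

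For comparison, the paper's proof (\cref{sec:app-det-upper}) avoids the adaptive-deletion problem by never trying to maintain a flat greedy clustering at all. It keeps, per guess $\OPT'$, a binary \emph{clustering tree} of depth $\Theta(\log(n/k))$ whose leaves hold at most $2k$ points each; every node stores the at most $2k$ centers of its children and marks at most $k$ of them as its own centers (pairwise distance $>\OPT'$, maintained via a small blocking graph), or else marks itself as a witness that $\OPT>\OPT'/2$. Updates only propagate along a root-to-leaf path and each point is marked as a center at most once per node, giving the $O(k\log(n/k))$ amortized bound deterministically. The approximation factor then comes from the triangle inequality telescoping the per-node cost $\OPT'$ over the $\log(n/k)$ levels, and, alternatively, from ``super clusters'' (chains of at most $k$ optimal centers at distance $\le2\OPT'$) giving the $4k$ bound --- i.e., the radius blow-up is the price paid for determinism, matching the lower bound. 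Your proposal would need an analogous mechanism (hierarchical aggregation or something equivalent) rather than a truncated greedy; as written, the central claim is unsupported and, in its strong form, false.
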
%

\section{Algorithms for $k$-sum-of-radii and $k$-sum-of-diameters}

\label{sec:primal-dual}

In this section we present our (randomized) dynamic $(13.008+\epsilon)$-approximation
algorithm for $k$-sum-of-radii with an amortized update time of $k^{O(1/\epsilon)}\log\Delta$,
against an oblivious adversary. Our strategy is to maintain a 
bi-criteria approximation with $O(k/\epsilon)$ clusters whose sum of radii is at most $(6+\epsilon)\OPT$
(and which covers all input points). We show how to use an arbitrary
offline $\alpha$-approximation algorithm to turn this solution into
a $(6+2\alpha+\eps)$-approximate solution with only $k$ clusters.
Using the algorithm in \cite{CharikarP04} (for which $\alpha=3.504+\eps$),
this yields a dynamic $(13.008+\epsilon)$-approximation for $k$-sum-of-radii,
and hence a $(26.015+\eps)$-approximation for $k$-sum-of-diameters. %

Assume that we are given an $\epsilon>0$ such that w.l.o.g.~it holds
that $1/\epsilon\in\N$. We maintain one data structure for each value
$\OPT'$ that is a power of $1+\epsilon$ in $[1,\Delta]$, i.e.,
$O(\log_{1+\epsilon}\Delta)$ many. The data structure for each such
value $\OPT'$ outputs a solution of cost at most $(13.008+\epsilon)\OPT'$
or asserts that $\OPT>\OPT'$. We output the solution with smallest
cost which is hence a $(13.008+O(\epsilon))$-approximation.
We describe first how to maintain the mentioned bi-criteria approximation.
We define $z:=\epsilon\OPT'/k$. Our strategy is to maintain a solution
for an auxiliary problem based on a Lagrangian relaxation-type approach.
More specifically, we are allowed to select an arbitrarily
large number of clusters, however, for each cluster we need to pay
a fixed cost of $z$ plus the radius of the cluster.
For the radii we allow only integral multiples of $z$ that are bounded
by $\OPT'$, i.e., only radii in the set $R=\{z,2z,...,\OPT'-z,\OPT'\}$.

This problem can be modeled by an integer program whose LP-relaxation
is given in Section~\ref{sec:technical-overview}; we denote it by $(P)$. We describe first an offline
primal-dual algorithm that computes an integral solution to~$(P)$ and
a fractional solution to the dual $(D)$, such that their costs differ
by at most a factor $6$. By weak duality, this implies that our solution
for $(P)$ is a $6$-approximation. 
We initialize $x\equiv0$ and $y\equiv0$, define $U_{1}:=P$, and
we say that a pair $(p',r')$ \emph{covers a point }$p$ if $d(p',p)\le r'$.
Our algorithm works in iterations. %

At the beginning of the $i$-th iteration, assume that we are given
a set of points $U_{i}$, a vector $\big(x_{p}^{(r)}\big)_{p\in P,r\in R}$,
and a vector $\left(y_{p}\right)_{p\in P}$, such that $U_{i}$ contains
all points in $P$ that are not covered by any pair $(p,r)$ for which
$x_{p}^{(r)}=1$ (which is clearly the case for $i=1$ since $U_{1}=P$
and $x\equiv0$). We select a point $p\in U_{i}$ uniformly at random
among all points in $U_{i}$. We raise its dual variable $y_{p}$
until there is a value $r\in R$ such that the dual constraint for
$(p,r)$ becomes \emph{half-tight, }meaning that %
$\sum_{p':d(p,p')\le r}y_{p'}=r/2+z.$ %

Note that it might be that the constraint is already at least half-tight
at the beginning of iteration $i$ in which case we do not raise $y_{p}$
but still perform the following operations. Assume that $r$ is the
largest value for which the constraint for $(p,r)$ is at least half-tight.
We define $x_{p}^{(2r)}:=1$, $(p_{i},r_{i}):=(p,2r)$ and set $U_{i+1}$
to be all points in $U_{i}$ that are not covered by $(p,2r)$. 
\begin{lemma}
\label{lem:pd-iteration-time} \aw{Each iteration $i$ needs a running
time of $O(\lvert U_{i}\rvert+ik/\epsilon)$}. \end{lemma}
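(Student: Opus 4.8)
The plan is to analyze the work done in iteration $i$ operation by operation. The two dominant contributions are (a) raising the dual variable $y_p$ of the randomly selected point $p \in U_i$ until some constraint for $(p,r)$ becomes half-tight, and (b) updating the set $U_{i+1}$ by removing from $U_i$ all points covered by $(p,2r)$. For (a), observe that the dual constraints involving $y_p$ are exactly the $|R| = O(k/\epsilon)$ constraints indexed by $(p,r)$ for $r \in R$, since $\OPT'$ is a power of $1+\epsilon$ and $R = \{z, 2z, \dots, \OPT'\}$ has $\OPT'/z = k/\epsilon$ elements. To determine how far $y_p$ can be raised before the first of these becomes half-tight, I would, for each $r \in R$, compute the current left-hand side $\sum_{p' : d(p,p') \le r} y_{p'}$; summing the contribution of each point $p'$ with nonzero $y_{p'}$ to the relevant constraints costs $O(k/\epsilon)$ per such point after sorting its distance to $p$ among the thresholds in $R$, and there are at most $i$ points whose dual variable has ever been raised (one per prior iteration). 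Hence computing all these left-hand sides, finding the minimum slack, raising $y_p$ accordingly, and identifying the largest half-tight $r$ all take $O(ik/\epsilon)$ time. The key point here, inherited from the discussion preceding the lemma, is that we need only check the constraints for $p$ itself — we do not recheck the constraints of all other points — because the half-tightness threshold $r/2 + z$ guarantees no other dual constraint is violated.

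For (b), I would iterate over all points in $U_i$, compute the distance from each to $p$, and keep exactly those with $d(p,q) > 2r$; this is a single pass costing $O(|U_i|)$ distance queries and comparisons, producing $U_{i+1}$. Setting $x_p^{(2r)} := 1$ and recording $(p_i, r_i) := (p, 2r)$ is $O(1)$. Selecting the uniformly random point from $U_i$ is $O(|U_i|)$ (or $O(1)$ with an appropriate representation, but $O(|U_i|)$ suffices). Summing the three contributions — $O(|U_i|)$ for sampling, $O(ik/\epsilon)$ for the dual-raising step, and $O(|U_i|)$ for forming $U_{i+1}$ — gives the claimed bound $O(|U_i| + ik/\epsilon)$.

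The main subtlety, and the step I expect to require the most care, is the bookkeeping that lets the dual-raising step run in $O(ik/\epsilon)$ rather than $O(|P| \cdot k/\epsilon)$: we must maintain, across iterations, the list of points $p'$ with $y_{p'} > 0$ (there are at most $i$ of them, namely $p_1, \dots, p_{i-1}$ and possibly the current $p$), together with their dual values, so that evaluating $\sum_{p' : d(p,p') \le r} y_{p'}$ for all $r \in R$ reduces to querying $i$ distances and distributing each $y_{p'}$ into the $O(k/\epsilon)$ radius buckets. I would make explicit that "raising $y_p$ until half-tight" is implemented as a closed-form computation of the slack $\min_{r \in R}\bigl(r/2 + z - \sum_{p' \ne p : d(p,p') \le r} y_{p'}\bigr)$ divided by the number of thresholds $r$ that also contain $p$ at distance $\le r$ (here $d(p,p) = 0 \le r$ for all $r$, so $y_p$ appears in all $|R|$ constraints), rather than as an incremental process, so that the running time is genuinely $O(ik/\epsilon)$ and not pseudo-polynomial in $\OPT'/z$. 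With that caveat handled, the remaining arguments are the routine per-pass counts described above.
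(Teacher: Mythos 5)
Your argument is essentially the paper's own proof: at most $i-1$ dual variables are nonzero at the start of iteration $i$, so by keeping a list of the nonzero entries the $\lvert R\rvert \le k/\epsilon$ dual constraints for the sampled point can be evaluated (and $\delta_i$, $r_i$ determined in closed form) in $O(ik/\epsilon)$ time, while a single pass over $U_i$ builds $U_{i+1}$ in $O(\lvert U_i\rvert)$ time. One minor slip in your implementation aside: since $y_p$ has coefficient $1$ in every constraint $(p,r)$ (as $d(p,p)=0\le r$), the amount to raise is simply the minimum slack over $r\in R$, with no division by the number of thresholds; this does not affect the claimed running-time bound.
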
 
We stop
when in some iteration $i^{*}$ it holds that $U_{i^{*}}=\emptyset$
or if we completed the $(2k/\epsilon)^{2}$-th iteration and $U_{(2k/\epsilon)^{2}+1}\ne\emptyset$.
Suppose that $x$ and $y$ are the primal and dual vectors after the
last iteration. In case that $U_{(2k/\epsilon)^{2}}\ne\emptyset$
we can guarantee that our dual solution has a value of more than $\OPT'$
from which we can conclude that $\OPT>\OPT'$; therefore, we stop
the computation for the estimated value $\OPT'$ in this case. 
\begin{lemma}[{restate=[name=]lemPdIterationsBound}]
\label{lem:pd-iterations-bound}
If $U_{(2k/\epsilon)^2 + (2k/\epsilon)}\ne\emptyset$ then $\OPT>\OPT'$. 
\end{lemma}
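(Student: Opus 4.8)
The plan is to lower-bound the value of the dual solution $y$ constructed by the algorithm whenever the process runs for $(2k/\epsilon)^2 + (2k/\epsilon)$ iterations without emptying $U$, and then to invoke weak LP duality together with the fact that $\OPT'$ is an upper bound on the optimal fractional cost of $(P)$ whenever $\OPT \le \OPT'$. More precisely, recall that $z = \epsilon \OPT'/k$ and that the dual objective is $\sum_{p\in P} y_p$. First I would observe that in iteration $i$ the algorithm raises $y_{p_i}$ until the constraint for $(p_i, r)$ becomes half-tight, i.e.\ $\sum_{p' : d(p_i,p') \le r} y_{p'} = r/2 + z$. Since $r \ge z$, the left-hand side is at least $z/2 + z = 3z/2 > z$. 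The key point is that the ball of radius $r$ around $p_i$ contains only points that were \emph{not} covered by $(p_1,r_1),\dots,(p_{i-1},r_{i-1})$ — otherwise $p_i$ itself would have been covered and removed from $U_i$ — so in iteration $i$ we charge at least $z$ new dual mass to points in $U_i$, and the balls used in distinct iterations are disjoint in the sense that each point's $y$ value is raised in at most one iteration (once $y_p$ is raised, $p$ is added to the solution and removed, so it cannot be selected again; and the half-tightness bound controls double counting). I would make this disjointness precise and conclude that after $m$ iterations, $\sum_{p\in P} y_p \ge$ (roughly) $m z / 2$, or more carefully $\Omega(m z)$ with an explicit constant.

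The main obstacle is handling the overlap between the balls of different iterations correctly: when we raise $y_{p_i}$, the ball of radius $r$ around $p_i$ may contain points $p'$ whose $y_{p'}$ was already raised in an earlier iteration, so the increment to the dual \emph{objective} in iteration $i$ is not the full $r/2 + z$ but only the amount by which $y_{p_i}$ itself increased. I expect the resolution to be the observation already highlighted in the technical overview: the pruning/selection guarantees that the centers $p_1,\dots,p_m$ are "well-separated" in a suitable sense (each $p_i \in U_i$ is uncovered by all previous $(p_j, 2r_j)$), and from this one derives that the balls of radius $r_j/2$ (or some constant fraction of $r_j$) around the $p_j$ are pairwise disjoint, so the dual mass they carry — at least $z$ each, from the half-tightness lower bound — does not get double-counted. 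Summing, $\sum_p y_p \ge m \cdot z$ for $m = (2k/\epsilon)^2 + (2k/\epsilon)$ surviving iterations. Actually one should be slightly more careful and only claim $\ge (2k/\epsilon)^2 \cdot z$-ish after accounting for the half-factor and separation losses; the extra $(2k/\epsilon)$ iterations in the hypothesis provide exactly the slack needed to absorb these constant-factor losses.

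With $m \ge (2k/\epsilon)^2$ effective contributions of value $z$ each, we get $\sum_{p\in P} y_p \ge (2k/\epsilon)^2 \cdot z = (2k/\epsilon)^2 \cdot \epsilon\OPT'/k = 4k/\epsilon \cdot \OPT' > \OPT'$ (using $k, 1/\epsilon \ge 1$). Then I would finish as follows: $y$ is a feasible dual solution for $(D)$ — feasibility is maintained as an invariant of the algorithm, since each iteration only raises a variable until the first constraint becomes half-tight, and one separately checks (as the paper does elsewhere) that no other dual constraint is violated along the way — so by weak duality its objective value $\sum_p y_p$ is a lower bound on the optimal value of the primal $(P)$. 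If we had $\OPT \le \OPT'$, then the natural integral solution corresponding to an optimal $k$-sum-of-radii clustering (opening $\le k$ balls of total radius $\le \OPT \le \OPT'$, rounding each radius up to the next multiple of $z$, which adds at most $z$ per ball, hence at most $kz = \epsilon\OPT' \le \OPT'$ extra, and paying the fixed cost $z$ per ball, another $kz = \epsilon\OPT'$) would be feasible for $(P)$ with cost at most $(1+2\epsilon)\OPT'$, say $\le 2\OPT'$. This contradicts the dual lower bound $\sum_p y_p > \OPT' \ge$ ... wait, I need the dual bound to exceed the primal upper bound: since $\sum_p y_p \ge 4k/\epsilon \cdot \OPT' \ge 4\OPT' > 2\OPT'$, we get $\mathrm{OPT}_{(P)} \ge \sum_p y_p > 2\OPT' \ge$ (cost of the clustering-derived primal solution), a contradiction. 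Hence $U_{(2k/\epsilon)^2 + (2k/\epsilon)} \ne \emptyset$ forces $\OPT > \OPT'$, as claimed. The only nontrivial verification is the disjointness/separation argument bounding the double counting, which I would isolate as the crux of the proof.
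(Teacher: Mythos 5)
Your overall frame (lower-bound the dual value, then combine weak duality with a cheap primal solution built from an optimal clustering when $\OPT\le\OPT'$) is fine, and your handling of the discretization and the $+z$ facility costs at the end is, if anything, more careful than necessary. But the core counting step is wrong, and it is exactly the step this lemma is about. Your plan credits \emph{every} iteration with at least $z$ of new dual mass. The algorithm, however, explicitly allows iterations in which the constraint for the sampled point $p_i$ is already at least half-tight at the start; in such an iteration $y_{p_i}$ is not raised at all ($\delta_i=0$ in the pseudocode), so the dual objective does not increase by a single unit --- the half-tight mass you see in the ball around $p_i$ consists of variables $y_{p_j}$ raised in earlier iterations and counting it again is pure double counting. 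Your proposed rescue via disjoint balls also fails: the separation you actually have is $d(p_i,p_j)>2r_j$ for $j<i$ (because $p_i$ survived into $U_i$), which does not make $B(p_j,r_j/2)$ and $B(p_i,r_i/2)$ disjoint when the later radius $r_i$ is much larger than $r_j$; and even if the balls were disjoint, that would not turn old mass into new dual objective, since the dual value is exactly the sum of the raises $\delta_j$, which occur only in iterations with $\delta_j>0$. A useful sanity check: your conclusion $\sum_p y_p\ge (2k/\epsilon)^2 z=4(k/\epsilon)\OPT'$ overshoots what is needed by a factor of $k/\epsilon$; if each iteration really contributed $z$, the lemma's threshold would be linear in $k/\epsilon$, not quadratic.

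The paper's proof instead splits iterations into \emph{successful} ($\delta_i>0$) and \emph{unsuccessful} ($\delta_i=0$). Since all half-tightness levels $r/2+z$ are multiples of $z/2$, every successful iteration raises the dual by at least $z/2$, so $2k/\epsilon+1$ successful iterations already force $\sum_p y_p>\OPT'$. The quadratic threshold comes from bounding the unsuccessful iterations: for an unsuccessful iteration $i$ with maximal half-tight radius $r_i$ there must exist an earlier raised point $p_j$ with $d(p_i,p_j)\le r_i$ and $r_j<r_i$ (otherwise $p_i$ would have been covered earlier and removed), and the pair $(p_j,r_i)$ can be charged only once, because after iteration $i$ every point within distance $r_i$ of $p_j$ lies within distance $2r_i$ of $p_i$ and is removed from $U$. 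Hence there are at most $\lvert R\rvert\le k/\epsilon$ unsuccessful iterations per successful one, giving $(2k/\epsilon+1)\cdot\lvert R\rvert\le(2k/\epsilon)^2+2k/\epsilon$ as a bound on the total number of iterations before the dual exceeds $\OPT'$. This successful/unsuccessful split and the associated charging argument are the missing ideas in your proposal.
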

If $U_{(2k/\epsilon)^{2}}=\emptyset$
we perform a pruning step in order to transform $x$ into a
solution whose cost is at most by a factor $3$ larger than the cost
of $y$. We initialize $\bar{S}:=\emptyset$.
Let $S=\{(p_{1},r_{1}),(p_{2},r_{2}),...\}$ denote the set
of pairs $(p,r)$ with $x_{p}^{(r)}=1$. We sort the pairs in $S$
non-increasingly by their respective radius $r$. Consider a pair
$(p_{j},r_{j})$. We insert the cluster $(p_{j},3r_{j})$ in our solution $\bar{S}$
and delete from $S$ all pairs $(p_{j'},r_{j'})$ such that $j'>j$
and $d(p_{j},p_{j'})<r_{j}+r_{j'}$. Note that $(p_{j},3r_{j})$ covers
all points that are covered by any deleted pair $(p_{j'},r_{j'})$
due to our ordering of the pairs. Let $\bar{S}$ denote
the solution obained in this way and let $\bar{x}$ denote the corresponding
solution to $(P)$, i.e., $\bar{x}_{p}^{(r)}=1$ if and only if $(p,r)\in\bar{S}$.
We will show below that $\bar{S}$ is a feasible solution to~$(P)$
with at most $O(k/\epsilon)$ clusters. Let $\bar{C}\subseteq P$
denote their centers, i.e., $\bar{C}=\{p\mid\exists r\in R:(p,r)\in\bar{S}\}$.
\begin{lemma} \label{lem:mpt-pd-pruning-time} Given $x$ we can
compute $\bar{x}$ in time $O((k/\epsilon)^{4})$ and $\bar{x}$ selects
at most $O(k/\epsilon)$ centers. %
\end{lemma}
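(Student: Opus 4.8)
The plan is to prove the running-time bound $O((k/\eps)^4)$ and the bound $O(k/\eps)$ on the number of centers separately: the former by a direct analysis of the pruning sweep, the latter by a charging argument against the dual solution $y$.

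For the running time I would first bound $\lvert S\rvert$. By the stopping rule the primal-dual procedure runs for at most $(2k/\eps)^2$ iterations, each of which sets exactly one variable $x_{p_i}^{(2r_i)}$ to $1$, and the points $p_i$ are pairwise distinct because each $p_i$ lies in $U_i$, which contains no point covered by a pair $(p_{i'},r_{i'})$ with $i'<i$; hence $\lvert S\rvert\le(2k/\eps)^2$. Computing $\bar x$ then consists of sorting $S$ non-increasingly by radius, in time $O(\lvert S\rvert\log\lvert S\rvert)$, followed by one sweep over the sorted list: whenever the sweep reaches a pair $(p_j,r_j)$ that has not been deleted, we add $(p_j,3r_j)$ to $\bar S$ and scan the remaining suffix once, querying $d(p_j,p_{j'})$ for every later surviving pair $(p_{j'},r_{j'})$ and marking it deleted if $d(p_j,p_{j'})<r_j+r_{j'}$. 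With a boolean deleted-flag per pair each scan costs $O(\lvert S\rvert)$, and there are at most $\lvert\bar S\rvert\le\lvert S\rvert$ scans, so the sweep runs in $O(\lvert S\rvert^2)=O((k/\eps)^4)$ time, which dominates.

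For the number of centers, the point is that the surviving pairs are pairwise well separated and each can be charged to its own piece of dual weight. If $(p_a,r_a),(p_b,r_b)\in\bar S$ are distinct, say with $(p_a,3r_a)$ added before $(p_b,r_b)$ is examined, then $(p_b,r_b)$ survived that step, so $d(p_a,p_b)\ge r_a+r_b$; hence the closed balls $\{q:d(p_a,q)\le r_a/2\}$ and $\{q:d(p_b,q)\le r_b/2\}$ are disjoint, since a common point $q$ would force $d(p_a,p_b)\le r_a/2+r_b/2<r_a+r_b$. On the other hand, every $(p_a,r_a)\in S$ was created in an iteration with $r_a=2r$ in which the dual constraint for $(p_a,r)$ was at least half-tight, so $\sum_{q:d(p_a,q)\le r_a/2}y_q\ge r_a/4+z\ge z$, and this still holds for the final $y$ since dual variables are only raised during the procedure. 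Summing over $\bar S$ and using disjointness and $y\ge 0$,
\[
\lvert\bar S\rvert\cdot z\;\le\;\sum_{(p_a,r_a)\in\bar S}\;\sum_{q:d(p_a,q)\le r_a/2}y_q\;\le\;\sum_{q\in P}y_q .
\]
By weak LP duality $\sum_{q\in P}y_q$ is at most the optimal value of $(P)$; if this value exceeds $(1+2\eps)\OPT'$ we may correctly certify $\OPT>\OPT'$, since rounding the radii of an optimal $k$-cluster solution of cost $\OPT$ up to multiples of $z$ (losing at most $kz=\eps\OPT'$, plus the extra $z$ per cluster) shows the optimal value of $(P)$ is at most $(1+2\eps)\OPT'$ whenever $\OPT\le\OPT'$. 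Otherwise $\sum_{q\in P}y_q\le(1+2\eps)\OPT'$, so $\lvert\bar S\rvert\le(1+2\eps)\OPT'/z=(1+2\eps)k/\eps=O(k/\eps)$, and the number of distinct centers $\lvert\bar C\rvert$ is no larger.

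I expect the dual charging step to be the main obstacle: one must check that the half-tightness certified in iteration $i$ is not later destroyed (it is not -- dual variables never decrease, and every later chosen point lies outside the radius-$r_i$ ball around $p_i$, hence outside the smaller radius-$r_i/2$ ball) and that the radius-$r_a/2$ balls around the surviving centers are genuinely pairwise disjoint rather than merely far apart, and one must dispose of the case $\OPT>\OPT'$ cleanly. The feasibility of $\bar S$ for $(P)$, implicit in the statement, follows because sorting $S$ by decreasing radius guarantees $(p_j,3r_j)$ covers every point covered by a pair deleted on its account; the running-time half is routine once $\lvert S\rvert=O((k/\eps)^2)$ is in hand.
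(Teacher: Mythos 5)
Your proposal is correct, and its two halves line up with the paper as follows. The running-time half is the same argument the paper gives in its proof of \cref{lem:pd-pruning-time}: $\lvert S\rvert=O((k/\epsilon)^2)$ from the iteration cap, then sorting plus a quadratic sweep gives $O((k/\epsilon)^4)$. For the number of centers, however, the paper's formal proof of this lemma only records $\lvert\bar S\rvert\le\lvert S\rvert=O((k/\epsilon)^2)$; the sharper $O(k/\epsilon)$ bound is obtained in the paper only implicitly, as a corollary of the cost bound in \cref{lem:pd-cost-a} (each pair in $\bar S$ contributes at least $z=\epsilon\OPT'/k$ to a total of at most $6\OPT'$). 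Your direct counting argument is in substance the same machinery as that cost bound: the balls of half-tight radius around surviving centers (your radius-$r_a/2$ balls are exactly the radius-$r/3$ balls in the paper's notation, since $\bar S$ stores $3r_a=6r$) are pairwise disjoint by the pruning rule, half-tightness persists because duals never decrease, and weak duality caps $\sum_p y_p$. Two remarks. First, your explicit treatment of the case $\OPT>\OPT'$ is welcome and in fact slightly more careful than the paper, whose step ``by weak duality $\le 6\OPT'$'' in \cref{lem:pd-cost-a} silently assumes the LP optimum is about $\OPT'$, i.e.\ $\OPT\le\OPT'$; like the paper's, your $O(k/\epsilon)$ bound is conditional on this (otherwise one certifies $\OPT>\OPT'$), which is all the overall algorithm needs. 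Second, your appeal to weak duality requires that the final $y$ is feasible for $(D)$; this is not automatic from the description you give and is exactly the paper's \cref{lem:always-dual-feasible} (the doubled radius $2r$ plus the triangle inequality guarantee no constraint is ever violated when only half-tightness is enforced at the chosen point). You should either cite that fact or reproduce its one-line triangle-inequality argument; with that dependency acknowledged, your proof is complete.
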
 
We transform now the bi-criteria approximate solution $\bar{x}$ into a feasible solution
$\tilde{x}$ with only $k$ clusters. To this end, we invoke the offline
$(3.504+\epsilon)$-approximation algorithm from \cite{CharikarP04}
on the input points $\bar{C}$. Let $\hat{S}$ denote the set of pairs
$(\hat{p},\hat{r})$ that it outputs (and note that not necessarily
$\hat{r}\in R$ since we use the algorithm as a black-box). Note that
$\hat{S}$ covers only the points in $\bar{C}$, and not necessarily
all points in $P$. 
\aw{On the other hand, the solution $\hat{S}$ has a cost of at most $2\OPT$ since
we can always find a solution with this cost covering $\bar{C}$, even if we are only allowed
to select centers from $\bar{C}$.
}
Thus, based on $\bar{S}$ and $\hat{S}$ we compute
a solution $\tilde{S}$ with at most $k$ clusters that covers $P$.
We initialize $\tilde{S}:=\emptyset$. For each pair $(\hat{p},\hat{r})\in\hat{S}$
we consider the points $\bar{C}'$ in $\bar{C}$ that are covered
by $(\hat{p},\hat{r})$. Among all these points, let $\bar{p}$ be
the point with maximum radius $\bar{r}$ such that $(\bar{p},\bar{r})\in\bar{S}$.
We add to $\tilde{S}$ the pair $(\hat{p},\hat{r}+\bar{r})$ and remove
$\bar{C}'$ from $\bar{C}$. We do this operation with each pair $(\hat{p},\hat{r})\in\hat{S}$.
Let $\tilde{S}$ denote the resulting set of pairs. 
\begin{lemma}[{restate=[name=]lemPdOfflineTime}]
\label{lem:pd-offline-time}
Given $\bar{S}$ and $\hat{S}$ we can compute $\tilde{S}$ in time
$O(k^3/\epsilon^2)$. 
\end{lemma}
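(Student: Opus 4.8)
The plan is to implement the construction of $\tilde S$ exactly as described in the paragraph preceding the lemma, using only elementary data structures (plain arrays), and then to bound the running time by a double loop over $\hat S$ and $\bar C$. Concretely, I would keep $\bar C$ as an array and keep the list $\bar S$ of pairs so that, given a center $\bar p\in\bar C$, its radius $\bar r$ with $(\bar p,\bar r)\in\bar S$ can be recovered by a linear scan of $\bar S$; this radius is well defined because each center is selected in a distinct iteration of the primal--dual routine (a point is always covered by its own pair $(p,2r)$, so it never reappears), hence $|\bar S|=|\bar C|$. Then for each pair $(\hat p,\hat r)\in\hat S$ I would scan the current array $\bar C$, issue the distance query $d(\hat p,\bar p)$ for each $\bar p\in\bar C$ to collect the set $\bar C'$ of points covered by $(\hat p,\hat r)$, scan $\bar C'$ once more (looking up each radius in $\bar S$) to find the covered center of maximum radius $\bar r$, append $(\hat p,\hat r+\bar r)$ to $\tilde S$, and delete $\bar C'$ from $\bar C$ by swap-and-pop.

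For the running time: by \cref{lem:mpt-pd-pruning-time} we have $|\bar C|=|\bar S|=O(k/\epsilon)$, and $|\hat S|\le k$ because the offline algorithm of \cite{CharikarP04}, used as a black box for $k$-sum-of-radii, returns at most $k$ clusters. Each iteration of the outer loop over $\hat S$ scans $\bar C$ once ($O(k/\epsilon)$ distance queries and comparisons) and then, for each of the at most $O(k/\epsilon)$ covered centers, performs a linear scan of $\bar S$ of length $O(k/\epsilon)$ to fetch its radius, i.e.\ $O((k/\epsilon)^2)$ per outer iteration. Summed over the at most $k$ outer iterations this gives $O\!\left(k\cdot(k/\epsilon)^2\right)=O(k^3/\epsilon^2)$. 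The deletions of the sets $\bar C'$ from $\bar C$ cost $O(1)$ per removed element and remove each element at most once, so $O(k/\epsilon)$ in total, which is dominated; likewise assembling $\tilde S$ costs $O(|\hat S|)=O(k)$, also dominated.

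I expect no real obstacle here: the statement follows from routine bookkeeping, and the bound $O(k^3/\epsilon^2)$ is in fact generous — maintaining a hash map from the centers of $\bar S$ to their radii would reduce the cost to $O(k^2/\epsilon)$, but the cruder nested-scan implementation above already suffices. The only points that need a moment's care are (a) invoking \cref{lem:mpt-pd-pruning-time} to get $|\bar C|=O(k/\epsilon)$, (b) the observation that each center occurs at most once in $\bar S$, so that ``the radius of $\bar p$ in $\bar S$'' is well defined, and (c) noting that whenever $\bar C'=\emptyset$ for some pair $(\hat p,\hat r)$ we simply skip it, which affects neither the time bound nor, as handled in the separate feasibility/cost analysis, the cost of $\tilde S$.
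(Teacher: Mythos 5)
Your proposal is correct and takes essentially the same route as the paper: bound $|\hat{S}|\le k$ and, for each pair of $\hat{S}$, spend time proportional to a scan over $\bar{S}$ (resp.\ $\bar{C}$), yielding $O(k^3/\epsilon^2)$. The only cosmetic difference is in the accounting: the paper charges $O((2k/\epsilon)^2)$ per outer iteration via the crude bound $|\bar{S}|\le|S|\le(2k/\epsilon)^2$, whereas you invoke \cref{lem:mpt-pd-pruning-time} to get $|\bar{S}|=O(k/\epsilon)$ and pay the remaining $O(k/\epsilon)$ factor in the nested radius lookup — both are valid and give the stated bound.
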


We show that $\tilde{S}$ is a feasible solution with small cost.
We start by bounding the cost of $\bar{x}$ via $y$. 
\begin{lemma}[{restate=[name=]lemPdCostA}]
\label{lem:pd-cost-a}
We have that $\bar{x}$ and $y$ are feasible solutions to (P) and
(D), respectively, for which we have that 
$
\sum_{p\in P}\sum_{r\in R}\bar{x}_{p}^{(r)}(r+z)\le 6\cdot\sum_{p\in P}y_{p}\le6\OPT'.
$
\end{lemma}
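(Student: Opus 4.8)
The plan is to analyze the primal-dual algorithm one iteration at a time and track, for each iteration $i$, how much primal cost is added and how much the algorithm can charge to the dual. Recall that in iteration $i$ we pick a random $p \in U_i$, raise $y_p$ until the constraint for $(p,r)$ becomes half-tight for the largest such $r$, and then set $x_p^{(2r)} := 1$, i.e.\ add the cluster $(p_i, r_i) = (p, 2r)$. The primal cost contributed by this cluster is $2r + z$. The half-tightness condition says $\sum_{p' : d(p,p') \le r} y_{p'} = r/2 + z$, so the dual mass sitting inside the ball $B(p,r)$ is exactly $r/2 + z \ge (r/2 + z) \cdot \tfrac{1}{3} \cdot 3$; more precisely $2r + z \le 2r + 2z = 4(r/2 + z/2) \le \cdots$. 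The clean bound to aim for is $2r + z \le 3(r/2+z) \cdot \tfrac{4}{3}$; I would instead verify directly that $2r+z \le 6(r/2 + z)$ (since $2r + z \le 3r + 6z$ trivially for $r,z\ge 0$), so the primal cost of cluster $i$ is at most $6$ times the dual mass in $B(p_i, r_i/2)$.

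The key step is then a disjointness argument: the balls $B(p_i, r_i/2) = B(p_i, r)$ over the iterations $i$ that survive into the final (pre-pruning) solution need not be disjoint, but the \emph{pruned} solution $\bar S$ is exactly designed to fix this. When we process pairs in non-increasing radius order and keep $(p_j, r_j)$ while discarding every later $(p_{j'}, r_{j'})$ with $d(p_j, p_{j'}) < r_j + r_{j'}$, the retained centers $p_j$ have the property that for any two retained pairs $(p_j, r_j), (p_{j'}, r_{j'})$ with $r_j \ge r_{j'}$ we have $d(p_j, p_{j'}) \ge r_j + r_{j'} \ge r_j/2 + r_{j'}/2$ — wait, I need $\ge r_j/2 + r_{j'}/2$ which follows from $\ge r_j + r_{j'}$. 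Hence the half-radius balls $B(p_j, r_j/2)$ of retained pairs are pairwise disjoint. Summing the per-iteration bound over only the retained pairs, $\sum_{(p,r)\in \bar S} (r + z) \le 6 \sum_j (\text{dual mass in } B(p_j, r_j/2)) \le 6 \sum_{p\in P} y_p$, where the last inequality uses disjointness of the balls and $y \ge 0$. Note the cost of $\bar x$ is $\sum (3 r_j + z)$ in terms of the \emph{tripled} radii, so I must be careful: I would phrase the per-iteration bound as $3r_j + z \le 6(\text{dual mass in } B(p_j, r_j/2))$ using $3r_j + z \le 3 r_j + 6 z = 6( r_j/2 + z) = 6(r/2 + z)$ for the original half-tight radius $r = r_j/2$, which holds since $r_j/2 + z$ is exactly the dual mass in that ball.

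For the remaining assertions: feasibility of $y$ for $(D)$ is the content of the "no other dual constraint is violated" claim sketched in the technical overview (we only ever raise $y_p$ to keep $p$'s own constraints from being more than half-violated, and the half-tightness stopping rule plus the triangle inequality guarantee every other constraint stays $\le r + z$); I would cite or re-derive that here. Feasibility of $\bar x$ for $(P)$ — that every point of $P$ is covered — follows from the pruning construction: each discarded pair's coverage is absorbed by the retained pair whose (tripled) radius is large enough, combined with the termination guarantee that $U_{i^*} = \emptyset$ so the un-pruned solution already covered $P$. Finally $6\sum_p y_p \le 6\,\OPT'$ because $y$ is dual-feasible and the LP $(D)$ has optimum at most the optimum of $(P)$, which is at most $\OPT'$ (the true optimum fits in this relaxed problem with the $z$-per-cluster surcharge costing at most $k z = \epsilon \OPT'$, so the LP optimum is at most $\OPT' + \epsilon\OPT' \le (1+\epsilon)\OPT'$ — I may need to absorb this $(1+\epsilon)$ factor, so I would state the bound as $\le 6(1+\epsilon)\OPT'$ or fold it into the global $\epsilon$). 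The main obstacle is getting the constants exactly right — in particular making sure the tripled radii in the pruning step compose correctly with the factor-$6$ dual charging and that the disjointness uses $r_j + r_{j'}$ (full radii) while the charged balls use $r_j/2$ (half radii), which is precisely why $r_j + r_{j'} \ge r_j/2 + r_{j'}/2$ suffices.
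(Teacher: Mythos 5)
Your overall route is the same as the paper's: dual feasibility comes from the invariant that raising $y_p$ only up to half-tightness can never violate any other dual constraint (the paper proves this via the triangle inequality in \cref{lem:always-dual-feasible}); primal feasibility of $\bar{x}$ comes from termination with $U_{i^*}=\emptyset$ plus the absorption property of the pruning; and the cost bound charges each retained pair to the dual mass of a ball around its center, using the pruning order to make these balls pairwise disjoint, then finishes with weak duality. Structurally you are aligned with the paper's proof.

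However, the arithmetic in your central charging step does not deliver the factor $6$ as written. Under your own convention the pair created in iteration $i$ is $(p_i,r_i)=(p,2r)$ with $r$ the half-tight radius, so a retained pair has radius $3r_i=6r$ in $\bar{S}$ and cost $3r_i+z$. Half-tightness pins the dual mass of $B(p_i,r_i/2)=B(p,r)$ at (at least) $r/2+z=r_i/4+z$, \emph{not} $r_i/2+z$: your step ``$6(r_j/2+z)=6(r/2+z)$ for $r=r_j/2$, which holds since $r_j/2+z$ is exactly the dual mass in that ball'' conflates $r+z$ with $r/2+z$. With the correct mass $r_i/4+z$, your disjoint-ball argument yields $3r_i+z\le 12(r_i/4+z)$, i.e., a factor $12$. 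To recover $6$ along these lines one needs the pruned radius to be three times the half-tight radius $r$ rather than three times $r_i=2r$ (this is what the paper's appendix proof implicitly uses when it asserts that $(p,r/3)$ is at least half-tight, and how its pseudocode prunes), but then the coverage argument for feasibility of $\bar{x}$ must be rechecked, since a discarded cluster of radius $2r_{j'}$ is then only guaranteed to lie within $r_j+3r_{j'}\le 4r_j$ of the retained center. So you must state explicitly which radius the pruning triples and which ball you charge to; as written, your chain proves the lemma only with $12$ in place of $6$. Your remaining points are fine and match the paper: dual feasibility can be cited from the half-tightness invariant, and your caveat that the final inequality really gives $\sum_{p}y_p\le(1+O(\eps))\OPT'$ (because of the discretized radii and the $+z$ surcharge) is legitimate --- the paper writes $\le\OPT'$ and absorbs such losses into the global $\eps$.
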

 Next, we argue that $\tilde{S}$ is feasible and bound
its cost by the cost of $S'$ and the cost of $\bar{x}$. 
\begin{lemma}[{restate=[name=]lemPdCostB}]
\label{lem:pd-cost-b}
We have that $\tilde{S}$ is a feasible solution with cost at most
$\sum_{(\hat{p},\hat{r})\in \hat{S}}\hat{r}+\sum_{p\in P}\sum_{r\in R}\bar{x}_{p}^{(r)}(r+z)\le(13.008+\epsilon)\OPT'$. 
\end{lemma}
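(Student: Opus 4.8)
The plan is to establish two things: feasibility of $\tilde{S}$ (it has at most $k$ clusters and covers all of $P$), and the claimed upper bound on its cost. For feasibility, first I would bound the number of clusters: $\hat{S}$ is the output of the offline $k$-sum-of-radii algorithm run on $\bar{C}$, so $|\hat{S}|\le k$, and since $\tilde{S}$ contains exactly one pair $(\hat p,\hat r+\bar r)$ per pair $(\hat p,\hat r)\in\hat S$, we get $|\tilde S|\le k$. For coverage, I would argue as follows: by Lemma~\ref{lem:pd-cost-a}, $\bar{x}$ is feasible for $(P)$, so $\bar S$ covers all points of $P$; thus every $p\in P$ lies within radius $\bar r'$ of some center $\bar p'\in\bar C$ with $(\bar p',\bar r')\in\bar S$. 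Now $\bar p'$ is covered by some $(\hat p,\hat r)\in\hat S$ (since $\hat S$ covers $\bar C$), and when we processed that pair we chose $\bar p$ to be the point of maximum radius $\bar r$ among all $\bar C$-points covered by $(\hat p,\hat r)$; in particular $\bar r\ge\bar r'$. The triangle inequality then gives $d(\hat p,p)\le d(\hat p,\bar p')+d(\bar p',p)\le\hat r+\bar r'\le\hat r+\bar r$, so $p$ is covered by $(\hat p,\hat r+\bar r)\in\tilde S$. The one subtlety is that each $\bar C$-point is removed from $\bar C$ after being assigned to some $\hat S$-pair, so I should check that the center $\bar p'$ relevant to $p$ is still present when its covering $\hat S$-pair is processed — but since we iterate over all of $\hat S$ and only remove the points actually covered, every $\bar C$-point is removed in exactly the iteration of some pair that covers it, and the inequality $\bar r\ge\bar r'$ holds for the maximizing choice in that iteration, so coverage of $p$ still goes through.

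For the cost bound, I would write $\mathrm{cost}(\tilde S)=\sum_{(\hat p,\hat r)\in\hat S}(\hat r+\bar r_{\hat p})$ where $\bar r_{\hat p}$ is the radius $\bar r$ chosen for that pair (or $0$ if $\bar C'=\emptyset$). This splits as $\sum_{(\hat p,\hat r)\in\hat S}\hat r+\sum_{(\hat p,\hat r)\in\hat S}\bar r_{\hat p}$. The first sum is exactly $\sum_{(\hat p,\hat r)\in\hat S}\hat r$. For the second, each $\bar r_{\hat p}$ is the radius of a distinct pair in $\bar S$ (distinct because the corresponding center $\bar p$ is removed from $\bar C$ after being used, so it is never chosen again), hence $\sum_{(\hat p,\hat r)\in\hat S}\bar r_{\hat p}\le\sum_{(\bar p,\bar r)\in\bar S}\bar r\le\sum_{p\in P}\sum_{r\in R}\bar x_p^{(r)}(r+z)$. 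This gives the first inequality in the statement. For the final numerical bound: $\hat S$ has cost at most $2\,\OPT_{k\text{-sr}}(\bar C)\cdot(3.504+\epsilon)$ — wait, more precisely, by the remark preceding the lemma, there is a solution covering $\bar C$ (using only centers from $\bar C$) of cost at most $2\,\OPT'$, and the offline algorithm is a $(3.504+\epsilon)$-approximation, so $\sum_{(\hat p,\hat r)\in\hat S}\hat r\le(3.504+\epsilon)\cdot 2\,\OPT'=(7.008+2\epsilon)\OPT'$. Combined with Lemma~\ref{lem:pd-cost-a}'s bound $\sum_{p\in P}\sum_{r\in R}\bar x_p^{(r)}(r+z)\le 6\,\OPT'$, the total is at most $(13.008+2\epsilon)\OPT'$, which after rescaling $\epsilon$ is $(13.008+\epsilon)\OPT'$.

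The main obstacle I anticipate is the coverage argument — specifically, making the "distinct pairs" and "still-present center" bookkeeping precise, since the algorithm mutates $\bar C$ as it iterates over $\hat S$. I would handle this by fixing, for each pair $(\hat p,\hat r)\in\hat S$ processed in order, the set $\bar C'$ as exactly the points of $\bar C$ (in its current state) covered by $(\hat p,\hat r)$, and observing that the $\bar C'$ sets over all iterations partition a subset of the original $\bar C$; since $\hat S$ covers all of the original $\bar C$, they in fact partition all of $\bar C$. Then for any $p\in P$, picking its $\bar S$-witness $(\bar p',\bar r')$, the center $\bar p'$ lands in the $\bar C'$ of whichever $\hat S$-pair's iteration removes it, and the maximum-radius choice in that iteration dominates $\bar r'$. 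Everything else is a routine application of the triangle inequality and weak-duality bounds already proved.
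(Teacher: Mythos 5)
Your proposal is correct and follows essentially the same route as the paper's proof: the same triangle-inequality argument for feasibility (using that the radius $\bar r$ chosen for the $\hat{S}$-pair covering $\bar p'$ dominates $\bar r'$), the same cost decomposition into $\sum_{(\hat p,\hat r)\in\hat S}\hat r$ plus distinct radii from $\bar S$, and the same combination of the ``cover $\bar C$ with centers from $\bar C$ at cost $2\OPT$'' observation with the $6\OPT'$ bound of Lemma~\ref{lem:pd-cost-a}; your extra bookkeeping about the mutation of $\bar C$ only makes explicit what the paper leaves implicit. The one small caveat is that the $2\OPT'$ bound you quote is stated in the paper as $2\OPT$, so like the paper you are implicitly using $\OPT\le\OPT'$ in the final numerical estimate, which is the regime in which the lemma is applied.
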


\subsection{Dynamic algorithm}

We describe now how we maintain the solutions $x,\bar{x},y,S,\bar{S}$,
and $\tilde{S}$ dynamically when points are inserted or deleted.
Our strategy is similar to \cite{ChanGS18}.

Suppose that a point $p$ is inserted. For each $i\in\{1,...,2k/\epsilon+1\}$
we insert $p$ into the set $U_{i}$ if $U_{i}\ne\emptyset$ and $p$
is not covered by a pair $(p_{j},r_{j})$ with $j\in\{1,...,i-1\}$.
If there is an index $i\in\{1,...,2k/\epsilon+1\}$ such that $U_{i-1}\ne\emptyset$
(assume again that $U_{0}\ne\emptyset$), $U_{i}=\emptyset$, and
$p$ is not covered by any pair $(p_{j},r_{j})$ with $j\in\{1,...,i-1\}$,
we start the above algorithm in the iteration $i$, being initialized
with $U_{i}=\{p\}$ and the solutions $x,y$ as being computed previously.

Suppose now that a point $p$ is deleted. We remove $p$ from each
set $U_{i}$ that contains $p$. If there is no $r\in R$ such that
$(p,r)\in S$ then we do not do anything else. Assume now that $(p,r)\in S$
for some $r\in R$. %
The intuition is that this does not happen very often since in each
iteration $i$ we choose a point uniformly at random from $U_{i}$.
More precisely, in expectation the adversary needs to delete a constant
fraction of the points in $U_{i}$ before deleting $p$. Consider
the index $i$ such that $(p,r)=(p_{i},r_{i})$. We restart the algorithm
from iteration $i$. More precisely, we initialize $y$ to the values
that they have after raising the dual variables $y_{p_{1}},...,y_{p_{i-1}}$
in this order as described above until a constraint for the respective
point $p_{j}$ becomes half-tight. We initialize $x$ to the corresponding
primal variables, i.e., $x_{p_{j}}^{(2r_{j})}=1$ for each $j\in\{1,...,i-1\}$
and $x_{p'}^{(r')}=0$ for all other values of $p',r'$. Also, we
initialize the set $U_{i}$ to be the obtained set after removing
$p$. With this initialization, we start the algorithm above in iteration
$i$, and thus compute like above the (final) vectors $y,x$ and based
on them $S,\bar{S}$, and $\tilde{S}$.

When we restart the algorithm in some iteration $i$ then it takes
time $O(|U_{i}|k^{2})$ to compute the new set $S$. We can charge
this to the points that were already in $U_{i}$ when $U_{i}$ was
recomputed the last time %
and to the points that were inserted into $U_{i}$ later. After a
point $p$ was inserted, it is charged at most $O(k/\epsilon)$ times
in the latter manner since it appears in at most $O(k/\epsilon)$
sets $U_{i}$. Finally, given $S$, we can compute the sets $\bar{S}$,
$\hat{S}$, and $\tilde{S}$ in time $O(k^{3}/\epsilon^{2})$. The
algorithm from \cite{CharikarP04} takes time $n^{O(1/\epsilon)}$ if the input has size $n$.
One can show that this yields an update time of $k^{O(1/\epsilon)}+(k/\epsilon)^{4}$
for each value $\OPT'$. Finally, the same set $\tilde{S}$ yields
a solution for $k$-sum-of-diameters, increasing the approximation
ratio by a factor of~2. 

\begin{theorem}[{restate=[name=]thmPdMain}]
\label{thm:pd-main}
There are dynamic algorithms for the $k$-sum-of-radii and the $k$-sum-of-diameters
problems with update time
$k^{O(1/\epsilon)} \log\Delta$ and with approximation ratios of $13.008+\epsilon$ and $26.016+\epsilon$, respectively, against an oblivious adversary.
\end{theorem}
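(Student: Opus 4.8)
The plan is to assemble Theorem~\ref{thm:pd-main} from the pieces already established in this section, keeping careful track of the approximation factor and the amortized update time. For a fixed estimate $\OPT'$ that is a power of $1+\epsilon$ in $[1,\Delta]$, the primal-dual routine either finds (after the pruning step and the black-box offline call) a feasible solution $\tilde S$ with at most $k$ clusters whose cost is at most $(13.008+\epsilon)\OPT'$ — this is exactly \cref{lem:pd-cost-b}, which in turn rests on \cref{lem:pd-cost-a} and weak duality — or, by \cref{lem:pd-iterations-bound}, it certifies $\OPT>\OPT'$ because $U_{(2k/\epsilon)^2+(2k/\epsilon)}\ne\emptyset$. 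First I would run all $O(\log_{1+\epsilon}\Delta)=O((\log\Delta)/\epsilon)$ data structures in parallel, one per candidate $\OPT'$, and at query time output the solution of smallest cost among those structures that did not certify $\OPT>\OPT'$. Since some power of $1+\epsilon$ lies in $[\OPT,(1+\epsilon)\OPT]$, that structure produces a solution of cost at most $(13.008+\epsilon)(1+\epsilon)\OPT=(13.008+O(\epsilon))\OPT$; rescaling $\epsilon$ by a constant gives the claimed $13.008+\epsilon$. For $k$-sum-of-diameters, I would observe (as already noted before \cref{lem:pd-cost-b} and in the technical overview) that the very same $\tilde S$, reinterpreted by taking each cluster's diameter to be at most twice its radius, is a feasible solution whose cost is at most twice that of the $k$-sum-of-radii solution, i.e.\ $(26.016+\epsilon)\OPT$; this needs the trivial inequality $\mathrm{diam}\le 2r$ together with the fact that $\OPT_{\mathrm{diam}}\ge\OPT_{\mathrm{radii}}$ for the same clustering.

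The update time for a single value $\OPT'$ is handled by combining the per-iteration cost with the amortized charging argument sketched in the Dynamic-algorithm subsection. On an insertion, $p$ is added to at most $O(k/\epsilon)$ sets $U_i$ and possibly triggers starting the algorithm in one iteration initialized with $U_i=\{p\}$, which by \cref{lem:pd-iteration-time} costs $O(1+ik/\epsilon)=O((k/\epsilon)^2)$ in the worst iteration index. On a deletion of a point $p$ with $(p,r_i)\in S$, we warm-start from iteration $i$; the work to recompute $S$ from there is $O(|U_i|\cdot k^2)$ across the at most $(2k/\epsilon)^2$ remaining iterations — more precisely $O(\sum_{j\ge i}(|U_j|+jk/\epsilon))$ by \cref{lem:pd-iteration-time} — plus $O((k/\epsilon)^4)$ for the pruning step (\cref{lem:mpt-pd-pruning-time}), plus the black-box offline call on $\bar C$ of size $O(k/\epsilon)$, which by \cite{CharikarP04} costs $(k/\epsilon)^{O(1/\epsilon)}$, plus $O(k^3/\epsilon^2)$ to build $\tilde S$ (\cref{lem:pd-offline-time}). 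The key point — and here is where the oblivious adversary is used — is that $p_i$ was chosen uniformly at random from $U_i$ at the time $U_i$ was last (re)computed, so in expectation the adversary must delete a constant fraction of $U_i$ before deleting $p_i$; hence the $O(|U_i|k^2)$ recomputation cost can be amortized, charging $O(k^2)$ to each point that was in $U_i$ at its last recomputation or was inserted into $U_i$ afterwards. Since each inserted point lies in $O(k/\epsilon)$ sets $U_i$, it is charged $O((k/\epsilon)\cdot k^2)=O(k^3/\epsilon)$ this way. Adding the $O((k/\epsilon)^4)$ pruning cost and the dominating $(k/\epsilon)^{O(1/\epsilon)}$ offline cost, the expected amortized update time per value $\OPT'$ is $(k/\epsilon)^{O(1/\epsilon)}$, and multiplying by the $O((\log\Delta)/\epsilon)$ values gives $k^{O(1/\epsilon)}\log\Delta$ overall.

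The last ingredient I would spell out is correctness of the dynamic maintenance itself: that after any update the vectors $x,y$ and the derived sets $S,\bar S,\hat S,\tilde S$ satisfy exactly the invariants assumed at the start of each iteration in the static analysis — namely that $y$ is dual-feasible, that each $(p_j,r_j)$ has a half-tight constraint for $p_j$, and that $U_i$ is precisely the set of currently active points not covered by any $(p_j,2r_j/?)$... more carefully, not covered by any pair $(p_j,r_j)$ with $j<i$. Insertions preserve this because we only extend the $U_i$'s consistently and, when a new $U_i$ must become nonempty, we genuinely resume the randomized process; deletions preserve it because warm-starting from iteration $i$ re-derives everything from iteration $i$ onward using the current point set, and the prefix $y_{p_1},\dots,y_{p_{i-1}}$ together with $x_{p_j}^{(2r_j)}=1$ is re-initialized exactly as in the static run. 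Once these invariants hold, \cref{lem:pd-cost-a,lem:pd-cost-b,lem:pd-iterations-bound} apply verbatim to the current instance, so the cost and certification guarantees carry over.

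I expect the main obstacle to be the amortized running-time bookkeeping for deletions: one must argue that the "deletion of $p_i$ before a constant fraction of $U_i$ is deleted" event is rare in expectation \emph{even across the cascade of warm-starts} — each warm-start recomputes $U_i,U_{i+1},\dots$ with fresh random choices, and one has to make sure the potential/charging scheme does not double-count work when several warm-starts happen in quick succession, and that the charge of $O(k^3/\epsilon)$ per inserted point is never exceeded. The randomness-usage subtlety (the adversary is oblivious, so the random choices $p_i$ are independent of the update sequence, but only conditioned on the \emph{current} $U_i$ at recomputation time) must be stated precisely so that the "$\Omega(|U_i|)$ deletions in expectation" claim is actually true; this is the technical heart and the rest is routine accounting. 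Everything else — the two-phase structure (bi-criteria then offline black box), the factor $6+2\alpha+\epsilon=6+2(3.504)+\epsilon=13.008+\epsilon$, and the factor-$2$ loss for diameters — follows immediately from the lemmas already proved above.
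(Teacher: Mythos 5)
Your proposal is correct and follows essentially the same route as the paper: guess $\OPT'$ over $O((\log\Delta)/\epsilon)$ parallel data structures, combine \cref{lem:pd-iterations-bound,lem:pd-cost-a,lem:pd-cost-b} for the cost/certification guarantee, lose a factor $2$ for diameters, and amortize warm-start deletions by charging to the points of $U_i$ using the uniform random choice of $p_i$ against an oblivious adversary. The one step you flag as the "technical heart" is resolved in the paper exactly as you anticipate, via a ChanGS18-style argument with indicator-type random variables whose expectation is $1$ per (operation, center) pair, paid from a per-insertion token budget so no work is double-counted across successive warm-starts.
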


\section{\label{sec:random-k-cluster}Algorithm for $k$-center}

We present our algorithm for the $k$-center
problem which maintains a $(6+\epsilon)$-approximate solution with an amortized update time of $O(k\log^{2}n\log\Delta)$ against an adaptive adversary. 
As subroutines, it uses 
a dynamic $(2+\epsilon)$-approximation algorithm with an update time of $O(n+k)$, and, for the deletion-only case, 
a bi-criteria $(4+\epsilon)$-approximation algorithm that is allowed to use $O(k\log n)$ centers.
Again, we run $O(\log_{1+\epsilon}\Delta)$ data structures in parallel, one
for each value $\OPT'$ that is a power of $1+\epsilon$ in $[1,\Delta]$ and the algorithm outputs the 
solution of the data structure corresponding to the
minimum value $\OPT'$ among all data structures that output a solution (rather than asserting that the respective $\OPT' < \OPT$).
In the appendix, we provide the details of the fully dynamic algorithm with linear update time that we sketched in the technical overview, corresponding to the following lemma. %
\begin{lemma}[{restate=[name=]lemRndKcenLinup}]
\label{lem:mpt-rnd-kcen-linup}
For any given value $\OPT'$ there is
a fully dynamic algorithm that, started on a (possibly empty) set $P$ with $|P| = n_0$ for some integer $n_0$ has preprocessing time $O(k n_0)$ and
amortized update time $O(n+k)$ such that after each
operation it either returns a solution $C$ of value at most $2\OPT'$ or asserts
that $\OPT>\OPT'$, where $n$ is the current size of $P$. Any point in $P$ becomes a center
at most once, and then stays a center until it is deleted.
\end{lemma}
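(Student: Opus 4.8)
The plan is to implement exactly the auxiliary data structure sketched in the technical overview, and then carry out the amortized analysis carefully. First I would fix $\OPT'$ and say that a center $c$ \emph{covers} a point $p$ if $d(c,p)\le 2\OPT'$. The data structure maintains, for every active point $p$, a counter $a_p$ equal to the number of current centers covering $p$, together with, for every current center $c$, the list of points it covers. The invariant is that the set $C$ of centers is pairwise at distance $>2\OPT'$ (so that if $|C|\ge k+1$ we may assert $\OPT>\OPT'$, since any $k$ balls of radius $\OPT'$ would have to contain two of these centers), and that every active point has $a_p\ge 1$ (so that when $|C|\le k$, assigning every point to a covering center gives a solution of value at most $2\OPT'\le(2+\eps)\OPT'$; here I use the standing assumption $\OPT\le\OPT'$ in the branch where we return a solution). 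For the preprocessing on the initial set $P$ with $|P|=n_0$: process the points in arbitrary order, and whenever the current point is not yet covered, make it a center and scan all $n_0$ points to update counters and covered-lists; this is $O(k n_0)$ time because we stop opening centers (and hence stop scanning) as soon as $k+1$ centers would appear, at which point we just record ``$\OPT>\OPT'$'' — so at most $k+1$ scans of cost $O(n_0)$ each.

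Next I would describe the update operations. On an insertion of $p$: compute $a_p$ by checking $p$ against all current centers ($O(k)$ time) and append $p$ to the covered-lists of those centers. If $a_p=0$, make $p$ a center and scan all current points, incrementing $a_{p'}$ and updating lists for each $p'$ that $p$ covers; this scan is $O(n)$. On a deletion of $p$: if $p$ is not a center, just remove it from the covered-lists of the (at most $k$) centers covering it and discard its counter, in $O(k)$ time. If $p$ is a center, scan the points it covers, decrement their counters, and for each point whose counter drops to $0$ we must open a new center; opening a new center again triggers an $O(n)$ scan. A subtlety here is that one deletion of a center could in principle force several new centers to open; however, each newly opened center is, by construction, at distance $>2\OPT'$ from all previously existing centers and from each other new center opened in this step (we re-check coverage against the already-updated $C$ before opening each new one), so this is consistent with the pairwise-distance invariant, and the number of scans triggered in a single update is at most $k+1$ before we would instead declare $\OPT>\OPT'$.

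The main obstacle — and the only real content beyond bookkeeping — is the amortized bound: a single opening costs $\Theta(n)$ where $n$ is the \emph{current} size of $P$, and a point inserted while $n$ is small but becoming a center while $n$ is large cannot pay for this out of its own insertion. The fix is a standard potential / charging argument: I would give each update operation a budget of $\Theta(n+k)$, and maintain a potential that reserves $\Theta(\,|C|\cdot n\,)$ ``credit'' (roughly, $\Theta(n)$ earmarked for the eventual deletion-scan of each current center, plus a $\Theta(\log n)$-type term to absorb the slow growth of $n$ between operations). Concretely: when a point becomes a center it has $\Theta(n)$ credit available — partly from its own insertion's budget and partly from the pooled credit accumulated by the last $\Theta(n)$ update operations — and this credit pays for both its opening scan and its future deletion scan; since $n$ changes by only $1$ per operation, the credit earmarked for a center is never more than a constant factor stale, so a geometric-sum argument shows the $\Theta(n+k)$ per-operation budget suffices. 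I would also observe the clean structural fact requested by the lemma: a point is made a center only when its counter is $0$ (at insertion or right after some center covering it is deleted), and once a center it is never demoted — we only ever \emph{remove} it from $C$ when it is deleted from $P$ — so any point becomes a center at most once. Finally, I would note that all the reported distances come from the promised metric, the triangle inequality guarantees a returned solution of value $\le 2\OPT'$ is indeed feasible, and the two output branches (solution vs.\ ``$\OPT>\OPT'$'') are exactly as required.
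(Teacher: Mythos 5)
Your algorithm, its update procedures, the correctness invariants (centers pairwise at distance $>2\OPT'$, every non-center point covered by some center, hence either a $2\OPT'$-solution or a witness set of $k+1$ far-apart points), the $O(kn_0)$ preprocessing, and the ``becomes a center at most once'' observation all coincide with the paper's proof. The genuine gap is in the amortized running-time analysis, which you yourself flag as the only real content beyond bookkeeping. The expensive event is a deletion of a center that triggers a cascade of $m$ new-center openings, i.e.\ $\Theta((1+m)n)$ work in a single operation whose own budget is only $\Theta(n+k)$, with $m$ as large as roughly $k$. The reserve you sketch, $\Theta(|C|\cdot n)$ credit earmarked for ``the eventual deletion-scan of each current center,'' is oriented the wrong way: it pays for the cheap part of such a deletion (the $O(n)$ counter-decrement scan, which the operation's own charge already covers), not for the cascade of openings. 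Indeed a cascade of $m$ promotions can only occur when $|C|$ is small (just before it one needs roughly $|C|\le k+2-m$), so at exactly that moment your earmarked reserve is $\Theta(|C|\cdot n)=O(n)$ while $\Theta(mn)=\Theta(kn)$ is needed; worse, under your scheme every newly opened center must additionally fund $\Theta(n)$ of fresh reserve, so openings become amortizedly more expensive rather than cheaper. The fallback appeal to ``pooled credit accumulated by the last $\Theta(n)$ update operations'' and a ``geometric-sum argument'' is not substantiated: one must show that this pool is large enough at the moment of the cascade and is never double-spent across successive cascades, and that is precisely the missing argument.

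The paper closes this gap with the complementary invariant: after operation $t$ the bank account holds at least $(k+1-c_t)\cdot n_t$ tokens, i.e.\ credit proportional to the number of \emph{unfilled} center slots times the current number of points. Each update is charged $n_t+k$ tokens; insertions deposit $k$ tokens (enough to cover the growth of $n$ in the invariant), and when a deletion of a center causes $m$ promotions the invariant releases about $(m-1)n_t$ tokens (since $|C|$ increases by $m-1$ net and $n$ decreases), which together with the operation's own charge pays the $\Theta((1+m)n_t)$ work; the preprocessing charge of $O(kn_0)$ both pays its cost and seeds the account with $(k+1)n_0$ tokens, and the invariant is verified by induction over operations. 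With your potential as stated the $O(n+k)$ amortized bound does not follow; replacing it by $(k+1-|C|)\cdot n$ (or the equivalent token scheme) is what makes the proof go through. A minor secondary point: you should also cap $|C|$ at $k+1$ on insertions (not only in deletion cascades), since otherwise ``check $p$ against all current centers'' is not an $O(k)$ step.
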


\subsection{\label{subsec:deletion-only}Deletion-only algorithm with $O(k \log n)$ centers}

Suppose we are given a set of $n$ points $P$ and in each operation
the adversary deletes a point in $P$, but the adversary cannot insert
new points into $P$. We present a data structure for this case which we  preprocess as follows.
We define a set of centers $C$ and a partition $\mathcal C$ of $C$ into \emph{buckets}
$C_{\log n}, C_{\log n - 1}, ...,C_{1}$ which we initialize by $C:=\emptyset$ and
$C_{j}:=\emptyset$ for each $j\in\{1,...,\log n\}$. At the beginning
of each iteration $i$ we are given a set of points $U_{i}$ where
for $i=1$ we define $U_{1}:=P$. 
In each iteration $i$, we select
a point $c_{i}\in U_{i}$ uniformly at random from $U_{i}$, add $c_{i}$
to $C$ (i.e., make it a center). Note that this induces a natural ordering on the centers, i.e.,~$c_i$ becomes the $i$-th center.
We also assign $c_{i}$ to the bucket $C_{j}$ such that $2^{j-1}<|U_{i}|\le2^{j}$.
Also, we assign to $c_{i}$ all points in $P(c_{i}):=\{p\in U_{i} \mid d(p,c_{i})\le2\OPT'\}$
which form the \emph{cluster} of $c_{i}$. We define $U_{i+1}:=\{p\in U_{i} \mid d(c_{i},p)>2\OPT'\}$
and observe that $|U_{i}|>|U_{i+1}|$. Note that therefore every center $c_j$ with
${1} \le j \le {i-1}$ was assigned to a bucket in $C_{\log n},C_{\log n-1},...,C_{1}$.
We stop at the beginning of an iteration $i$ if $U_{i}=\emptyset$.
Also, we stop after some iteration $i$ if the respective bucket $C_{j}$
of iteration $i$ contains $2k$ centers after adding $c_{i}$ to
it. In particular, this ensures that $|C| = \sum_{j=1}^{\log n} |C_j| \le2k\log n$. 
We say that a bucket $C_{j}$ is \emph{small} if $|C_{j}| \le k$ and \emph{big}
otherwise. If $|C_{j}|= 2 k$ the bucket is \emph{very big}, and no bucket has ever more than $2k$ centers. 
Whenever a bucket $C_j$ has received $2k$ centers, the while loop stops and we call $C_j$ the (very big) bucket $C^*$.
There can be at most one very big bucket $C^*$
and if so, $C^*$ must be the last bucket into which centers were added and out of all non-empty buckets the one with smallest index. Note also that as long as $C^*$ is big or very big, it provides a ``witness'' that no solution for the $k$-center problem with value $\OPT'$ exists.

As data structure we keep (1) an array $\mathcal{C}$ of size $\log n$ with each entry pointing to the corresponding bucket if it is non-empty, and a nil-pointer otherwise, (2) for each bucket a list of its centers, (3) for each center $c$ the index of its bucket as well as a list to all points in $P(c)$, and for every point that is not a center, a nil-pointer instead of $P(c)$, (4) a pointer to the index of $C^*$ in the array $\mathcal{C}$, (5) the number of centers in $C^*$, (6)
a binary flag $F$, described below, and if $F$ is set,
a list $U^*$ which contains all points that are not within distance $2 \OPT'$ of any  center as well as the sets $P(c)$ of some already deleted centers $c$ of $C^*$.
\emph{A set flag $F$ indicates that no solution for the $k$-center problem with value $\OPT'$ exists since the last bucket is big. If $F$ is not set, the set $C$ contains a set of at most $2k \log n$ centers that cover $P$ and $F$ will never set again.}

\vspace{-.2cm}
\paragraph{Update operation.}
Suppose now that the adversary deletes a point $p\in P$. \aw{Intuitively, we do the following: if $p$ is not a center then we do not need to do anything. If $p$ is a center and $F$ is not set is then we simply replace $p$ by some other point in its cluster. We do the same if $F$ is set but $p$ is not a center from the last bucket $C^*$. If $F$ is set and $p \in C^*$ then we might perform a \emph{partial rebuild}.
Formally, }we update the clustering lazily as follows:

\noindent\emph{Case 1:} If $p\notin C$ then we simply delete $p$ from the set $P(c)$ that it belongs to.

\noindent\emph{Case 2:} If $p\in C$ and $p$ does not belong to $C^*$, we select another point $p'\in P(p)$ instead of $p$ and
set $P(p')$ to $P(p)$; if $P(p)=\emptyset$, we  do nothing and the size of $C$ decreases by one.

\noindent\emph{Case 3:} If $p$ belongs to $C^*$ and $F$ is not set, we perform the same update as in Case 2.

\noindent\emph{Case 4:} If $p$ belongs to $C^*$ and $F$ is set, we simply delete $p$ as center $C^*$, add its set $P(p)$ to  $U^*$, and decrease the center counter of $C^*$. If $C^*$ 
becomes small due to this deletion, let 
 $c_{i}\in C^*$ denote the first center that was inserted
into $C^*$, i.e., it is the center with the smallest index in $C^*$. We define $U_{i}:=\{p\in P \mid \forall i'<i : d(p,c_{i'})>2\OPT' \}$. Note that this set can be created by taking the union of
the sets $P(c)$ stored in the list of pointers of $C^*$ with $U^*$. Let $C:=\bigcup_{j'=1}^{j-1}C_{j'}$ which can be created by removing all centers $c_i, c_{i+1}, \dots$ (stored at $C^*$) from $C$. Then we perform a partial rebuild by calling a partial rebuild routine $PR(U_i, C)$.

We describe the partial rebuild routine $PR(U_i, C)$ now.
We start the routine by initializing it with $U_i$ and $C$. Then we 
select a point $c_{i}$ uniformly at random from $U_{i}$, add $c_{i}$
to $C$, and assign $c_{i}$ to the bucket $C_{j}$ such that $2^{j-1}<|U_{i}|\le2^{j}$.
As we are in the deletion-only case, $C_j$ is necessarily the non-empty bucket (after $c_i$ was added) with smallest index.
Also, by processing all points in $U_i$ we create the set  $P(c_{i}):=\{p\in U_{i} \mid d(p,c_{i})\le2\OPT'\}$, assign it to $c_{i}$, and remove them from $U_i$. Let us call the resulting set $U_{i+1}$, i.e.
 $U_{i+1}:= U_i \setminus P(c_{i})$,
and iterate. Let $C_j$ be the bucket into which $c_i$ was placed.
Like before, we stop if $U_{i}=\emptyset$ at the beginning of some
iteration $i$ or if after some iteration $i$ the bucket $C_j$
contains $2k$ centers. In the former case we unset the flag $F$, and then no further partial rebuild will happen and also no very big bucket will ever exist again.
In the latter case we keep the flag $F$ set and denote $C_j$ by $C^*$.

\vspace{-.2cm}
\paragraph{Update time.}
We want to show that the algorithm has an amortized update time of
$O(k)$ %
per operation. 
To argue this, we partition the operations into \emph{phases }$Q_{1},Q_{2},...$ such that whenever an operation $t$ causes us to partially rebuild, the current phase ends. Note that each update in a phase, except the last one, only 
performs $O(1)$ work.
Thus, if no partial rebuild happens at the end of a phase $Q$ (because the flag $F$ is not set or because after the last operation in $Q$ the adversary does not delete any more points), then
we only perform $O(1)$ work per update.
Thus, consider a phase $Q$ at whose end we perform a partial rebuild.
\aw{Then, at the beginning of 
$Q$ there is exactly one very big bucket, namely $C^*$,  
which contains exactly $2k$ centers at the beginning of $Q$ and exactly $k$ at the end.}
Thus, exactly $k$ centers were deleted from $C^*$ during $Q$.
Let
$t$ be  the last operation of that phase. %
For each operation $t\in Q$ we define a budget $a_{t}$ as follows. 
Let $P^{(t)}$ denote the set of existing points before operation $t$ and for each center $c_{i'}\in C^*$ we define $U_{i'}^{(t)}:=\{p\in P^{(t)} \mid \forall i''<i' : d(p,c_{i''})>2\OPT'\}$.
As we are in the deletion-only case, $U_{i'}^{(t)} \subseteq U_{i'}^{(t')}$ for $t' \le t$.
If operation $t$ does not delete a center in $C^*$, then we define
$a_{t}:=k$. Otherwise, suppose that operation $t$ deletes a center
$c_{i}\in C^*$. We set $a_{t}:=|U_{i}^{(t)}| + k$. 
For each of the
at most $2k$ centers $c_{i'}\in C^*$ we know that the adversary
deletes $c_{i'}$ with probability at most $1/|U_{i'}^{(t)}|$. This holds since
$c_{i'}$ was chosen uniformly at random from the respective set $U_{i'}$
when $c_{i'}$ was selected the last time in a partial rebuild
or during preprocessing. In case that $c_{i'}$ is deleted it holds that $a_{t}=|U_{i'}^{(t)}| + k$.
Since $|C^*|\le2k$ one can show that $\mathbb{E}[a_{t}]\le 1 \cdot k + \sum_{i \in [2k]} 1 \leq O(k)$. 
\begin{lemma}
\label{lem:a_t} For every $t \ge 0$ we have that $\mathbb{E}[a_{t}]\le O(k)$. 
\end{lemma}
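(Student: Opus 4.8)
The plan is to bound $\mathbb{E}[a_t]$ by analyzing the two cases in the definition of $a_t$ and using the key randomness property: each center $c_{i'} \in C^*$ was chosen uniformly at random from its set $U_{i'}$ (either during preprocessing or the last partial rebuild that created it). First I would write $a_t$ explicitly as a sum over the possible outcomes. If operation $t$ does not delete a center in $C^*$, then $a_t = k$ deterministically. If operation $t$ deletes a center $c_{i} \in C^*$, then $a_t = |U_i^{(t)}| + k$. The crucial observation is that, conditioned on the history up to operation $t$, the adversary (being oblivious) has fixed which point it deletes at operation $t$, and this point equals $c_{i'}$ with probability at most $1/|U_{i'}^{(t')}| \le 1/|U_{i'}^{(t)}|$, where $t'$ is the operation at which $c_{i'}$ was last (re)selected — here I use that $U_{i'}^{(t)} \subseteq U_{i'}^{(t')}$ since we are in the deletion-only case, so $|U_{i'}^{(t)}| \le |U_{i'}^{(t')}|$, and the probability bound $1/|U_{i'}^{(t')}|$ translates into a bound of at most $1/|U_{i'}^{(t)}|$ times the appropriate factor... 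I need to be careful here.

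More precisely, I would decompose by law of total expectation over which center (if any) is deleted:
\begin{align*}
\mathbb{E}[a_t] &= k \cdot \Pr[\text{no center of } C^* \text{ deleted at } t] + \sum_{c_{i'} \in C^*} \mathbb{E}\bigl[(|U_{i'}^{(t)}| + k) \cdot \mathbf{1}[c_{i'} \text{ deleted at } t]\bigr].
\end{align*}
The first term is at most $k$. For each summand, I condition on the state at the start of operation $t$ (which fixes $U_{i'}^{(t)}$ and all the sets), and use that $\Pr[c_{i'} \text{ deleted at } t \mid \text{state}] \le 1/|U_{i'}^{(t)}|$. This gives $\mathbb{E}[(|U_{i'}^{(t)}| + k)\mathbf{1}[c_{i'}\text{ deleted}]] \le \mathbb{E}[(|U_{i'}^{(t)}| + k)/|U_{i'}^{(t)}|] = 1 + k\,\mathbb{E}[1/|U_{i'}^{(t)}|] \le 1 + k$, since $|U_{i'}^{(t)}| \ge 1$ (as $c_{i'}$ is still present in it). Summing over the at most $2k$ centers of $C^*$ yields $\mathbb{E}[a_t] \le k + 2k(1+k) = O(k^2)$, which is too weak — so I need the tighter accounting the paper indicates, namely $\mathbb{E}[a_t] \le 1\cdot k + \sum_{i\in[2k]} 1 \le O(k)$. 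This suggests the $+k$ term should be split off: write $a_t = k + (\text{extra})$ where the extra is $|U_i^{(t)}|$ only when a center $c_i \in C^*$ is deleted. Then $\mathbb{E}[a_t] = k + \sum_{c_{i'}\in C^*}\mathbb{E}[|U_{i'}^{(t)}|\cdot\mathbf{1}[c_{i'}\text{ deleted}]] \le k + \sum_{c_{i'}\in C^*}\mathbb{E}[|U_{i'}^{(t)}|/|U_{i'}^{(t)}|] = k + |C^*| \le k + 2k = O(k)$.

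The main obstacle I anticipate is making the conditional probability bound $\Pr[c_{i'}\text{ deleted at }t\mid \text{history}]\le 1/|U_{i'}^{(t)}|$ fully rigorous against an oblivious adversary. The subtlety is that $c_{i'}$ was chosen uniformly from $U_{i'}$ at the time of its (re)selection, but $U_{i'}^{(t)}$ has since shrunk by deletions; however, conditioned on the event that none of $U_{i'}$'s points were deleted yet (which is implied by $c_{i'} \in C^*$ still holding and $U_{i'}^{(t)}$ being the current residual set), the identity of $c_{i'}$ is still uniform among the remaining $|U_{i'}^{(t)}|$ candidates by a deferred-decisions / principle-of-deferred-randomness argument, since the oblivious adversary's deletion sequence is independent of these random bits. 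I would formalize this by noting that we can defer revealing which element of $U_{i'}$ is $c_{i'}$ until it is about to be deleted, and at that moment it is uniform over whatever remains of $U_{i'}$, which is exactly $U_{i'}^{(t)}$ (of size at least $1$). This deferred-decision argument, combined with the monotonicity $U_{i'}^{(t)} \subseteq U_{i'}^{(t')}$ from deletion-only, is the heart of the proof; the rest is the routine expectation bookkeeping sketched above. I would then conclude $\mathbb{E}[a_t] \le O(k)$ as claimed.
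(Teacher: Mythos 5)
Your proof is correct and follows essentially the same route as the paper: split $a_t$ into the base cost $k$ plus the extra $|U_{i'}^{(t)}|$ that is paid only when some $c_{i'}\in C^*$ is deleted, bound the probability of deleting each such $c_{i'}$ by $1/|U_{i'}^{(t)}|$ using its uniform random selection against an oblivious adversary, and sum over the at most $2k$ centers of $C^*$ to get $\mathbb{E}[a_t]\le k+2k=O(k)$. Your deferred-decision justification of the probability bound, combined with the deletion-only monotonicity $U_{i'}^{(t)}\subseteq U_{i'}^{(t')}$, is precisely the reasoning the paper invokes (slightly more implicitly), so this is the same argument spelled out a bit more carefully.
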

Let $c_{i}$ be the first center that was added to $C^*$  in phase $Q$ and let $U_{i}$
be the corresponding set during the last partial rebuild. The
overall work during phase $Q$ is bounded by $O(|Q|+k|U_{i}|)$, where $|Q|$ denotes the number of iterations during $Q$:
\hf{for the at most $2k \log n$ new centers chosen during the partial rebuild at the very end,
it takes time $\Theta(|U_i| + |U_{i+1}| + \ldots)$ to compare, for all $j \geq i$,
every point in $U_j$ to determine $P(c_j)$. Since $\sum_{j=0}^{k \log n} |U_{i + j}| \leq \sum_{j=0}^{\log n} k|U_{i + jk}| \leq \sum_{j=0}^{\log n} k|U_{i}| / 2^{j-i} \leq 2k|U_i|$, the partial rebuild takes time  $\Theta(k|U_{i}|)$.}
All other operations in $Q$ cause $O(1)$ work each. Now the key insight
is that $|Q|+k|U_{i}|\le O(\sum_{t\in Q}a_{t})$. If $|Q|\ge|U_{i}|/4$,
this is immediate since each $a_t \ge k$.  
Otherwise, we observe that if a center $c_{i'}$ is deleted by operation $t$, then $a_t = |U_{i'}^{(t)}|\ge|U_{i}|/4$ as at most $|Q| < |U_{i}|/4$ many points where deleted during phase $Q$ and at the beginning of the phase
$|U_{i'}| > |U_i|/2$.
Since $k$ centers are deleted before a partial rebuild, we obtain the following lemma.

\begin{lemma}[{restate=[name=]lemMptBoundQ}]
\label{lem:bound-Q}The running time during phase $Q$ is bounded by $O(\sum_{t\in Q}a_{t})$. 
\end{lemma}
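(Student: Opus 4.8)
The plan is to carefully account for all work done during a phase $Q$ that ends with a partial rebuild, and show it is $O(\sum_{t \in Q} a_t)$. First I would isolate the two sources of work. Each operation $t \in Q$ that is \emph{not} the last one performs only Case~1, Case~2, Case~3, or the non-rebuild part of Case~4, each of which is $O(1)$ work; since $|Q|$ counts these operations and each has budget $a_t \ge k \ge 1$, their total work is trivially $O(|Q|) \le O(\sum_{t\in Q} a_t)$. The remaining cost is the partial rebuild at operation $t$ (the last operation of $Q$). As explained in the paragraph preceding the lemma, if $c_i$ is the first center added to $C^*$ during the last partial rebuild and $U_i$ the corresponding set, then the rebuild touches the nested sets $U_i \supseteq U_{i+1} \supseteq \cdots$, and since the bucket indices guarantee that the set size drops by a factor of at least $2$ every time $k$ new centers are added, the geometric-sum bound gives total rebuild cost $O(k|U_i|)$. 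So it suffices to show $|Q| + k|U_i| \le O(\sum_{t \in Q} a_t)$.

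Next I would split into the two cases flagged in the text. \emph{Case A: $|Q| \ge |U_i|/4$.} Then $k|U_i| \le 4k|Q| \le 4\sum_{t\in Q} a_t$ because every $a_t \ge k$; combined with the trivial $|Q| \le \sum_{t\in Q} a_t$ this gives $|Q| + k|U_i| \le O(\sum_{t\in Q} a_t)$. \emph{Case B: $|Q| < |U_i|/4$.} Here I use that exactly $k$ centers of $C^*$ are deleted during $Q$ (the bucket goes from $2k$ centers down to $k$). Consider an operation $t$ that deletes one of these centers $c_{i'} \in C^*$; by definition $a_t = |U_{i'}^{(t)}| + k$. I would argue $|U_{i'}^{(t)}| \ge |U_i|/4$: at the start of the phase $c_{i'}$ was in a bucket with $|U_{i'}| > |U_i|/2$ (all centers of $C^*$ lie in buckets whose $U$-sizes are within a factor $2$ of each other, and $c_i$'s bucket is the relevant reference — more precisely $U_{i'}$ and $U_i$ differ by at most a factor $2$ since both centers sit in $C^*$), and since we are in the deletion-only setting and fewer than $|U_i|/4$ points were deleted during $Q$ (because $|Q| < |U_i|/4$), we still have $|U_{i'}^{(t)}| > |U_i|/2 - |U_i|/4 = |U_i|/4$. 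Summing $a_t$ over the $k$ operations that delete centers of $C^*$ yields $\sum_{t\in Q} a_t \ge k \cdot (|U_i|/4) = \Omega(k|U_i|)$, and again $|Q| \le \sum_{t\in Q} a_t$ handles the $|Q|$ term, so $|Q| + k|U_i| \le O(\sum_{t\in Q} a_t)$.

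The main obstacle, and the step to write most carefully, is the size comparison $|U_{i'}^{(t)}| \ge |U_i|/4$ in Case~B: one must be precise about \emph{which} $U$-set is meant at \emph{which} moment. The sets $U_{i'}^{(t)}$ shrink over time (deletion-only), and the bucketing only controls the sizes at the moment the centers were chosen during the last partial rebuild, so I would (i) note that at the moment of the last rebuild, all centers placed in $C^*$ have $|U| $ within a factor $2$ of $|U_i|$, hence $> |U_i|/2$ at that time, and (ii) bound the total shrinkage during $Q$ by $|Q| < |U_i|/4$. A minor subtlety: a partial rebuild may itself append centers to buckets \emph{below} $C^*$, but since those buckets existed before and are untouched by subsequent deletions in this phase, and since $C^*$ is by construction the non-empty bucket of smallest index, the reference set $U_i$ is unambiguous. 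Finally I would combine the two cases to conclude the running time during phase $Q$ is $O(\sum_{t\in Q} a_t)$, which is the claim of Lemma~\ref{lem:bound-Q}.
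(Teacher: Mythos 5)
Your proposal is correct and follows essentially the same argument the paper gives in the text immediately preceding the lemma: bound the non-rebuild operations by $O(1)$ each, bound the partial rebuild by $O(k|U_i|)$ via the geometric decay of bucket sizes, and then split on whether $|Q|\ge |U_i|/4$, using $a_t\ge k$ in the first case and $|U_{i'}^{(t)}|>|U_i|/2-|Q|\ge |U_i|/4$ for the $k$ deleted centers of $C^*$ in the second. Your extra care about which $U$-set is measured at which time (selection during the last rebuild versus shrinkage during the deletion-only phase) matches the paper's intent and fills in the same details.
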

Now Lemmas~\ref{lem:a_t} and \ref{lem:bound-Q} together imply that
the expected running time during phase $Q$ is bounded by $O(|Q|k)$
which yields an amortized expected update time of $O(k)$ per operation.
\begin{lemma}
The algorithm has an amortized expected update time of $O(k)$ per operation.
\end{lemma}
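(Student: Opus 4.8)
The plan is to reduce the statement to the two preceding lemmas via the phase decomposition, taking a single expectation only at the very end. First I would fix an arbitrary sequence of $T$ update operations together with an arbitrary realization of the algorithm's random center choices, and recall that within each phase every operation other than the last performs only $O(1)$ work (it falls into Case~1, Case~2, or Case~3, or into Case~4 without $C^{*}$ becoming small). Hence, on this sample path, the total running time equals, up to a constant factor, the sum over phases of the per-phase running time. For a phase $Q$ that ends with a partial rebuild I would invoke Lemma~\ref{lem:bound-Q} to bound its running time by $O\big(\sum_{t\in Q}a_{t}\big)$; for the at most one trailing phase that does not end with a rebuild, the running time is $O(|Q|)$, which is again $O\big(\sum_{t\in Q}a_{t}\big)$ because $a_{t}\ge k\ge 1$ always (the budget is either $k$ or $|U_{i}^{(t)}|+k$). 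Summing over all phases yields a pathwise bound of $O\big(\sum_{t=1}^{T}a_{t}\big)$ on the total work; the initial construction costs $O(k\lvert P\rvert)$ by the same point-to-center comparison count as a partial rebuild, which amortizes to $O(k)$ over the at most $\lvert P\rvert$ deletions and can be absorbed.

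Then I would pass to expectations: by linearity of expectation and Lemma~\ref{lem:a_t}, $\mathbb{E}\big[\sum_{t=1}^{T}a_{t}\big]=\sum_{t=1}^{T}\mathbb{E}[a_{t}]\le O(Tk)$, so the expected total running time over $T$ operations is $O(Tk)$, i.e., the algorithm has amortized expected update time $O(k)$.

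The main subtlety — the step I expect to need the most care — is that the phase decomposition is itself a random object: which operations belong to which phase, and the lengths $|Q|$, depend both on the adversary and on the algorithm's random center choices. Therefore one cannot argue phase by phase that $\mathbb{E}[\text{work in }Q]\le O(|Q|\,k)$ directly. The way around this is precisely to keep Lemma~\ref{lem:bound-Q} in its deterministic, per-sample-path form and to take expectations only once, at the level of $\sum_{t}a_{t}$, where linearity of expectation is insensitive to how the phases happen to fall. I would also double-check that the budget $a_{t}$ is well-defined for operations lying outside any completed phase (it is, via the same two-case definition $a_{t}=k$ or $a_{t}=|U_{i}^{(t)}|+k$), that the bound $\mathbb{E}[a_{t}]\le O(k)$ of Lemma~\ref{lem:a_t} is genuinely unconditional (it holds because each of the at most $2k$ centers $c_{i'}\in C^{*}$, being chosen uniformly from $U_{i'}\supseteq U_{i'}^{(t)}$, is deleted at operation $t$ with probability at most $1/|U_{i'}^{(t)}|$, so the $|U_{i'}^{(t)}|$ terms contribute at most $O(1)$ each in expectation), and that the regime in which the flag $F$ is unset — where no further partial rebuilds occur and every update costs $O(1)$ — is subsumed by the trailing-phase case.
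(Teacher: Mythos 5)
Your proof is correct and follows essentially the same route as the paper: it combines Lemma~\ref{lem:a_t} and Lemma~\ref{lem:bound-Q} through the phase decomposition and the budgets $a_t$. Your extra care in summing the pathwise bound $O\big(\sum_t a_t\big)$ first and applying linearity of expectation only once (to sidestep the randomness of the phase boundaries) is a slightly more careful rendering of the same argument, which the paper states more tersely as ``the expected running time during phase $Q$ is bounded by $O(|Q|k)$.''
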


For the fully dynamic algorithm in the next subsection we also need to bound the amortized number of changes in the set $C$.

\begin{lemma}\label{lem:amortchanges}
There are at least $k$ deletions between two partial rebuilds.
If there are more than 2 changes to $C$ after an update operation, then there was a partial rebuild at this
operation.
\end{lemma}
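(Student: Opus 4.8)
The plan is to prove the two assertions separately, both by a direct case analysis of the (deletion-only) update operation together with the termination condition of the partial rebuild routine.

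\textbf{First assertion (at least $k$ deletions between two partial rebuilds).} First I would record the structural fact that once the flag $F$ is unset it is never set again, and that after $F$ is unset no partial rebuild is ever performed. Consequently, if two partial rebuilds occur, then $F$ is set during the entire interval between them. Next I note that immediately after a partial rebuild that keeps $F$ set — and likewise immediately after preprocessing in the case that $F$ is set at that point — the very big bucket $C^*$ contains \emph{exactly} $2k$ centers: the (re)build loop terminates either with $U_i=\emptyset$, in which case $F$ is unset and no rebuild follows, or exactly when the current bucket first reaches $2k$ centers, and this bucket is then declared to be $C^*$. Then I track $|C^*|$ over the operations of the phase: in Case~1 the deleted point is not a center; in Case~2 the deleted center belongs to a bucket with index smaller than that of $C^*$ (hence not $C^*$); Case~3 requires $F$ to be unset and therefore cannot occur in this interval; and only Case~4 affects $C^*$, decreasing $|C^*|$ by exactly one. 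Since a partial rebuild is triggered precisely when a Case~4 deletion causes $|C^*|$ to drop to $k$, starting from $|C^*|=2k$ we need exactly $k$ such Case~4 deletions, each of which is an adversary deletion operation. Hence at least $k$ deletions occur between any two consecutive partial rebuilds.

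\textbf{Second assertion (more than $2$ changes to $C$ imply a partial rebuild).} I would prove the contrapositive: if no partial rebuild is performed at the operation, then $C$ changes by at most two elements. Go through the four cases. In Case~1, $p\notin C$ and $C$ is unchanged ($0$ changes). In Cases~2 and~3, the deleted center $p$ is either replaced by some $p'\in P(p)$ — two changes to $C$, namely removing $p$ and inserting $p'$ — or, if $P(p)=\emptyset$, simply removed — one change. In Case~4 without a subsequent partial rebuild, the only modification to $C$ is the removal of $p$ as a center of $C^*$ (one change); moving $P(p)$ into $U^*$ and decrementing the center counter of $C^*$ do not alter $C$. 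So in every case that does not invoke a partial rebuild, $|C|$ changes by at most two, which is exactly the contrapositive of the claim.

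Both parts are essentially bookkeeping once the case structure is laid out; the only points that require genuine care are (i) justifying that the interval between two partial rebuilds lies entirely within a regime where $F$ is set, so that Case~3 is excluded and $|C^*|$ is modified only by Case~4 decrements, and (ii) justifying the ``$|C^*|=2k$ at the start of the phase'' fact from the two possible exit conditions of the rebuild loop. I do not expect a real obstacle beyond being precise about these two points.
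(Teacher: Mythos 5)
Your proof is correct and follows essentially the same argument as the paper: $C^*$ starts a phase with $2k$ centers and a rebuild triggers only when it drops to $k$ (giving $k$ deletions), and a per-case check shows each non-rebuild operation removes at most one center and adds at most one. The only slip is cosmetic: in Case~2 the deleted center lies in a bucket of \emph{larger} index than $C^*$ (since $C^*$ has the smallest index among non-empty buckets), but your argument only uses that it is not in $C^*$, which holds by the case definition.
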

\begin{proof}
Each phase consists of all deletions between two consecutive rebuilds. At the beginning of a phase $C^*$ 
contains $2k$ centers, at the end of the phase it only contains $k$ centers. Thus there are at least $k$ deletions during a phase, i.e., between two partial rebuilds.

 Each update operation that is not the last operation of a phase deletes at most one and creates at most one new center. Thus, after a deletion there are only two changes to $C$, except if a partial rebuild was performed.
\end{proof}

\vspace{-.2cm}
\paragraph{Query operation.}
The full details appear in the appendix (see the technical overview for a short version).
\begin{lemma}[{restate=[name=]lemRndKcenDelOnly}]
\label{lem:mpt-rnd-kcen-delonly}
There is an algorithm for the deletion-only case with an expected
update time of $O(k)$ that either outputs a solution of value $4\OPT'$
with $O(k\log n)$ centers, or asserts that there is no solution with
value $\OPT'$ that uses at most $k$ centers.
\end{lemma}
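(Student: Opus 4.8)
The plan is to derive the three parts of the claim — the $O(k)$ expected update time, the covering guarantee when a solution is reported, and the correctness of the assertion — directly from the data structure and update procedure described above. On a query the algorithm simply inspects the flag $F$: if $F$ is set it asserts $\OPT>\OPT'$; otherwise it reports the current center set $C$ together with the common radius $4\OPT'$. A value-query thus costs $O(1)$ and a solution-query costs $O(k\log n)$, which is linear in the size of its own output, since $C$ is read off directly from the $\le\log n$ buckets, each holding $\le 2k$ centers, and every radius equals $4\OPT'$. The amortized expected update time of $O(k)$ per operation is the bound already derived above from \cref{lem:a_t} and \cref{lem:bound-Q}, so nothing new is needed there.

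For the assertion I would prove the invariant that, whenever $F$ is set, the bucket $C^*$ currently contains more than $k$ centers and all of them are pairwise at distance more than $2\OPT'$. The pairwise-distance property holds at the time $C^*$ is formed: its centers are a contiguous block $c_i,c_{i+1},\dots$ of the chosen sequence, and $c_{\ell+1}$ is drawn from $U_{\ell+1}=\{p\in U_\ell : d(c_\ell,p)>2\OPT'\}\subseteq U_\ell$, so inductively $d(c_{\ell'},c_{\ell+1})>2\OPT'$ for all $i\le\ell'\le\ell$; the same argument applies to the centers re-chosen inside a partial rebuild that keeps $F$ set. While $F$ remains set, only Case~4 touches $C^*$, and it only \emph{removes} centers (Case~1 deletes a non-center, Case~2 a center outside $C^*$, and Case~3 does not apply when $F$ is set), so the property persists; moreover a partial rebuild is triggered as soon as $|C^*|$ would drop to $k$, hence $|C^*|>k$ at every query time. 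Given more than $k$ active points at pairwise distance $>2\OPT'$, any placement of $k$ centers assigns two of them, say $a$ and $b$, to a common center $c$, so $\max\{d(a,c),d(b,c)\}\ge d(a,b)/2>\OPT'$; hence no $k$-center solution has value $\le\OPT'$, i.e.\ $\OPT>\OPT'$.

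For the reported solution I would maintain the invariant that, whenever $F$ is not set, every current point $p$ lies in some cluster, the center $c_0$ that originally defined this cluster satisfies $d(c_0,p)\le 2\OPT'$, and the cluster's \emph{current} center $c$ satisfies $d(c_0,c)\le 2\OPT'$, so that $d(c,p)\le 4\OPT'$. The first fact never changes. The cluster's point set only loses points over time and is handed to the replacement center via $P(p'):=P(p)$ in Cases~2 and~3, so it stays inside the original set $P(c_0)$ and always contains the current center; since every point of $P(c_0)$ lies within $2\OPT'$ of $c_0$, every replacement center does too, which gives the third fact. When a partial rebuild \emph{unsets} $F$ it re-processes the entire leftover set $U_i$, which by construction equals the union of the point lists stored at the centers of $C^*$ together with $U^*$ — exactly the points not currently covered within $2\OPT'$ — assigning each of them to a fresh cluster, thereby re-establishing the invariant and the (henceforth permanent) state where $F$ is not set. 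Finally $|C|\le 2k\log n$, since there are at most $\log n$ buckets and the construction, like every partial rebuild, stops a bucket the moment it reaches $2k$ centers.

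The step I expect to be the real work is not any single inequality but the bookkeeping: verifying that these two invariants genuinely survive all four update cases and the partial-rebuild routine simultaneously, including the consistency of the auxiliary data — the list $U^*$, the bucket memberships and per-center point lists $P(c)$, the counter of centers in $C^*$, and the pointer to $C^*$ — across a rebuild, and the corner cases where a cluster becomes empty (so $|C|$ decreases by one) or where a partial rebuild itself immediately fills a new bucket to $2k$ centers and therefore keeps $F$ set.
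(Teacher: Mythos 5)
Your proposal is correct and follows essentially the same route as the paper: the $O(k)$ expected update time is inherited from Lemmas~\ref{lem:a_t} and~\ref{lem:bound-Q}, the big bucket $C^*$ (equivalently, the set flag $F$) supplies $\ge k+1$ current points at pairwise distance $>2\OPT'$ witnessing $\OPT>\OPT'$, and otherwise each point lies in a cluster of diameter at most $4\OPT'$ around one of the $\le 2k\log n$ maintained centers. Your extra bookkeeping (the two invariants and the replacement-center argument via the original center $c_0$) just spells out the paper's one-line diameter argument, so there is no substantive difference.
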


\subsection{General case}

We use the algorithms from Lemma~\ref{lem:mpt-rnd-kcen-linup} and 
Section~\ref{subsec:deletion-only}
and techniques from Overmars~\cite{DBLP:books/sp/Overmars83} in order to obtain a fully dynamic
data structure with $\tilde O(k)$ time per operation. We partition the points $P$ into $O(\log n)$ groups
$P_{1},...,P_{O(\log n)}$. This partition is dynamically created by the algorithm. The data structure
consists of three parts:
(1) For $P_{1}$ we use the fully dynamic data structure with linear update time from Lemma~\ref{lem:mpt-rnd-kcen-linup}, to which we refer to as \emph{$ \mathcal{D}_1 $}. 
(2) For each $P_{j}$ with $j\ge2$ we maintain the data structure from Section~\ref{subsec:deletion-only},
which we will refer to as \emph{$ \mathcal{D}_{2,j} $}. As this data structure can only handle deletions,
we describe below how we handle insertions. %
(3) For each $j\in\{1,...,\log n\}$,
let $C_{j}\subseteq P_{j}$ denote the centers selected by the data
structure for $P_{j}$. Let $C:=\bigcup_{j}C{}_{j}$. 
Note that $|C| = O(k \log^2 n)$ at all times.
We use another instance of the data structure from 
Lemma~\ref{lem:mpt-rnd-kcen-linup}, to which we refer to as \emph{$ \mathcal{D}_3 $}, with $C$ as input points. Let $C'\subseteq C$ denote
the centers selected by this data structure. If $|C'| \le k$, the algorithm outputs $C'$, otherwise $|C'| =k+1$ and
it outputs that $\OPT' < \OPT$.

\vspace{-.2cm}
\paragraph{Update operation.}
Suppose that a point $p$ is inserted. We add $p$ into the data structure $ \mathcal{D}_1 $
for $P_{1}$. If $p\in C_{1}$, then we add $p$ to $ \mathcal{D}_3 $. 
Then we iteratively do the following ``overflow operation''
for each $j=1,...,\log n$ in this order. If $|P_{j}|>2^{j}k \log n$
then
\begin{inenum}
\item we remove $C_{j}$ and $C_{j+1}$ from $C$ and $ \mathcal{D}_3 $ accordingly,
\item we add $P_{j}$ to $P_{j+1}$, set $P_{j}:=\emptyset$, and reinitialize
the data structure $\mathcal{D}_{2,j+1}$ for $P_{j+1}$, obtaining a set of $O(k\log n)$
centers $C_{j+1}$,
\item then we add $C_{j+1}$ to $C$. 
\end{inenum}
If at least one overflow operation was executed, we perform all necessary overflow operations and afterwards we reinitialize $\mathcal{D}_3$ with the new set $C$.

Suppose that a point $p$ is deleted and assume that $p\in P_{j}$
for some $j$. 
If $j=1$ then we delete $p$ from $\mathcal{D}_{1}$ and update $\mathcal{D}_3$ accordingly. Suppose now that $j>1$.
Then we delete $p$ from $\mathcal{D}_{2,j}$.
Suppose that due to this deletion operation a set $C_{j}^{+}\subseteq P_{j}$
is added to $C_{j}$ and a set $C_{j}^{-}\subseteq C_{j}$ is deleted
from $C_{j}$. Then we add $C_{j}^{+}$ to $C$, and remove $C_{j}^{-}$
from $C$. If there are at most 2 changes to $C$ (e.g., this happens when one point is deleted and possibly replaced by some other point inside one data structure $\mathcal{D}_{2,j}$), we update $ \mathcal{D}_3$ using two update operations.
If there are more changes to $C$ (this happens if one of the data structures $\mathcal{D}_{2,j}$ performs a partial rebuild), we rebuild $\mathcal{D}_3$ from scratch. %

\vspace{-.2cm}
\paragraph{Update time.}
We want to analyze the amortized update time. 
Note that (i) as $|P_1|\le k \log n$ and
$|C| = O(k\log^2n)$, the cost per update operation in $ \mathcal{D}_1 $ and $ \mathcal{D}_3 $ is $O(k \log n)$ and $O(k \log^2 n)$, respectively, and (ii)
the amortized expected cost for each update operation in each $ \mathcal{D}_{2,j} $ is $O(k)$.

Each deletion leads to one update operation in $ \mathcal{D}_1 $ or $ \mathcal{D}_{2,j} $ for one $j$.
Suppose first that we update $ \mathcal{D}_1 $. Each point can become at most once a center in $ \mathcal{D}_1 $ and then stays a center until it is deleted or moved to $\mathcal{D}_2$. Hence, we charge the update cost of $O(k \log^2 n)$  in $\mathcal{D}_3$  to the respective point that becomes a center in $\mathcal{D}_1 $ or is deleted from $\mathcal{D}_1$. 

Suppose now that we do an update operation in $ \mathcal{D}_{2,j} $ for some $j$.
If there are only 2 changes in $C$, we update $\mathcal{D}_3$ with two update operations in $O(k \log^2 n)$ time. If there are more changes in $C$, we rebuild $\mathcal{D}_3$ from scratch in time $O(k^2 \log^2 n)$.
By Lemma~\ref{lem:amortchanges} there must have been a rebuild of $\mathcal{D}_{2,j}$ at this deletion and there were at least $k$ update operations in $\mathcal{D}_{2,j}$
 since the last rebuild. Thus we can amortize the cost of $O(k^2 \log^2 n)$ for the rebuild of
$\mathcal{D}_3$ over these deletions. As each point is deleted only once and each deletion belongs only to one interval between two rebuilds (i.e. phases), each deletion is only charged once in this way. Thus, this amortized charge adds 
an amortized cost of  $O(k \log^2 n)$ to the running time of each deletion.
 
Each insertion leads to (a) one update operation in $ \mathcal{D}_1 $,
(b) potentially multiple reinitialization of $ \mathcal{D}_{2,j} $ for different sets $P_j$, and (c)  if at least one $\mathcal{D}_{2,j}$ was reinitialized, then one reinitialization of  $ \mathcal{D}_3 $.
Thus,  it remains to bound (b) and (c).

Consider a point $p$.
For each $j\in\{2,...,\log n\}$, the point $p$ can participate at
most once in an overflow operation from $j-1$ to $j$, i.e., 
$p$ is moved at most once from $P_{j-1}$ to $P_{j}$. 
In such a case, the cost
for reinitializing the data structure for the new set $P_{j}$ is $O(k|P_{j}|)$.
Additionally, reinitializing $\mathcal{D}_3$ costs
time $O(k |C|) = O(k^2 \log^2 n) = O(k |P_j| \log n)$ as $P_j \ge k \log n$.
This cost is charged to the points that were previously in $P_{j-1}$. As at that time
$|P_{j-1}| > 2^{j-1} k  \log n \ge |P_j|/2$, charging $\Theta(k \log n)$ credits to each point 
in $P_{j-1}$ provides $O(k|P_{j}| \log n)$ credits, which suffices to pay for the 
reinitialization of $P_j$ and $ \mathcal{D}_3 $.

\begin{lemma}[{restate=[name=]lemKcenUpdtTime}]
	\label{lem:mpt-kcen-update-time}
	The algorithm has $O(k n_0)$ preprocessing time, where $n_0$ is the initial number of points, and an expected amortized update time of $O(k\log^{2}n)$.
\end{lemma}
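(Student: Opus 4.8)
The plan is to analyze preprocessing and the amortized update time separately, reusing the accounting sketched in the paragraphs preceding the lemma statement. For preprocessing: initially all $n_0$ points live in $P_1$, so we build $\mathcal{D}_1$ on $n_0$ points in time $O(k n_0)$ by Lemma~\ref{lem:mpt-rnd-kcen-linup}; the sets $P_2,\ldots,P_{O(\log n)}$ are empty, so the data structures $\mathcal{D}_{2,j}$ are trivial to initialize; and $\mathcal{D}_3$ is built on $C = C_1$ which has $O(k\log n)$ points, costing $O(k\cdot k\log n) = O(k^2\log n) = O(k n_0)$ whenever $n_0 \geq k\log n$ (and if $n_0 < k\log n$ the whole instance is tiny and $O(kn_0)$ suffices trivially, or one can charge to later insertions). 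So the preprocessing bound follows directly.

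For the amortized update time I would split into the cost incurred inside the three layers, exactly along the lines of the ``Update time'' paragraph. First I would record the per-operation costs: one update in $\mathcal{D}_1$ costs $O(k\log n)$ since $|P_1| \le k\log n$; one update in $\mathcal{D}_3$ costs $O(k|C|/n) = O(k\log^2 n)$ and a full rebuild of $\mathcal{D}_3$ costs $O(k|C|) = O(k^2\log^2 n)$ (using $|C| = O(k\log^2 n)$ at all times); and each update in a $\mathcal{D}_{2,j}$ has expected amortized cost $O(k)$ by Lemma~\ref{lem:mpt-rnd-kcen-delonly}. Then I handle the three sources of cost for a deletion: (a) the direct update in $\mathcal{D}_1$ or some $\mathcal{D}_{2,j}$, which is $O(k\log n)$ expected amortized; (b) propagating at most $2$ changes to $\mathcal{D}_3$, costing $O(k\log^2 n)$; and (c) a rebuild of $\mathcal{D}_3$, which by Lemma~\ref{lem:amortchanges} only happens at a deletion that triggered a partial rebuild in some $\mathcal{D}_{2,j}$, and such rebuilds are separated by at least $k$ deletions in that $\mathcal{D}_{2,j}$, so the $O(k^2\log^2 n)$ rebuild cost amortizes to $O(k\log^2 n)$ per deletion; since each deleted point belongs to exactly one phase of exactly one $\mathcal{D}_{2,j}$, no point is charged twice. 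For an insertion, the direct update in $\mathcal{D}_1$ is $O(k\log n)$; the reinitializations in parts (b) and (c) of the insertion analysis are handled by the potential argument already sketched: a point $p$ participates in the overflow from level $j-1$ to level $j$ at most once, at which moment $|P_{j-1}| > 2^{j-1}k\log n \ge |P_j|/2$, so depositing $\Theta(k\log n)$ credits on each point of $P_{j-1}$ at the time it entered level $j-1$ accumulates $\Omega(k|P_j|\log n)$ credits, enough to pay the $O(k|P_j|)$ cost of reinitializing $\mathcal{D}_{2,j}$ on $P_j$ plus the $O(k|C|) = O(k^2\log^2 n) = O(k|P_j|\log n)$ cost of rebuilding $\mathcal{D}_3$ (using $|P_j| \ge k\log n$ whenever level $j\ge 2$ is non-empty). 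Summing over the $O(\log n)$ levels a point can traverse, each point contributes $O(k\log^2 n)$ total, which is the claimed amortized bound.

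Taking expectations over the random choices inside the $\mathcal{D}_{2,j}$'s (the only source of randomness) and combining the deterministic charges from $\mathcal{D}_1$ and $\mathcal{D}_3$ with the expected amortized $O(k)$ per $\mathcal{D}_{2,j}$-operation gives expected amortized update time $O(k\log^2 n)$.

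The main obstacle I expect is part (c) of the deletion analysis: one must be careful that the ``$\ge k$ deletions between two rebuilds'' guarantee from Lemma~\ref{lem:amortchanges} is applied \emph{per data structure} $\mathcal{D}_{2,j}$ and that the $k$ deletions charged for one $\mathcal{D}_3$-rebuild are genuinely disjoint from those charged for another — this is exactly where one invokes that each deleted point lies in a unique phase of a unique $\mathcal{D}_{2,j}$, and where one must check that the expected amortized bound from Lemma~\ref{lem:mpt-rnd-kcen-delonly} (itself obtained by charging warm-starts to adversary deletions) composes cleanly with this second layer of amortization without double-counting. A secondary subtlety is that all these bounds are stated in terms of the \emph{current} $n$, so one should confirm that the standard ``$n$ changes by at most a constant factor between consecutive rebuilds'' caveat is respected, which it is here since all relevant quantities are monotone or change slowly relative to the $k$ deletions per phase.
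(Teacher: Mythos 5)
Your proposal is correct and follows essentially the same route as the paper's own ``Update time'' analysis: the same per-layer cost bounds, the same use of \cref{lem:amortchanges} to amortize the $O(k^2\log^2 n)$ rebuilds of $\mathcal{D}_3$ over the $\ge k$ deletions of a phase (with each deletion charged once), and the same $\Theta(k\log n)$-credit potential for overflow reinitializations, plus the straightforward preprocessing bound via \cref{lem:mpt-rnd-kcen-linup}. Only cosmetic issues remain: the expression ``$O(k|C|/n)$'' for a $\mathcal{D}_3$ update should read $O(|C|+k)=O(k\log^2 n)$, and in preprocessing $C_1$ in fact has at most $\min\{n_0,k+1\}$ points, which gives the $O(kn_0)$ bound without any case distinction.
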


\vspace{-.2cm}
\paragraph{Query operation.}
Suppose that a query operation takes place and we need to output our
solution. If one of the data structures for $C,P_{1},...,P_{O(\log n)}$
asserts that $\OPT>\OPT'$, then we output that $\OPT>\OPT'$. Otherwise,
we output the centers in $C'$. In this case, the solution value of $C'$ is at most $6\OPT'$: for each $j$ and each point $p\in P_{j}$,
there is a point $c\in C_j$ with $d(p,c)\le4\OPT'$ by \cref{lem:mpt-rnd-kcen-delonly}. By definition,
$c\in C$, and hence there is a point $c'\in C'$ with $d(c,c')\le2\OPT'$ by \cref{lem:mpt-rnd-kcen-linup}.
This implies that $d(p,c')\le6\OPT'$.
\begin{theorem}
There is a randomized dynamic $(6+\epsilon)$-approximation algorithm
for the $k$-center problem with amortized update time of $O(k\log\Delta\log^{2}n)$ against an oblivious adversary.
\end{theorem}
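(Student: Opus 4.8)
The plan is to assemble the final theorem from the three components already constructed in this section: the fully dynamic linear-update data structure $\mathcal{D}_1$ and $\mathcal{D}_3$ of \cref{lem:mpt-rnd-kcen-linup}, the deletion-only $(4+\epsilon)$-approximation with $O(k\log n)$ centers of \cref{lem:mpt-rnd-kcen-delonly}, and the Overmars-style partition of $P$ into $O(\log n)$ groups whose update time is controlled by \cref{lem:mpt-kcen-update-time}. First I would fix a target precision $\epsilon>0$ and, exactly as in the preamble of \cref{sec:random-k-cluster}, run $O(\log_{1+\epsilon}\Delta)$ independent copies of the whole data structure, one for each candidate value $\OPT'$ that is a power of $1+\epsilon$ in $[1,\Delta]$; the output is the solution $C'$ coming from the copy with the smallest $\OPT'$ that does not assert $\OPT'<\OPT$.

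The correctness argument is the short three-hop chain already sketched after the query paragraph. For the copy with the smallest surviving $\OPT'$ we have $\OPT'<(1+\epsilon)\OPT$, since the copy with $\OPT'$ equal to the smallest power of $1+\epsilon$ that is at least $\OPT$ never asserts $\OPT'<\OPT$ (each subroutine only asserts this when it is genuinely the case). For that copy: any point $p\in P_j$ is within $4\OPT'$ of some $c\in C_j\subseteq C$ by \cref{lem:mpt-rnd-kcen-delonly} (and for $P_1$ within $2\OPT'$ by \cref{lem:mpt-rnd-kcen-linup}, which is even better); and $c\in C$ is within $2\OPT'$ of some $c'\in C'$ by \cref{lem:mpt-rnd-kcen-linup} applied to $\mathcal{D}_3$. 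The triangle inequality gives $d(p,c')\le 6\OPT'\le 6(1+\epsilon)\OPT$, so $C'$ has value at most $(6+O(\epsilon))\OPT$; rescaling $\epsilon$ yields the stated $(6+\epsilon)$ bound. One must also check that $|C'|\le k$ whenever we output it, which holds because $\mathcal{D}_3$ of \cref{lem:mpt-rnd-kcen-linup} returns either a set of at most $k$ centers of value $2\OPT'$ or the assertion $\OPT>\OPT'$, and in the latter case that copy is discarded.

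For the running time I would simply multiply: by \cref{lem:mpt-kcen-update-time} each copy has expected amortized update time $O(k\log^2 n)$ (and preprocessing $O(k n_0)$), and since the copies are independent and there are $O(\log_{1+\epsilon}\Delta)=O(\log\Delta/\epsilon)$ of them, the total expected amortized update time is $O(k\log^2 n\log\Delta/\epsilon)=\tilde O(k/\epsilon)$. A value-query or solution-query scans the $O(\log\Delta/\epsilon)$ copies and returns the best, which is subsumed in the same bound. I would also remark that the $n$ in the running time is the current size of $P$, consistent with the convention fixed in the introduction, and that $n$ can only differ by a constant factor between consecutive rebuild epochs, so using the current $n$ rather than a fixed bound is harmless in the amortized analysis.

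The main obstacle — and really the only nontrivial point — is the interaction between the groups $P_j$ and $\mathcal{D}_3$ in the amortized bound, i.e. that a single update to $P$ can trigger a cascade of reinitializations of the deletion-only structures $\mathcal{D}_{2,j}$ together with a rebuild of $\mathcal{D}_3$. This is precisely what \cref{lem:mpt-kcen-update-time} already handles via the charging scheme (each point participates in at most one overflow $j-1\to j$ for each $j$, and each rebuild of $\mathcal{D}_3$ triggered by a partial rebuild inside some $\mathcal{D}_{2,j}$ is amortized over the $\ge k$ deletions guaranteed by \cref{lem:amortchanges}); so for the theorem itself I would invoke that lemma as a black box and keep the proof to the assembly described above. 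A secondary subtlety worth one sentence is that the expectation bounds of \cref{lem:mpt-rnd-kcen-delonly,lem:mpt-kcen-update-time} are against an oblivious adversary, because the random choices of the centers $c_i$ must remain hidden; hence the final guarantee is stated against an oblivious adversary, matching the subroutines.
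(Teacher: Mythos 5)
Your proposal is correct and follows essentially the same route as the paper: run $O(\log_{1+\epsilon}\Delta)$ parallel copies (one per guess $\OPT'$), obtain the $6\OPT'$ cost bound from the chain $d(p,c)\le 4\OPT'$ (\cref{lem:mpt-rnd-kcen-delonly}, or $2\OPT'$ via \cref{lem:mpt-rnd-kcen-linup} for $P_1$) plus $d(c,c')\le 2\OPT'$ for $\mathcal{D}_3$, and invoke \cref{lem:mpt-kcen-update-time} for the $O(k\log^2 n)$ amortized time per copy. Nothing essential differs from the paper's own assembly of these ingredients.
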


We prove that our update time is tight up to logarithmic factors because an algorithm needs to make at least amortized $\Omega(k)$ queries in expectation to obtain any non-trivial approximation ratio. This holds for $k$-center and also for all other clustering problems studied in this paper.

\begin{theorem}
	\label{lem:mpt-lb-rand-cost}
	Let $\epsilon > 0$. Any (randomized) dynamic algorithm for finite metric spaces that can provide against an oblivious adversary a $(\Delta - \epsilon)$-approximation  to the cost of $k$-center, $k$-sum-of-radii, $k$-sum-of-diameters or a $(\Delta^p - \epsilon)$-approximation to the cost of $(1,p)$-clustering (where $p=1$ yields $k$-median and $p=2$ yields $k$-means) queries amortized $\Omega(k)$ distances in expectation after $O(k^2)$ operations, also if the algorithm needs to provide 
	only its center set and not an approximation $\apx$ to the cost.
\end{theorem}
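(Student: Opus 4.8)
The plan is to construct a hard distribution over operation sequences and apply Yao's minimax principle. I would first set up the instance as sketched in the technical overview: the adversary inserts $k$ points $P_0$ that are pairwise at distance $\Delta$ to each other (so on $P_0$ alone, $\OPT$ is $\Delta$ for $k$-center and $k$-sum-of-radii, $\Delta$ for each diameter, and for $(1,p)$-clustering with one center, $\Delta^p$; note that since $k=1$ in the $(1,p)$ case I actually need the $\Delta^p$ gap to come from a single point being far from the center). Then for $i=1,\dots,m$ with $m=\Theta(k^2)$, the adversary inserts a probe point $p_i$, issues a value-query (or solution-query), and then deletes $p_i$. The key randomization is the distance profile of $p_i$: with probability $1/2$, $p_i$ sits on top of one of the $k$ points of $P_0$ chosen uniformly at random (distance $0$, or distance $1$ if we want a strict metric and scale appropriately), making $\OPT$ small; with probability $1/2$, $p_i$ is at distance $\Delta$ from every point of $P_0$ (and from every other probe, but probes are deleted immediately so this is moot), making $\OPT = \Delta$ (resp.\ $\Delta^p$). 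All distances among $P_0$ and the "far" configuration are consistent with a single fixed metric on the ground set, so the instance is legitimate; I would spell out the metric embedding (e.g.\ place the $k$ points of $P_0$ as an equilateral simplex of side $\Delta$, and the "close to center $j$" version of $p_i$ coincident with point $j$).

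The core argument is an information-theoretic one: to distinguish the two cases for a fixed probe $p_i$ with constant probability, a deterministic algorithm's query pattern during the processing of the operations touching $p_i$ must, in the "close" case, query the distance $d(p_i, q_j)$ where $q_j \in P_0$ is the special point; otherwise the algorithm's transcript is identical in the "close to $q_j$" world and the "far" world (all other queries involving $p_i$ return $\Delta$ in both, since in the close case $p_i$ coincides with $q_j$ and is at distance $\Delta$ from all other $q_{j'}$). Because $q_j$ is uniform over the $k$ points of $P_0$ and independent of everything the algorithm has seen before it starts querying $p_i$, any fixed set of $<k/2$ queries among $\{d(p_i,q_1),\dots,d(p_i,q_k)\}$ misses $q_j$ with probability $>1/2$. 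Hence, on an $\Omega(1)$ fraction of the probes where the algorithm makes fewer than $k/2$ distance queries that are "incident to $p_i$ and to a point of $P_0$", it cannot distinguish the two worlds and so cannot have a transcript that is correct for both — but the required approximation ratio $\Delta-\epsilon$ (resp.\ $\Delta^p-\epsilon$) forces it to report an $\apx$ (or a center set) that is simultaneously valid in the "close" world (where $\OPT=1$, so $\apx < \Delta$, or the center must be essentially at $q_j$) and in the "far" world (where $\OPT=\Delta$, so $\apx \ge \Delta - \epsilon > 1\cdot(\Delta-\epsilon)$ forces $\apx$ large, or the center must cover a point at distance $\Delta$). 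These are mutually exclusive, giving a contradiction for that probe.

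Quantitatively, I would argue: if over the $m=\Theta(k^2)$ probes the algorithm makes amortized $o(k)$ queries, then the total number of queries is $o(k\cdot m) = o(k^3)$, so by averaging at most, say, a $1/4$ fraction of the probes can have $\ge k/2$ relevant queries; on the remaining $\ge 3m/4$ probes the algorithm is "blind" as above, and on each such probe it errs (gives a wrong answer) with probability $\ge 1/2 \cdot 1/2 = 1/4$ (the first $1/2$ being the probability the world is the "close" one, the second being the probability the blind query set misses $q_j$ — actually one needs to be slightly careful: blindness already implies indistinguishability, so conditioned on blindness the algorithm fails with probability $\ge 1/2$ over the choice of close/far). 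So in expectation it fails on $\ge (3m/4)(1/2) = \Omega(m) > 0$ probes, hence fails with positive probability, contradicting correctness. Then Yao's principle transfers this to randomized algorithms against an oblivious adversary. The main obstacle I anticipate is making the "blindness $\Rightarrow$ indistinguishable transcript" step fully rigorous when the algorithm is allowed to query distances to already-deleted points (earlier probes $p_{i'}$, $i'<i$): I need the hard distribution to make every such cross-probe distance deterministic (e.g.\ declare all probes mutually at distance $\Delta$, and a "close" probe coincident with its $q_j$ so its distance to a past probe is also determined by $q_j$ — this leaks one bit, so instead I should make past "close" probes, once deleted, no longer reveal which $q_j$ they used, which requires either re-randomizing or, more cleanly, only ever having the current probe be random and keeping the metric on deleted probes fixed at $\Delta$ from everything). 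Handling this bookkeeping — ensuring the only random, information-bearing distances at step $i$ are the $k$ distances $d(p_i, q_j)$ — is the delicate part; everything else is a routine averaging plus Yao argument.
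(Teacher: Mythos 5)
Your construction and argument for the \emph{value-query} version are essentially the paper's: a hard distribution in which $k$ pairwise-$\Delta$-separated points $P_0$ are inserted once, followed by $\Theta(k^2)$ rounds of (insert probe $p_i$, query, delete $p_i$), where $p_i$ is either at distance $1$ from a uniformly random point of $P_0$ or at distance $\Delta$ from all of $P_0$; then a Markov/averaging step isolates rounds with fewer than $\Theta(k)$ probe-to-$P_0$ queries, on which the transcript cannot reveal the one informative distance, and Yao's principle transfers the bound to randomized algorithms against an oblivious adversary. Your worry about queries to already-deleted probes is legitimate and the paper resolves it simply by fixing the cross-probe distances once and for all ($d(p_i,p_j)=1$ iff $p_i$ and $p_j$ are close to the same point of $P_0$, else $\Delta$), which makes the space a metric of size $O(k^3)$.

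However, there is a genuine gap with respect to the last clause of the statement, namely that the bound holds ``also if the algorithm needs to provide only its center set and not an approximation $\apx$ to the cost.'' Your construction does not yield this, and your claim that in the close world ``the center must be essentially at $q_j$'' is false: the fixed center set $P_0$ is an optimal solution in \emph{both} worlds. In the close world every point (including the probe) is within distance $1$ of some point of $P_0$, so $P_0$ has cost $1=\OPT$; in the far world all $k+1$ current points are pairwise at distance $\Delta$, so every $k$-center set, in particular $P_0$, has cost $\Delta=\OPT$. Hence a solution-only algorithm can output $P_0$ forever, make zero distance queries, and still be a $1$-approximation on your distribution. The paper needs a second, different distribution for this case: $P_0$ has only $k-1$ points, and each round inserts \emph{two} probes, one point $p_i$ at distance $\Delta$ from everything and one point $q_i$ at distance $1$ from a uniformly random $p^{*}_i\in P_0$; then the only center sets of cost $1$ are $P_0\cup\{p_i\}$ and $P_0\setminus\{p^{*}_i\}\cup\{q_i,p_i\}$, so producing a $(\Delta-\epsilon)$-approximate center set forces the algorithm to learn which inserted point is the far one (equivalently, to locate $p^{*}_i$), and the same counting/indistinguishability argument then applies. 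Without such a modified construction (or an equivalent idea), your proof establishes the theorem only for algorithms that must report the cost estimate $\apx$, not the solution-only variant claimed in the statement.
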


\bibliographystyle{plainurl}%
\bibliography{literature}

\appendix

\section{Related work}
We summarize prior work on offline and dynamic results for the clustering problems we consider in this paper. Most prior work on fully dynamic clustering algorithms is for special metric spaces that are a generalization of the Euclidean metric space, namely in metric spaces \emph{with bounded doubling dimension}.

\paragraph{$k$-center clustering. }
There is a simple greedy 2-approximation algorithm for $k$-center~\cite{hsu1979easy}
which is best possible~\cite{DBLP:journals/tcs/Gonzalez85}. This
is the first clustering problem for which dynamic algorithms were
found. Indeed, already in 2004 Charikar et al.~\cite{CharikarCFM04}
gave two insertions-only algorithms: a deterministic 8-approximation
and a randomized 5.43-approximation, both with $O(k\log k)$ update
time per operation, which was improved to an (almost optimal) insertion-only
$(2+\epsilon)$-approximation algorithm with update time $O((k\log k))/\epsilon^{4})$~\cite{McCutchenK08}
for any small $\epsilon>0$. The best known fully dynamic algorithm
was given by Chan et al.~\cite{ChanGS18}. It is randomized and achieves
an approximation ratio of $2+\epsilon$ with an expected update time
of $O((k^{2}\log\Delta)/\epsilon)$ against an oblivious adversary.
In  metric spaces with bounded doubling dimension Goranci et al.~recently gave a $(2 + \epsilon)$-approximation algorithm with update time independent of $k$, namely $O((\log \Delta \log \log \Delta)/\epsilon)$ time per operation~\cite{Goranci20}. More generally, if the doubling dimension is
$\kappa$, the time per operation is $O((2/\eps)^{O(\kappa)} \log \Delta \log \log \Delta \log \eps^{-1})$.
Note that $\kappa$ can be $\Theta( \log n)$, in which case the above algorithm takes time polynomial in $n$.
Independently, a 16-approximation algorithm with $O(D^2  (\log \Delta)^2 \log n)$ time per operation was presented by Schmidt and Sohler in~\cite{abs-1908-02645} if the metric space is a $D$-dimensional Euclidean space. This algorithm also works for \emph{hierarchical} $k$-center clustering.

\paragraph{$k$-sum-of-radii and $k$-sum-of-diameters. }
The best known polynomial-time result for $k$-sum-of-radii is a $(3.504+\epsilon)$-approximation
algorithm~\cite{CharikarP04} and there is a QPTAS~\cite{gibson2010metric}.
Any solution for the $k$-sum-of-radii problem can be turned into
a solution for the $k$-sum-of-diameters problem with an increase in
the cost of at most a factor of 2 (and vice versa without any increase).
The $k$-sum-of-diameters problem is NP-hard to approximate with a
factor better than 2~\cite{DoddiMRTW00}.
Henzinger et al~\cite{DBLP:journals/algorithmica/HenzingerLM20} recently developed a fully dynamic algorithm for
a variant of the sum-of-radii problem which does not limit the number of centers. Instead there is a set
$\cal F$ of facilities and a set $\cal C$ of clients and the algorithm must assign a radius $r_i$ to each facility  $i$ such that every client is within distance at most $r_i$ for some facility $i$. The cost of the solution is the
sum of the radii $\sum_i r_i$. Additionally each facility can be assigned an opening cost $f_i$ and the algorithm then
tries to minimize $\sum_{i, r_i > 0} r_i + f_i.$
 In metric spaces with bounded doubling dimension the algorithm achieves  a constant approximation in $O(\log \Delta)$ time per operation. More generally, if the doubling dimension is $\kappa$, the algorithm maintains
a $O(2^{2 \kappa})$-approximation in time $O(2^{6\kappa}\log \Delta )$ per operation.
Note that $\kappa$ can be $\Theta( \log n)$, in which case the above algorithm takes time polynomial in $n$.
 
Note that the $k$-sum-of-radii problem is related to the \aw{\textsc{Set
		Cover}} problem where there is one set for each combination of a
center $c$ and a radius $r$ that contains all points $p$ with $d(p,c)\le r$.
However, the number of sets in a solution of \textsc{Set Cover}
is not limited and, thus, it cannot be used to solve the $k$-sum-of-radii
problem.

\paragraph{$k$-means and $k$-median.}
In practice, the $k$-median and $k$-means problems are very popular.
The best static polynomial-time approximation algorithm for $k$-median
achieves an approximation ratio of $2.675+\epsilon$~\cite{byrkaprst_17}
and the best such algorithm for $k$-means achieves a ratio of $9+\epsilon$~\cite{ahmadiannsw_17}.
Furthermore, there is a lower bound of $\Omega(nk)$ on the running
time of any (static) constant-factor approximation algorithm~\cite{MettuP04},
implying that there is no insertions-only algorithm with $o(k)$ time
per operation. %
For both problems Henzinger and Kale~\cite{DBLP:conf/esa/HenzingerK20}
gave randomized $O(1)$-approximation algorithms with $\tilde{O}(k^{2})$
worst-case update time against an adaptive adversary.
Cohen-Addad et al.~\cite{Cohen-AddadHPSS19} gave fully dynamic $O(1)$-approximation algorithms for 
$k$-means and $k$-median
with expected amortized update time of $\tilde{O}(n+k^{2})$. Note, however, that their algorithm is \emph{consistent}, i.e., it additionally tries to minimize the number of center changes.

\SetKwProg{Fn}{Function}{}{}
\SetKwRepeat{Do}{do}{while}
\DontPrintSemicolon

\def\N{\mathbb{N}}
\def\defeq{:=}
\NewDocumentCommand{\card}{m}{%
	\lvert #1 \rvert%
}
\NewDocumentCommand{\dmax}{}{%
	\Delta%
}

\newcommand{\costbound}{\ensuremath{\OPT'}}
\NewDocumentCommand{\cost}{o m O{k}}{%
	\mathrm{cost}_{#3}(#2\IfNoValueF{#1}{, #1})%
}
\NewDocumentCommand{\opt}{m O{k}}{%
	\mathrm{opt}_{#2}(#1)%
}

\def\adv/{\textsc{Adv}}
\def\alg/{\textsc{Alg}}
\NewDocumentCommand{\auxgraph}{m o o}{%
	#1%
	\IfNoValueF{#2}{_{#2%
		\IfNoValueF{#3}{,#3}%
	}}%
}
\NewDocumentCommand{\auxnum}{m o o}{%
	\lvert {#1}%
	\IfNoValueF{#2}{_{#2%
		\IfNoValueF{#3}{,#3}%
	}} \rvert%
}
\NewDocumentCommand{\kg}{o o}{%
	\auxgraph{G}[#1][#2]%
}
\NewDocumentCommand{\actno}{o o}{%
	\auxgraph{A}[#1][#2]%
}
\NewDocumentCommand{\actnum}{o o}{%
	\auxnum{A}[#1][#2]%
}
\NewDocumentCommand{\pasno}{o o}{%
	\auxgraph{P}[#1][#2]%
}
\NewDocumentCommand{\pasnum}{o o}{%
	\auxnum{P}[#1][#2]%
}
\NewDocumentCommand{\disno}{o o}{%
	\auxgraph{D}[#1][#2]%
}
\NewDocumentCommand{\disnum}{o o}{%
	\auxnum{D}[#1][#2]%
}
\NewDocumentCommand{\akg}{o o}{%
	\auxgraph{H}[#1][#2]%
}
\NewDocumentCommand{\s}{o}{%
	\sigma%
	\IfNoValueF{#1}{_{#1}}%
}
\NewDocumentCommand{\allpoints}{m}{%
	Q%
	\IfNoValueF{#1}{_{#1}}%
}
\NewDocumentCommand{\points}{m}{%
	P%
	\IfNoValueF{#1}{_{#1}}%
}
\NewDocumentCommand{\spr}{m}{%
	\s[\leq #1]%
}
\NewDocumentCommand{\sprs}{m}{%
	\s[< #1]%
}
\NewDocumentCommand{\q}{o o}{%
	q%
	\IfNoValueF{#1}{_{#1%
		\IfNoValueF{#2}{,#2}%
	}}
}
\NewDocumentCommand{\V}{m}{%
	V(#1)%
}
\NewDocumentCommand{\E}{m}{%
	E(#1)%
}
\NewDocumentCommand{\ngh}{o m}{%
	\Gamma%
	\IfNoValueF{#1}{_{#1}}%
	(#2)%
}
\NewDocumentCommand{\dg}{o m}{%
	\operatorname{deg}%
	\IfNoValueF{#1}{_{#1}}%
	(#2)%
}
\NewDocumentCommand{\dsp}{o m m}{%
	d%
	\IfNoValueF{#1}{_{#1}}%
	(#2, #3)%
}
\NewDocumentCommand{\ans}{m}{%
	\textrm{ans}(#1)%
}

\section{Lower bounds for algorithms against an adaptive adversary}
\label{sec:appendix-lower-adaptive}

We introduce some formal notation and definitions we use to revisit the adversarial strategy that generates an input stream and answers distance queries on the set of currently known points. Then, we derive lower bound constructions for the aforementioned problems that are based on the metric space defined by the stream and the answers to the algorithm.

\SetKwFunction{FnGenerate}{GenerateStream}
\SetKwFunction{FnAnswer}{AnswerQuery}

We describe a strategy for an adversary $\mathcal{A}$ that generates a stream of update operations~$\s$ and answers distance queries~$\q$ on pairs of points by any dynamic algorithm with a guarantee on its amortized  complexity. In the following presentation, the adversary constructs the underlying metric space ad hoc. More precisely, the adversary constructs two metric spaces simultaneously that cannot be distinguished by the algorithm and its queries. All subsequent lower bounds stem from the fact that the problem at hand has different optimal costs on the input for the two metric spaces. When the algorithm outputs a solution, the adversary can fix a metric space that induces high cost for the centers chosen by the algorithm.

During the execution of the algorithm, the adversary maintains a graph $\kg$.
Each point that was inserted by the adversary is represented by a node in $\kg$. All query answers given to the algorithm by the adversary can be derived from 
$G$
using the shortest path metric $\dsp[\kg]{\cdot}{\cdot}$ on $\kg$. We denote the algorithm's $i$\xth/ query after update operation $t$ by $\q[t][i]$, and the adversary's answer by $\ans{\q[t][i]}$. For $t > 0$, we denote the number of queries asked by the algorithm between the $t$\xth/ and the $(t+1)$\xth/ update operation by $c(t)$. If $t$ is clear from context, we simplify notation and write $c := c(t)$. Note that, in this section, we use a slightly extended notation when indexing graphs when compared to other sections. Details follow.

We number the update operations consecutively starting with 1 using index $t$ and after each update operation, we index the distance queries that the algorithm issues while processing the update operation and the immediately following value- or solution-queries using index $i$. Let $c > 0$ and let $\kg[0][c(0)]$ be the empty graph.  For every $t > 0, i \in [c(t)]$, consider $i$-th query issued by the algorithm processing the $t$-th update operation.  The graph $\kg[t][i]$ has the following structure. For every point $x$ that is inserted in the first $t$ operations, $\V{\kg[t][0]}$ contains a node $x$. All edges have length $1$, and it holds that $\V{\kg[t][i]} \supseteq \V{\kg[t][i-1]}$ and $\E{\kg[t][i]} \supseteq \E{\kg[t][i-1]}$.

Edges are inserted by the adversary as detailed below. Let $\preceq$ denote the predicate that corresponds to the lexicographic order. In particular, the adversary maintains the following invariant, which is parameterized by the update operation $t$ and the corresponding query $i$: for all $(t',j) \preceq (t,i)$ and $(u,v) \defeq \q[t'][j]$, $\ans{\q[t'][j]} = \dsp[\kg[t][i]]{u}{v}$. In other words, any query given by the adversary remains consistent with the shortest path metric on all versions of 
$G$
after the query was answered. The adversary distinguishes the following types of nodes in $\kg[t][i]$ to answer a query. Recall that $f \defeq f(k,n)$ is an upper bound on the amortized complexity per update operation of the algorithm, which is non-decreasing in $n$ for fixed $k$.

\begin{definition}[type of nodes]
	\label{def:node_types}
	Let $t \geq 0, i \in [c]$ and let $u \in \V{\kg[t][i]}$. If $u$ has degree less than $100f(k,i)$ for all $i \in [t]$, it is \emph{active} after update $t$, otherwise it is \emph{passive}. In addition, the adversary can mark passive nodes as \emph{off}. We denote the set of active, passive and off nodes in $\kg[t][i]$ by $\actno[t][i]$, $\pasno[t][i]$ and $\disno[t][i]$, respectively.%
\end{definition}

For operation $t$, the adversary answers the $i$\xth/ query $\q[t][i]$ according to the shortest path metric on $\kg[t][i-1]$ with the additional edge set $\actno[t][i-1] \times \actno[t][i-1]$. In other words, the adversary (virtually) adds edges between all active nodes in $\kg[t][i-1]$ and reports the length of a shortest path between the query points in the resulting graph. After the adversary answered query $\q[t][i]$, the resulting %
graph $\kg[t][i]$ is $\kg[t][i-1]$ plus all edges of the shortest path that was used to answer the query. A key element of our analysis is that \empty{all} answers up to operation $t$ and query $i$ are equal to the length of the shortest paths between the corresponding query points in $\kg[t][i]$. The generation of the input stream and the answers to all queries are formally given by \cref{alg:stream} and~\ref{alg:answer}, respectively.

\begin{algorithm}
	\Fn{\FnGenerate{$t$}}{
		\If{there exists a passive node $x \in \V{\kg[t-1][c]}$}{
			mark $x$ as off in $\kg[t][0]$\;
			\Return{$\langle$ delete $x$ $\rangle$}
		}
		\Else{
			let $x$ be a new point, i.e., that was not returned by the adversary before\;
			\Return{$\langle$ insert $x$ $\rangle$}
		}
	}
	\caption{\label{alg:stream} Construction of element $\s[t]$ of $\s$}
\end{algorithm}
\begin{algorithm}
	\Fn{\FnAnswer{$\q[t][i] = (x,y)$}}{
		let $\akg[t][i] = (\V{\kg[t][i-1]}, \E{\kg[t][i-1]} \cup (\actno[t][i-1] \times \actno[t][i-1]))$\;
		let $p = (e_1, \ldots, e_k)$ be a shortest path between $x$ and $y$ in $\akg[t][i]$\;
		set $\kg[t][i] \defeq \kg[t][i-1]$\;
		\ForEach{$e_i$}{
			insert $e_i$ into $\kg[t][i]$\;
		}
		\Return{length of $p$}
	}
	\caption{\label{alg:answer} Answer of the adversary to query $\q[t][i]$}
\end{algorithm}

\subsection{Adversarial strategy}

Let $n_t$ be the number of active and passive nodes, i.e., the number of current points for the algorithm, after operation $t$. In the whole section we use the notations  of $\kg[t'][i], \actno[t][i]$ etc.  from Definition~\ref{def:node_types}.

\subsubsection{Proof of \cref{lem:mpt-det-low-adv-prop}}

The next three lemmas prove the three claims in \cref{lem:mpt-det-low-adv-prop}.

\begin{lemma}[\cref{lem:mpt-det-low-adv-prop} (\ref{it::mpt-det-low-adv-prop-numac})]
	\label{lem:kg-actnum}
	For every $t > 0$, the number of active nodes in $\kg[t][0]$ is at least $96t / 100$.
\end{lemma}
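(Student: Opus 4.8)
The plan is to reduce the statement to a single bound on the number of vertices that ever become passive, and then establish that bound by induction on $t$.

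First I would fix the bookkeeping. Among the first $t$ operations let $I_t$ be the number of insertions and $D_t$ the number of deletions, so $I_t+D_t=t$ (and operation $1$ is an insertion, since $\kg[0][c(0)]$ has no passive node). Only insertions create vertices, so $\kg[t][0]$ has exactly $I_t$ vertices; every vertex is active, passive, or off, the off vertices are exactly the $D_t$ deleted points, and by the rule in \cref{alg:stream} a point is deleted only while it is passive, so the $D_t$ off vertices lie among the vertices that have been passive at some time. Writing $m_t$ for the number of vertices of $\kg[t][0]$ that have ever been passive, the number of active vertices equals $I_t-m_t=(t-D_t)-m_t\ge t-2m_t$ by $D_t\le m_t$. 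So it suffices to prove $m_t\le t/50$.

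The two quantitative facts I would use are these. By inspecting \cref{alg:answer}, each distance query inserts at most one new edge into $G$, and that edge joins two vertices that are active at the time of insertion; hence no edge is ever incident to a passive or off vertex, and the degree of a vertex is frozen once it becomes passive. A vertex $v$ becomes passive only when its degree first reaches the threshold $100\,f(k,s_v)$ at some operation $s_v$, so $\deg_{\kg[t][0]}(v)\ge 100\,f(k,s_v)$ for every ever-passive $v$. Summing over these vertices, using $\sum_{v\in S}\deg_{G}(v)\le 2\,\card{\E{G}}$ for any vertex set $S$, and bounding $\card{\E{\kg[t][0]}}$ by the total number $Q_t$ of distance queries so far, I get $100\sum_{v\text{ passive}}f(k,s_v)\le 2\,Q_t$. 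The amortized query bound gives $Q_t\le\sum_{s=1}^{t}f(k,n_s)$, and since $n_s\le s$ and $f$ is non-decreasing in its second argument, $Q_t\le\sum_{s=1}^{t}f(k,s)$. The same argument applied at an arbitrary prefix gives, for all $s\le t$, $100\sum_{v:\,s_v\le s}f(k,s_v)\le 2\sum_{s'=1}^{s}f(k,s')$; in particular, since the retirements occurring in operation $s$ each require degree at least $100f(k,s)$ at operation $s$, at most $s/50$ retirements happen in any single operation.

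Now I would run the induction on $t$ with hypothesis $m_{s}\le s/50$ for all $s<t$. This gives $n_s=s-2D_s\ge\tfrac{24}{25}s$ for $s<t$ (and $n_t\ge\tfrac{24}{25}t-O(1)$, since $D_t\le m_{t-1}+1$), i.e.\ the point count — and hence the per-operation query budget $f(k,n_s)$ — has stayed within a constant factor of $s$. The threshold to retire at operation $s$ is $100f(k,s)$, a factor $\ge 100$ larger than one operation's query budget $f(k,n_s)$ once $n_s=\Theta(s)$; feeding this lower bound on $n_s$ and the monotonicity of $f$ into the prefix inequality above, together with the per-operation cap $\mu_s\le s/50$, I would force $m_t\le t/50$ and close the induction, which by the first paragraph yields the lemma.

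The main obstacle is exactly this last step, and the reason is that the query bound is only \emph{amortized}: a single operation may contain a burst of many queries and so cause a burst of many retirements at once, which a naive per-operation count cannot absorb. I would resolve this by amortizing against the deletions the adversary is forced to perform: a burst of $\mu$ retirements at operation $s$ forces the next $\approx\mu$ operations to be deletions, during which $m$ is unchanged while $t$ increases, and this is exactly where the $\tfrac{4}{100}$ slack per operation comes from — the factor $100$ in the retirement threshold (versus a single operation contributing only ``$f$'' to the query count) is what makes the accounting close with the constant $96/100$. Concretely I would phrase this via the potential $\Phi_t=\tfrac{96}{100}t-(\text{number of active vertices of }\kg[t][0])$ and show that $\Phi_t$ does not increase over a maximal block of operations consisting of a burst of retirements followed by the adversary's clean-up deletions, which gives $\Phi_t\le 0$ and hence the claim.
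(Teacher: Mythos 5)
Your setup matches the paper's: the reduction to counting ever-passive vertices (active $\ge t-2m_t$, so it suffices that $m_t\le t/50$), the observation that each query adds at most one new edge and only between active vertices, the frozen-degree property, and the inequality $100\sum_v f(k,s_v)\le 2Q_t\le 2\sum_{i\le t}f(k,n_i)\le 2\sum_{i\le t}f(k,i)$ are exactly the ingredients the paper uses. The gap is that you never actually derive $m_t\le t/50$ from these facts, and both routes you sketch for the final step are problematic. First, the induction giving a \emph{lower} bound $n_s\ge\tfrac{24}{25}s$ is beside the point: the comparison you need is $f(k,n_s)\le f(k,s)$, which follows from the trivial upper bound $n_s\le s$ and monotonicity of $f$ (the retirement threshold in the formal definition is $100f(k,s_v)$ with $s_v$ an \emph{operation index}, so no lower bound on the point count is needed anywhere). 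Second, the block-potential argument is false as stated: for a block consisting of one operation with a burst of $\mu\ge1$ retirements followed by the $\mu$ forced clean-up deletions, $\Phi_t=\tfrac{96}{100}t-(\#\text{active})$ changes by $\tfrac{96}{100}(1+\mu)-(1-\mu)\approx 2\mu$, which is positive; the compensating slack lives in the retirement-free operations \emph{outside} your block, and a burst can be financed by edges created arbitrarily far in the past (e.g.\ with constant $f$ the algorithm can push $\approx t/50$ vertices to the threshold simultaneously using budget saved over the whole history), so no block-local monotonicity of $\Phi$ holds.

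The correct way to close the argument is a global charging, which is what the paper's one-line computation encodes: assign each query to the budget of some operation $i$ no later than the operation $j\ge i$ at which it is issued; the (at most one) new edge it creates is incident only to vertices active at time $j$, so any endpoint it helps retire has retirement time $s_v\ge j\ge i$ and hence threshold $100f(k,s_v)\ge 100f(k,i)$. Each such query therefore contributes at most $2/(100f(k,i))$ toward completed retirements, giving
\begin{equation*}
m_t\;\le\;\sum_{i\le t}\frac{2f(k,n_i)}{100f(k,i)}\;\le\;\sum_{i\le t}\frac{2f(k,i)}{100f(k,i)}\;=\;\frac{2t}{100},
\end{equation*}
which is the paper's bound (equivalently, one can apply summation by parts to your prefix inequalities $\sum_{s'\le s}m_{s'}f(k,s')\le\tfrac{1}{50}\sum_{s'\le s}f(k,s')$ using that $f$ is non-decreasing). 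With that step in place, your first paragraph finishes the proof exactly as the paper does.
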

\begin{proof}
	Recall that $f(k,n)$ is a positive function that is non-decreasing in $n$. We prove the claim by induction. By the properties of $f$, the case $t = 1$ follows trivially. Let $t \geq 2$. For any $i \in [t]$, the algorithm's query budget increases by $f(k, n_i)$ queries after the $i$\xth/ update. Since $f(k,i)$ is non-decreasing, nodes inserted after update operation $i-1$ can only become passive if their degrees increase to at least $100f(k,i)$. Therefore, it holds that $\pasnum[t][0] \leq \sum_{i \in [t]} 2f(k, n_i) / (100f(k, i)) \leq \sum_{i \in [t]} 2f(k, i) / (100f(k, i)) \leq 2t / 100$. It follows that the adversary will delete at most $2t / 100$ points in the first $t$ operations and insert points in the other at least $(1-2/100)t$ operations. The number of active nodes after update $t$ is $\actnum[t][0] \geq t - \pasnum[t][0] - 2t / 100 \geq 96t / 100$.

\end{proof}

\begin{lemma}[\cref{lem:mpt-det-low-adv-prop} (\ref{it::mpt-det-low-adv-prop-deg})]
	\label{lem:max-degree}
	For every $t > 0, i \in [c]$, all nodes have degree at most $100f(k,t)$ in $\kg[t][i]$.
\end{lemma}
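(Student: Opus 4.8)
The plan is to track directly how the degree of a node of $G$ can grow under the adversary's answering rule; essentially no induction is needed, only two structural observations. The first and main one is that \emph{answering a single query adds at most one edge to $G$}. Indeed, when the adversary answers $\q[t][i]=(x,y)$ it inserts into $\kg[t][i]$ the edges of a shortest $x$–$y$ path $p$ in $\akg[t][i]$, and the only edges of $\akg[t][i]$ that are not already present in $\kg[t][i-1]$ are those of $\actno[t][i-1]\times\actno[t][i-1]$; hence the newly inserted edges are exactly the ``active–active'' edges used by $p$. But $p$ uses at most one such edge: if it used two, then, letting $a$ and $b$ be the first and last nodes on $p$ that are endpoints of active–active edges (they are distinct, since each active–active edge has two active endpoints and $p$ is simple), the portion of $p$ between $a$ and $b$ contains at least two edges and thus has length at least $2$, whereas $\{a,b\}$ is an edge of $\akg[t][i]$ of length $1$; replacing that portion by $\{a,b\}$ yields a strictly shorter $x$–$y$ path, contradicting the choice of $p$. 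This is precisely the fact noted informally in the main text. The second, routine observation is that, by the definition of $\akg[t][i]$, every newly inserted edge joins two nodes that are active at the moment of insertion; since a passive node never becomes active again, and the adversary only ever deletes passive nodes, once a node is passive or off its degree never changes again.

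Granting these, fix $t>0$, $i\in[c]$, and $u\in\V{\kg[t][i]}$, reading the active/passive labelling as refreshed after each query, as in the adversary's description. If $u$ is active in $\kg[t][i]$, then by \cref{def:node_types} its degree has stayed below the threshold of every operation up through $t$, so in particular its degree in $\kg[t][i]$ is below $100f(k,t)$. Otherwise $u$ is passive or off in $\kg[t][i]$; let $(t^\ast,i^\ast)\preceq(t,i)$ be the first query after which $u$ is labelled passive. Then $t^\ast\le t$, and $i^\ast\ge 1$ because $\kg[t^\ast][0]$ differs from $\kg[t^\ast-1][c(t^\ast-1)]$ only by inserting or deleting a node and changes no edge. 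Just before query $(t^\ast,i^\ast)$ the node $u$ is still active, so its degree in $\kg[t^\ast][i^\ast-1]$ is below $100f(k,t^\ast)$; by the first observation its degree in $\kg[t^\ast][i^\ast]$ is then at most $100f(k,t^\ast)$; and by the second observation its degree has not changed since. Hence the degree of $u$ in $\kg[t][i]$ is at most $100f(k,t^\ast)\le 100f(k,t)$, using that $f(k,\cdot)$ is non-decreasing and $t^\ast\le t$. Either way the degree of $u$ in $\kg[t][i]$ is at most $100f(k,t)$, which is the claim.

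The only step carrying any subtlety is the first observation — that a shortest path in $\akg[t][i]$ uses at most one active–active edge, and hence each query enlarges $G$ by at most one edge; the rest is bookkeeping (a passive node never re-activates, the thresholds $100f(k,t)$ are non-decreasing in $t$, and the labels are refreshed per query so that ``active in $\kg[t][i]$'' really means ``below the operation-$t$ threshold at $(t,i)$''). The ``$+1$'' implicit in ``degree below $100f(k,t^\ast)$, plus one edge, at most $100f(k,t^\ast)$'' is harmless and can be absorbed by taking $f$ to be integer-valued, which does not affect any downstream bound.
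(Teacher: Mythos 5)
Your proof is correct and follows essentially the same route as the paper's: edges are only ever inserted between currently active nodes, a shortest path in the augmented graph uses at most one active--active edge so each query raises an active node's degree by at most one (possibly making it passive at the threshold), and passive nodes never receive further incident edges, with monotonicity of $f$ converting the threshold at the time of becoming passive into $100f(k,t)$. Your write-up is just a more explicit version of the paper's argument (including the replacement argument for the single active--active edge and the harmless ``$+1$'' at the threshold, which the paper also glosses over), so no substantive difference.
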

\begin{proof}
	By definition, the claim is true for active nodes. Edges are only inserted into $G$
	if the algorithm queries for the distance between two nodes $x,y$ and the adversary determines a shortest path between $x$ and $y$ that contains edges that are not present in $\kg[t][i-1]$ (see \cref{alg:answer}). Since all such edges are edges between active nodes, only degrees of active nodes in $\kg[t][i]$ increase. The adversary finds a shortest path on a supergraph of $\actno[t][i] \times \actno[t][i]$. Therefore, any shortest path it finds contains at most one edge with two active endpoints. It follows that a query increases the degree of any active node in $\kg[t][i]$ by at most one, which may turn it into a passive node with degree $100f(k,t)$; however, \cref{alg:answer} never inserts edges that are incident to a passive node and passive nodes never become active again by \cref{def:node_types}.
\end{proof}

\begin{lemma}[\cref{lem:mpt-det-low-adv-prop} (\ref{it::mpt-det-low-adv-prop-clean})]
	\label{lem:no-passive-nodes}
	For every $t > 0$, there exists a clean update operation $t'$, $t < t' \le 2t$, i.e., $\kg[t'][0]$ only contains active and off nodes, but no passive nodes.
\end{lemma}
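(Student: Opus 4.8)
The plan is to reduce the existence of a clean operation in $(t,2t]$ to the existence of an \emph{insertion} operation in that window, using the elementary fact that an insertion step cannot create a passive node. So the proof has two ingredients: a counting bound that forces an insertion to occur among operations $t+1,\dots,2t$, and a short structural observation that any insertion operation is automatically clean.

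For the first ingredient, I would invoke (the proof of) \cref{lem:kg-actnum}, which shows that among the first $m$ operations the adversary performs at most $2m/100$ deletions. Applied with $m = 2t$, this gives at most $4t/100 < t$ deletions among the first $2t$ operations, so the $t$ operations $t+1,\dots,2t$ cannot all be deletions; pick one of them, say $t'$ with $t < t' \le 2t$, that is an insertion. (That every operation is either an insertion or the deletion of a passive node is exactly the definition of the adversary in \cref{alg:stream}.)

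For the second ingredient, I would check that $t'$ is clean, i.e.\ that $\kg[t'][0]$ contains no passive node. By \cref{alg:stream}, operation $t'$ is an insertion precisely because the graph $\kg[t'-1][c]$ at the end of operation $t'-1$ has no passive node. The transition from $\kg[t'-1][c]$ to $\kg[t'][0]$ only adds one fresh node $x$ and no edges: $x$ has degree $0 < 100 f(k,t')$ and is hence active, while every other node retains its degree and therefore its active/passive/off label by \cref{def:node_types}. Hence $\kg[t'][0]$ has only active and off nodes, so $t'$ is clean, as required.

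I do not expect a genuine obstacle: the argument is a counting bound plus bookkeeping. The only care needed is with the indexing conventions --- distinguishing a node's status in $\kg[t'][0]$ (before any distance query of operation $t'$) from its status in $\kg[t'-1][c]$ (after all queries of operation $t'-1$), and confirming that an insertion adds an isolated (hence active) vertex and no edges, so no labels change between these two graphs. I would also double-check that the deletion bound of \cref{lem:kg-actnum} holds for an arbitrary prefix length (it does, its proof being parametric in the number of operations), so that substituting $2t$ is legitimate.
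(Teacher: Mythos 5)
Your proposal is correct, and it reaches the conclusion by a genuinely different route than the paper. The paper proves part~(\ref{it::mpt-det-low-adv-prop-clean}) by contradiction: assuming every operation in $(t,2t]$ sees a passive node, it introduces the auxiliary notion of \emph{semi-active} nodes to bound the number of vertices that can ever become passive by operation $2t$ (handling the possibility that the algorithm ``pre-loads'' degrees before the window and finishes them off inside it), and then derives $\pasnum[2t][0]<0$ from the $t$ forced deletions. You instead argue globally: the deletion count over the whole prefix of length $2t$ is at most $4t/100<t$ (you cite the sentence inside the proof of \cref{lem:kg-actnum}; note you could even avoid relying on the interior of that proof, since the \emph{statement} of part~(\ref{it::mpt-det-low-adv-prop-numac}) already gives at least $96\cdot 2t/100$ active nodes at operation $2t$, hence at least that many insertions and at most $8t/100$ deletions among the first $2t$ operations), so some operation $t'\in(t,2t]$ is an insertion; and by \cref{alg:stream} an insertion occurs exactly when no passive node exists at the end of the previous operation, while the update itself adds only an isolated (hence active) vertex and no edges, and since the passivity threshold is non-decreasing over time no label can flip without a degree increase, so $t'$ is clean. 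This ``insertion $\Rightarrow$ clean'' observation is not used in the paper and lets you dispense with the semi-active bookkeeping entirely; what the paper's argument buys in exchange is that it is self-contained within the window and spells out the amortization subtlety (saved-up query budget and pre-loaded degrees) explicitly, whereas your argument delegates exactly that subtlety to the deletion bound established for part~(\ref{it::mpt-det-low-adv-prop-numac}). You were right to flag the two delicate points (the indexing of $\kg[t'][0]$ versus the end of operation $t'-1$, and the parametric use of the prefix bound), and both check out.
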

\begin{proof}
	We prove the claim by induction over the operations $t$ with the properties that $\kg[t][0]$ contains no passive node, but $\kg[t+1][0]$ contains at least one passive node. The claim is true for the initial (empty) graph $\kg[0][0]$. Let $t > 0$. We prove that in at least one operation $t' \in \{ t+1, \ldots, 2t \}$, the number of passive nodes is $0$. For the sake of contradiction, assume that for all $t'$, $t < t' \leq 2t$, the number of passive nodes is non-zero, i.e., $\pasnum[t'][0] > 0$. We call an active node \emph{semi-active} if it has degree greater than $50f(k,i)$ after some update $i \in [2t]$. Otherwise, we call it \emph{fully-active}. Recall that, similarly, a vertex becomes passive if it has degree at least $100f(k,i)$ after some update $i \in [2t]$ (and never becomes active again).
	
	For any $i \in [2t]$, the algorithm's query budget increases by $f(k, n_i)$ queries after the $i$\xth/ update. Since $f(k,i)$ is non-decreasing, nodes inserted after update operation $i-1$ can only become semi-active if their degrees increase to at least $50f(k,i)$ (resp. $100f(k,i)$) by the definition of semi-active (resp. passive). Also due to the monotonicity of $f$, the number of semi-active or passive nodes is maximized if the algorithm invests its query budget as soon as possible. It follows that the number of semi-active or passive nodes up to operation $2t$ is at most $\sum_{j \in [2t]} 2f(k,j) / (50f(k,j)) \leq 2t/50$. Without loss of generality, we may assume that all semi-active nodes are passive (so the algorithm does not need to invest budget to make them passive).
	
	After update operation $2t$, the algorithm's total query budget from all update operations is at most $\sum_{i \in [2t]} f(k,n_i) \leq \sum_{i \in [2t]} f(k,i) \leq 2t f(k,2t)$. The algorithm may use its budget to increase the degree of at most $2t f(k,2t) / (50f(k,2t)) \leq 2t/50$ fully-active nodes to at least $100f(k,2t)$, i.e., to make them passive nodes. 
	Recall our assumption that $\pasnum[i][0] > 0$ for all $i \in \{t+1, \ldots, 2t\}$. The adversary deletes one point corresponding to a passive node in each update operation from $\{t+1, \ldots, 2t \}$. Therefore, the number of passive nodes after update operation $2t$ is
	\begin{equation*}
		\pasnum[2t][0] \leq \frac{2t}{50} + \frac{2t}{50} - t \leq \frac{4t}{50} - t < 0.
	\end{equation*}
	This is a contradiction to the assumption.
\end{proof}

\subsubsection{Proofs of \cref{lem:metric-consistent,lem:muni-consistent,lem:muni-consistent,lem:mstar-consistent,lem:mstarrange-consistent}}

The following observation follows immediately from the properties of shortest path metrics.

\begin{observation}
	\label{lem:convex-compliance}
	Let $G=(V,E)$ and $G'=(V,E')$ be two graphs so that $E \subseteq E'$. If a sequence of queries is consistent with the shortest path metric on $G$ as well as on $G'$, then, for any $E''$, $E \subseteq E'' \subseteq E'$, it is also consistent with the shortest path metric on $(V, E'')$.
\end{observation}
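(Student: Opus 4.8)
The plan is to prove this by a simple sandwich argument exploiting the monotonicity of shortest-path distances under edge addition. Recall that ``consistent with the shortest path metric on a graph $H$'' means that for every query $(u,v)$ in the sequence, the reported answer $\mathrm{ans}(u,v)$ equals $d_H(u,v)$. So the hypothesis says that for every queried pair $(u,v)$ we have $\mathrm{ans}(u,v) = d_G(u,v) = d_{G'}(u,v)$, and the goal is to show $\mathrm{ans}(u,v) = d_{(V,E'')}(u,v)$ for every such pair.

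First I would record the monotonicity fact: since all edge lengths are nonnegative (here, unit length) and $E \subseteq E'' \subseteq E'$, every $u$--$v$ walk in $G$ is also a $u$--$v$ walk in $(V,E'')$, and every $u$--$v$ walk in $(V,E'')$ is also one in $G'$; taking minima over walks gives
\[
d_{G'}(u,v) \;\le\; d_{(V,E'')}(u,v) \;\le\; d_G(u,v)
\qquad\text{for all } u,v\in V .
\]
(This is purely combinatorial and does not use the queries at all.)

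Then I would apply this to each queried pair $(u,v)$. By the two consistency hypotheses, the leftmost and rightmost terms above both equal $\mathrm{ans}(u,v)$, so the middle term is squeezed to the same value: $d_{(V,E'')}(u,v) = \mathrm{ans}(u,v)$. Since this holds for every query in the sequence, the sequence is consistent with the shortest path metric on $(V,E'')$, which is exactly the claim.

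There is essentially no technical obstacle here; the only point to be careful about is making explicit that ``consistency'' is precisely the equality of each reported answer with the corresponding shortest-path distance, so that the hypothesis really does pin both $d_G$ and $d_{G'}$ to the same reported value, letting the monotonicity sandwich close. I would keep the proof to the two displayed inequalities and the one-line squeeze.
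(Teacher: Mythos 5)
Your proof is correct and is exactly the standard monotonicity-plus-squeeze argument the paper has in mind; the paper simply states the observation as ``follows immediately from the properties of shortest path metrics'' without writing it out. No substantive difference.
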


The following lemma together with \cref{lem:convex-compliance} implies \cref{lem:metric-consistent,lem:muni-consistent,lem:muni-consistent,lem:mstar-consistent,lem:mstarrange-consistent,lem:mmulti-consistent} by setting $G = \kg[t][i-1]$ and $G' = (\V{\kg[t][i-1]}, \E{\kg[t][i-1]} \cup (\actno[t][i-1] \times \actno[t][i-1]))$ in \cref{lem:convex-compliance}.

\begin{lemma}
	\label{lem:sp-consistent}
	For any $t, t' > 0$, $i, i' \in [c]$ so that $(t',i') \prec (t,i)$, the answer given to query $\q[t'][i']$ is consistent with the shortest path metric on $G'$.
\end{lemma}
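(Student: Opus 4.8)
The plan is to prove, by induction over the queries in the order they are issued (equivalently, induction on $(t,i)$ with respect to $\prec$, including the boundary index $i=0$ of each operation), the following strengthened invariant: for every query $\q[t'][i'] = (u,v)$ with $(t',i') \preceq (t,i)$,
\[
\ans{\q[t'][i']} \;=\; \dsp[\akg[t][i]]{u}{v} \;=\; \dsp[\kg[t][i]]{u}{v},
\]
where $\akg[t][i] = (\V{\kg[t][i-1]},\, \E{\kg[t][i-1]} \cup (\actno[t][i-1] \times \actno[t][i-1]))$ is the augmented graph used in \cref{alg:answer}. The ``$\akg[t][i]$'' half is the statement of the lemma, and the ``$\kg[t][i-1]$'' half (which follows by reading the invariant at $(t,i-1)$) is exactly the second hypothesis needed to apply \cref{lem:convex-compliance} with $G = \kg[t][i-1]$ and $G' = \akg[t][i]$, and hence to derive the metric-consistency statements. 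A preliminary simplification: the adversary's graph changes only when answering a query commits a new edge; marking a passive node off, deleting its point, or inserting a fresh point (\cref{alg:stream}) changes neither $\E{\kg}$ nor the adjacency of any already-inserted node. So between consecutive queries the invariant is preserved automatically, and it suffices to analyse the step that answers $\q[t][i]$.

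The core is the following monotonicity claim, which I would establish first: \emph{if $u,v$ are both inserted by operation $t'$, then $\dsp[\akg[t'][i']]{u}{v} \le \dsp[\akg[t][i]]{u}{v}$ whenever $(t',i') \preceq (t,i)$}, i.e.\ distances between already-existing points never decrease along $\prec$. It suffices to check a single $\prec$-step. When \cref{alg:answer} answers a query it commits only edges lying on a shortest path of $\akg[t][i]$, and any such edge not already in $\kg[t][i-1]$ lies in $\actno[t][i-1] \times \actno[t][i-1]$; hence every newly committed edge joins two currently-active nodes and is \emph{already present} as a virtual edge of $\akg[t][i]$, so the edge set of $\akg$ restricted to the already-inserted nodes does not grow within an update operation. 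Across an operation boundary it can only lose edges, since by \cref{def:node_types} the active set is non-increasing along $\prec$ (a node active after operation $t$ is active after every earlier operation, because $f$ is non-decreasing and passivity is triggered by the thresholds $100f(k,\cdot)$). The only remaining concern is the freshly inserted point $w$ of \cref{alg:stream}: but $w$ is active and accumulates no committed edges before it is itself queried, so every edge incident to $w$ in $\akg$ joins $w$ to an active node, and any subpath $a - w - a'$ with $a \neq a'$ may be replaced by the active--active edge $\{a,a'\}$ without increasing length; thus a shortest path between already-existing points never needs to route through $w$. (One also uses here that a shortest path in $\akg[t][i]$ contains at most one active--active edge, as all active nodes are mutually at distance $1$ there, which is why $\kg$ grows by at most one edge per query.) Chaining these observations across all $\prec$-steps gives the claim.

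Finally I would close the induction using the claim. For the newest query $\q[t][i]=(x,y)$, \cref{alg:answer} returns the length of a shortest $x$--$y$ path in $\akg[t][i]$, so $\ans{\q[t][i]} = \dsp[\akg[t][i]]{x}{y}$; since that path is committed into $\kg[t][i]$, also $\dsp[\kg[t][i]]{x}{y} \le \ans{\q[t][i]} \le \dsp[\akg[t][i]]{x}{y} \le \dsp[\kg[t][i]]{x}{y}$, so all three agree. For an earlier query $\q[t'][i'] = (u,v)$ with $(t',i') \prec (t,i)$: the inductive hypothesis at the moment it was answered gives $\ans{\q[t'][i']} = \dsp[\akg[t'][i']]{u}{v}$, which is at most $\dsp[\akg[t][i]]{u}{v}$ by the monotonicity claim; conversely the shortest $u$--$v$ path witnessing $\ans{\q[t'][i']}$ was committed into $\kg[t'][i'] \subseteq \kg[t][i]$, so $\dsp[\akg[t][i]]{u}{v} \le \dsp[\kg[t][i]]{u}{v} \le \ans{\q[t'][i']}$. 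Combining gives $\ans{\q[t'][i']} = \dsp[\akg[t][i]]{u}{v} = \dsp[\kg[t][i]]{u}{v}$, completing the step. I expect the main obstacle to be a fully rigorous proof of the monotonicity claim, and in particular carefully handling the time-dependence of the ``active'' label: one must simultaneously argue that committing an edge creates no new shortcut (it was already virtual), that shrinking the active set creates none (it only removes virtual edges), and that a freshly inserted point creates none between points that already existed.
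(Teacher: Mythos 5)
Your proof is correct and follows essentially the same route as the paper's: one inequality comes from the fact that $G$ only gains vertices and edges, so the committed answer path persists, and the other from the observation that every newly committed edge joins two currently active nodes (hence was already a virtual edge of the augmented graph) combined with the permanence of passivity, so no shortcut between previously queried points can ever appear. The difference is only organizational: the paper argues by contradiction at the first step at which a distance in $G_{t,i}$ would drop below a reported answer and is terse about the augmented graph and about freshly inserted points, whereas your strengthened forward invariant (equality with both the committed and the augmented shortest-path distances, plus the explicit monotonicity claim) makes those steps explicit.
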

\begin{proof}
	Let $t' > 0$, $i' \in [c]$ so that $(t',i') \prec (t,i)$ and denote $(x,y) \defeq q \defeq \q[t'][i']$. We prove that $\ans{q} = \dsp[\kg[t][i]]{x}{y}$. As neither vertices nor edges are deleted after they have been inserted, for every $(t_1, i_1) \prec (t_2, i_2)$, $\kg[t_2][i_2]$ is a supergraph of $\kg[t_1][i_1]$. Thus, if there exists a path $P$ between $x$ and $y$ in $\kg[t'][i']$, a shortest path in $\kg[t][i]$ between $x$ and $y$ cannot be longer than $P$.

	It remains to prove $\ans{q} \leq \dsp[\kg[t][i]]{x}{y}$. For the sake of contradiction, assume that there exist $t'',i''$ 
	so that $\ans{q} = \dsp[\kg[t''][i''-1]]{x}{y}$, but $\ans{q} > \dsp[\kg[t''][i'']]{x}{y}$. By \cref{def:node_types}, passive nodes never become active. For any passive node $v \in \pasno[t''][i''-1]$, it follows that $\dsp[\kg[t''][i'']]{v}{\actno[t][i]} \geq \dsp[\kg[t''][i''-1]]{v}{\actno[t][i-1]}$ as \cref{alg:answer} only inserts edges between vertices in $\actno[t][i]$ into $\kg[t][i]$. Therefore, any shortest path between $x$ and $y$ in the graph $(\V{\kg[t''][i''-1]}, \E{\kg[t''][i''-1] \cup (\actno[t''][i''-1] \times \actno[t''][i''-1])}$ has length at least $\dsp[\kg[t''][i''-1]]{x}{y} = \ans{q}$.
\end{proof}

\subsection{Lower bounds for clustering}

Our lower bounds apply for the case that the algorithm is allowed to choose as centers any points that have ever been inserted as well as to the case where  centers must belong to the set of current points. In the whole section we use the notations  of $\kg[t'][i], \actno[t][i]$ etc. from the introduction of \cref{sec:appendix-lower-adaptive}.

\subsubsection{Proof of \cref{thm:det-1lb}}

For any $\ell\in\N_{0}$, we define a metric $M_{\ell}(P^{*})$
on a subset of points $P^{*}$ as the shortest path metric on the
following graph. For each pair of active vertices $u,v$ we add an edge if $d(u,P^{*})\geq\ell$
and $d(v,P^{*})\geq\ell$.

\begin{lemma} \label{lem:mmulti-consistent} For each clean update operation
	$t$, each subset of points $P^{*}\in V_{t}$, and each $\ell\in\N_{0}$
	the metric $M_{\ell}(P^{*})$ is consistent.
\end{lemma}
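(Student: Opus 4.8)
The plan is to recognize $M_\ell(P^*)$ as an instance of the ``augmented graph metric'' construction and invoke the consistency already established for it. Concretely, I would show that $M_\ell(P^*)$ is an augmented graph metric for the clean operation $t$ in the sense preceding \cref{lem:metric-consistent}, and then conclude directly by \cref{lem:metric-consistent}. Equivalently, one checks that the edge set of the graph defining $M_\ell(P^*)$ lies between $E_t$ and $E_t\cup(\bar V_t\times\bar V_t)$ (where $\bar V_t$ denotes the set of active vertices of $G_t$) and applies \cref{lem:sp-consistent} together with \cref{lem:convex-compliance}, exactly as in the reduction announced before \cref{lem:sp-consistent}.

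First I would unpack the definition: $M_\ell(P^*)$ is the shortest path metric of the graph $H$ obtained from $G_t$ by adding, for every pair of \emph{active} vertices $u,v$ with $d(u,P^*)\ge\ell$ and $d(v,P^*)\ge\ell$, an edge of length $1$. Hence the set of edges added to $G_t$ is a set of length-$1$ edges each joining two active vertices, and in particular no added edge is incident to an off vertex. Since $t$ is clean, $G_t$ itself contains only active and off vertices; therefore $H$ is precisely ``$G_t$ together with an arbitrary set of length-$1$ edges between pairs of active vertices, and no edge incident to an off vertex'', i.e.\ $M_\ell(P^*)$ is an augmented graph metric for $t$, and its edge set $E_H$ satisfies $E_t\subseteq E_H\subseteq E_t\cup(\bar V_t\times\bar V_t)$.

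Given this, \cref{lem:metric-consistent} immediately gives that $M_\ell(P^*)$ is consistent, completing the proof. (If instead one argues from scratch: \cref{lem:sp-consistent} shows that every answer the adversary returned before operation $t$ is consistent with the shortest path metric on $G_t$ and on $(V_t,\,E_t\cup(\bar V_t\times\bar V_t))$, and then \cref{lem:convex-compliance}, applied with the intermediate edge set $E_H$, transfers this to the shortest path metric $M_\ell(P^*)$.) There is no real obstacle here; the only point worth a line of justification is the upper inclusion $E_H\subseteq E_t\cup(\bar V_t\times\bar V_t)$, i.e.\ that $M_\ell(P^*)$ introduces no shortcut through an already-deleted vertex — which is immediate since its definition only ever inserts edges between active vertices.
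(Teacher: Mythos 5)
Your proposal is correct and follows essentially the same route as the paper: the paper's proof simply observes that $M_{\ell}(P^{*})$ is an augmented graph metric for the clean operation $t$ (edges are added only between active vertices) and concludes by \cref{lem:metric-consistent}. Your extra remarks about \cref{lem:sp-consistent} and \cref{lem:convex-compliance} just restate how that lemma is itself established in the appendix, so there is no substantive difference.
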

\begin{proof}
The metric $M_{\ell}(P^{*})$ is an augmented graph metric for $t$ and, thus, for a clean update operation $t$, it
is consistent by Lemma~\ref{lem:metric-consistent}.
\end{proof}

\begin{lemma} \label{lem:det-low-1pclus} 
Consider any dynamic algorithm for $(1,p)$-clustering that queries amortized $f(1,n)$ distances per operation, where $n$ is the number of current points,
 and outputs at most $1 \le g\leq n$ centers.
For any $t\geq1$ such that $t$ is a clean operation, the approximation factor of the algorithm's solution
(with respect to the optimal $(1,p)$-clustering cost) against an
adaptive adversary right after operation $t$ is at least $\left[\frac{\log(t/4g)}{p + \log(100f(1,t))}\right]^{p}/4$ and $96t/100 \le n \le t$. \end{lemma}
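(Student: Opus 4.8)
The plan is to use the machinery of consistent metrics to create two scenarios that the algorithm cannot distinguish, one with small optimal cost and one with large optimal cost, and then argue that on at least one of them the algorithm is far from optimal. First I would fix a clean operation $t$ and recall from \cref{lem:mpt-det-low-adv-prop} that in $\bar G_t$ there are at least $96t/100$ active vertices, each of degree at most $100f(k,t)=100f(1,t)$, and that $96t/100\le n\le t$. Let $P^{*}$ be the (multi)set of $g$ points that the algorithm outputs as its centers right after operation $t$ (recall $g\le n$). Since each active vertex has degree at most $D:=100f(1,t)$, for each center the number of active vertices at graph-distance exactly $\ell$ from it is at most $D^{\ell}$, so the number of active vertices within graph-distance $\ell$ of \emph{some} center in $P^{*}$ is at most $g\sum_{j=0}^{\ell}D^{j}\le g\cdot 2D^{\ell}$ (for $D\ge 2$, say; the $D=1$ case is trivial). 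Choosing $\ell$ to be the largest integer with $g\cdot 2D^{\ell}\le n/2$, i.e.\ $\ell=\Theta\big(\log(n/(4g))/\log D\big)$, guarantees that at least $n/2$ active vertices are at graph-distance $>\ell$ from every center of the algorithm.

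Next I would invoke \cref{lem:mmulti-consistent}: the metric $M_{\ell}(P^{*})$ is consistent, and so is $M_{\mathrm{uni}}$ by \cref{lem:muni-consistent}. Because both are consistent with every distance the adversary reported up to operation $t$, the algorithm's output $P^{*}$ (and its claimed cost) is the same whether the true metric is $M_{\mathrm{uni}}$ or $M_{\ell}(P^{*})$. In $M_{\mathrm{uni}}$ the optimal $(1,p)$-clustering cost is at most $n\cdot 1^{p}=n$ achieved by any single center; more carefully, $\OPT$ in $M_{\ell}(P^{*})$ is small too, because the optimal solution may place its single center \emph{inside} the ``far'' ball: any active vertex $w$ with $d_{\bar G_t}(w,P^{*})\ge \ell$ actually has all points within graph-distance $\ell$ of it identified (the added edges between far vertices make their mutual distance $1$ in $M_\ell(P^*)$), so choosing such a $w$ as the optimal center yields cost at most $n$ — i.e.\ $\OPT=O(n)$ regardless. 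Meanwhile, in $M_{\ell}(P^{*})$ the at least $n/2$ active vertices at graph-distance $>\ell$ from every algorithm-center are also at \emph{metric} distance $>\ell$ from every center of $P^{*}$ (the extra edges of $M_\ell(P^*)$ only shortcut \emph{between} far vertices, not between a far vertex and $P^*$, and one must check no shortcut through off vertices exists, which is exactly \cref{lem:metric-consistent}). Hence the algorithm's cost on $M_\ell(P^*)$ is at least $(n/2)\cdot \ell^{p}$, while $\OPT\le O(n)$, giving an approximation ratio $\Omega(\ell^{p})$. Plugging in $\ell=\Theta\big(\log(t/(4g))/\log(100f(1,t))\big)$ — using $n=\Theta(t)$ — and tracking the constants to land on the precise bound $\big[\log(t/4g)/(p+\log(100f(1,t)))\big]^{p}/4$ finishes the proof; the $p$ in the denominator appears because we need $2g D^{\ell}\le n/2$ with a little slack, i.e.\ the $2$ and the binomial-type sum cost an additive $O(1)$ inside the logarithm's base, which after dividing the exponent becomes the additive $p$ term, and similarly a constant-factor loss $n/2$ versus $n$ gets absorbed.

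The main obstacle is the distance-lower-bound bookkeeping: I must be careful that in $M_{\ell}(P^{*})$ a ``far'' active vertex is genuinely at metric distance $>\ell$ from every center of $P^*$, i.e.\ that none of the newly added long-range edges (which connect pairs of far vertices) nor any path through off vertices creates a shortcut bringing $P^*$ within distance $\ell$. This is where \cref{lem:metric-consistent} (no shortcuts via off vertices) and the precise definition of $M_\ell(P^*)$ (edges only between vertices \emph{both} at graph-distance $\ge\ell$ from $P^*$) do the work: a shortest path from a center $c\in P^*$ to a far vertex must first travel graph-distance $\ge \ell$ through $\bar G_t$ before it can use any long-range edge, so its length is $\ge\ell$. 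The secondary nuisance is purely arithmetic — choosing $\ell$ so the counting inequality $g\sum_{j\le\ell}D^j\le n/2$ holds and simultaneously the claimed closed form for $\ell$ is valid — but that is routine once the base of the logarithm is identified as $100f(1,t)$ and the additive $p$ slack is allotted. Everything else (amortized vs.\ worst-case query budget, already-deleted points allowed as centers) is inherited from the setup of \cref{sec:appendix-lower-adaptive} and needs no new argument here.
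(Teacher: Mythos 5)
There is a genuine gap at the heart of your argument: the upper bound on $\OPT$ in the metric $M_{\ell}(P^{*})$. You claim that choosing a far vertex $w$ (one with $d_{G_t}(w,P^{*})\ge\ell$) as the single center ``yields cost at most $n$, i.e.\ $\OPT=O(n)$ regardless.'' This is false for your choice of $\ell$ (the largest integer with $2gD^{\ell}\le n/2$, $D=100f(1,t)$): the added edges of $M_{\ell}(P^{*})$ only identify the \emph{far} vertices with each other, so the points within graph-distance $i<\ell$ of the algorithm's centers sit at distance up to roughly $\ell-i$ from $w$ and contribute up to $(\ell-i)^{p}$ each. Their weighted contribution is $\sum_{i<\ell} gD^{i}(\ell-i)^{p}\approx gD^{\ell}\sum_{j\ge1}j^{p}D^{-j}$, and with your maximal $\ell$ this is of order $n\cdot c_{p}$ with $c_{p}$ growing like $p^{p}$ (and the crude count of near points alone can be as large as $n/2$, each possibly at distance $\Theta(\ell)$). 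So $\OPT$ is \emph{not} $O(n)$, and the ratio does not come out as $\Omega(\ell^{p})$ with your $\ell$; at best one salvages roughly $(\ell/p)^{p}$, which is weaker than the stated bound. This is exactly why the paper's proof takes the \emph{smaller} $\ell=\log(t/4g)/(p+\log(100f(1,t)))$: the additive $p$ in the denominator is there to absorb the $(\ell-i)^{p}$ weights (via an inequality of the form $\alpha^{p}\le(2^{p}\cdot 100f(1,t))^{2\alpha/3}$) in the computation showing that the $(1,p)$-cost of a single center in $V_{+}$ is at most $5t/4\le 2t$. Your explanation of the $p$ term (slack for the counting inequality $2gD^{\ell}\le n/2$) is therefore incorrect, and with it the ``routine arithmetic'' you defer is precisely the missing core of the proof.

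Two smaller remarks. First, the detour through $M_{\mathrm{uni}}$ and indistinguishability is unnecessary: the paper simply fixes the consistent metric $M_{\ell}(C)$ as the true metric (\cref{lem:mmulti-consistent}) and compares the algorithm's cost, at least $(t/2)\ell^{p}$ since $d(s,V_{+})\ge\ell$ for every algorithm center $s$, with $\OPT\le 2t$ \emph{in that same metric}; citing that $\OPT\le n$ under $M_{\mathrm{uni}}$ does not help, because the approximation factor must be measured in the metric the adversary actually commits to. Second, your counting and your care about shortcuts through off vertices and through the added long-range edges are fine and match the paper (via \cref{lem:metric-consistent}); the proof fails only at the $\OPT$ upper bound and the resulting choice of $\ell$.
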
 
\begin{proof}
Denote $G\defeq(V,E)\defeq\kg[t][0]$ and $A\defeq\actno[t][0]$.
By \cref{lem:kg-actnum}, the number of active nodes in $G$ is
at least $96t/100\geq t$, which implies that $|A| \ge 96t/100$. 
Thus,  after operation $t$,  $n \ge 96t/100$.

Let $C$ be the centers that are picked by the algorithm after operation $t$. By the assumption of the lemma, $|C| \le g$.
Consider $M_{\ell}(C)$, where $\ell\defeq\log(t/4g)/(p+\log(100 f(1,t)))$.
By \cref{lem:max-degree}, for any $s\in C$, the size of $\lvert\{x\mid x\in A\wedge d(x,s)<\ell\}\rvert$
is at most $\sum_{i\in[\ell-1]}(100 f(1,t))^{i}<(100 f(1,t))^{\ell}$. 
Let $V_{+}\defeq\{x\mid x\in A\wedge d(x,C)\geq\ell\}$.
Since $\lvert C\rvert\leq g$, it holds that $\lvert V_{+}\rvert>|A| -g(100 f(1,t))^{\ell}  \ge 96t / 100 -g(t/4g)^{\log ((100 f(1,t)) / (p + \log (100 f(t,1)))}\ge 96t/100 - t/4 \ge t/2$.
Since $d(s,V_{+})\geq\ell$ for any $s\in C$, the $(1,p)$-clustering
cost of $S$, and thus the cost of the algorithm, is at least $\lvert V_{+}\rvert\cdot\ell^{p}\geq t/2\cdot\ell^{p}$.

We will show that if instead a single point corresponding to a vertex of $V_{+}$ is picked as center, then the cost is at most $2t$, which provides an upper bound on the cost of the optimum solution.
It follows that the approximation factor achieved by the algorithm is at least $\ell^p / 4$.

To complete the proof consider a point $x$ whose corresponding point
$v_{x}$ belongs to $V_{+}$. \awtwo{One can easily show that for
each $\alpha\in\N$ it holds that $\alpha^{p}\le(100f(1,t)\cdot2^{p})^{2\alpha/3}$
since for each $\alpha\in\N$ it holds that $\alpha^{p}=2^{p\log\alpha}\le2^{p\cdot\frac{2\alpha}{3}}$}.
Thus, the $(1,p)$-clustering cost if a single point $x$ is chosen
as center is at most

\begin{alignat*}{1}
\sum_{i=0}^{\ell-1}g(100f(1,t))^{i}\cdot(\ell-i)^{p}+t\cdot1^{p} & \le g\left((100f(1,t))^{\ell}\cdot\sum_{i=1}^{\ell}\frac{1}{(100f(1,t))^{i}}\cdot i^{p}\right)+t\\
 & \le g\left((100f(1,t))^{\ell}\cdot\sum_{i=1}^{\ell}\frac{(100f(1,t)\cdot2^{p})^{2i/3}}{(100f(1,t))^{i}}\right)+t\\
 & \le g\left((100f(1,t))^{\ell}\cdot\sum_{i=1}^{\ell}\frac{2^{p\cdot2i/3}}{(100f(1,t))^{i/3}}\right)+t\\
 & \le g\left((100f(1,t))^{\ell}\cdot\sum_{i=1}^{\ell}\left(\frac{2^{\frac{2p}{3}}}{(100f(1,t))^{1/3}}\right)^{i}\right)+t\\
 & \le g(100f(1,t))^{\ell}2^{p\ell}+t\\
 & \le g(100f(1,t)\cdot2^{p})^{\ell}+t\\
 & \le5t/4\le2t
\end{alignat*}
using that $\ell=\frac{\log(t/4g)}{p+\log(100f(1,t))}=\frac{\log(t/4g)}{\log(2^{p}100f(1,t))}=\log_{2^{p}100f(1,t)}t/4g$.
This yields an approximation ratio of at least $\frac{t/2\cdot\ell^{p}}{2t}=\Omega(\ell^{p})$. 
\end{proof}

\begin{lemma}\label{lem:diam}
Consider any dynamic algorithm for computing the diameter of a dynamic point set  that queries amortized $f(1,n)$ distances per operation, where $n$ is the number of current points,
 and outputs at most $g\geq1$ centers.
For any $t\geq 2$ such that $t$ is a clean operation, the approximation factor of the algorithm's solution
(with respect to the correct diameter) against an
adaptive adversary right after operation $t$ is at least
$\log (96t/100) / (\log{102f(1,t)}) -1$ and $96t/100 \le n \le t$.
\end{lemma}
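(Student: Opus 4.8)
The plan is to follow the template of \cref{lem:det-low-1pclus}, simplified by the fact that the diameter is realized by a \emph{single} far‑apart pair of points rather than by a sum over all points. Fix a clean operation $t\ge 2$ (so that $96t/100>1$), let $G\defeq\kg[t][0]$ and let $A\defeq\actno[t][0]$ be its active nodes. By \cref{lem:kg-actnum}, $\card{A}\ge 96t/100$; since at most $t$ points have ever been inserted and at a clean operation every non‑off node is active, the current point count $n$ equals $\card{A}$, so $96t/100\le n\le t$, which is the second assertion. It then remains to exhibit two metrics that are consistent with all of the adversary's answers after operation $t$ but whose diameters differ by the claimed factor.

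For the first metric take the uniform metric $M_{\mathrm{uni}}$ obtained by adding every active--active edge to $G$; it is consistent by \cref{lem:muni-consistent} and has diameter $1$. For the second, set $\ell\defeq\log(96t/100)/\log(102f(1,t))-1$ and fix an arbitrary active node $p^{*}$. By the degree bound \cref{lem:max-degree}, every node of $G$ has degree at most $100f(1,t)$, so a routine geometric‑series estimate shows the ball of $G$‑radius $\ell$ around $p^{*}$ has at most $(102f(1,t))^{\ell}$ nodes (it is in collapsing $\sum_{i=0}^{\ell}(100f(1,t))^{i}$ into a single power with a slightly inflated base that the constant $102$ and the additive $-1$ enter). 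Our choice of $\ell$ makes this strictly smaller than $96t/100\le\card{A}$, so $G$ contains an active node at $G$‑distance greater than $\ell$ from $p^{*}$ (nodes in a different component counted at distance $n$, as in the definition of $M(p^{*})$). Hence the metric $M(p^{*})$ — consistent by \cref{lem:mstar-consistent} — has diameter strictly larger than $\ell$, since that far node is at distance equal to its $G$‑distance from $p^{*}$ in $M(p^{*})$. (Equivalently one may use the metric $M_{\ell}(\{p^{*}\})$ of \cref{lem:mmulti-consistent}.)

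To conclude, the adversary's answers are a function of $G$ alone and are identical whether the underlying metric is $M_{\mathrm{uni}}$ or $M(p^{*})$, so the value $\apx$ the algorithm reports is the same in both cases. If the metric is $M_{\mathrm{uni}}$ then $\apx\le\alpha\cdot\mathrm{diam}=\alpha$; if it is $M(p^{*})$ then $\apx\ge\mathrm{diam}>\ell$, since $\OPT\le\apx$ always. Therefore $\alpha>\ell\ge\log(96t/100)/\log(102f(1,t))-1$, as claimed; the $\le g$ centers the algorithm may additionally output play no role in this argument. If instead one insists that the diameter estimate be the mutual distance of $g$ explicitly named witness points $C$, the same bound follows (for $g$ not too large) by the contraction argument of \cref{lem:det-low-1pclus}: after seeing $C$ the adversary fixes the metric to an augmented graph metric that adds a clique on $C$ and a clique on $V_{+}\defeq\{x\in A:d_{G}(x,C)\ge\ell\}$ — consistent by \cref{lem:metric-consistent}, and a mild variant of $M_{\ell}(C)$ — so that $C$ has mutual distance $1$ while, since $\card{C}\le g$ balls of radius $\ell$ cannot cover $A$, some point of $V_{+}$ is at distance at least $\ell$ from $C$ and the true diameter again exceeds $\ell$.

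The one genuinely delicate point is that the large‑diameter metric must be consistent with \emph{every} answer given up to operation $t$ — in particular that deleting nodes creates no distance‑shortening shortcut through off nodes — and this is exactly what \cref{lem:metric-consistent,lem:mstar-consistent,lem:mmulti-consistent} supply. The remaining work is bookkeeping: passing from the amortized query bound $f(1,n)$ to the degree bound $100f(1,t)$ (already done in \cref{lem:max-degree}) and chasing the constants in the ball‑size estimate carefully enough that $\ell$ may be taken as large as $\log(96t/100)/\log(102f(1,t))-1$. I expect this constant‑chasing to be the only step needing real care.
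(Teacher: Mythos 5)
Your proof is correct and follows essentially the same route as the paper's: exhibit the consistent uniform metric (diameter $1$) against a consistent stretched metric in which some current point is at distance $>\ell$ from $p^*$, forced by the degree bound of \cref{lem:max-degree} and the active-node count of \cref{lem:kg-actnum}, and conclude by indistinguishability of the two metrics. The only cosmetic difference is the bookkeeping for the constant $102$ and for disconnected components: the paper first joins the components of $G[A]$ by a path of unit edges (raising the degree bound to $100f(1,t)+2\le 102f(1,t)$) and argues inside one connected augmented graph, whereas you stay with $M(p^*)$ (different components placed at level $n$, relying on \cref{lem:mstar-consistent}) and absorb the geometric-series factor into the inflated base $102f(1,t)$ — both routes give the stated bound.
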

\begin{proof}
Denote $G\defeq(V,E)\defeq\kg[t][0]$ and $A\defeq\actno[t][0]$.
By \cref{lem:kg-actnum}, the number of active nodes in $G$ is
at least $96t/100\geq t$, which implies that $|A| \ge 96t/100$. 
Thus,  after operation $t$,  $n \ge 96t/100$.

Without loss of generality, we assume that $G[A]$ has exactly one connected component: If this is not the case, let $C_1, \ldots, C_s$ be the connected components of $G[A]$ and observe that we may insert a path connecting the connected components by inserting
into $G[A]$ edges  $(v_1,v_2), \ldots, (v_{s-1}, v_s)$ of length $1$, where $v_i \in C_i$ are arbitrary vertices. This increases the maximum degree of nodes in $G$ by at most $2$ and
the shortest-path metric $M$ on the resulting graph that is constructed in this way is an augmented graph metric for $t$. As $t$ is clean, Lemma~\ref{lem:metric-consistent} shows that $M$ is consistent.

	Let $x \in V$ be any node. By \cref{lem:max-degree}, the number $n^{(i)}$ of nodes that have distance $i$ to $x$ is at most $\sum_{j \in [i]} (100f(1,t)+2)^j \leq (100f(1,t)+2)^{i+1}$. 
Consider the largest $i$ such that $(100f(1,t)+2)^{i+1} < n$. It follows that there exists a node at distance $i+1$ to $x$. Furthermore $(100f(1,t)+2)^{i+2} \ge n^{(i+1)} \ge n$, which implies that
$i+2 \ge  \log_{100f(1,t)+2} n.$
As $f(1,t) \ge 1$ for all values of $t$, there exists a shortest path $P$ starting at $x$ of length at least $i+1 \geq \log_{100f(1,t)+2}n - 1 \geq (\log n/ \log{(102f(1,t))}) -1 =: \ell$. It follows that the diameter is at least $\ell$.
On the other hand, $M$ can be extended by adding an edge between any pair of active nodes, resulting in the
consistent metric  $M_{\mathrm{uni}}$. For this metric  the diameter of $G$ is 1.
As the algorithm cannot tell whether $\ell$ or 1 is the correct answer, and it always has to output a value that is as least as large as the correct answer, it will output at least $\ell$. Thus, the approximation ratio is at least $\ell \ge \log (96t/100)/ \log{102f(1,t)} -1$.
\end{proof}
Note that this implies a lower bound for the approximation ratio for $1$-center, $1$-sum-of-radii, and $1$-sum-of-diameter.

\begin{proof}[of \cref{thm:det-1lb}]
Let $t\in\N$. By Lemma~\ref{lem:mpt-det-low-adv-prop}, there is
a value $t'$ with $t<t'\le2t'$ such that $t'$ is a clean operation.
Let $n$ be the number of active points at iteration $t'$. By \cref{lem:kg-actnum}
we know that $t'\ge n\ge96t'/100$. Note that hence $t'\le2n$.

Recall that we assumed that the function $f(k,n)$ is non-decreasing
in $n$ (for any fixed $k$). Suppose that after operation $t'$ we
query the solution value of an algorithm for $1$-center, $1$-sum-of-radii,
or $1$-sum-of-diameter. By Lemma~\ref{lem:diam} its approximation
ratio is at least $\frac{\log(96t'/100)}{\log(102f(k,t'))} -1\ge\frac{\log(96n/100)}{\log(102f(k,2n))} -1=\Omega\left(\frac{\log(n)}{\log(f(k,2n))}\right)$.
Suppose that instead we query the solution from the algorithm for
$(1,p)$-clustering. By Lemma~\ref{lem:det-low-1pclus}, its approximation
ratio is at least

\begin{alignat*}{1}
\left[\frac{\log(t'/4g)}{p+\log(100f(1,t'))}\right]^{p}/4 & \ge\left[\frac{\log(n/4g)}{p+\log(100f(1,2n))}\right]^{p}/4\\
 & \ge\left[\frac{\log(n)-\log(4g)}{p+\log(100f(1,2n))}\right]^{p}/4\\
 & \ge\left[\frac{\log(n)}{1.1p+1.1\log(100f(1,2n))}\right]^{p}/4\\ 
 & \ge\left[\frac{\log(n)}{1.1p+1.1(7+\log(f(1,2n))}\right]^{p}/4\\ 
 & \ge\left[\frac{\log(n)}{1.1p+8+1.1\log(f(1,2n))}\right]^{p}/4\\ 
 & \ge\left[\frac{\log(n)}{2p+8+2\log(f(1,2n))}\right]^{p}/4\\ 
 & =\Omega\left(\left(\frac{\log n}{2p+8+2\log f(1,2n)}\right)^{p}\right)
\end{alignat*}
using that $g=O(1)$. Hence, for $k$-median and $k$-means if we
take $p=1$ and $p=2$, respectively, this yields bounds of $\Omega\left(\frac{\log n}{10+2\log f(1,2n)}\right)$
and $\Omega\left(\left(\frac{\log n}{12+2\log f(1,2n)}\right)^{2}\right)$,
respectively.

\end{proof}

\subsubsection{Proof of \cref{lem:mpt-det-low-kcen}}

\begin{lemma}[\cref{lem:mpt-det-low-kcen}]
Let $k \ge 2$.
Consider any dynamic algorithm for maintaining an approximate $k$-center solution of a dynamic point set  that (1) queries amortized $f(k,n)$ distances per operation, where $n$ is the number of current points,
 and (2) outputs at most $g(k) \in O(k)$ centers.
For any $t\geq 2$ such that $t$ is a clean operation, the approximation factor of the algorithm's solution
(with respect to an optimal $k$-center solution) against an
adaptive adversary right after operation $t$ is at least
$\Omega\left(\min\left\{ k,\frac{\log n}{k\log{f(k,2n)}}\right\} \right)$.
\end{lemma}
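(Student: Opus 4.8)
The plan is to follow the sketch in the technical overview and to argue against the adversary of \cref{sec:appendix-lower-adaptive}: after the clean operation $t$, depending on the center set $C$ that the algorithm commits to, I would exhibit one consistent metric on which $C$ is far from the optimal $k$-center solution. Fix the clean operation $t$ and let $n$ be the number of current points; by \cref{lem:kg-actnum} they all correspond to active vertices of $G_t$ and $96t/100\le n\le t$. If $\min\{k,\log n/(k\log f(k,2n))\}=O(1)$ there is nothing to prove, so assume it is large. I would fix an arbitrary active vertex $v_{p^*}$ and, after inserting a spanning path among the active vertices (this keeps the graph an augmented graph metric in the sense of \cref{lem:metric-consistent} and only relaxes the degree bound of \cref{lem:max-degree} to $100f(k,t)+2$), assume $\bar{G}_t$ is connected. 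Let $V^{(0)}=\{v_{p^*}\},V^{(1)},\dots,V^{(D)}$ be its BFS layers around $v_{p^*}$. The degree bound gives $|V^{(0)}\cup\cdots\cup V^{(\ell-1)}|\le (102f(k,t))^{\ell}$, so taking $L$ maximal with $(102f(k,t))^{L}\le n/2$ forces $D\ge L$ (hence every layer up to $L$ is nonempty), at least $n/2$ active vertices lie in layers $\ge L$, and $L=\Omega(\log n/\log f(k,t))=\Omega(\log n/\log f(k,2n))$ since $n\le t\le 2n$ and $f$ is monotone in $n$.

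Next I would locate a long block of layers containing no center. Let $C$ be the algorithm's output, $|C|\le g(k)=O(k)$. The key observation is that every center, whether $v_c$ is active or an already deleted (off) vertex, lies at $G_t$-distance at most $1$ from the vertices of only three consecutive BFS layers: any edge of $G_t$ joins vertices whose BFS distances from $v_{p^*}$ differ by at most $1$, and the metrics $M_{\ell_1,\ell_2}(p^*)$ add no edge incident to an off vertex. Hence at most $3g(k)$ of the layers $V^{(1)},\dots,V^{(L-1)}$ can contain a center, so by pigeonhole there is a run of $w=\Theta(\min\{k,L/k\})$ consecutive layers $V^{(a)},\dots,V^{(a+w-1)}$, with $1\le a$ and $a+w\le L$, none of which contains a center of $C$. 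I would take $w$ as large as $\Theta(\min\{k,L/k\})$ permits but a constant below $3k$ (needed below); when $\min\{k,L/k\}$ is large this still gives $w\ge 1$.

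Now I would use $M\defeq M_{a-1,a+w}(p^*)$, which is consistent by \cref{lem:mstarrange-consistent}, and call $\bigcup_{j\le a-1}V^{(j)}$ the low group and $\bigcup_{j\ge a+w}V^{(j)}$ the high group; by construction every center of $C$ lies in one of the two groups. In $M$, the low group and the high group are each cliques all of whose edges have length $1$, while the only edges crossing into the middle block $V^{(a)},\dots,V^{(a+w-1)}$ are $V^{(a-1)}\!-\!V^{(a)}$, $V^{(a+w)}\!-\!V^{(a+w-1)}$ and the consecutive-layer edges inside the block; hence a shortest path from any vertex of the low (resp.\ high) group to a vertex of $V^{(a+s)}$ has length at least $s+1$ (resp.\ $w-s$). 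Taking $s=\lfloor (w-1)/2\rfloor$, every vertex of the nonempty layer $V^{(a+s)}$ is at distance at least $\lfloor w/2\rfloor$ from all of $C$, so $C$ has $k$-center cost at least $\lfloor w/2\rfloor$ on $M$. On the other hand $\OPT=1$ on $M$: a center at $p^*$ covers the whole low group at distance $1$; a center at a vertex of $V^{(a+w)}$ covers the whole high group and $V^{(a+w-1)}$ at distance $1$; and $\lceil w/3\rceil$ centers at vertices of $V^{(a)},V^{(a+3)},V^{(a+6)},\dots$ cover the middle block at distance $1$ — a total of $2+\lceil w/3\rceil$ centers, which is at most $k$ precisely because $w$ was kept a constant below $3k$. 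Since all query answers were consistent with $M$, the adversary may declare $M$ to be the true metric, and the approximation ratio after operation $t$ is at least $\lfloor w/2\rfloor=\Omega(\min\{k,L/k\})=\Omega\big(\min\{k,\log n/(k\log f(k,2n))\}\big)$, matching \cref{lem:mpt-det-low-kcen}. (For $k\in\{2,3\}$ the claimed bound is $O(1)$ and vacuous; for $k\ge 4$ one can take $w\ge 3$.)

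The main obstacle is the distance analysis inside $M_{a-1,a+w}(p^*)$. The pigeonhole step and the bound $L=\Omega(\log n/\log f)$ are routine, but I must carefully rule out any shortcut by which a center — especially one sitting on a deleted vertex — could reach the middle of the BFS block more cheaply than by traversing one of the two boundary edges, which is exactly what the two structural facts above (edges of $G_t$ are ``layer-monotone'', grouping edges avoid off vertices) are there to guarantee; and I must balance the width $w$ so that the $\Theta(w)$-cost witness layer is as deep as $\Theta(\min\{k,L/k\})$ while the $2+\lceil w/3\rceil$ centers used by $\OPT$ still fit into the budget $k$.
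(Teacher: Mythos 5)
Your proposal is correct and follows essentially the same route as the paper's proof: make the active graph connected by adding edges among active vertices, take BFS layers around $p^*$ of depth $L=\Omega(\log n/\log f(k,2n))$, pigeonhole a run of $\Theta(\min\{k,L/k\})$ consecutive center-free layers, and compare the consistent grouped metric $M_{\ell_1,\ell_2}(p^*)$, on which the algorithm's centers are $\Omega(w)$ away from a layer inside the run while an optimal solution of at most $k$ centers (one per merged group plus one every third layer of the run) has cost $O(1)$. Your deviations --- evaluating the algorithm's cost at the middle layer of the run rather than at $V^{(i_{k'})}$, and excluding in the pigeonhole the up to three layers adjacent to centers placed on deleted (off) vertices --- are minor robustness refinements of the same argument rather than a different approach.
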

\begin{proof}
Denote $G\defeq(V,E)\defeq\kg[t][0]$ and $A\defeq\actno[t][0]$.
By \cref{lem:kg-actnum}, the number of active nodes in $G$ is
at least $96t/100\geq t$, which implies that $|A| \ge 96t/100$. 
Thus,  after operation $t$,  the number $n$ of current points is at least  $96t/100$.

Without loss of generality, we assume that $G[A]$ has exactly one connected component: If this is not the case, let $C_1, \ldots, C_s$ be the connected components of $G[A]$ and observe that we may insert a path connecting the connected components by inserting
into $G[A]$ edges  $(v_1,v_2), \ldots, (v_{s-1}, v_s)$ of length $1$, where $v_i \in C_i$ are arbitrary vertices. This increases the maximum degree of nodes in $G$ by at most $2$ and
the shortest-path metric $M$ on the resulting graph that is constructed in this way is an augmented graph metric for $t$. As $t$ is clean, Lemma~\ref{lem:metric-consistent} shows that $M$ is consistent.

Let $x \in V$ be any node. By \cref{lem:max-degree}, the number $n^{(i)}$ of nodes that have distance $i$ to $x$ is at most $\sum_{j \in [i]} (100f(1,t)+2)^j \leq (100f(k,t)+2)^{i+1}$. 
Consider the largest $\ell$ such that $(100f(k,t)+2)^{\ell+1} < n$. It follows that there exists a node $z$ at distance $\ell+1$ to $x$. Furthermore $(100f(k,t)+2)^{\ell+2} \ge n^{(\ell+1)} \ge n$, which implies that
$\ell+2 \ge  \log_{100f(k,t)+2} n.$

Let $S = \{ s_1, \ldots, s_{g(k)} \}$ be the solution of the algorithm. For $i \in [\ell+1]$, let us define $V^{(i)}$ to be the set of vertices $v$ in $G[A]$ with $d_{G[A]}(x,v) = i$. By pigeon hole principle, there must exist a consecutive sequence $(i_1, \ldots, i_m)$ so that $m \geq \ell / (g(k)+1)$ and, for all $i \in \{ i_1, \ldots, i_m \}$, $S \cap V^{(i)} = \emptyset$. Let $k' =  \min\{3k-1, m/2\}$. Consider the metric $M_{i_1,i_{k'}}(x)$ and let $y_{i_1}, \ldots, y_{k'}$ be elements from the respective sets $V^{(i_1)}, \ldots,V^{(i_{k'})}$.

The algorithm's solution $S$ has cost at least $k'$ because $\dsp[G]{S}{V_{i_{k'}}} \geq k'$. The solution $\{ y_{3j-2} \mid j \in \N \wedge 3j-2 \in [k'] \}$ is optimal and has cost $1$. It follows that the approximation factor of $S$ is greater than or equal to $k' = \min\{3k-1, m/2\}$.

Let $n$ be the number of points at iteration $t$. We calculate that
\begin{alignat*}{1}
\min\left\{ 3k/2,m/2\right\}  & \ge\Omega\left(\min\left\{ k,m\right\} \right)\\
 & \ge\Omega\left(\min\left\{ k,\frac{\ell}{g(k)+1}\right\} \right)\\
 & \ge\Omega\left(\min\left\{ k,\frac{1}{g(k)}\left(\frac{\log n}{\log{(102f(k,t))}}-1\right)\right\} \right)\\
 & \ge\Omega\left(\min\left\{ k,\frac{\log n}{k\log{(102f(k,2n))}}\right\} \right)\\
 & \ge\Omega\left(\min\left\{ k,\frac{\log n}{k\log102+k\log{f(k,2n)}}\right\} \right)\\
 & \ge\Omega\left(\min\left\{ k,\frac{\log n}{k\log{f(k,2n)}}\right\} \right)
\end{alignat*}
using that $g(k)=O(k)$.
\end{proof}

\section{Algorithm for $k$-Sum-of-Radii and $k$-Sum-of-Diameter}

\SetKwFunction{FnPrimalDual}{PrimalDual}
\SetKwFunction{FnPrune}{Prune}

In this section, we prove the correctness of \cref{alg:pd} for the $k$-sum-of-radii and the $k$-sum-of-diameter problem as described in \cref{sec:primal-dual}. To compute a solution in the static setting, the algorithm is invoked as $\FnPrimalDual(P, \emptyset, R, z, k, \epsilon, 0, 0, 0)$, where $P$ is the set of input points, $R$ is the set of radii, $z = \epsilon \OPT' / k$ is the facility cost and $k$ is the number of clusters. \Cref{alg:pd} is a pseudo-code version of the algorithm described in \cref{sec:primal-dual}.

\begin{algorithm}
	\SetKwData{null}{null}
	\KwData{point set $P$, unassigned point sets $U = \{ U_0, \ldots \}$, radii set $R$, facility cost $z$, number of clusters $k$, precision $\epsilon$, primal vector $x = \{ x^{(r)}_p \mid r \in R \wedge p \in P \}$, dual vector $Y = \{ y_p \mid p \in P \}$}
	\Fn{\FnPrimalDual($P, U, R, z, k, \epsilon, x, y, i$)}{
		\While{$U_i \neq \emptyset$}{
			$p_i \gets$ uniformly random point from $U_i$ \;
			$\delta_i \gets \max ( \{ 0 \} \cup \{\delta' \mid \delta' \in \R \wedge \forall r' \in R : \sum_{p' \in P : d(p_i, p') \le r'} y_{p'} + \delta' \leq r'/2 + z \} )$ \;
			\If{$\delta_i > 0$}{
				$r_i \gets \max \{r' \mid r' \in R \wedge \sum_{p' \in P : d(p_i, p')\le r'} y_{p'} + \delta_i = r'/2 + z \}$ \;
			}
			\Else{
				$r_i \gets \max \{r' \mid r' \in R \wedge \sum_{p' \in P : d(p_i, p')\le r'} y_{p'} \geq r'/2 + z \}$ \;
			}
			$y_{p_i} \gets \delta_i$; $x^{(2r_i)}_{p_i} \gets 1$ \;
			$U_{i+1} \gets U_i \setminus \{ p' \mid p' \in U_i \wedge d(p_i, p') \leq 2r_i \}$ \;
			$i \gets i + 1$ \;
			\If{$i > (2k/\epsilon)^2$}{
				\Return{``$\OPT' < \OPT$''} \;
			}
		}
		$\bar{S} \gets \emptyset$; $S \gets$ sort $(p_j, r_j)_{j \in [i-1]}$ non-increasingly according to $r_j$ \;
		\ForAll{$(p,r) \in S$}{
			\If{$\nexists (p_j,r_j) \in \bar{S} : d(p, p_j) \leq r + r_j$}{
				$\bar{S} \gets \bar{S} \cup \{ (p, 3r) \}$ \;
			}
		}
		\Return{$\bar{S}, U, x, y, r, i$} \;
	}
	\caption{\label{alg:pd} Pseudo code of the primal-dual algorithm for $k$-sum-of-radii as described in \cref{sec:primal-dual}.}
\end{algorithm}

\subsection{Proof of \cref{lem:pd-iteration-time}}

We bound the running time of a single primal-dual step. This implies \cref{lem:pd-iteration-time}.

\begin{lemma}[\cref{lem:pd-iteration-time}]
	The running time of the $i$\xth/ iteration of the while-loop in \FnPrimalDual is $O(ik / \epsilon + \lvert U_i \rvert)$.
\end{lemma}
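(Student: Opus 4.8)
The plan is to charge the cost of the $i$-th iteration of the while-loop line by line, using two structural observations. First, at the start of the iteration the only points $p'$ with $y_{p'}>0$ are among the previously chosen centers $p_1,\dots,p_{i-1}$ (no point acquires a positive dual value until it is selected, and in the static invocation \FnPrimalDual$(P,\emptyset,R,z,k,\epsilon,0,0,0)$ nothing is ever deleted); I would have the algorithm maintain the list of pairs $(p_j,r_j)$ produced so far, appended to in $O(1)$ time per iteration, so that at iteration $i$ these at most $i-1$ points are available directly. Second, $|R|=\OPT'/z=k/\epsilon$, because $R=\{z,2z,\dots,\OPT'\}$ and $z=\epsilon\OPT'/k$. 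Throughout, a single distance evaluation $d(\cdot,\cdot)$ costs $O(1)$.

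With these in hand the accounting is routine. Sampling $p_i$ from $U_i$ costs $O(|U_i|)$. For $\delta_i$, the point is that for each $r'\in R$ the sum $\sum_{p':d(p_i,p')\le r'}y_{p'}$ ranges only over $\{p_1,\dots,p_{i-1}\}$ and is therefore computed in $O(i)$ time by scanning that list; doing this for all $r'\in R$ and taking the resulting minimum gives $\delta_i$ in $O(i\,|R|)=O(ik/\epsilon)$ time. Caching these $|R|$ partial sums, the subsequent choice of $r_i$ in either branch of the conditional is another $O(|R|)=O(k/\epsilon)$ scan, and updating $y_{p_i}$, $x^{(2r_i)}_{p_i}$, and the maintained list is $O(1)$. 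Forming $U_{i+1}=U_i\setminus\{p'\in U_i : d(p_i,p')\le 2r_i\}$ costs one distance evaluation per point of $U_i$, i.e.\ $O(|U_i|)$, and the counter test $i>(2k/\epsilon)^2$ is $O(1)$. Adding these contributions gives $O(|U_i|)+O(ik/\epsilon)+O(k/\epsilon)+O(|U_i|)+O(1)=O(ik/\epsilon+|U_i|)$, which is the claimed bound, and hence \cref{lem:pd-iteration-time} follows.

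There is no genuine obstacle here; the only thing that requires care is the bookkeeping that lets every sum ``$\sum_{p'\in P}$'' in the computation of $\delta_i$ and $r_i$ be restricted to the at most $i-1$ points with positive dual value rather than to all of $P$. This is precisely the running-time saving over the analogous primal-dual routine of Charikar and Panigrahy~\cite{CharikarP04} highlighted in the technical overview, and it is what keeps a $|P|$ term out of the per-iteration bound.
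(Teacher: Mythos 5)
Your proof is correct and follows essentially the same route as the paper's: both arguments rest on observing that at most $i-1$ dual variables are non-zero at the start of iteration $i$ (maintained as a list), that $\lvert R\rvert\le k/\epsilon$ so $\delta_i$ and $r_i$ cost $O(ik/\epsilon)$, and that building $U_{i+1}$ is a single $O(\lvert U_i\rvert)$ scan. Nothing further is needed.
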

\begin{proof}
	In each iteration, at most one entry of $y$, i.e., $y_{p_i}$, increases. Therefore, at most $i-1$ entries of $y$ are non-zero at the beginning of iteration $i$. By keeping a list of non-zero entries, each sum corresponding to a constraint of the dual program can be computed in time $O(i)$. Since $\lvert R \rvert \leq k / \epsilon$, it follows that $\delta_i$ and $r_i$ can be computed in time $O(ik/\epsilon)$. To construct $U_{i+1}$, it is sufficient to iterate over $U_i$ once.
\end{proof}

\subsection{Proof of \cref{lem:pd-iterations-bound}}

We show that our choice of $z = \epsilon \OPT' / k$ will result in a solution $\bar{S}$ if $\OPT \leq \OPT'$.

\lemPdIterationsBound*
\begin{proof}
	Since, by weak duality, $\sum_{p \in P} y_p$ is a lower bound to the optimum value of $(P)$, we bound the number of iterations that are sufficient to guarantee $\sum_{p \in P} y_p > \OPT'$. Call an iteration of the while-loop in \FnPrimalDual \emph{successful} if $\delta_i > 0$, and \emph{unsuccessful} otherwise. First, observe that for each successful iteration $i$, the algorithms increases $y_{p_i}$ by at least $z/2$. This is due to the fact that all values in $R$ are multiples of $z$, and therefore $r/2 + z$ is a multiple of $z/2$ for any $r \in R$. Since the algorithm increases $y_{p_i}$ as much as possible, all $y_{p_i}$ are multiples of $z/2$. Therefore, after $\OPT' / (z/2) +1= 2k / \epsilon +1$ successful iterations, it holds that $\OPT \geq \sum_{p \in P} y_p > \OPT'$.
	
	Now, we prove that for each successful iteration, there are at most $\lvert R \rvert$ unsuccessful iterations. Then, it follows that after $((2k/\epsilon)+1) \cdot \lvert R \rvert \leq (2k/\epsilon)^2 + (2k/\epsilon)$ iterations, $\OPT > \OPT'$. Let $i$ be an unsuccessful iteration. The crucial observation for the following argument is that for the maximum $r_i \in R$ so that $(p_i, r_i)$ is at least half-tight and for any $j < i$ so that $d(p_i, p_j) \leq r_i$ and $y_{p_j} > 0$, we have $r_j < r_i$: otherwise, $p_i$ would have been removed from $U_j$. On the other hand, such $p_j$ must exist because the dual constraint $(p_i, r_i)$ is at least half-tight but $y_{p_i} = 0$. Now, we \emph{charge} the radius $r_i$ to the point $p_j$ and observe that we will never charge $r_i$ to $p_j$ again. This is due to the fact that all points $p \in U_{i-1}$ with distance $d(p_j, p) \leq r_i$ are removed from $U_i$ because $d(p_i, p) \leq d(p_i, p_j) + d(p_j, p) \leq 2r_i$. Therefore, for any successful iteration $j$ and any $r \in R$, there is at most one $i > j$ so that $r = r_i$ is charged to $p_j$, each accounting for an unsuccessful iteration. 
\end{proof}

\subsection{Proof of \cref{lem:mpt-pd-pruning-time}}

We argue that the pruning step has running time $\textrm{poly}(k/\epsilon)$ to prove \cref{lem:mpt-pd-pruning-time}.

\begin{lemma}[\cref{lem:mpt-pd-pruning-time}]
	\label{lem:pd-pruning-time}
	The for-loop in \FnPrimalDual has running time $O((k/\epsilon)^4)$.
\end{lemma}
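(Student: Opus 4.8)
The plan is to bound separately the number of iterations of the pruning for-loop and the work performed in each iteration. First I would observe that whenever control reaches the for-loop, the preceding while-loop has executed at most $(2k/\epsilon)^2$ times: the early-return condition $i > (2k/\epsilon)^2$ inside \FnPrimalDual guarantees this. Since every while-loop iteration produces exactly one pair $(p_i,r_i)$, the set $S = \{(p_j,r_j)\}$ over which the for-loop iterates satisfies $|S| \le (2k/\epsilon)^2 = O((k/\epsilon)^2)$, so the for-loop has $O((k/\epsilon)^2)$ iterations.

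Next I would bound the cost of a single iteration. The only nontrivial step is evaluating the guard $\nexists\,(p_j,r_j)\in\bar S : d(p,p_j)\le r+r_j$. This is done by scanning $\bar S$ and, for each stored pair $(p_j,r_j)$, issuing one distance query (which is $O(1)$) and performing one comparison against $r+r_j$, i.e.\ $O(1)$ work per element of $\bar S$. Since every element ever inserted into $\bar S$ has the form $(p,3r)$ for some $(p,r)\in S$ and each $(p,r)\in S$ contributes at most one such element, we have $|\bar S|\le|S| = O((k/\epsilon)^2)$ at all times. Hence each for-loop iteration costs $O((k/\epsilon)^2)$ time.

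Multiplying the two bounds gives a running time of $O((k/\epsilon)^2)\cdot O((k/\epsilon)^2) = O((k/\epsilon)^4)$ for the for-loop, which is the claim. I would also note in passing that the sort of $S$ that immediately precedes the for-loop costs $O(|S|\log|S|) = O((k/\epsilon)^2\log(k/\epsilon))$, which is dominated, so the whole pruning step runs in $O((k/\epsilon)^4)$ time as well. There is no genuine obstacle here; the only point requiring a moment's care is deducing the bounds $|S|,|\bar S| = O((k/\epsilon)^2)$ from the early-termination condition and from the structure of the pruning loop (and in particular not relying on the sharper bound $|\bar S| = O(k/\epsilon)$ from \cref{lem:mpt-pd-pruning-time}, so as to keep this running-time argument self-contained), after which the estimate is immediate.
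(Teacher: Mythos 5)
Your proposal is correct and follows essentially the same argument as the paper: bound $\lvert S\rvert = O((k/\epsilon)^2)$ via the early-termination condition on $i$, note that each for-loop iteration only scans $\bar S$ with $\lvert\bar S\rvert\le\lvert S\rvert$, and observe that the sorting cost is dominated, giving $O((k/\epsilon)^4)$ overall.
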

\begin{proof}
	Since $i \leq 2(2k/\epsilon)^2$, it holds that $\lvert S \rvert \leq \lvert \{ p \in P \mid \exists  r \in R : x^{(r)}_p > 0 \} \rvert \leq 2(2k/\epsilon)^2$. Therefore, sorting $S$ requires at most $O((2k/\epsilon)^2 \log (2k/\epsilon))$ time. In each iteration of the loop, it suffices to iterate over $\bar{S}$. Since $\lvert \bar{S} \rvert \leq \lvert S \rvert$, the claim follows.
\end{proof}

\subsection{Proof of \cref{lem:pd-offline-time}}

\lemPdOfflineTime*
\begin{proof}
	Recall that we sort points in $S$ non-increasingly. Therefore, for every point $p$, there exists only one $(p,r) \in S$ that is inserted into $\bar{S}$. It follows that $\lvert \bar{S} \rvert \leq \lvert S \rvert \leq (2k/\epsilon)^2$ and since $\lvert \hat{S} \rvert \leq k$, it suffices to iterate over $\bar{S}$ for every $(\hat{p}, \hat{r}) \in \hat{S}$.
\end{proof}

\subsection{Proof of \cref{lem:pd-cost-a}}

We turn to the feasibility and approximation guarantee of the solution that is computed by \cref{alg:pd}. First, we observe that the dual solution is always feasible.

\begin{lemma}
	\label{lem:always-dual-feasible}
	During the entire execution of \cref{alg:pd}, no dual constraint $(p, r)$ is violated.
\end{lemma}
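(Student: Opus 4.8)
Throughout, write $B(p,r) := \{p' \in P : d(p,p') \le r\}$ for the closed ball. The plan is to prove the invariant by induction over the iterations of the while--loop of \cref{alg:pd}, using that the only step that modifies $y$ is the assignment $y_{p_i} \gets \delta_i$ in iteration $i$, and that each left--hand side $\sum_{p' \in B(p,r)} y_{p'}$ is monotone non--decreasing in $y$; hence it suffices to check, for every iteration $i$, that all constraints hold immediately after $y_{p_i}$ has been raised (in the base case $y \equiv 0$, so every constraint reads $0 \le r+z$). Two facts drive the inductive step. First, by the very definition of $\delta_i$ in \cref{alg:pd}, after iteration $i$ \emph{every} ``own'' constraint of $p_i$ is at most half--tight, i.e.\ $\sum_{p' \in B(p_i,r')} y_{p'} \le r'/2 + z$ for all $r' \in R$; in particular, taking the smallest radius $r' = z \in R$ gives the convenient a~priori bound $y_{p_i} = \delta_i \le 3z/2$ for every $i$ (since $y_{p_i}$ was $0$ before). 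Second, I would isolate the \emph{separation property} as a preliminary claim: for $j < j'$ one has $d(p_j, p_{j'}) > 2r_j$, since $p_{j'} \in U_{j'} \subseteq U_{j+1} = \{p' \in U_j : d(p_j, p') > 2r_j\}$; consequently no point selected after $p_j$ ever enters $B(p_j, r_j)$, so after iteration $j$ the sum $\sum_{p' \in B(p_j,r_j)} y_{p'}$ is frozen at (at least) $r_j/2 + z$.

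Now fix a constraint $(p,r)$, $p \in P$, $r \in R$, and consider the state right after iteration $i$. If $d(p_i,p) > r$, then $y_{p_i}$ does not occur in the constraint's sum, so it is unchanged and the induction hypothesis applies. If $d(p_i,p) \le r$ and in addition $r + d(p_i,p) \le \OPT'$, I would dominate $(p,r)$ by one of $p_i$'s own constraints: let $r' \in R$ be the smallest radius with $r' \ge r + d(p_i,p)$ (it exists and is $\le \OPT'$ because $R$ consists of all multiples of $z$ up to $\OPT'$). Since consecutive radii in $R$ differ by $z$ and $r + d(p_i,p) \le 2r$, we get $r' < 2r + z$, hence $r' \le 2r$ as both are multiples of $z$; and $B(p,r) \subseteq B(p_i, r + d(p_i,p)) \subseteq B(p_i, r')$ by the triangle inequality. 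Therefore
\[ \sum_{p' \in B(p,r)} y_{p'} \;\le\; \sum_{p' \in B(p_i,r')} y_{p'} \;\le\; \tfrac{r'}{2} + z \;\le\; r + z, \]
so $(p,r)$ is satisfied.

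The remaining case, $d(p_i,p) \le r$ together with $r + d(p_i,p) > \OPT'$, forces $r > \OPT'/2$ and $2r > \OPT'$, so $2r \notin R$ and no single own constraint of $p_i$ can dominate $B(p,r)$ any more; this is the main obstacle. Here the plan is to bound $\sum_{p' \in B(p,r)} y_{p'}$ directly. Let $p_{j_1}, \dots, p_{j_m}$ be the selected points inside $B(p,r)$ that carry positive dual value, ordered by selection time. The separation property gives $2r_{j_a} < d(p_{j_a}, p_{j_b}) \le 2r$ for $a < b$, so $r_{j_a} < r$ for $a < m$, and I would split the contributors by distance to $p_i$: those within $\OPT'$ of $p_i$ are accounted for by the half--tight constraint $(p_i, \OPT')$, at total cost at most $\OPT'/2 + z \le r + z$; for the contributors farther than $\OPT'$ from $p_i$ I would iterate this step, charging each such point to a suitable earlier selected point whose half--tight constraint of radius at most $\OPT'$ covers it --- a charging scheme analogous to the one in the proof of \cref{lem:pd-iterations-bound} --- and using $\delta_{j_a} \le 3z/2$ to keep the accumulated mass of each group under control. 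The delicate point I expect to wrestle with is verifying that this grouping is well defined and terminates with a bounded total, i.e.\ that the factor--two slack hidden in ``half--tight'' precisely absorbs the triangle--inequality blow--up even when $r$ is close to $\OPT'$; the rest of the induction is routine.
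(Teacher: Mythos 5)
Your Case A is, in substance, the paper's entire proof. The paper argues by contradiction: if $(p,r)$ is the first constraint violated, say right after the raise of $y_{p_i}$, then $d(p,p_i)\le r$, the triangle inequality gives $\{p':d(p,p')\le r\}\subseteq\{p':d(p_i,p')\le 2r\}$, and hence the constraint $(p_i,2r)$ would be strictly more than half-tight, contradicting the choice of $\delta_i$; the factor $2$ hidden in ``half-tight'' exactly pays for doubling the radius, since $r+z=(2r)/2+z$. Your inductive formulation with a dominating radius $r'\in R$, $r+d(p_i,p)\le r'\le 2r$, is the same mechanism, just stated a bit more carefully (you insist that the dominating radius actually lies in $R$), and that part of your argument is correct.

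However, as submitted the proof is incomplete: Case B, where $d(p_i,p)\le r$ but $r+d(p_i,p)>\OPT'$, is only a sketch. The charging scheme is not specified (how contributors at distance more than $\OPT'$ from $p_i$ are grouped, why the recursion terminates, and why the accumulated mass stays below $r+z$), and the case is not routine: for $r>\OPT'/2$ the points of the ball around $p$ can be pairwise farther than $\OPT'$ apart, so no single half-tight constraint of any selected point dominates the ball, and the per-point bound $\delta_j\le 3z/2$ by itself does not limit the number of contributors. Note that the step you are stuck on is exactly the step the paper's proof passes over silently: the contradiction with $(p_i,2r)$ is only available when $2r\le\OPT'$, i.e.\ $2r\in R$, because the algorithm enforces half-tightness of $p_i$'s constraints only for radii in $R$; for $r>\OPT'/2$ nothing in the algorithm controls the mass at radius $r+d(p_i,p)$. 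Your difficulty is therefore substantive rather than a failure of technique: one can arrange several groups of points, pairwise at distance $2\OPT'$ but all within distance $\OPT'$ of a common point $p$, in which each group independently accumulates dual mass close to $\OPT'/2$ while every constraint the algorithm ever inspects remains at most half-tight, so the sum over the radius-$\OPT'$ ball around $p$ is not automatically bounded by $\OPT'+z$. Closing Case B thus seems to require strengthening the invariant (e.g.\ checking half-tightness for radii up to $2\OPT'$, or restricting the claim to $r\le\OPT'/2$ and adjusting its later uses), not the charging argument you outline; as it stands, your proof has a genuine gap precisely where the paper's own argument is implicit.
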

\begin{proof}
	For the sake of contradiction, let $i$ be the first iteration of the while-loop in \FnPrimalDual after which there exists $(p,r)$ such that $\sum_{p' \in P : d(p, p') \leq r} > r + z$. From this definition, it follows that $d(p, p_i) \leq r$ as only $y_{p_i}$ has increased in iteration $i$. By the triangle inequality, for all $p' \in P$ so that $d(p,p') \leq r$, we have that $d(p_i,p') \leq 2r$. Therefore, the dual constraint $(p_i, 2r)$ is more than half-tight:
	\begin{align*}
		\sum_{p' \in P : d(p_i, p') \leq 2r} y_{p'}
		\geq \sum_{p' \in P : d(p, p') \leq r} y_{p'}
		> r + z
		= \frac{2r}{2} + z .
	\end{align*}
	This is a contradiction to the choice of $(p_i, r_i)$.
\end{proof}

The primal solution is also feasible to the LP, and its cost is bounded by $6\OPT'$.

\lemPdCostA*
\begin{proof}
	Since $U_i = \emptyset$, $x$ is feasible at the end of \cref{alg:pd}. By the construction of $\bar{S}$, it follows that $\bar{x}$ is feasible. By \cref{lem:always-dual-feasible}, $y$ is always feasibile for (D). It remains to bound the cost of $\bar{S}$.

	By the construction of $\bar{S}$, for any $(p, r) \in \bar{S}$, $x^{(r/3)}_p = 1$ and $(p, r/3)$ is at least half-tight in $y$. In other words, $r + z \le 6 \cdot (r/(2 \cdot 3) + z) = 6 \sum_{d(p,p') \leq r/3} y_p$. Let $\bar{S}(p,r) = \{ p' \in P \mid d(p, p') \leq r \}$. For all $(p_1,r_1), (p_2,r_2) \in \bar{S}$, $p_1 \neq p_2$, we have that $\bar{S}(p_1,r_1/3)$ and $\bar{S}(p_2,r_2/3)$ are disjoint due to the construction of $\bar{S}$. Therefore, it holds that
	\begin{equation*}
		\sum_{p\in P}\sum_{r\in R}\bar{x}_{p}^{(r)}(r+z)
		= \sum_{(p,r) \in \bar{S}} (r+z)
		\leq 6\cdot\sum_{p\in P}y_{p} .
	\end{equation*}
	By weak duality, $6\cdot\sum_{p\in P}y_{p} \leq 6\OPT'$.
\end{proof}

\subsection{Proof of \cref{lem:pd-cost-b}}

Finally, we prove that the pruned solution is a feasible $k$-sum-of-radii solution with cost bounded by $(13.008 + \epsilon)\OPT'$.

\lemPdCostB*
\begin{proof}
	First observe that the cost of $\hat{S}$ is bounded by $2\cdot 3.504 \cdot \OPT$ since we can construct a solution with cost at most $2\OPT$ that covers $\bar{C}$ using only centers from $\bar{C}$: for each point $p \in \bar{C}$, take the point $p' \in \OPT$ covering $p$ with some radius $r'$, and select $p$ with  radius $2r'$.

	Let $p \in P$, let $(\bar{p}, \bar{r}) \in \bar{S}$ so that $d(p, \bar{p}) \leq \bar{r}$ and let $(\hat{p}, \hat{r}) \in \hat{S}$ that was chosen to cover $\bar{p}$. By the triangle inequality, $d(p, \bar{p}) \leq \hat{r} + \bar{r}$. 
	Let $(\hat{p}, \tilde{r})$ be the corresponding tuple in $\tilde{S}$. By the construction of $\tilde{S}$, we have that $\hat{r} + \bar{r} \leq \tilde{r}$. Therefore, $\tilde{S}$ is feasible. It follows that the cost of $\tilde{S}$ is at most

	\begin{equation*}
		\sum_{(\hat{p},\hat{r})\in \hat{S}}\hat{r}+\sum_{p\in P}\sum_{r\in R}\bar{x}_{p}^{(r)}(r+z)
		\leq (2\cdot3.504 + 6 + \epsilon) \OPT'. \qedhere
	\end{equation*}
\end{proof}

\subsection{Proof of \cref{thm:pd-main}}

\thmPdMain*
\begin{proof}
	Consider a fixed choice of $\OPT'$. This assumption will be removed at the end of the proof. We invoke \cref{alg:pd} on the initial point set as for the static setting. Consider an operation $t$, and let $\breve{S}, \breve{U}, \breve{x}, \breve{y}, \breve{r}, \breve{i}$ be the state before this operation.

	If a point $p$ is inserted, the algorithm checks, for every $j \in \{ 1, \ldots, i-1 \}$, if $d(p_j, p) \leq 2r_i$. If this is not the case for any $j$, $p$ is added to $\breve{U_j}$ and the algorithm proceeds. Otherwise, the algorithm stops. If the last check fails and $i > (2k/\epsilon)^2$, the algorithm stops, too. Otherwise, it runs $\FnPrimalDual(P, \breve{U}, R, z, k, \epsilon, x, y, i)$ with the updated $\breve{U}$. In any case, the point $p$ deposits a budget of $2k/\epsilon$ tokens for each possible $U_i$, $i \in [2k/\epsilon]$, i.e., $(2k / \epsilon)^2$ tokens in total. By \cref{lem:pd-iteration-time,lem:pd-pruning-time}, the total running time is $O((2k/\epsilon)k/\epsilon + 0 + (k/\epsilon)^4 + (2k / \epsilon)^2) = O((k/\epsilon)^4)$.

	If a point $p$ is deleted, the algorithm deletes $p$ from all $\breve{U_i}$ it is contained in. If for all radii $r \in R$ we have that $\breve{x}^{(r)}_p = 0$ then the algorithms stops. Otherwise, let $j$ be so that $p \in \breve{U}_j \setminus{\breve{U}}_{j+1}$, i.e., $p$ is the center of the $j$\xth/ cluster. The algorithm sets, for all $r \in R$ and all $j' \geq j$, $\breve{x}^{(r)}_{p_{j'}} = 0$, $\breve{y}_{p_{j'}} = 0$ and, for all $j' > j$, $\breve{U}_{j'} = \emptyset$. Then, it calls $\FnPrimalDual(P, \breve{U}, R, z, k, \epsilon, \breve{x}, \breve{y}, j)$. By \cref{lem:pd-iteration-time,lem:pd-pruning-time}, the total running time is $O((2k/\epsilon)^2 k/\epsilon + (2k/\epsilon)\lvert U_j \rvert + (k/\epsilon)^4)$.
	
	The correctness of the algorithm follows from the fact that if $\OPT' \geq \OPT$, the algorithm produces a feasible solution irrespective of the choice of the $p_i$, $i \in [2k / \epsilon]$, by \cref{lem:pd-cost-b} and observing that the procedure described above simulates a valid run of the algorithm for $P = P_{t-1} \cup \{ p \}$ and $P = P_t \setminus \{ p \}$, respectively. Finally, we prove that the expected time to process all deletions up to operation $t$ is bounded by $O(t \cdot 2k/\epsilon)$. The argument runs closely along the running time analysis in~\cite{ChanGS18}.
	
	Let $t' \leq t$, $j \in [2k/\epsilon]$ and let $\bar{U}^{(t')}_j$ be the set $U^{(t')}_i$ that was returned by \FnPrimalDual after the last call that took place before operation $t'$ so that the argument $i$ is such that $i \leq j$. Note that this is the last call to \FnPrimalDual before operation $t'$ when $U_j$ is reclustered. We decompose $U^{(t')}_j$ into $A^{(t')}_j = U^{(t')}_j \setminus \bar{U}^{(t')}_j$ and $B^{(t')}_j = U_j \cap \bar{U}^{(t')}_j$ and define the random variable $T^{t'}_i$, where $T^{t'}_i = \lvert B^{(t')} \rvert$ if operation $t'$ deletes center $p_i$ and $T^{t}_i = 0$ otherwise. Next, we bound $E[\sum_{t' < t} \sum_{i \in [2k / \epsilon]} T^{(t')}_i]$. For $t' <t$ and $i \in [2k / \epsilon]$, consider $E[T^{(t')}_i]$. Since $p_i$ was picked uniformly at random from $B^{(t')}$, the probability that operation $t'$ deletes $p_i$ is $1 / \lvert B^{(t')} \rvert$. Therefore, $E[T^{t'}_i] = 1$. By linearity of expectation, $E[\sum_{t' < t} \sum_{i \in [2k / \epsilon]} T^{(t')}_i] \leq 2k/\epsilon$. If operation $t'$ deletes $p_j$, $U_j$ is reclustered at operation $t'$ and any point in $A$ is not in $A^{(T^{t'}_i)}$ for any $t'' \geq t'$. Therefore, each point $p \in A$ can pay $(2k/\epsilon)$ tokens from its insertion budget if $p_j$ is deleted. The expected amortized cost for all operations up to operation $t$ is therefore at most $O((2k/\epsilon)^2 k/\epsilon + (2k/\epsilon)^2 + (k/\epsilon)^4) = O((k/\epsilon)^4)$.

	Now, we remove the assumption that $\OPT'$ is known. Recall that $d(x,y) \geq 1$ for every $x,y \in X$, and $\dmax = \max_{x,y \in X} d(x,y)$. For every $\Gamma \in \{ (1+\epsilon)^i \mid i \in [\lceil \log_{1+\epsilon} (k \dmax) \rceil]$, the algorithm maintains an instance of the LP with $\costbound = (1+\epsilon)^\Gamma$. After every update, the algorithm determines the smallest $\Gamma$ for which a solution is returned. Recall that the algorithm from \cite{CharikarP04} takes time $O(n^{O(1/\epsilon)})$. The total expected amortized cost is $O(k^{O(1/\epsilon)} \log \dmax)$.
\end{proof}

\section{Algorithm for $k$-center}

\subsection{Proof of \cref{lem:mpt-rnd-kcen-linup}}

Our strategy is to maintain a set of at most $k+1$ points $C\subseteq P$
so that any two points $c,c'\in C$ satisfy that $d(p,p')>2\OPT'$.
If $|C|=k+1$ then this asserts that there can be no solution with
value $\OPT'$. If $|C|\le k$ then we will ensure that $C$ forms
a feasible solution of value $2\OPT'$. 

The main idea is as follow: We say that two points $p,p'\in P, p \ne p',$ are \emph{neighbors} if $d(p,p')\le2\OPT'$. We dynamically maintain a counter $a_{p}\in\{0,...,k+1\}$ for each point $p\in P$ such that $a_{p}$ denotes the number of neighbors of $p$ in $C$. 
As shown in the next lemma, this is sufficient to maintain a suitable set $C$ and leads to an algorithm with  amortized update time $O(n+k)$. 

\lemRndKcenLinup*
\begin{proof}
The algorithm maintains a set of at most $k+1$ points $C\subseteq P$
so that any two points $c,c'\in C$ satisfy that $d(p,p')>2\OPT'$.
If $|C|=k+1$ then this asserts that there can be no solution with
value $\OPT'$. If $|C|\le k$ then we will ensure that $C$ forms
a feasible solution of value $2\OPT'$. 
Our algorithm maintains as data structure for each point $p \in P$ a counter
$a_{p}\in\{0,...,k+1\}$ such that $a_{p}$
denotes the number of neighbors of $p$ in $C$. In particular, if
$p\in C$ then it will always hold that $a_{p}=0$.

\emph{Insertions.} When a new point $p$ is inserted into $P$, we compute $a_{p}$
by determining its distance to all centers. If $a_{p}=0$ and $|C|\le k$ then we add $p$ into
$C$. Then, we increment (by one) the counter $a_{p'}$ for each neighbor
$p'\in P$ of $p$.

\emph{Deletions.} Suppose that a point $p\in P$ is deleted. If $p\not\in C$
we do not change anything else. If $p\in C$, we decrement
(by one) the counter of each neighbor $p'\in P$ of $p$. 
Then, we iterate over $P\setminus C$ in arbitrary order.
If for a point $p'\in P\setminus C$ we have that $a_{p'}=0$ and $|C|\le k$,
then we add $p'$ into $C$ and increment the counter $a_{p''}$ for
each neighbor $p''\in P$ of $p'$. Then we consider the next point
in $P\setminus C$. Note that it could be that for a point $p'\in P\setminus C$ its counter
$a_{p'}$ equals to 0 immediately after $p$ was deleted, but by the time that 
$p'$ is considered, i.e. $a_{p'}$ is checked, it holds that $a_{p'}>0$. In this case $p'$ does, of course, not become a center.

\emph{Preprocessing.} When started on a non-empty set $P$, the algorithm simply inserts every element of $P$ into an initially empty data structure.

\emph{Correctness.} 
Note that the algorithm maintains the following invariant.

\emph{(INV) If $|C|\le k$, then for all points $p \in P \setminus C$ it holds that $a_p >0$.
}

This can be shown by an easy induction over the number of operations.
It follows that whenever $|C|\le k$, then $C$ forms a feasible solution with value at most $2\OPT'$ as every point in $P$ is within distance at most $2\OPT'$ of a point in $C$. 
If $|C|= k+1$ then all points in $C$ are at distance more than $2\OPT'$ of each other, and hence we can assert that $\OPT>\OPT'$.

\emph{Running time analysis.} 
We now show that our amortized update time is $O(n+k)$. Let  $c_t$ be the number of centers after update operation $t$ and $n_t$ be the size of $P$ after update operation $t$. Operation 0 is the preprocessing and $n_0$ is the size of the initial set $P$ and $c_0$ the number of centers after preprocessing.

\emph{Worst-case running time.}
We first analyze the worst-case running time of update operation $t$ for $t \ge 0$.
When update operation $t$ inserts a point $p$, we need 
$\Theta(k)$
time	to initialize the counter of $p$ by counting its neighbors in $C$.
Additionally, if $p$ becomes a center, it incurs a cost of $\Theta(n_t)$ to update the counters of its neighbors.
Thus insertions have $\Theta(n_t)$ worst-case time.
Furthermore, only insertions that create a new center take time $\Theta(n_t)$, 
all other insertions take time 
$\Theta(k)$.
This implies that the worst-case preprocessing time is 
$\Theta(n_0 k)$.

Consider next the case that update operation $t$ deletes a point $p$.
When a point $p\in P\setminus C$ is deleted, the running time is constant. If a center $p$ is deleted, the algorithm incurs a cost of $O(n_t)$ to update the counters of its neighbors. Additionally, all the neighbors $q$ with $a_q = 0$ are placed on a queue and processed one after the other. 
	If a point $q$ has $a_q = 0$ when it is pulled off the queue, it becomes a center and there is cost
	of $O(n_t)$ to update the counters of all neighbors of $q$. If $a_q > 0$, $q$ does not become a center and only $O(1)$ time is spent on $q$. Thus, the worst-case time per delete is $O((1+\delta_t )\cdot n_t)$, where $\delta_t = c_t - c_{t-1}$. 

\emph{Token-charging scheme.}	
	We next show how to pay for these operations using amortized analysis. We use a token-base approach such that each token can be used to pay for $O(1)$ amount of work.
	The preprocessing phase is charged $(2 k+1) n_0$ tokens, $k n_0$ of its tokens are used to pay for the preprocessing time and the remaining $(k+1) n_0$ of the tokens are placed on the bank account.
	It follows that the amortized preprocessing time is $\Theta(k n_0)$, as is its worst-case running time.
	
	Consider the $t$-th update operation.
	Each update operation is charged $n_t + k$ tokens, where $n$ is the current number of points. For insertions we use at most $n_t$ tokens to pay for the operation and place $k$ tokens on a bank account. 
	It follows that the amortized insertion time is $O(n_t + k) = O(n + k).$
	
	Assume next that the $t$-th update operation is a deletion. 
	We will show below that for any $t \ge 1$ the bank account contains at least $\delta_t n_{t-1}$ tokens right before operation $t$.
	We use the $n_t+k$ tokens charged to the deletion plus $\delta_t n_t$ tokens from the bank account to pay for the deletions and we put any leftover tokens on the bank account. 
	As the worst-case running time of a deletion is $O((1+\delta_t )\cdot n_t)$, it follows that
	the amortized time of the deletion is $O((1+\delta_t )\cdot n_t - \delta_t n_t + (n_t + k)) = O(n_t + k) = O(n + k)$.

It remains to prove that the bank account contains at least $\delta_t n_{t-1}$ tokens right before the $t$-th operation. Note that $\delta_t = c_{t} - c_{t-1} \le k +1- c_{t-1}$.
Thus it suffices to show that the bank account  contains at least $(k+1-c_{t-1}) \cdot n_{t-1}$ tokens before operation $t$ for $t \ge 1$, i.e. at least $(k+1-c_{t}) \cdot n_{t}$ tokens \emph{after} operation $t$ for $t \ge 0$. 
To show this we perform an induction on $t$.
	For $t = 0$ note that preprocessing placed $(k+1) n_0\ge (k+1 - c_0) n_0 $ token on the bank account and thus, the claim holds.
	Assume the claim was true after $t-1$ operations and we want to show that it holds after $t$ operation.	
We consider the following cases.

	(a) If operation $t$ is an insertion, then the number of tokens on the bank account increases by  $k$, while $c_t$ might be unchanged or increased by 1. Thus $c_t \ge c_{t-1}$ and $n_t = n_{t-1} + 1$.
	It follows that the number of tokens on the bank account is at least $(k +1- c_{t-1}) \cdot n_{t-1} +k\ge (k+1 - c_{t}) \cdot (n_t-1)  +k\ge (k +1-c_t) \cdot n_t + k - (k+1 - c_t) \ge (k +1-c_t) \cdot n_t$ as $c_t \ge 1$ after the insertion.
	
	(b) If operation $t$ is a deletion, then $n_t = n_{t-1}-1$,
	the operation is charged $n_t + k$ and the operation consumes $(1+\delta_t) n_t$ tokens.
	Recall that $c_{t-1} = c_{t} - \delta_t$ and that $c_{t-1} \le k +1 \le 2k+1$.
	Thus the number of tokens on the bank account after operation $t$ is at least $(k+1 - c_{t-1}) \cdot n_{t-1} + n_t + k - (1+\delta_t) n_t= (k+1 - c_t + \delta_t +1 - 1 - \delta_t)\cdot n_t + 2k +1- c_{t-1}
	\ge (k+1 - c_t) \cdot n_t$.
	
	This completes the induction and, thus, the running time analysis.
	
	It follows directly by construction of the algorithm that any point in $P$ becomes a center at most once, and then stays a center until it is deleted. Note that after a point $p$ has become a center, it might be that at some point $\OPT>\OPT'$ is reported and no center is output. However, if later on again a solution is output, then $p$ is again a center in this solution (unless $p$ has already been deleted).
\end{proof}

\subsection{Proof of \cref{lem:mpt-rnd-kcen-delonly}}

\lemRndKcenDelOnly*
\begin{proof}
	Suppose that the current solution is queried. If each bucket is small
	(which happens if we stopped the last partial rebuild because
	$U_{i}=\emptyset$ for some $i$), we output $C$. By construction,
	in this case each point $p\in P$ is contained in some cluster $P(p')$
	for some point $p'$, and we output one point from each such cluster
	$P(p')$. Each cluster has a diameter of at most $4\OPT'$, and therefore
	for each point $p\in P$ there is some center $c\in C$ such that
	$d(p,c)\le4\OPT'$. In addition, we have that $|C|\le O(k\log n)$.
	On the other hand, if there is a large bucket $C^*$ we report that
	$\OPT>\OPT'$. This is justified since $C^*$ contains at least
	$k+1$ points such that for any two such distinct points $c,c'\in C^*$
	it holds that $d(c,c')>2\OPT'$.
\end{proof}

\section{Lower bound for randomized algorithms against an oblivious adversary}

\aw{In this section we prove \cref{lem:mpt-lb-rand-cost}, i.e., we show that if an algorithm returns a
$(\Delta-\epsilon)$-approximation (for some $\epsilon > 0$) of the cost of the optimal solution to $k$-center, then} it needs to ask
$\Omega(k)$ queries to the adversary for each insertion after the first $k$ insertions. \aw{Then we argue that we get similar lower bounds also for $k$-median, 
$k$-means, $k$-sum-of-radii, and $k$-sum-of-diameters.
}

To avoid confusion between the query operations that the adversary asks to the algorithm and the distance queries that the algorithm asks the adversary, we formalize the latter kind of queries using the notion of
distance oracles.
A \emph{distance oracle} is a black-box that returns the distance between two points that are given as input in constant time and its output is controlled by the adversary.
This is the only way that the algorithm can get information about the metric
space, with each distance query ``costing'' constant time for the algorithm. As the distance oracle needs to give consistent answers, the distance between two queried points has been \emph{fixed}, i.e. the adversary cannot change it later.
The answers of the distance oracle are determined by an \emph{adversary} who also determines the sequence of operations given to the algorithm. 
Thus there are two types of queries that should not be confused: (1) the $k$-center-cost query operations issued by the adversary and (2) the distance-oracle queries issued by the algorithm. 
The goal of the adversary is to maximize the running time of the algorithm. 
The general approach of the lower bound construction is to give an adversary that gives a sequence of operations and reveals the  metric in such a way that the algorithm has to ask $\Omega(k)$ distance oracle queries per operation.

We first present the general idea:
Assume the adversary uses a metric space $(X,d)$  with $\min_{x,y \in X} d(x,y) = 1$ and $\max_{x,y \in X} d(x,y) = \Delta$.
Suppose by contradiction that there is a dynamic randomized algorithm for $k$-center
for which the update time is at most $k/4$, and in addition the algorithm
needs at most $k/4$ time to report the value $\ALG$.

Consider the following sequence of operations and queries. As usual,
we assume that the adversary needs to define the operations without
seeing the random bits of the algorithm.  First, the adversary introduces
$k$ points $P_{0}$ that are at pairwise distance $\Delta$ to each other.
For simplicity, we assume that the algorithm knows their pairwise
distances, without having to query the distance between any pair of
them. Thus, their distances are fixed and cannot be changed by the adversary anymore.
Then, the adversary introduces a point $p_{1}$ and afterwards it issues as $k$-center-cost
query. 

If the algorithm issues at least $k$ distance-oracle queries for the insertion of $p_1$  with probability 1, then the claim that the algorithm spends time $\Omega(k)$ per insertion holds already. 
Otherwise there is some probability $q>0$ such that if $d(p_{1},p)=\Delta$
for each $p\in P_{0}$, then the algorithm queries the distance between
$p_{1}$ and at most $k-1$ points $P'_{0}\subseteq P_{0}$. 
Thus there is at least one point $p_1^*$ in $P'_{0}\subseteq P_{0}$ such that the distance 
$d(p_{1},p^{*}_1)$ has not yet been queried and, thus, not yet been fixed.
Let $\ALG$ denote the value of the approximate solution that the algorithm reports
in this case.  Consider the following two cases:

{\emph Case 1:} If $\ALG\ge \Delta$, then the adversary takes the point $p^{*}_1$ %
and defines $d(p_{1},p^{*}_1)=1$ and $d(p_{1},p)=\Delta$ for each $p\in P_{0}\setminus\{p^{*}_1\}$.
(Note that for this metric, with probability at least $q$ the algorithm
makes the same queries as for the metric in which $d(p_{1},p)=\Delta$
for each $p\in P_{0}$, i.e.~where $d(p_{1},p^{*}_1)=\Delta$.)
It follows that the optimum solution consists of all points in $P_0$ and has cost 1, i.e., 
$OPT=1\le\ALG/\Delta$ and hence the
approximation ratio of the algorithm is at least $\Delta$ (with probability
at least $q>0$).

{\emph Case 2:} If $\ALG<\Delta$ then the adversary defines that $d(p_{1},p)=\Delta$
for each $p\in P_{0}$, also for $p_1^*$ and, thus, the cost of the optimal solution
is $\Delta>\ALG$. Therefore, in this case the algorithm reports a wrong
upper bound for its cost which is a contradiction. 

Thus, the approximation
ratio of the algorithm is again at least $\Delta$. 
Finally, the adversary
deletes $p_{1}$ again and repeats the above with a new point $p_{2}$
and so on. 
For each new point $p_{1},p_{2},...$, there are three operations of
the adversary, namely an insertion, a $k$-center cost query, and a deletion. 
The running time of the algorithm is at most $k/4$
per operation, and thus, at most $3k/4$ per point $p_{i}$. 

We formalize this bound and its proof in the following theorem.

\begin{theorem}
	Let $\epsilon > 0$. Any (randomized) dynamic algorithm for finite metric spaces that can provide a $(\Delta - \epsilon)$-approximation to the $k$-center cost of its input issues amortized $\Omega(k)$ distance-oracle queries in expectation after $O(k^2)$ operations generated by an oblivious adversary.
\end{theorem}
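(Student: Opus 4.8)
The plan is to recast the sketch given just before the statement as a clean application of Yao's minimax principle. Let $P_0:=\{q_1,\dots,q_k\}$ be $k$ ``hubs''. The hard inputs all live on the metric that partitions every point (the hubs and the to-be-inserted ``probes'') into groups, with $d(x,y)=1$ inside a group and $d(x,y)=\Delta$ across groups; this is a valid metric with aspect ratio at most $\Delta$. The operation sequence first inserts $q_1,\dots,q_k$ (their pairwise distances $\Delta$ are handed to the algorithm for free) and then runs $m:=k^2$ ``rounds'', where round $i$ inserts a fresh probe $p_i$, issues a $k$-center-cost query, and deletes $p_i$ again; this is $k+3m=O(k^2)$ operations. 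The distribution over inputs is obtained from independent coins per round: a uniform bit $b_i\in\{0,1\}$ and a uniform index $j_i\in[k]$, where $b_i=1$ puts $p_i$ into the group of $q_{j_i}$ and $b_i=0$ puts $p_i$ into a new singleton group. During round $i$ the active set is $P_0\cup\{p_i\}$, which occupies $k$ groups with $\OPT=1$ if $b_i=1$, and $k+1$ groups with $\OPT=\Delta$ if $b_i=0$; since the algorithm must report $\apx$ with $\OPT\le\apx\le(\Delta-\epsilon)\OPT$, it must report $\apx\le\Delta-\epsilon$ when $b_i=1$ and $\apx\ge\Delta$ when $b_i=0$.

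Because an always-correct randomized algorithm is a distribution over deterministic algorithms that are correct on all inputs, Yao's principle reduces the theorem to showing that every deterministic $\mathcal D$ that maintains a valid $(\Delta-\epsilon)$-approximation on all inputs issues, in expectation over the coins, at least $mk/2$ distance queries on the above input; dividing by the $k+3m=O(k^2)$ operations then gives amortized $\Omega(k)$. I would prove the per-round bound: conditioned on any outcome of the coins of rounds $<i$ (which fixes everything seen so far, as $p_i$ is new), the expected number of distance queries $\mathcal D$ issues before answering the round-$i$ cost query is at least $k/2$. Consider the ``null run'' with $b_i=0$, and let $\mathcal Q$ be the set of pairs $\mathcal D$ queries in round $i$ of that run. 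For $j\in[k]$ call a pair \emph{$j$-sensitive} if it has the form $(p_i,x)$ with $x$ in the group of $q_j$ at round $i$ (i.e.\ $x=q_j$, or $x=p_{i'}$ with $i'<i$, $b_{i'}=1$, $j_{i'}=j$); these are exactly the pairs whose distance differs between the null run and the run where instead $b_i=1,j_i=j$, and the $j$-sensitive sets are pairwise disjoint. The key coupling observation is that in the run with $b_i=1,j_i=j$, $\mathcal D$ behaves exactly as in the null run until it first queries a $j$-sensitive pair, since the operation stream is the same, the distances among $P_0$ and among $p_1,\dots,p_{i-1}$ are fixed by the conditioned history, and a deterministic algorithm's next move is a function of the transcript. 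Hence if $\mathcal Q$ contained no $j$-sensitive pair, $\mathcal D$ would behave identically in the $(b_i=1,j_i=j)$-run, in particular reporting the same $\apx\ge\Delta$, which contradicts correctness because there $\OPT=1$. Therefore $\mathcal Q$ meets the $j$-sensitive set of every $j\in[k]$, so $|\mathcal Q|\ge k$ by disjointness; as $b_i=0$ with probability $1/2$, the expected round-$i$ query count is at least $k/2$, and summing over $i$ and using linearity of expectation yields $\ge mk/2$ in total.

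The step I expect to be the main obstacle is pinning down the coupling rigorously: one must verify that ``$\mathcal D$ behaves identically until the first $j$-sensitive query'' truly holds --- that the operation stream does not depend on the coins, that all non-$j$-sensitive distances coincide between the two runs, and that once $\mathcal D$ queries a $j$-sensitive pair it receives answer $1$, learns $b_i=1$, and may then stop (so the $b_i=1$ branch contributes a possibly tiny but nonnegative number of queries, and we keep only the $b_i=0$ contribution). The rest is routine bookkeeping: checking the group metric is consistent with aspect ratio $\le\Delta$ in every realization (it is $1$ if all $b_i=0$ and $\Delta$ otherwise), spelling out the reduction from randomized to deterministic algorithms via Yao, and observing that because every round probes a newly inserted point, the per-round lower bound cannot be circumvented by front-loading queries, so $mk/2$ is a genuine lower bound on the total over the $O(k^2)$ operations.
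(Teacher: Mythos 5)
Your proposal is correct, and it is built on essentially the same hard instance as the paper: $k$ hubs at pairwise distance $\Delta$, followed by $\Theta(k^2)$ rounds that insert a fresh probe, ask a cost query, and delete the probe, with the probe either adjacent (distance $1$) to a uniformly random hub or far from everything; both proofs then invoke Yao's principle against this oblivious distribution. Where you diverge is in how the query lower bound is extracted. The paper assumes a total query budget and derives a contradiction through two averaging arguments (a good half of the sequences, then a good half of the blocks) plus a union-bound argument that an algorithm querying at most $k/4$ hub distances cannot distinguish the two metrics with probability $1$. You instead prove a per-round statement: conditioned on the history, any deterministic algorithm that is correct on the support must, in the ``null'' round, hit each of the $k$ pairwise disjoint $j$-sensitive sets, because otherwise the coupling with the run where the probe joins group $j$ leaves the transcript unchanged and forces an answer that is simultaneously $\ge\Delta$ and $\le\Delta-\epsilon$. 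This hitting-set/disjointness argument gives an unconditional $\ge k$ queries on every null round (hence $\ge k/2$ in expectation per round) and avoids the paper's global averaging and its somewhat delicate ``probability of distinguishing $<1$'' step, at the price of having to set up the coupling carefully — which you do, including the points one could easily miss: the operation stream is coin-independent, non-sensitive distances involving the fresh probe agree in both runs, queries to already deleted probes are accounted for by putting them into the hub groups, and the group metric satisfies the triangle inequality with aspect ratio at most $\Delta$. The only step to spell out in a full write-up is the reduction from an always-correct randomized algorithm to deterministic algorithms correct on the (finite) support of the distribution; this is the same level of care the paper itself glosses over, and it goes through since for each input the bad seeds form a null set.
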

\begin{proof}
	We use Yao's principle to prove the lower bound, i.e., we give a distribution of the input and determine the worst-case complexity of any deterministic algorithm for such an input distribution. Let $\ALG$ be any deterministic algorithm that \begin{inenum} \item takes as input a sequence of $t$ point insertions, deletions and cost queries, \item outputs, for each cost-query operation $t'$, a $(\Delta - \epsilon)$-approximation to the $k$-center cost of the set of input points after operation $t'-1$ and \item issues at most $tk / 256$ distance-oracle queries in expectation \end{inenum}. Since any dynamic algorithm that beats the claimed lower bound in the statement of the lemma can be turned into $\ALG$ by using it as a black-box, proving that $\ALG$ does not exist proves the claim. 
	
	We define the uniform distribution over a set $S$ of sequences, each consisting of $t \defeq k + 30k^2$ operations. For the sake of analysis, we describe $S$ by iteratively constructing an element from it according to the uniform distribution. Let $P_0$ be a set of $k$ points that have pairwise distance $\Delta$. The first $k$ operations insert the points in $P_0$. For each $i \in \{0, \ldots, 10k^2-1\}$, we choose a uniformly random point $p^{*}_i \in P_0$. Operation $k + 3i$ inserts a point $p_i$, where, for every $p \in P_0 \setminus \{p^{*}_i\}$, we have $d(p_i, p) = \Delta$. With probability $1/2$, we set $d(p_i, p^{*}_i) = 1$, and $d(p_i, p^{*}_i) = \Delta$ otherwise. Operation $k + 3i + 1$ is a $k$-center cost query, and operation $k + 3i + 2$ removes $p_i$. This way, every sequence $\sigma \in S$ is naturally partitioned into an \emph{initialization block} of $k$ operations and $10k^2$ \emph{small blocks} of $3$ operation each. As a sequence is determined by $10k^2$ many $2k$-ary choices, it follows that $\lvert S \rvert = (2k)^{10k^2}$.

	For the sake of simplicity, we reveal the identity of $P_0$ and the pairwise distances of points in $P_0$ to the algorithm. This gives additional information to the algorithm without any cost for the algorithm, i.e., it only ``helps'' the algorithm. By an averaging argument, the algorithm must query at most $tk / 128$ distances on at least half of the sequences from $S$. Let $T \subseteq S$ denote this subset of sequences. By another averaging argument, for at least half of the small blocks $i$ in a sequence from $T$, $\ALG$ queries at most $(tk/128) / k^2 \leq 32k^3 / (128k^2) = k / 4$ distances between $p_i$ and points in $P_0$. 
	
	Therefore, there must exist a $\sigma \in S$ and a corresponding small block $j$ so that $\ALG$ queries at most $k/4$ distances between $p_j$ and points in $P_0$. Let $\sigma' \in S$ be any sequence that equals $\sigma$ on the first $k + 3j$ operations (i.e., on its points and distances). When the $j$\xth/ cost query on $\sigma$ or $\sigma'$ is issued, the optimal cost can differ by a factor of $\Delta$: If $d(p_j, p^{*}_j) = 1$, the optimal cost of the instance after operation $k + 3i + 1$ is $1$; if $d(p_j, p^{*}_j) = \Delta$, the optimal cost of the instance is $\Delta$. To distinguish these two cases, the algorithm needs to determine whether there exists $p^{*}_j$ so that $d(p_j, p^{*}_j) = 1$. Since $p^{*}_j$ is chosen uniformly at random and the events $d(p_j, p^{*}_j) = 1$ and $d(p_j, p^{*}_j) = \Delta$ both have non-zero probability, the probability that $\ALG$ distinguishes these two events is at most $k/4 \cdot 1 / (\lvert P_0 \rvert - k/4) \leq k / 4 \cdot 4 / (3k) < 1$ by the union bound. This contradicts the assumption on $\ALG$.

	Finally, we observe that each of the $k^2$ points $p_i$ is chosen from a set of $k$ points, and therefore, a space of size $\lvert P _0 \rvert + k \cdot k^2  \in O(k^3)$ suffices to construct $S$. Choosing, for any $i,j$, $d(p_i, p_j) = 1$ if $p^{*}_i = p^{*}_j$ and $d(p_i, p_j) = \Delta$ otherwise, the space is metric.
	
Also note that Yao's minmax principle gives a lower bound against an oblivious adversary.
\end{proof}

Using a similar construction, we can obtain a lower bound on the approximation ratio of any center set that is computed by an algorithm 
(rather than an approximation on the cost of $OPT$ computed by an algorithm)
that queries amortized $o(k)$ distances $d(p,q)$ in expectation, where $p,q \in X$ (note that a $\Omega(k)$ bound on the running time follows already from the output complexity of a center set for a suitably chosen input). Here, we fix a set $P_0$ of $k-1$ points at pairwise distance $\Delta$ and insert two points $q_i$ and $p_i$ in each iteration. The point $p_i$ has distance $\Delta$ to all other points, while $q_i$ has distance $1$ to a uniformly random $p^{(*)}_i \in P_0$ and distance $\Delta$ to every other point. Intutively, it is hard for the algorithm to decide whether it should place the $k$\xth/ center on $q_j$ or $p_j$. Choosing $P_0 \cup \{p_i\}$ or $P_0 \setminus \{p^{*}_i\} \cup \{q_j,p_j\}$ gives a solution with cost $1$, while any other choice gives a solution with cost $\Delta$.

\begin{theorem}
	Let $\epsilon > 0$. Any (randomized) dynamic algorithm for finite metric spaces that can provide a center set whose cost is a $(\Delta - \epsilon)$-approximation to the $k$-center cost of its input issues amortized $\Omega(k)$ distance-oracle queries in expectation after $O(k^2)$ operations generated by an oblivious adversary.
\end{theorem}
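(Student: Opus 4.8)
The plan is to mimic the argument used in the previous theorem (the lower bound on the cost-approximation), adapting it to the setting where the algorithm must output a center set rather than an estimate of $\OPT$. As in that proof, I would use Yao's minmax principle: construct a distribution over operation sequences and show that any \emph{deterministic} algorithm that issues too few distance-oracle queries in expectation must, with positive probability, fail to output a center set whose cost is a $(\Delta-\epsilon)$-approximation. The key structural change compared to the cost-version is that in each small block we now insert \emph{two} points $q_i$ and $p_i$ instead of one: we fix a set $P_0$ of $k-1$ points at pairwise distance $\Delta$; the point $p_i$ is at distance $\Delta$ to everything; and $q_i$ is at distance $1$ to a uniformly random $p^{*}_i\in P_0$ and at distance $\Delta$ to all other points. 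The reason for using $k-1$ rather than $k$ points in $P_0$ is that a solution must ``spend'' its $k$-th center on one of $\{q_i,p_i\}$, and the whole point of the construction is that the algorithm cannot tell which of the two choices is correct without querying $d(q_i,p^{*}_i)$.

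First I would set up the distribution precisely: $t := (k-1) + c\cdot k^2$ operations for a suitable constant $c$, with an initialization block of $k-1$ insertions followed by $\Theta(k^2)$ small blocks of four operations each (insert $q_i$, insert $p_i$, solution-query, delete $q_i$ and $p_i$), and with $p^{*}_i$ chosen uniformly at random from $P_0$. I would reveal $P_0$ and its internal distances to the algorithm for free. Then, exactly as before, I would apply two averaging steps: first, on at least half the sequences the algorithm issues at most $O(tk)$ distance-oracle queries total; second, on such a sequence, for at least half of the small blocks the algorithm issues at most $O(k)$ queries touching $p^{*}_i$-candidates, in particular at most, say, $k/4$ distance queries of the form $d(q_i,\cdot)$ with the other endpoint in $P_0$. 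Fixing such a block $j$ on such a sequence $\sigma$, the algorithm's behavior up through the query in block $j$ is identical on every sequence $\sigma'$ agreeing with $\sigma$ on the first $(k-1)+4j$ operations; in particular the center set it outputs at that query is the same. I would then argue: choosing $P_0\cup\{p_j\}$ as centers (dropping some point of $P_0$ is unnecessary since $|P_0\cup\{p_j\}|=k$) — wait, more carefully, $P_0$ alone has $k-1$ points; the correct optimal solution has cost $1$ and is either $P_0\cup\{p_j\}$ (when $q_j$ is far, but then $q_j$ is covered by $p_j$? no — $d(q_j,p_j)=\Delta$)... so the optimum is $(P_0\setminus\{p^{*}_j\})\cup\{q_j,p_j\}$ when $d(q_j,p^{*}_j)=1$, using $p^{*}_j$'s center-role transferred onto $q_j$... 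Actually the cleanest statement is: when $d(q_j,p^{*}_j)=1$ the set $P_0$ together with using $p^{*}_j$ to cover $q_j$ at radius $1$ gives cost $1$, so an optimal $k$-center solution has cost $1$; and I would need to exhibit the right $k$-subset of centers realizing this. I would present the exact optimal center sets in both cases ($d(q_j,p^{*}_j)=1$ versus $=\Delta$), show the optimal costs differ by a factor $\Delta$, and conclude that any center set whose cost is within $\Delta-\epsilon$ of optimum must place a center at $q_j$ in one case and not in the other — yet the algorithm, having not queried $d(q_j,p^{*}_j)$ and not distinguished among the $\ge 3k/4$ remaining candidates for $p^{*}_j$, outputs the same center set, so with probability $>0$ (over the choice of $p^{*}_j$) it is wrong. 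A union bound over the at most $k/4$ queried candidates gives failure probability at least $1 - (k/4)/(|P_0|-k/4) > 0$.

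The step I expect to be the main obstacle is the case analysis of the optimal solutions and of what a valid output center set must look like: unlike the cost-version, where the algorithm only outputs a number and Case~2 (reporting too small a value) is an immediate contradiction, here I must argue carefully that \emph{no single center set} can be a $(\Delta-\epsilon)$-approximation simultaneously for the $d(q_j,p^{*}_j)=1$ instance and the $d(q_j,p^{*}_j)=\Delta$ instance. Concretely: in the far case the optimum is $\Delta$ (any center set of size $k$ leaves at least one of $P_0\cup\{q_j,p_j\}$ — that is $k+1$ mutually $\Delta$-far points — uncovered within radius $<\Delta$), so \emph{any} center set has cost exactly $\Delta$ and is automatically a $1$-approximation; hence the pressure comes entirely from the close case, where the optimum is $1$ and a $(\Delta-\epsilon)$-approximate center set must actually have cost $1$, which forces it to contain a center within distance $1$ of $q_j$ — i.e., a center at $q_j$ or at $p^{*}_j$ — and similarly to ``cover'' $p_j$ and all of $P_0$. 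So the real claim is: a center set that is valid for the close instance must contain $q_j$ or $p^{*}_j$, and this set of ``allowed'' sets shrinks to essentially nothing once the algorithm hasn't identified $p^{*}_j$. Making this dichotomy airtight — in particular handling the corner case where the algorithm's output happens to include $p^{*}_j$ itself by luck, and bounding that probability by $(k/4)/|P_0|$ — is the delicate part; everything else (the two averaging arguments, the indistinguishability of $\sigma$ and $\sigma'$, the metric property of the point space, and Yao's principle giving an oblivious-adversary bound) transfers essentially verbatim from the preceding proof.
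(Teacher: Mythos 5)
Your construction and overall scaffolding (Yao's principle over a distribution with an initialization block of $k-1$ pairwise-far points $P_0$ and $\Theta(k^2)$ small blocks each inserting $q_i,p_i$, two averaging steps, indistinguishability of sequences agreeing on a prefix) are the same as the paper's. The problem is in the crucial case-analysis step, where you deviate from the paper and the argument breaks. In the distribution you describe, $d(q_j,p^*_j)=1$ always holds and only the identity of $p^*_j\in P_0$ is random; in every such instance the fixed center set $P_0\cup\{p_j\}$ has cost $1=\OPT$, because $p^*_j\in P_0$ is a center and covers $q_j$ no matter which point of $P_0$ it is. Since your sequences insert $q_j$ and $p_j$ in a fixed, known order, an algorithm that issues \emph{zero} distance queries and always outputs ``all of $P_0$ plus the second point of the current block'' is exactly optimal on every sequence in the support, so no contradiction with a small query budget can be derived. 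This is precisely why your claim that ``the set of allowed sets shrinks to essentially nothing once the algorithm hasn't identified $p^*_j$'' is false (the allowed set $P_0\cup\{p_j\}$ does not depend on $p^*_j$ at all), and why your corner case ``the output happens to include $p^*_j$ by luck, probability $(k/4)/|P_0|$'' is based on a misconception: any output containing all of $P_0$ contains $p^*_j$ with certainty. Your close-versus-far dichotomy ($d(q_j,p^*_j)=1$ vs.\ $\Delta$), imported from the cost-query lower bound, does not repair this either, since $P_0\cup\{p_j\}$ is also fine in the far case.

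The missing idea is that the hard distribution must hide \emph{which of the two newly inserted points is $q_j$ and which is $p_j$} (e.g., insert them in a uniformly random order), so that outputting $P_0\cup\{p_j\}$ itself requires information: to know which new point is the far one, the algorithm must either hit the unique distance-$1$ pair $(q_j,p^*_j)$ or query essentially all of one point's distances to $P_0$. This is what the paper's argument pivots on when it says the algorithm must ``determine $q_j$ or $p^*_j$'' and bounds the success probability by roughly $\frac{k/4}{2\lvert P_0\rvert-k/4}$, the special pair being hidden among the $2\lvert P_0\rvert$ candidate (new point, $P_0$-point) pairs; with only $k/4$ such queries the two cost-$1$ sets $P_0\cup\{p_j\}$ and $(P_0\setminus\{p^*_j\})\cup\{q_j,p_j\}$ cannot be told apart from sets of cost $\Delta$, and any fixed output fails with positive probability, contradicting the $(\Delta-\epsilon)$-approximation guarantee. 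Without this symmetrization your proof does not go through; with it, the rest of your outline (averaging, prefix indistinguishability, metric realizability on $O(k^3)$ points) matches the paper.
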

\begin{proof}
	We use Yao's principle to prove the lower bound. Let $\ALG$ be any deterministic algorithm that \begin{inenum} \item takes as input a sequence of $t$ point insertions, deletions and cost queries, \item outputs, for each center-query operation $t'$, a center set whose cost is a $(\Delta - \epsilon)$ to the $k$-center cost of the set of input points after operation $t'-1$ and \item queries at most $tk / 512$ distances in expectation over the input distribution\end{inenum}. Since any dynamic algorithm that beats the claimed lower bound in the statement of the lemma can be turned into $\ALG$ by using it as a black-box, proving that $\ALG$ does not exist proves the claim. We define the uniform distribution over a set $S$ of sequences, each consisting of $t \defeq k + 50k^2$ operations. For the sake of analysis, we describe $S$ by iteratively constructing an element from it according to the uniform distribution. Let $P_0$ be a set of $k-1$ points that have pairwise distance $\Delta$. The first $k' \defeq k-1$ operations insert the points in $P_0$. For each $i \in \{0, \ldots, 10k^2-1\}$, we choose a uniformly random point $p^{*}_i \in P_0$. Operations $k' + 5i$ and $k' + 5i + 1$ insert points $q_i, p_i$ at distance $\Delta$ to each other. For every $p \in P_0 \setminus \{p^{*}_i\}$, we have $d(q_i, p) = \Delta$, and we set $d(q_i, p^{*}_i) = 1$. The distance between $p_i$ and any point $p \in P_0$ is $\Delta$. Operation $k' + 5i + 2$ queries the centers of the algorithm's solution, and operations $k' + 5i + 3$ and $k' + 5i + 4$ remove $q_i, p_i$. This way, every sequence $\sigma \in S$ is naturally partitioned into an \emph{initialization block} of $k'$ operations and $10k^2$ \emph{small blocks} of $5$ operations each. As a sequence is determined by $10k^2$ many $k$-ary choices, it follows that $\lvert S \rvert = k^{10k^2}$.

	For the sake of simplicity, we reveal the identity of $P_0$ and the pairwise distances of points in $P_0$ to the algorithm. This gives additional information to the algorithm without any cost for the algorithm, i.e., it only ``helps'' the algorithm. By an averaging argument, the algorithm must query at most $tk / 256$ distances on at least half of the sequences from $S$. Let $T \subseteq S$ denote this subset of sequences. By another averaging argument, for at least half of the small blocks $i$ in a sequence from $T$, $\ALG$ queries at most $(tk/256) / k^2 \leq 64k^3 / (256k^2) = k / 4$ distances between $\{q_i, p_i\}$ and points in $P_0$.
	
	Therefore, there must exist a $\sigma \in S$ and a corresponding small block $j$ so that $\ALG$ queries at most $k/4$ distances between $\{q_j,p_j\}$ and points in $P_0$. Let $\sigma' \in S$ be any sequence that equals $\sigma$ on the first $k' + 5j$ operations (i.e., on its points and distances). When the $j$\xth/ cost query on $\sigma$ or $\sigma'$ is issued, only the center sets $P_0 \cup \{p_j\}$ and $P_0 \setminus \{p^{*}_i\} \cup \{q_j,p_j\}$ have cost $1$, all other center sets have cost $\Delta$. To distinguish these two cases, the algorithm needs to determine $q_j$ or $p^{*}_j$, i.e., issue a query that reveals $d(q_j, p^{*}_j)$. Since $p^{*}_j$ is chosen uniformly at random, the probability that $\ALG$ distinguishes these two sequences is at most $k/4 \cdot 1 / (2\lvert P_0 \rvert - k/4) \leq k / 4 \cdot 4 / (7k) < 1$ by the union bound. This contradicts the assumption on $\ALG$.
	
	Finally, we observe that each of the $2k^2$ points $q_i, p_i$ are chosen from a set of $2k$ points, and therefore, a space of size $\lvert P _0 \rvert + 2k \cdot 2k^2  \in O(k^3)$ suffices to construct $S$. Choosing, for any $i,j$, $d(p_i, p_j) = 1$ if $p^{*}_i = p^{*}_j$ and $d(p_i, p_j) = \Delta$ otherwise, the space is metric.
\end{proof}

With exactly the same construction we obtain lower bounds of $\Delta$
for $k$-sum-of-radii, $k$-sum-of-diameters, $k$-median, a lower bound of
$\Delta^{2}$ for $k$-means, and a lower bound of $\Delta^p$ for $(k,p)$-clustering for each $p>0$. The reason is that in each of the possible metrics above, the cost of $\OPT$ is defined by $d(p_{i},p)$ for one point
$p_{i}$ and one point $p\in P_{0}$, and the other distances do not
contribute anything to $\OPT$. Similarly, for the second construction, the cost of a center set is determined by the choice of $p^{(*)}_j$.

\section{Algorithm for $k$-center against an adaptive adversary}
\label{sec:app-det-upper}

\NewDocumentCommand{\subtree}{O{T} m}{%
	#1(#2)
}
\NewDocumentCommand{\subtreepoints}{m}{%
	\mathcal{P}(#1)
}
\NewDocumentCommand{\detcen}{m}{%
	C_{#1}%
}

\SetKwFunction{FnInsertNode}{InsertIntoNode}
\SetKwFunction{FnDeleteNode}{DeleteFromNode}
\SetKwFunction{FnTryCenter}{TryMakeCenter}
\SetKwFunction{FnInsert}{InsertPoint}
\SetKwFunction{FnDelete}{DeletePoint}
\SetKwData{iscen}{isCenter}
\SetKwData{islowb}{lowerBoundWitness}

Given an upper bound $\costbound$, \cref{alg:det-guess} maintains a hierarchy on the input represented by a binary tree $T$, which we call \emph{clustering tree}. The main property of the clustering tree is that the input points are stored in the leaves and each inner node stores a $k$-center set for the $k+k$ centers that are stored at its two children.

\begin{definition}[clustering tree]
	\label{def:clustree}
	Let $\costbound > 0$, let $P$ be a set of points and let $T$ be a binary tree. We call $T$ a \emph{clustering tree} on $P$ with \emph{node-cost} $\costbound$ if the following conditions hold:
	\begin{enumerate}
		\item each node $u$ stores at most $2k$ points from $P$, denoted $\points{u}$, \label{enum:ct-leaf-node}
		\item for each node $u$, at most $k$ points, denoted $\detcen{u}$, are marked as centers. Either, their $k$-center cost is at most $\costbound$ on all points stored in $u$, or $u$ is marked as a witness that there is no center set with cost at most $\costbound/2$, \label{enum:ct-cost}
		\item each inner node stores the at most $2k$ centers of its children. \label{enum:ct-innernode}
	\end{enumerate}
\end{definition}

For each node $u$ in $T$, the algorithm maintains a corresponding graph on the at most $2k$ points $\points{u}$ it stores, which is called \emph{blocking graph}. Without loss of generality, we assume that $T$ is a full binary tree with $n/2k$ leaves. We explain in the proof of \cref{lem:det-main-kcenter} how to get rid of this assumption. For the sake of simplicity, we identify a node $u$ with its associated blocking graph $N=(V,E)$ in the following. For each node $N=(V,E)$, at most $k$ points are marked as centers (\iscen in \cref{alg:det-guess-node}), and the algorithm maintains the invariant that two centers $u,v \in V$ have distance at least $\costbound$ by keeping record of \emph{blocking} edges in the blocking graph between centers and points that have distance less than $\costbound$ to one of these centers. We say that a center $u$ \emph{blocks} a point $v$ (from being a center) if there is an edge $(u,v)$ in the blocking graph. In addition, the algorithm records whether $N$ contains more than $k$ points with pairwise distance greater than $\costbound$ (\islowb in \cref{alg:det-guess-node,alg:det-guess}).

\paragraph*{Insertions (see \FnInsert).} When a point $u$ is inserted into $T$, a node $N=(V,E)$ with less than $2k$ points is selected and it is checked whether $d(v,u) \leq \costbound$ for any center $v \in V$. If this is the case, the algorithm inserts an edge $(u,v)$ for every such center $v$ into $E$ and terminates afterwards. Otherwise, the algorithm checks whether the number of centers is less than $k$. If this is the case, it marks $u$ as a center and inserts an edge $(u,w)$ for \emph{each point} $w \in V$ with $d(u,w) \leq \costbound$ and recurses on the parent of $N$. Otherwise, if there are more than $k-1$ centers in $u$, the algorithm marks $N$ as witness and terminates.

\paragraph*{Deletions (see \FnDelete).} When a point $u$ is deleted from $T$, the point is first removed from the leaf $N$ (and the blocking graph) where it is stored. If $u$ was not a center, the algorithm terminates. Otherwise, the algorithm checks whether any points were unblocked (have no adjacent node in the blocking graph) and, if this is the case, proceeds by attempting to mark these points as centers and inserting them into the parent of $N$ one by one (after marking the first point as center, the remaining points may be blocked again). Afterwards, the algorithm recurses on the parent of $N$.

\begin{algorithm}
	\SetKwData{neigh}{neighbors}
	\SetKwData{newcen}{newCenters}
	\KwData{$\iscen$ is a boolean array on the elements of $V$, $\islowb$ is a boolean array on the nodes of $T$}
	\Fn{\FnInsertNode{$N = (V,E), p, \costbound$}}{
		insert $p$ into $V$ \;
		\ForEach{$v \in V \setminus \{ p \}$}{
			\If{$\iscen[v] \wedge d(p,v) \leq \costbound$}{
				insert $(v,p)$ into $E$ \;
			}
		}
		\Return \FnTryCenter($G, p, \costbound$) \;
	}
	\Fn{\FnDeleteNode{$N = (V,E), p, \costbound$}}{
		$\newcen \gets \emptyset$ \;
		$\neigh \gets \ngh{p}$ \;
		delete $p$ from $N$ \;
		\If{$\iscen[p] = true$}{
			\ForEach{$u \in \neigh$}{
				$\newcen \gets \newcen \cup \FnTryCenter{N, u, \costbound}$ \;
			}
		}
		\Return{$\newcen$} \;
	}
	\Fn{\FnTryCenter{$N = (V,E), p, \costbound$}}{
		\If{$\dg{p} = 0 \wedge \lvert \{ v \mid v \in V \wedge \iscen[v] \} \rvert < k$}{
			$\iscen[p] \gets true$ \;
			\ForEach{$v \in V \setminus \{ p \}$}{
				\If{$d(p,v) \leq \costbound$}{
					insert $(p,v)$ into $E$ \;
				}
			}
			$\islowb[N] \gets false$ \;
			\Return{$\{ p \}$} \;
		}
		\ElseIf{$\dg{p} = 0$}{
			$\islowb[N] \gets true$ \;
		}
		\Return{$\emptyset$} \;
	}
	\caption{\label{alg:det-guess-node} Insertion and deletion of a point $P$ in a node of the clustering tree $T$ (represented by a blocking graph $G$).}
\end{algorithm}

\begin{algorithm}
	\SetKwData{cen}{centers}
	\SetKwData{newcen}{newCenters}
	\SetKwData{failed}{failed}
	\KwData{$\islowb$ is a boolean array on the nodes of $T$}
	\Fn{\FnInsert{$T, p, \costbound$}}{
		$N \gets$ leaf in $T$ that contains less than $2k$ elements \;
		\Do{$\cen = \{ p \}$}{
			$\cen \gets \FnInsertNode{$N, p$}$ \;
			$N \gets N.parent$ \;
		}
	}
	\Fn{\FnDelete{$T, p, \costbound$}}{
		$N \gets$ leaf in $T$ that contains $p$ \;
		$\cen \gets \emptyset; \failed \gets false$ \;
		\Do{$N \neq null$}{
			$\cen \gets \cen \cup \FnDeleteNode{$N, p$}$ \;
			$\newcen \gets \emptyset$ \;
			\ForEach{$v \in \cen$}{
				$\newcen \gets \newcen \cup \FnInsertNode{N.parent, v}$ \;
			}
			$\cen \gets \newcen$ \;
			$N \gets N.parent$ \;
		}
	}
	\caption{\label{alg:det-guess} Insertion and deletion of a point $p$ in the clustering tree $T$.}
\end{algorithm}

Given a rooted tree $T$ and a node $u$ of $T$, we denote the subtree of $T$ that is rooted at $u$ by $\subtree{u}$. For a clustering tree $T$, we denote the set of all points stored at the leaves of $\subtree{u}$ by $\subtreepoints{u}$. Recall that the points directly stored at $u$ are denoted by $\points{u}$. Observe that for each node $u$ in a clustering tree, $\subtree{u}$ is a clustering tree of $\subtreepoints{u}$.

\subsection{Feasibility}

We show that \cref{alg:det-guess} maintains a clustering tree.

\begin{lemma}
	\label{lem:det-clustree-insert}
	Let $T$ be a clustering tree on a point set $P$. After calling $\FnInsert(T, p)$ (see \cref{alg:det-guess}) for some point $p \notin P$, $T$ is a clustering tree on $P \cup \{ p \}$.
\end{lemma}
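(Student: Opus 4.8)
The plan is to prove that the three conditions of \cref{def:clustree} are preserved, by following the chain of calls that $\FnInsert(T,p)$ performs. Recall that $\FnInsert$ first selects a leaf $N_0$ with fewer than $2k$ stored points (that such a leaf exists is guaranteed by the tree-maintenance described in the proof of \cref{lem:det-main-kcenter}) and then, for $i=0,1,\dots$, calls $\FnInsertNode(N_i,p)$ and advances to the parent $N_{i+1}$ of $N_i$ exactly as long as $p$ became a center in $N_i$. The first step is a purely local observation about a single call $\FnInsertNode(N,p)$: it only inserts $p$ into $\points{N}$, adds blocking edges incident to $p$, and possibly marks $p$ --- and no other point --- as a center; it never removes a point and never changes the center status of any point other than $p$. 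Hence the center set $\detcen{N}$ is either left unchanged --- in which case $\FnInsert$ stops at $N$ --- or becomes $\detcen{N}\cup\{p\}$ --- in which case $\FnInsert$ moves up and inserts precisely the new center $p$ into the parent of $N$. Since this is the only way the center set of a child can change, the ``each inner node stores its children's centers'' condition is maintained at every inner node.

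Next I would check the $\lvert\points{N}\rvert\le 2k$ bound along the path $N_0,N_1,\dots$. For the leaf $N_0$ it is immediate from the choice of $N_0$. For an inner node $N_i$ with $i\ge 1$ we reach $N_i$ only because $p$ became a center in $N_{i-1}$, which by $\FnTryCenter$ requires $N_{i-1}$ to have had strictly fewer than $k$ centers; moreover, before the call $\FnInsertNode(N_i,p)$, $\points{N_i}$ still equals the old centers of $N_{i-1}$ (i.e.\ without $p$) together with the centers of its sibling, so $\lvert\points{N_i}\rvert\le(k-1)+k<2k$, and after inserting $p$ we still have $\lvert\points{N_i}\rvert\le 2k$. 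In particular every call $\FnInsertNode(N_i,p)$ is applied to a node that is not yet full, which is what the remaining argument needs.

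The core of the proof is the cost/witness condition for a single call $\FnInsertNode(N,p)$. Here I would carry two auxiliary invariants through the whole argument (they hold vacuously for the empty tree, are assumed for the input $T$, and I would likewise verify them in the companion lemmas for $\FnDelete$ and the restructuring): (a) in every node the points marked as centers are pairwise at distance more than $\OPT'$, and every non-witness node has all of its stored points within distance $\OPT'$ of one of its centers; (b) every witness node contains exactly $k$ centers, and hence --- together with the point that triggered its witness status --- stores $k+1$ points at pairwise distance more than $\OPT'$, which makes the witness claim valid since $k$ clusters must place two such points together and thus incur cost more than $\OPT'/2$. With these in hand, the analysis is a three-way case distinction on $\deg(p)$ \emph{after} the blocking edges of $p$ have been added. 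If $\deg(p)>0$, then $p$ is covered by a center within $\OPT'$, the center set and all other points are untouched, and the witness flag is untouched (adding a point never decreases the optimum, so a valid witness stays valid), so both invariants and the cost/witness disjunction persist. If $\deg(p)=0$ and $N$ has fewer than $k$ centers, then by (b) $N$ was not a witness, so by (a) its old points were covered; $\FnTryCenter$ now marks $p$ a center, adds blocking edges from $p$ to every point at distance at most $\OPT'$, and clears the witness flag, leaving at most $k$ centers that are pairwise far (the old ones by (a), and $p$ from all of them since $\deg(p)=0$), with all old points still covered and $p$ covering itself. If $\deg(p)=0$ and $N$ already has $k$ centers, $\FnTryCenter$ sets the witness flag; since by (a) the $k$ centers are pairwise farther than $\OPT'$ and $p$ is farther than $\OPT'$ from all of them, $N$ now holds $k+1$ pairwise-far points, re-establishing (b) and validating the witness branch.

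The main obstacle I anticipate is keeping the blocking-graph invariants (a)/(b) airtight while threading them up the tree together with the two structural conditions --- in particular, making sure that ``$\deg(p)=0$'' really does certify that $p$ is far from all current centers (which hinges on $\FnInsertNode$ adding the edges between $p$ and the nearby centers \emph{before} it calls $\FnTryCenter$), and that no center is ever silently created or destroyed elsewhere in the call. Once the single-call statement is established, an induction on $i$ along $N_0,N_1,\dots$ --- each step licensed by the ``not full'' bound of the second paragraph --- shows that when $\FnInsert(T,p)$ terminates every node of $T$ satisfies \cref{def:clustree}, i.e.\ $T$ is a clustering tree on $P\cup\{p\}$.
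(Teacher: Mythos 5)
Your proposal is correct and follows essentially the same route as the paper's proof: induction along the leaf-to-root chain of \FnInsertNode{} calls, with the cost/witness condition justified by the fact that unblocked means distance more than $\OPT'$ to all current centers, so a witness node holds $k+1$ pairwise-far points (the paper phrases this as a pigeonhole argument against an optimal solution of cost $\OPT'/2$). You merely make explicit the blocking-graph invariants that the paper uses implicitly, and your check of the $2k$-bound at inner nodes is a welcome extra detail.
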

\begin{proof}
	We prove the statement by induction over the recursive calls of \FnInsertNode in \FnInsert. Let $N=(V,E)$ be the leaf in $T$ where $p$ is inserted. Condition~\ref{enum:ct-leaf-node} in \cref{def:clustree} is guaranteed for $N$ by \FnInsert. The algorithm \FnInsertNode ensures that $p$ is marked as a center only if it is not within distance $\costbound$ of any other center. Let $C$ be the center set of $N$ before inserting $p$. By condition~\ref{enum:ct-cost}, $C$ has $\cost[V]{C} \leq \costbound$. Let $C'$ be the center set of any optimal solution on $V$. If $\cost[V]{C'} \leq \costbound / 2$, each center of $C$ covers at least one cluster of $C'$. By pigeonhole principle, $p$ is not blocked by a center in $C$ if and only if $\lvert C \rvert < k$. Otherwise, if $\cost[V]{C'} > \costbound/2$, $p$ is chosen if no center in $C$ covers $p$ and $\lvert C' \rvert < k$, or $N$ is marked as witness. It follows that condition~\ref{enum:ct-cost} is still satisfied for $N$ after \FnInsertNode terminates.

	Now, let $N=(V,E)$ be any inner node in a call to $\FnInsertNode(N,p)$. We note that such call is only made if $p$ was marked as a center in its child $N'$ on which $\FnInsertNode$ was called before by $\FnInsert$. Thus, $p$ is inserted into $V$ if and only if $p$ is a center in $N'$. Therefore, condition~\ref{enum:ct-cost} is satisfied for $N$. By the above reasoning, condition~\ref{enum:ct-innernode} is also satisfied.
\end{proof}

\begin{lemma}
	Let $T$ be a clustering tree on a point set $P$. After calling $\FnDelete(T, p)$ (see \cref{alg:det-guess}) for some point $p \in P$, $T$ is a clustering tree on $P \setminus \{ p \}$.
\end{lemma}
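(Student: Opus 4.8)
The plan is to mirror the proof of \cref{lem:det-clustree-insert}, arguing by induction along the path $N_{0},N_{1},\ldots,N_{h}$ of nodes that $\FnDelete(T,p)$ visits, where $N_{0}$ is the leaf storing $p$, $N_{i+1}$ is the parent of $N_{i}$, and $N_{h}$ is the root. The inductive hypothesis I would carry is: once the loop of $\FnDelete$ has finished the iteration for $N_{i}$, the subtree $\subtree{N_{i}}$ is a clustering tree on $\subtreepoints{N_{i}}\setminus\{p\}$, and the set of points passed to the next iteration equals exactly the set of points that have just become centers of $N_{i}$ and were not centers of $N_{i}$ before the operation. This last part is what lets the next iteration repair condition~\ref{enum:ct-innernode} at $N_{i+1}$ by deleting $p$ from $N_{i+1}$ --- a no-op unless $p\in\points{N_{i+1}}$, i.e.\ unless $p$ was a center of $N_{i}$ --- and inserting those newly promoted points.

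For the base case ($N_{0}$ the leaf containing $p$), condition~\ref{enum:ct-leaf-node} of \cref{def:clustree} is immediate since $\points{N_{0}}$ only loses $p$. For condition~\ref{enum:ct-cost} I split into cases. If $p\notin\detcen{N_{0}}$, then $\detcen{N_{0}}$ and $\islowb[N_{0}]$ are untouched; if $\detcen{N_{0}}$ had $k$-center cost at most $\costbound$ it still does, and if $N_{0}$ was a valid witness then either some stored point remains uncovered (the witness is still valid) or all stored points are now within $\costbound$ of a center (cost at most $\costbound$); either way condition~\ref{enum:ct-cost} holds. If $p\in\detcen{N_{0}}$, then $\FnDeleteNode$ removes $p$ with its blocking edges and calls $\FnTryCenter$ on each former neighbour $u$ of $p$ in turn. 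The key sub-claim is that a former neighbour $u$ that is \emph{not} promoted is either still incident to another center $c$ in the blocking graph, so $d(u,c)\le\costbound$, or is unblocked at a moment when $N_{0}$ already has $k$ centers, in which case $\FnTryCenter$ sets $\islowb[N_{0}]$, and those $k$ centers together with $u$ form $k+1$ points at pairwise distance greater than $\costbound$ --- a correct certificate that no center set of cost at most $\costbound/2$ exists. Points not adjacent to $p$ keep their covering center, and $\FnTryCenter$ never raises the number of centers above $k$, so condition~\ref{enum:ct-cost} is restored. Finally, by the construction of $\FnDeleteNode$ and $\FnTryCenter$, the set returned is precisely the points newly marked as centers, which gives the hypothesis for $i=0$.

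For the inductive step, the hypothesis for $i$ tells us that $\subtree{N_{i}}$ is a clustering tree of $\subtreepoints{N_{i}}\setminus\{p\}$ and pins down the propagated set of new centers; hence the set $N_{i+1}$ ought to store --- the union of the center sets of its two children --- differs from $\points{N_{i+1}}$ exactly by removing $p$ (if $p$ was a center of $N_{i}$) and adding those new centers. The loop body performs precisely these operations, namely $\FnDeleteNode(N_{i+1},p)$ followed by $\FnInsertNode(N_{i+1},v)$ for each newly promoted $v$, so condition~\ref{enum:ct-innernode} is restored at $N_{i+1}$; conditions~\ref{enum:ct-leaf-node} and~\ref{enum:ct-cost} at $N_{i+1}$ then follow from the base-case analysis applied to the $\FnDeleteNode$ call, together with the reasoning in the proof of \cref{lem:det-clustree-insert} applied to each $\FnInsertNode$ call, and the points newly marked as centers of $N_{i+1}$ are collected for the next iteration. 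Iterating up to $N_{h}$, the root, yields that $T$ is a clustering tree on $P\setminus\{p\}$.

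I expect the main obstacle to be the bookkeeping around the propagated set and the flag $\islowb$, rather than any single estimate. One has to verify that $\FnDeleteNode(N_{i+1},p)$ is harmless when $p\notin\points{N_{i+1}}$; that inserting the child's new centers into $N_{i+1}$ \emph{before} deleting $p$ from it cannot violate condition~\ref{enum:ct-cost} --- here one uses that such a new center $v$ satisfies $d(v,p)\le\costbound$, so it is blocked by $p$ until $p$ is removed and is then reconsidered by the following $\FnDeleteNode(N_{i+1},p)$ call; and, most delicately, that within a single $\FnDeleteNode$ call the interplay between removing one center and its several possible replacements always leaves $\islowb$ consistent with the inclusive disjunction in condition~\ref{enum:ct-cost}. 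The last point is exactly the sub-claim above, together with the facts that $\FnTryCenter$ clears $\islowb[N]$ precisely when it creates a new center in $N$ and sets it precisely when it finds an unblocked point while $N$ already holds $k$ centers.
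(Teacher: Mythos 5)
Your proposal has the same skeleton as the paper's proof: an induction along the leaf-to-root path visited by \FnDelete (the paper phrases it as induction over the loop iterations), with condition~\ref{enum:ct-leaf-node} immediate, condition~\ref{enum:ct-innernode} restored at the parent by inserting exactly the set of newly promoted centers returned from the child, and the inner-node case reduced to the leaf analysis plus the argument of \cref{lem:det-clustree-insert}. Where you genuinely differ is the argument for condition~\ref{enum:ct-cost}: the paper compares the outcome of \FnDeleteNode against an optimal center set $C'$ on $V\setminus\{p\}$ and splits on whether its cost is at most $\costbound/2$ (mirroring the pigeonhole argument of the insertion lemma), whereas you argue operationally from the blocking-graph invariant: a blocked point is within $\costbound$ of a center, and an unblocked point that is not promoted coexists with $k$ centers, giving $k+1$ stored points at pairwise distance more than $\costbound$, which justifies the witness flag. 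That certificate argument is sound as far as it reaches.

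The gap is the sentence ``points not adjacent to $p$ keep their covering center.'' This presupposes that, before the deletion, the node satisfied the first disjunct of condition~\ref{enum:ct-cost}; but a clustering tree may also be in the witness state, with \islowb{} set and some stored point $v$ at distance more than $\costbound$ from every center. If such a $v$ is not a neighbour of $p$, then \FnDeleteNode never calls \FnTryCenter on $v$, while promoting some unblocked neighbour $u$ of $p$ (possible since only $k-1$ centers remain) resets \islowb{} to false; if moreover $d(u,v)>\costbound$, the node ends with $k$ centers, $v$ uncovered, and no witness mark, so neither disjunct of condition~\ref{enum:ct-cost} is re-established by your argument. (Your non-center subcase leans on a similar unargued implication, namely ``some stored point remains uncovered, hence the witness is still valid,'' which needs the count of centers at that moment.) The paper's proof covers this situation by a different device: it lets $U$ be the set of \emph{all} points that are unblocked after removing $p$ and reasons about selections from $U$ against $C'$, i.e.\ it treats the procedure as reconsidering every unblocked point of the node rather than only the neighbours of the deleted center, which is exactly where your faithful-to-the-pseudocode bookkeeping leaves the hole. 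To close your argument you must handle the previously-witness case explicitly, e.g.\ show that the flag together with a valid $(k+1)$-point certificate survives the operation, or adopt the paper's convention on which points are reconsidered.
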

\begin{proof}
	We prove the statement by induction over the loop's iterations in \FnDelete. Let $N=(V,E)$ be the leaf in $T$ where $p$ was inserted. \FnDeleteNode deletes $p$ from $N$ and iterates over all unblocked points to mark them as centers one by one. After deleting $p$, condition~\ref{enum:ct-leaf-node} holds for $N$. Let $U$ be the set of unblocked points after removing $p$, let $C$ be the center set after removing $p$ from $V$, and let $C'$ be the center set of any optimal solution on $V \setminus \{ p \}$. If $\cost[V]{C'} \leq \costbound / 2$, each unblocked point from $U$ covers at least one cluster of $C'$ with cost $\costbound$. Therefore, any selection of $k - \lvert C \rvert$ points from $U$ that do not block each other together with $C$ is a center set for $V$ with cost $\costbound$. Otherwise, if $\cost[V]{C'} > \costbound / 2$, a set of at most $k - \lvert C \rvert$ points from $U$ is chosen, or $N$ is marked as witness. It follows that condition~\ref{enum:ct-cost} is still satisfied for $N$ after \FnDeleteNode terminates. Let $C''$ be the center set that is returned by \FnDeleteNode. \FnDelete inserts all points from $C''$ into the parent node. This reinstates condition~\ref{enum:ct-innernode} on the parent node. Then, \FnDeleteNode recurses on the parent.
\end{proof}

\subsection{Approximation guarantees}

We use the following notion of super clusters and its properties to prove the $O(k)$ upper bound on the approximation ratio of the algorithm.

\begin{definition}[super cluster]
	\label{def:supercluster}
	Let $P$ be a set of points and let $C$ be a center set with $\cost[P]{C} \leq 2\costbound$. Consider the graph $G = (C, E)$, where $E = \{ (u,v) \mid d(u,v) \leq 2\costbound \}$. For every connected component in $G$, we call the union of clusters corresponding to this component a \emph{super cluster}.
\end{definition}

\begin{lemma}
	\label{lem:clustree-supercluster}
	Let $T$ be a clustering tree constructed by \cref{alg:det-guess} with node-cost $\costbound$ and no node marked as witness. For any node $N$ in $T$, $N$ contains one point from each super cluster in $\subtreepoints{N}$ that is marked as center.
\end{lemma}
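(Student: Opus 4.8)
The plan is to fix an optimal $k$-center solution $C^*$ on the current point set, with cost $\OPT$, and to prove the statement by induction on the clustering tree $T$, from the leaves up to the root. I would rely on two facts. First, because no node is a witness, the first alternative of condition~\ref{enum:ct-cost} of \cref{def:clustree} holds at \emph{every} node $u$, i.e. $\cost[\points{u}]{\detcen{u}}\le\costbound$, so $\detcen{u}$ covers every point of $\points{u}$ within radius $\costbound$. Second, in the regime in which this lemma is used we also have $\OPT\le\costbound/2$ (a witness would exhibit $k+1$ points that are pairwise more than $\costbound$ apart, which already forces $\OPT>\costbound/2$); in particular $\OPT\le 2\costbound$, so $C^*$ is an admissible center set in \cref{def:supercluster}. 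Finally, note that the graph $G$ of \cref{def:supercluster} depends only on $C^*$ and not on the point set, so the connected components of $G$ — and hence the super cluster to which a point $p$ belongs, namely the component of a nearest center of $p$ in $C^*$ — are the same whether $p$ is considered inside $\subtreepoints{N}$ for some node $N$ or inside the whole point set. I would therefore speak of ``the super cluster of $p$'' without reference to a node, a super cluster being non-empty in $\subtreepoints{N}$ iff some $p\in\subtreepoints{N}$ is assigned to it.

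The invariant I would carry up the tree is stronger than the lemma: for every node $N$ and every $p\in\subtreepoints{N}$, the set $\detcen{N}$ contains a point lying in the same super cluster as $p$. The lemma then follows by taking, for each non-empty super cluster of $\subtreepoints{N}$, one point $p$ assigned to it. For the base case $N$ is a leaf, so $\subtreepoints{N}=\points{N}$: given $p\in\points{N}$, the covering fact yields $c\in\detcen{N}$ with $d(p,c)\le\costbound$ (and if $c=p$ we are done). Letting $p^{*},c^{*}$ be nearest centers in $C^*$ of $p,c$, we get $d(p^{*},c^{*})\le d(p^{*},p)+d(p,c)+d(c,c^{*})\le\OPT+\costbound+\OPT\le 2\costbound$, so $p^{*}$ and $c^{*}$ are adjacent in $G$, hence $c$ lies in the super cluster of $p$. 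For the inductive step let $N$ be an inner node with children $N_{1},N_{2}$, so $\points{N}=\detcen{N_{1}}\cup\detcen{N_{2}}$ by condition~\ref{enum:ct-innernode} of \cref{def:clustree} and $\subtreepoints{N}=\subtreepoints{N_{1}}\cup\subtreepoints{N_{2}}$. For $p\in\subtreepoints{N_{1}}$, the inductive hypothesis for $N_{1}$ gives $c_{1}\in\detcen{N_{1}}\subseteq\points{N}$ in the super cluster of $p$; the covering fact at $N$ gives $c\in\detcen{N}$ with $d(c_{1},c)\le\costbound$; and exactly the same computation as in the base case (now applied between $c_{1}$ and $c$) shows that nearest $C^*$-centers of $c_{1}$ and $c$ are adjacent in $G$, so $c$ is in the super cluster of $c_{1}$, which is that of $p$. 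This establishes the invariant at $N$ and completes the induction.

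The step I expect to be the crux is the inductive one, and the reason it requires care is that the covering guarantee does not compose along root-paths: if one instead tried to induct on ``$p$ lies within distance $r$ of some center of $\detcen{N}$'', then $r$ would grow by $\costbound$ at each level, since a parent only reclusters the at most $2k$ centers of its children rather than the original points, and after $\Theta(\log(n/k))$ levels the resulting bound $r=\Theta(\log(n/k))\costbound$ is far too weak to keep $p$ inside one super cluster. Tracking super-cluster \emph{membership} rather than distance is what removes the accumulation, and the reason a single covering hop never leaves the super cluster is precisely the slack between the covering radius $\costbound$ guaranteed by condition~\ref{enum:ct-cost} of \cref{def:clustree} and the larger threshold $2\costbound$ used for the edges of $G$ in \cref{def:supercluster}: together with $\OPT\le\costbound/2$, this slack exactly absorbs one covering hop plus two optimal radii into one edge of $G$. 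The only other things to check — that the update procedures \FnInsertNode and \FnTryCenter in \cref{alg:det-guess-node} really do restore the leaf covering property after an insertion or deletion, and that \FnDelete faithfully re-inserts every center newly created at a child into its parent — are already established by the feasibility lemmas for the clustering tree (\cref{lem:det-clustree-insert} and its deletion analogue), so I would simply invoke those.
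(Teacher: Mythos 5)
Your proof is correct and is in essence the paper's own argument written out in full: the paper's one-paragraph contradiction at a single node (if no center of $N$ lies in the super cluster $S$, then the centers of $N$ cost more than $\costbound$ on $\points{N}$, forcing a witness) rests on exactly the slack you exploit between the covering radius $\costbound$, the $2\costbound$ edge threshold of \cref{def:supercluster}, and optimal radius at most $\costbound/2$, and it implicitly relies on the bottom-up propagation (a super cluster meeting $\subtreepoints{N}$ must already meet $\points{N}$ via a child's marked center, by condition~\ref{enum:ct-innernode} of \cref{def:clustree}) that you make explicit as your invariant. The only caveat is your justification of $\OPT\le\costbound/2$: ``no node is marked as witness'' does \emph{not} imply $\OPT\le\costbound/2$ (that is the converse of what a witness certifies), so strictly speaking this is an additional hypothesis; however, it is precisely the hypothesis under which the lemma is invoked (the proof of \cref{lem:det-kcenter-cost} assumes $\opt{P}\le\costbound/2$ and applies \cref{def:supercluster} to an optimal center set of that cost), and the paper's own proof needs it just as much, since with a defining center set of cost as large as the $2\costbound$ permitted by \cref{def:supercluster} neither argument goes through.
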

\begin{proof}
	Let $S$ be a supercluster of $P$. Let $N$ be a node in $T$ that contains a point $p \in S$. For the sake of contradiction, assume that there exists no point $q \in S \cap N$ that is marked as center. By \cref{def:supercluster}, the $k$-center clustering cost of the centers in $N$ is greater than $\costbound$. By \cref{def:clustree}.\ref{enum:ct-cost}, this implies that a point must be marked as witness. This is a contradiction to the assumption that no node is marked as witness.
\end{proof}

The following simple observation leads to the bound of $O(\log (n/k))$ on the approximation ratio.

\begin{observation}
	\label{lem:kcenter-of-cluster}
	Let $c > 0$, let $P, Q$ be sets of points and let $C, C'$ so that $\cost[P]{C} \leq c$ and $\cost[Q]{C'} \leq c$. For every $k$-center set $C''$ with $\cost[C \cup C']{C''} \leq c'$ we have $\cost[P \cup Q]{C''} \leq c + c'$ by the triangle inequality.
\end{observation}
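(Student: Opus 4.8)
The plan is to unfold the definition of the $k$-center cost and then chain two triangle inequalities. Recall that for a point set $X$ and a center set $Y$ one has $\cost[X]{Y}=\max_{x\in X} d(x,Y)$ with $d(x,Y)=\min_{y\in Y} d(x,y)$; so the three hypotheses state that every point of $P$ is within distance $c$ of some point of $C$, every point of $Q$ is within distance $c$ of some point of $C'$, and every point of $C\cup C'$ is within distance $c'$ of some point of $C''$. The goal is to show that every point of $P\cup Q$ is within distance $c+c'$ of some point of $C''$.

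First I would fix an arbitrary $x\in P\cup Q$ and split into the two symmetric cases $x\in P$ and $x\in Q$. Consider $x\in P$. From $\cost[P]{C}\le c$ pick $\bar c\in C$ with $d(x,\bar c)\le c$. Since $\bar c\in C\subseteq C\cup C'$, the hypothesis $\cost[C\cup C']{C''}\le c'$ provides $\bar c''\in C''$ with $d(\bar c,\bar c'')\le c'$. The triangle inequality then gives $d(x,\bar c'')\le d(x,\bar c)+d(\bar c,\bar c'')\le c+c'$, hence $d(x,C'')\le c+c'$. The case $x\in Q$ is handled verbatim with $C'$ in place of $C$. Taking the maximum over all $x\in P\cup Q$ yields $\cost[P\cup Q]{C''}\le c+c'$.

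There is no genuine obstacle here: the statement is essentially a two-step application of the triangle inequality. The only points requiring a little care are to use the hypotheses in the correct order (first move from the point to a ``level-one'' center in $C$ or $C'$, then from that center to a center in $C''$) and to treat both halves of the union $P\cup Q$, which is immediate by symmetry.
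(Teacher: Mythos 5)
Your proof is correct and is exactly the argument the paper intends: the observation is stated as immediate "by the triangle inequality," and your unfolding of the cost definition followed by the two-step chain (point to its center in $C$ or $C'$, then to a center in $C''$) is that argument spelled out. Nothing further is needed.
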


We combine the previous results and obtain the following approximation ratio for our algorithm.

\begin{lemma}
	\label{lem:det-kcenter-cost}
	Let $T$ be a clustering tree on a point set $P$ that is constructed by \cref{alg:det-guess} with node-cost $\costbound$. Let $C$ be the points in the root of $T$ that are marked as centers. If no node in $C$ is marked as witness, $ \cost[P]{C} \leq \min\{k, \log(n/k)\} \cdot 4 \opt{P}$. Otherwise, $\opt{P} \geq \costbound / 2$.
\end{lemma}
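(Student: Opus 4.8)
The plan is to treat the two alternatives of the statement in turn; the first is quick and the second is where the work lies.

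Suppose first that some node $N$ of $T$ is flagged as a witness. Tracing back to the invocation of \FnTryCenter in \cref{alg:det-guess-node} that set the flag, at that moment $N$ stored $k$ points already marked as centers together with one further point whose blocking degree was $0$. Since the algorithm inserts a blocking edge between a center and every point within distance $\costbound$, and promotes a point to a center only when its blocking degree is $0$, these $k+1$ points are pairwise at distance strictly greater than $\costbound$, and they all lie in $\points{N}\subseteq P$. By pigeonhole, any solution with $k$ centers assigns two of them to a common center, and by the triangle inequality that center is at distance more than $\costbound/2$ from one of the two. Hence $\opt{P}>\costbound/2$, which is the second conclusion.

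Now assume no node of $T$ is a witness, fix an optimal solution $C^{*}$, and write $\opt:=\opt{P}$. I would prove the two bounds $\cost[P]{C}\le 4k\cdot\opt$ and $\cost[P]{C}\le 4\log(n/k)\cdot\opt$ separately and take their minimum. For the depth bound, \cref{def:clustree}.\ref{enum:ct-cost} says that the at most $k$ centers of any node cover the points it stores at cost at most $\costbound$, and \cref{def:clustree}.\ref{enum:ct-innernode} says that an inner node stores exactly the centers of its two children; applying \cref{lem:kcenter-of-cluster} once per level along a root-to-leaf path — of length at most $\log(n/2k)\le\log(n/k)$ in a full binary tree with $n/2k$ leaves — shows that the root's center set $C$ covers $\subtreepoints{\mathrm{root}}=P$ at cost at most $\log(n/k)\cdot\costbound$. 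For the $k$ bound, I would pass to the super clusters of $C^{*}$ (\cref{def:supercluster}): there are at most $k$ of them, each a union of clusters of $C^{*}$, and by \cref{lem:clustree-supercluster} the root contains a point of $C$ inside every super cluster that meets $P$. Every $p\in P$ lies in a cluster of $C^{*}$ whose center sits in some super cluster $S$, and chaining along the at most $k-1$ edges of $S$ to the point of $C$ inside $S$, the triangle inequality bounds $d(p,C)$.

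Both estimates come out in terms of the node-cost $\costbound$, whereas the statement is in terms of $\opt{P}$; so the load-bearing step — which I expect to be the main obstacle — is to show that in the no-witness regime $\costbound$ exceeds $\opt{P}$ by at most a constant (small enough that the factor $4$ survives). The natural route is to use that the node-cost of interest is (essentially) the smallest for which the construction produces no witness, so that a slightly smaller node-cost already forces a witness and therefore $\opt{P}$ is $\Omega(\costbound)$. Granted $\costbound\le 4\opt{P}$, the depth estimate gives $\cost[P]{C}\le\log(n/k)\cdot\costbound\le 4\log(n/k)\cdot\opt{P}$ and the super-cluster estimate gives $\cost[P]{C}=O(k)\cdot\costbound=O(k)\cdot\opt{P}$ with constant $4$, and the proof is complete; the super-cluster and telescoping parts themselves are routine once \cref{lem:clustree-supercluster} and \cref{lem:kcenter-of-cluster} are in hand.
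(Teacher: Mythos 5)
Your overall route is the same as the paper's: the witness case via $k+1$ pairwise-far points, and, in the witness-free case, the two bounds $\mathrm{cost}_k(C,P)\le \log(n/k)\cdot\OPT'$ (telescoping \cref{lem:kcenter-of-cluster} over the levels of the tree) and $\mathrm{cost}_k(C,P)\le 2k\cdot\OPT'$ (via \cref{lem:clustree-supercluster} and chaining inside a super cluster). The problem is precisely the step you flag as load-bearing and then only assume: that in the witness-free regime $\OPT'$ exceeds $\mathrm{opt}_k(P)$ by at most a constant. This does not follow from the lemma's hypotheses, which allow an arbitrary node-cost; the minimality of $\OPT'$ that you invoke is nowhere assumed in the statement. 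In fact no such implication holds: take $k\ge 2$ and two clusters of tiny diameter $\delta$ at mutual distance $D$ with $\delta\ll D$ and $D+\delta\le\OPT'$. Then every point is within $\OPT'$ of every other, so in every node the first point inserted becomes the unique marked center and blocks all later arrivals; no node ever holds an unblocked non-center, hence no witness is created, and the root outputs a single center whose cost is $\Theta(D)=\Theta(\OPT')$, while $\mathrm{opt}_k(P)\le\delta$. So ``no witness'' alone cannot bound $\OPT'$ against $\mathrm{opt}_k(P)$, and your conditional conclusion (``granted $\OPT'\le 4\,\mathrm{opt}_k(P)$\,\dots'') leaves a genuine gap. (A smaller quibble: even granted $\OPT'\le 4\,\mathrm{opt}_k(P)$, the super-cluster chain gives $2k\OPT'\le 8k\,\mathrm{opt}_k(P)$, not $4k\,\mathrm{opt}_k(P)$; that branch needs $\OPT'\le 2\,\mathrm{opt}_k(P)$.)

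To be fair, the weakness you located is also the weak point of the paper's own proof: its opening sentence asserts that $\mathrm{opt}_k(P)>\OPT'/2$ would force a node storing $k+1$ points at pairwise distance more than $\OPT'$, which is not justified (and fails in the example above). What the paper's argument actually establishes is $\mathrm{cost}_k(C,P)\le\min\{2k,\log(n/k)\}\cdot\OPT'$ in the witness-free case, together with $\mathrm{opt}_k(P)\ge\OPT'/2$ when a witness exists; the conversion to $\mathrm{opt}_k(P)$ only becomes legitimate where the lemma is applied, in \cref{thm:mpt-det-upper}, because there the algorithm reports the tree for the smallest witness-free guess, so the next smaller guess has a witness and hence $\OPT'\le 2(1+\eps)\,\mathrm{opt}_k(P)$ --- exactly the repair you sketch. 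So your diagnosis of where the difficulty sits is accurate, and your two structural bounds are fine, but as a proof of the lemma as stated the argument is incomplete; closing it requires either adding the hypothesis that $\OPT'=O(\mathrm{opt}_k(P))$ (e.g.\ that $\OPT'$ is the smallest witness-free node-cost) or restating the bound in terms of $\OPT'$ rather than $\mathrm{opt}_k(P)$.
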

\begin{proof}
	Assume that $\opt{P} \leq \costbound / 2$, as otherwise, there exists a node that stores at least $k+1$ points that have pairwise distance $\costbound$, which implies the claim. First, we prove $\cost[P]{C} \leq \log(n/k) \cdot \costbound$. It follows from \cref{lem:kcenter-of-cluster} that $C$ has $k$-center cost $2\costbound$ on the points stored in the root's children. Since $T$ has depth at most $\log(n/k)$, it follows by recursively applying \cref{lem:kcenter-of-cluster} on the children that $C$ also has cost $\log(n/k) \cdot \costbound$ on $P$.

	Now, we prove $\cost[P]{C} \leq k$. Let $p \in P$, and let $S$ be the super cluster of $p$ with corresponding optimal center set $C'$. We have $d(p,C') \leq \costbound / 2$. By \cref{lem:clustree-supercluster}, there exists a center $q \in C$ so that $d(q,C') \leq \costbound / 2$. By the definition of super clusters, for every $x,y \in C'$, $d(x,y) \leq (k-1) \cdot 2\costbound$. It follows from the triangle inequality that $d(p,q) \leq 2k \costbound$.
\end{proof}

\subsection{Running time}

\begin{lemma}
	\label{lem:det-time}
	The amortized update time of \cref{alg:det-guess} is $O(k \log (n/k)$.
\end{lemma}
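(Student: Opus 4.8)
The plan is to show that over any sequence of $m$ update operations the total running time is $O(mk\log(n/k))$, which gives the claimed amortized bound. Since $T$ is a full binary tree with $n/2k$ leaves its depth is $d=O(\log(n/k))$, and both \FnInsert and \FnDelete visit at most one node per level, hence at most $d+1$ nodes, each carrying a blocking graph with at most $2k$ vertices. Insertions I would handle directly, as they are even worst-case efficient: each of the at most $d+1$ calls to \FnInsertNode along the insertion path spends $O(k)$ time — it compares the new point against the $\le 2k$ stored points, inserts $\le 2k$ blocking edges, and calls \FnTryCenter once, which again scans $\le 2k$ points — provided we maintain a pointer to a leaf with fewer than $2k$ points and a per-node center counter so that testing whether a node stores fewer than $k$ centers is $O(1)$. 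Thus one insertion costs $O(kd)=O(k\log(n/k))$.

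The crux is the amortized analysis of deletions, because deleting a center can unblock up to $k$ points in the affected leaf, each becoming a new center that is passed to the parent, where it may unblock further points, and so on up the tree — a naive bound is $\Omega(k^2\log(n/k))$ per deletion. The structural fact I would isolate is that \emph{every point becomes a center in a fixed node of $T$ at most once during its lifetime}: \iscen is only ever set to \emph{true}, in \FnTryCenter, and a point leaves the blocking graph of a node $N$ only when it is globally deleted, since it sits in $\points{N}$ either as an input point at the leaf $N$ or as a center in a child of $N$, and inductively a center in a child remains a center until globally deleted. Hence each of the at most $m$ points introduced by the operation sequence (together with those present after the last rebuild) becomes a center in at most $d+1$ nodes — a prefix of its leaf-to-root path — so the total number of \emph{center-creation events} over the whole execution is at most $O(m\log(n/k))$.

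With this in hand I would charge the work as follows. (i) Each center-creation event is charged $O(k)$, paying for the successful \FnTryCenter call performing it (which scans $\le 2k$ points to add blocking edges) and for the at most one \FnInsertNode call at the parent that the freshly created center triggers (its $O(k)$ insertion cost; if that call in turn creates a center its edge work is paid by the new event's budget, and otherwise the nested \FnTryCenter fails in $O(1)$). Since every \FnInsertNode call issued inside \FnDelete corresponds to a distinct center-creation event in a child node, all of them are covered. (ii) Each deletion is charged $O(kd)$, paying for the $d+1$ loop iterations of \FnDelete and, at each of the $\le d+1$ visited nodes, for removing the deleted point and scanning its $\le 2k$ former neighbours ($O(k)$ per node), the successful \FnTryCenter calls among those being already paid under (i). (iii) Each insertion is charged $O(kd)$ as above. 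Summing over the sequence gives $O(mk\log(n/k))$ total work, i.e.\ amortized $O(k\log(n/k))$ per operation. Finally, rebuilding $T$ from scratch whenever $n$ changes by a constant factor costs $O(nk\log(n/k))$ and amortizes over the $\Omega(n)$ intervening operations, preserving the bound; this is where the full-binary-tree assumption is discharged (cf.\ the proof of \cref{lem:det-main-kcenter}).

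The main obstacle is step (i)/(ii): looking past the superficial quadratic-in-$k$ blow-up of a single deletion and recognizing that the cascade of re-created centers along a root path is globally controlled by the ``at most once per node'' invariant, which caps the number of creation events at $O(m\log(n/k))$ no matter how they are distributed among the operations.
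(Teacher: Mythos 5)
Your proposal is correct and follows essentially the same route as the paper's proof: the decisive point in both is that marking a point as a center in a given node is irrevocable until the point is deleted, so each point triggers at most one $O(k)$-cost center-marking (and one $O(k)$-cost insertion) per node on its length-$O(\log(n/k))$ leaf-to-root path, and the cascading work of deletions is charged to these events rather than to the deleting operation. The only difference is presentational — you count events globally over the sequence while the paper pays $5k\log(n/k)$ tokens into each point's account at insertion time — and your closing remark about rebuilds concerns the tree-shape assumption, which the paper defers to the proof of \cref{lem:det-main-kcenter}.
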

\begin{proof}
	To maintain blocking graphs efficiently, the graphs are stored in adjacency list representation and the degrees of the vertices as well as the number of centers are stored in counters. When a point $p$ is inserted, $5k \log (n/k)$ tokens are paid into the account of $p$. Each token pays for a (universally) constant amount of work.

	Each point is inserted at most once in each of the $\log (n/k)$ nodes from the leaf where it is inserted up to the root. The key observation is that it is marked as a center in each of these nodes at most once (when it is inserted, or later when a center is deleted): Marking a point as center is irrevocable until it is deleted. For each node $N$ and point $p \in N$, it can be checked in constant time whether $p$ can be marked as a center in $N$ by checking its degree in the blocking graph. Marking $p$ as a center takes time $O(2k)$ because it is sufficient to check the distance to all other at most $2k$ points in $N$ and insert the corresponding blocking edges. For each point $p$, we charge the time it takes to \emph{mark} $p$ as a center to the account of $p$. Therefore, marking $p$ as center withdraws a most $2k \log (n/k)$ tokens from its account in total.

	Consider the insertion of a point $p$ into a node. As mentioned, $p$ is inserted in at most $\log (n/k)$ nodes, and in each of these nodes, it is inserted at most once (when it is inserted into the tree, or when it is marked as a center in a child node). Inserting a point into a node $N$ requires the algorithm to check the distance to all at most $k$ centers in $N$ to insert blocking edges, which results in at most $k \log (n/k)$ work in total.

	It remains to analyze the time that is required to update the tree when a point $p$ is deleted. For any node $N$, if $p$ is not a center in $N$, deleting $p$ takes constant time. Otherwise, if $p$ is a center, the algorithm needs to check its at most $2k$ neighbors in the blocking graph one by one whether they can be marked as centers. Checking a point $q$ takes only constant time, and marking $q$ as a center has already been charged to $q$ by the previous analysis. All centers that have been marked have to be inserted into the parent of $N$, but this has also been charged to the corresponding points. Therefore, deleting $p$ consumes at most $2k \log (n/k)$ tokens from the account of $p$.
\end{proof}

\subsection{Main result}

It only remains to combine all previous results to obtain \cref{thm:mpt-det-upper}.

\begin{theorem}[\cref{thm:mpt-det-upper}]
	\label{lem:det-main-kcenter}
	Let $\epsilon, k > 0$. There exists a deterministic algorithm for the dynamic $k$-center problem that has amortized update time $O(k \log(n) \log(\dmax) / \log(1+\epsilon))$ and approximation factor $(1+\epsilon) \cdot \min\{4k, 4\log(n/k)\}$.
\end{theorem}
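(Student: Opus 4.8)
The plan is to combine the per-estimate data structure of \cref{alg:det-guess} with the standard trick of running one copy per geometrically spaced guess of $\OPT$. Concretely, for every $\costbound$ of the form $(1+\eps)^i$ with $0\le i\le\lceil\log_{1+\eps}\dmax\rceil$ we maintain, in parallel, one clustering tree as in \cref{def:clustree} with node-cost $\costbound$, updated by the routines of \cref{alg:det-guess-node,alg:det-guess}; every update on $P$ is forwarded to all copies. On a query we scan the copies in increasing order of $\costbound$, take the first copy none of whose nodes is marked as a witness, and return the set $C$ of points marked as centers in the root of its tree. Feasibility, i.e.\ that each copy always stores a clustering tree of the current point set, is exactly \cref{lem:det-clustree-insert} together with its deletion counterpart.

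For the approximation bound I would argue two facts about the chosen estimate $\costbound^{\star}$. First, it is not too small: by construction every copy with a strictly smaller estimate has a node marked as a witness, and by the second clause of \cref{lem:det-kcenter-cost} such a witness certifies $\opt{P}\ge\costbound/2$; applying this to the largest admissible estimate below $\costbound^{\star}$ gives $\costbound^{\star}\le 2(1+\eps)\,\opt{P}$ (the corner case $\costbound^{\star}=1$ is fine, since all distances are at least $1$ and hence $\opt{P}\ge 1$). Second, a witness-free copy exists: any copy with $\costbound\ge 2\,\opt{P}$ never marks a node as a witness, because \cref{alg:det-guess} only does so upon discovering $k+1$ points of pairwise distance greater than $\costbound$, which would force $\opt{P}>\costbound/2$. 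Feeding $\costbound^{\star}\le 2(1+\eps)\,\opt{P}$ into the first clause of \cref{lem:det-kcenter-cost} (which bounds $\cost[P]{C}$ by $\min\{k,\log(n/k)\}$ times a constant times $\costbound^{\star}$) yields $\cost[P]{C}\le(1+\eps)\min\{4k,4\log(n/k)\}\,\opt{P}$ once the hidden discretization constants are set appropriately; this last step is pure bookkeeping of constants and I would not belabor it.

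For the running time, each copy has amortized update time $O(k\log(n/k))$ by \cref{lem:det-time}, and there are $O(\log_{1+\eps}\dmax)$ copies, already giving the claimed $O(k\log(n)\log(\dmax)/\log(1+\eps))$. What remains is to discharge the assumption, used throughout the analysis of \cref{alg:det-guess}, that $T$ is a full binary tree with exactly $n/2k$ leaves. The plan is the usual global-rebuilding argument: fix the tree shape for the current scale and, whenever $n$ leaves a window of the form $[\tfrac14,1]$ times the number of leaf slots, rebuild all copies from scratch; a rebuild costs $O(nk\log n)$ work but happens only once per $\Omega(n)$ operations, so it contributes only $O(k\log n)$ amortized per operation, which is absorbed. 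One checks that the token scheme in the proof of \cref{lem:det-time} survives rebuilds (it does: a rebuild resets all accounts, and the tokens freshly charged afterwards suffice) and that the approximation guarantee holds immediately after a rebuild (it does: a rebuilt tree is again a valid clustering tree, so \cref{lem:det-clustree-insert,lem:det-kcenter-cost} apply verbatim).

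\textbf{The main obstacle.} No genuinely hard step remains; the effort lies in the two places where the skeleton of \cref{alg:det-guess} was idealized. The more delicate one is making the amortized analysis robust both to the coexistence of $O(\log_{1+\eps}\dmax)$ trees of possibly different depths and to the periodic full rebuilds, i.e.\ confirming that the $O(k\log(n/k))$ per-copy bound of \cref{lem:det-time} is stable under a changing value of $n$ and under resets. That is where I would spend the most care, even though the final answer comes out exactly as stated.
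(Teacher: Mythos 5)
Your proposal is correct and, in its core, follows the same route as the paper: one clustering tree per geometric guess $\OPT'=(1+\eps)^i$, maintained by \cref{alg:det-guess}, with the query returning the root centers of the smallest witness-free guess; the cost bound then comes from \cref{lem:det-kcenter-cost} (your explicit two-step argument -- a witness at the next smaller guess forces $\OPT\ge\OPT'/2$, hence the chosen guess is at most $2(1+\eps)\OPT$ -- is exactly what the paper's terser ``follows immediately'' appeal to that lemma amounts to), and the time bound is \cref{lem:det-time} times the number of guesses. The one genuine divergence is how the idealized tree shape is discharged. The paper never rebuilds: it keeps each tree a \emph{complete} binary tree incrementally, inserting into the left-most non-full leaf and, on a deletion elsewhere, migrating an arbitrary point from the right-most last-level leaf into the vacated leaf (counted as two extra updates, a factor-$3$ overhead), plus splitting/contracting leaves, so the depth stays between $\lfloor\log(n_t/k)\rfloor$ and $\lceil\log(n_t/k)\rceil$ at every time $t$ and the stated constant $4\log(n/k)$ falls out directly. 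Your periodic global rebuild is a legitimate alternative with simpler invariants and a standard amortization (the rebuild cost over all guesses is still $O(k\log n\log\dmax/\log(1+\eps))$ amortized), but it lets the depth lag behind the current $\log(n_t/k)$ by an additive constant inside a window, so hitting the exact constant $4\log(n/k)$ requires spending the factor-$2$ slack between $\OPT'\le 2(1+\eps)\OPT$ and the $\log(n/k)\cdot\OPT'$ bound (harmless for $n\ge 4k$, and the $4k$ branch is depth-independent) -- the ``bookkeeping of constants'' you flag. Either device yields the theorem; the paper's buys tight depth at all times at the price of a slightly more intricate update procedure, yours buys simplicity at the price of a small constant-factor slack to absorb.
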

\begin{proof}
	Since $P = (\points{1}, \ldots, \points{n})$ is a dynamic point set, its size $n_i := \lvert \points{i} \rvert$ can increase over time. Therefore, we need to remove the assumption that the clustering tree has depth $\log(\max_{t \in [n]} n_t)$. First, we note that we can insert and delete points so that the clustering tree $T$ is a complete binary tree (the inner nodes induce a full binary tree, and all leaves on the last level are aligned left): We always insert points into the left-most leaf on the last level of $T$ that is not full; when a point is deleted from a leaf $N$ that is not the right-most leaf $N'$ on the last level of $T$, we delete an arbitrary point $q$ from $N'$ and insert $q$ into $N$. Deleting and reinserting points this way can be seen as two update operations, and therefore, it can only increase time required to update the tree by a factor of $3$. Furthermore, any leaf can be turned into an inner node by adding a copy of itself as its left child and adding an empty node as its right child. Vice versa, an empty leaf and its (left) sibling can be contracted into its parent. This way, the algorithm can guarantee that the depth of the tree is between $\lfloor \log(n_t/k) \rfloor$ and $\lceil \log(n_t/k) \rceil$ at all times $t \in [n]$.

	Recall that $d(x,y) \geq 1$ for every $x,y \in X$, and $\dmax = \max_{x,y \in X} d(x,y)$. For every $\Gamma \in \{ (1+\epsilon)^i \mid i \in [\lceil \log_{1+\epsilon} (\dmax) \rceil]$, the algorithm maintains an instance $T_\Gamma$ of a clustering tree with node-cost $\costbound = (1+\epsilon)^\Gamma$ by invoking \cref{alg:det-guess}. After every update, the algorithm determines the smallest $\Gamma$ so that no node in $T_\Gamma$ is marked as witness, and it reports the center set of the root of $T_\Gamma$. The bound on the cost follows immediately from \cref{lem:det-kcenter-cost}. Since there are at most $\log(\dmax) / \log(1+\epsilon)$ instances, the bound on the time follows from \cref{lem:det-time}.

\end{proof}

\end{document}